\documentclass[letterpaper,USenglish,cleveref,numberwithinsect,thm-restate]{no-lipics-v2022}
\usepackage[utf8]{inputenc}
\usepackage{color}
\usepackage{csquotes}
\usepackage{xspace}
\usepackage{amsmath}
\usepackage{amssymb}
\usepackage{xcolor}
\usepackage{mathtools}
\usepackage{bbm}
\usepackage[linesnumbered, noend]{algorithm2e}
\usepackage[draft]{fixme}
\usepackage{tikz}
\usetikzlibrary{arrows,arrows.meta,decorations.pathreplacing,decorations.pathmorphing,shapes,calc,patterns,shapes,matrix,math,quotes}
\fxsetup{mode=multiuser,theme=color, layout=inline}
\FXRegisterAuthor{cj}{acj}{\color{red} {\bf  Ce}}
\FXRegisterAuthor{tk}{atk}{\color{blue} {\bf Tomasz}}

\newcommand{\Oh}{\ensuremath{\mathcal{O}}\xspace}
\newcommand{\Ohtilde}{\ensuremath{\smash{\rlap{\raisebox{-0.2ex}{$\widetilde{\phantom{\Oh}}$}}\Oh}}\xspace}
\newcommand{\Thtilde}{\ensuremath{\smash{\rlap{\raisebox{-0.2ex}{$\widetilde{\phantom{\Theta}}$}}\Theta}}\xspace}
\newcommand{\Omtilde}{\ensuremath{\smash{\rlap{\raisebox{-0.2ex}{$\widetilde{\phantom{\Omega}}$}}\Omega}}\xspace}

\newcommand{\ZZ}{\mathbb{Z}\xspace}

\DeclareMathOperator{\poly}{poly}
\DeclareMathOperator{\polylog}{polylog}

\newcommand{\dd}{.\,.}

\def\emptystring{\ensuremath\varepsilon}

\def\twoheadleadsto{\tikz[baseline=(a.base)]{\draw[%
    decorate,decoration={zigzag,segment length=4, amplitude=.9},%
    ] (0,0) -- (.25, 0);%
    \draw[%
    -{Classical TikZ Rightarrow}.{Classical TikZ Rightarrow},%
    ] (.25, 0) -- (.4, 0);%
    \node (a) at (.4/2,-.03) {\phantom{\(\leadsto\)}};%
}}

\newcommand{\onto}{\twoheadleadsto}
\def\aonto#1{\onto}

\newcommand{\fragmentco}[2]{[#1\dd #2)}
\newcommand{\fragmentoc}[2]{(#1\dd #2]}
\newcommand{\fragmentoo}[2]{(#1\dd #2)}
\newcommand{\fragmentcc}[2]{[#1\dd #2]}
\newcommand{\position}[1]{[#1]}

\newcommand{\floor}[1]{\lfloor{#1}\rfloor}
\newcommand{\ceil}[1]{\lceil{#1}\rceil}

\newcommand{\Zp}{\mathbb{Z}_{>0}}

\newcommand{\cU}{\mathcal{U}}

\newcommand{\eps}{\varepsilon}

\definecolor{darkgreen}{RGB}{0,160,0}
\definecolor{darkred}{RGB}{220,20,60}
\definecolor{darkblue}{RGB}{0,0,160}

\newcommand{\HD}{\mathsf{HD}}
\newcommand{\MM}{\mathsf{MM}}
\newcommand{\Occ}{\mathsf{Occ}}

\newcommand{\Exp}{\mathbf{E}}
\renewcommand{\Pr}{\mathbf{P}}

\newcommand{\per}{\mathrm{per}}

\newcommand{\Ber}{\mathsf{Ber}}
\newcommand{\Random}{\mathsf{Random}}
\newcommand{\Planted}{\mathsf{Planted}}
\newcommand{\Mixed}{\mathsf{Mixed}}

\newcommand{\Runs}{\mathcal{R}}
\newcommand{\Bad}{\mathcal{B}}
\newcommand{\RunsNo}{\Runs_{\textsc{No}}}

\newcommand{\Sparse}{\mathsf{Sparse}}
\newcommand{\Positive}{\mathsf{Equal}}
\newcommand{\Negative}{\mathsf{Independent}}
\newcommand{\Hybrid}{\mathsf{Hybrid}}

\newcommand{\Lang}{\mathcal{L}}

\newcommand{\Yes}{\textsc{Yes}\xspace}
\newcommand{\No}{\textsc{No}\xspace}

\newtheorem{problem}[theorem]{Problem}
\crefname{problem}{Problem}{Problems}
\newcommand{\defproblem}[4]{
  \vspace{2mm}
  \noindent\fbox{
  \begin{minipage}{0.96\textwidth}
\begin{problem}\label{#1}
  \textsf{#2}

  \smallskip
  \noindent
  {\bf{Input:}} #3

  \smallskip
  \noindent
  {\bf{Output:}} #4 \lipicsEnd
\end{problem}
  \end{minipage}
  }
  \vspace{2mm}
}

\nolinenumbers

\bibliographystyle{alphaurl}%

\title{Near-Optimal Property Testers for Pattern Matching}

\author{Ce Jin}{University of California, Berkeley, CA, USA}{cejin@berkeley.edu}{https://orcid.org/0000-0001-5264-1772}{}%

\author{Tomasz Kociumaka}{Max Planck Institute for Informatics, Saarland Informatics Campus, Germany}{tomasz.kociumaka@mpi-inf.mpg.de}{https://orcid.org/0000-0002-2477-1702}{}%

\authorrunning{C. Jin and T. Kociumaka} %

\Copyright{Ce Jin and Tomasz Kociumaka} %

\begin{document}

\maketitle

\begin{abstract}
The classic \emph{exact pattern matching} problem, given two strings---a pattern $P$ of length $m$ and a text $T$ of length~$n$---asks whether $P$ occurs as a substring of $T$, that is, $P = T\fragmentco{i}{{i+m}}$ holds for some $i \in \fragmentcc{0}{n-m}$.
A \emph{property tester} for the problem needs to distinguish (with high probability) the following two cases for some threshold $k \in [1\dd m)$: the \textsc{Yes} case, where $P$ occurs as a substring of $T$,
and the \textsc{No} case, where $P$ has Hamming distance greater than $k$ from every substring of $T$,  that is, $P$ has no $k$-mismatch occurrence in $T$.

In this work, we provide adaptive and non-adaptive property testers for the exact pattern matching problem, jointly covering the whole spectrum of parameters. 
We further establish unconditional lower bounds demonstrating that the time and query complexities of our algorithms are optimal, up to $\polylog n$ factors hidden within the $\Ohtilde(\cdot)$ notation below.

In the most studied regime of $n = m + \Theta(m)$, our non-adaptive property tester has the time complexity of $\Ohtilde(n/\sqrt{k})$, and a matching lower bound remains valid for the query complexity of adaptive algorithms.
This improves both upon a folklore solution that attains the optimal query complexity but requires $\Omega(n)$ time, and upon the only previously known sublinear-time property tester, by Chan, Golan, Kociumaka, Kopelowitz, and Porat [STOC 2020], with time complexity $\Ohtilde(n/\sqrt[3]{k})$.
The aforementioned results remain valid for $n = m + \Omega(m)$, where our optimal running time $\Ohtilde(\sqrt{nm/k}+n/k)$ improves upon the previously best time complexity of $\Ohtilde(\sqrt[3]{n^2m/k} + n/k)$.
In the regime of $n = m + o(m)$, which has not been targeted in any previous work, we establish a surprising separation between adaptive and non-adaptive algorithms, whose optimal time and query complexities are $\Ohtilde(\sqrt{(n-m+1)m/k} + n/k)$ and $\Ohtilde(\min(n\sqrt{n-m+1}/k, \sqrt{nm/k} + n/k))$, respectively.

Our non-adaptive algorithms answer \textsc{Yes} with high probability not only when $P$ has an exact occurrence in~$T$ but also when $P$ has an occurrence with at most $k'=\Omega(k/\log n)$ mismatches.
The gap $k/k'$ can be reduced by slightly increasing the running time; an arbitrarily small polynomial overhead already suffices to achieve a constant gap.
Moreover, upon request, our algorithms may output a set $A\subseteq [0\dd n-m]$ that contains the starting positions of all $k'$-mismatch occurrences of $P$ in $T$ and no starting position of an occurrence with more than $k$ mismatches.

The key technical innovation behind all our property testers is a novel characterization of the mismatches between the pattern $P$ and the fragments $T\fragmentco{i}{i+m}$ across $i\in \fragmentcc{0}{n-m}$.
We show that one can select $\Ohtilde(k \cdot n/m)$ positions within $P$ and $T$ so that, for every $i \in \fragmentcc{0}{n-m}$, at least $\min(k, k_i)$ of the $k_i$ mismatches between $P$ and $T\fragmentco{i}{i+m}$ involve a selected position.
Previously, such a construction was known for $k=1$ only.%
\end{abstract}%

\section{Introduction}
Exact pattern matching is the most well-known problem in string algorithms, with classic 50-year-old linear-time solutions~\cite{MP70,KMP77,BM77} taught in many undergraduate courses.  
Given two strings, a pattern $P$ of length $m$ and a text $T$ of length $n \ge m$, the task is to decide whether the pattern $P$ is a substring of the text $T$ or, in the reporting version of the problem, output the set
\[\Occ(P,T) = \{i \in \fragmentcc{0}{n-m} : P = T\fragmentco{i}{i+m}\}\]
representing the exact occurrences of $P$ in $T$.
The decision version is formalized as follows:

\defproblem{pr:pm}{Exact Pattern Matching}{A pattern $P\in \Sigma^m$ and a text $T\in \Sigma^n$.}{\Yes if $\Occ(P,T)\ne \emptyset$ and \No if $\Occ(P,T)=\emptyset$.}{}

The enormous growth of available data over recent decades has rendered even linear-time algorithms too costly in many scenarios. 
At the same time, most decision problems, including \cref{pr:pm}, require at least linear time in the worst case, simply because the entire input must be read.
Nevertheless, one can relax decision problems to their approximate variants in hopes of designing randomized sublinear-time algorithms. 
This idea gave rise to \emph{property testing}, which emerged in the late 1990s~\cite{GGR98} and since developed into a mature field; see~\cite{G17,BY22} for dedicated~textbooks.

In complexity theory, decision problems are typically formalized as languages $\Lang\subseteq \Sigma^*$ so that the task of the algorithm is to decide whether a given instance $I\in \Sigma^*$ belongs to the language~$\Lang$.
A property tester is still required to output \Yes if $I\in \Lang$, but it must answer \No only if $I$ is ``\emph{far}'' from every instance $I'\in \Lang$.
This condition is typically formalized using the Hamming metric, where the distance between two instances $I,I'\in \Sigma^n$ is the number of mismatching characters: $\HD(I,I') =|\{i\in \fragmentco{0}{n} : I\position{i} \ne I'\position{i}\}|$.
Now, for a real parameter $\epsilon \in [0,1]$, the instance $I\in \Sigma^n$ is considered ``far'' from $\Lang$ if and only if $\HD(I,I') > \epsilon n$ holds for every $I'\in \Lang \cap \Sigma^{n}$.
As discussed in~\cite[Chapter 3]{BY22}, efficient property testers are known for several string-processing problems such as deciding whether the input is a palindrome ($\Lang$ is the language of all palindromes) or whether it forms a balanced sequence of parentheses ($\Lang$ is the \emph{Dyck} language of well-parenthesized sequences). 

In this work, we consider property testing for the exact pattern matching problem, where $\Lang$ can be interpreted as the set of all substrings of the text $T$. 
The property tester needs to distinguish whether $P$ is a substring of $T$ or is at Hamming distance more than $\epsilon m$ from every substring $T\fragmentco{i}{i+m}$.%
\footnote{Given that both $P$ and $T$ are parts of the input, it might seem more natural to define ``far'' instances as those for which $\HD(P,P')+\HD(T,T')> \epsilon(n+m)$ holds for every text $T'$ and its substring $P'$. 
Nevertheless, it is easy to check that $T=T'$ can be assumed without loss of generality since the only relevant quantity is $\min_{i\in \fragmentcc{0}{n-m}} \HD(P, T\fragmentco{i}{i+m})$.}
We follow the convention from recent approximate pattern matching literature~\cite{CFPSS16,GU18,ChanGKKP20,CJWX23} and parameterize the problem by $k\coloneqq \floor{\epsilon m}$ instead of $\epsilon$. 
The condition for ``far'' instances can be expressed as the lack of \emph{$k$-mismatch occurrences} defined as \mbox{elements of the following set:}
\[\Occ_k(P,T) = \{i\in \fragmentcc{0}{n-m} : \HD(P, T\fragmentco{i}{i+m})\le k\}.\]

\defproblem{pr:tester}{Property Testing for Exact Pattern Matching}{A pattern $P\in \Sigma^m$, a text $T\in \Sigma^n$, and an integer threshold $k\in \fragmentoo{0}{m}$.}{\Yes if $\Occ(P,T)\ne \emptyset$, \No if $\Occ_k(P,T)=\emptyset$, and an arbitrary answer otherwise.}{}

It is easy to solve \cref{pr:tester} using a sublinear number of queries.
\begin{fact}[Folklore]\label{fct:folklore}
\cref{pr:tester} can be solved (correctly with high probability) by a non-adaptive algorithm with query complexity 
$\Ohtilde\Big(\sqrt{\tfrac{nm\vphantom{i}}{k}} + \tfrac{n}{k}\Big)$.
\end{fact}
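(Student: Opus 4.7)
The plan is to use a non-adaptive Bernoulli sampling scheme. I would independently include each position $j \in \fragmentco{0}{m}$ in a set $S_P$ with probability $\alpha$, and independently include each position $q \in \fragmentco{0}{n}$ in a set $S_T$ with probability $\beta$, for parameters $\alpha,\beta$ to be chosen below. After querying $P\position{j}$ for $j \in S_P$ and $T\position{q}$ for $q \in S_T$, the algorithm iterates over all alignments $i \in \fragmentcc{0}{n-m}$ and returns \Yes iff there exists some $i$ such that $P\position{j} = T\position{i+j}$ for every $j \in S_P$ with $i+j \in S_T$ (the ``witnessed'' positions); otherwise it returns \No. This is manifestly non-adaptive: the sampled positions are determined before any query response is seen.

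I would set $\alpha = \min\{1, \sqrt{cn\log n/(mk)}\}$ and $\beta = \min\{1, \sqrt{cm\log n/(nk)}\}$ for a sufficiently large constant $c$. A standard Chernoff bound gives $|S_P| + |S_T| = \Ohtilde(\alpha m + \beta n)$ with high probability. When both probabilities are strictly less than $1$, we have $\alpha m = \beta n = \Ohtilde(\sqrt{nm/k})$; when $\alpha$ saturates at $1$ (which happens once $cn\log n \ge mk$), we query all of $P$ and the residual cost is $\beta n = \Ohtilde(n/k)$, and symmetrically if $\beta$ saturates. In every regime the query complexity is $\Ohtilde(\sqrt{nm/k} + n/k)$.

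Correctness in the \Yes case is immediate: the true occurrence $i^\ast$ satisfies $P\position{j} = T\position{i^\ast + j}$ at every position and thus passes the check. For the \No case, I would fix an alignment $i$ and let $M_i \subseteq \fragmentco{0}{m}$ be its set of mismatches, so that $|M_i| \ge k+1$. The crucial point—and the reason I use Bernoulli rather than fixed-size sampling—is that the events ``position $j$ is witnessed'', for $j \in M_i$, are mutually independent and each has probability exactly $\alpha\beta$. Hence the probability that $i$ falsely passes the check is $(1 - \alpha\beta)^{|M_i|} \le \exp(-(k+1)\alpha\beta) \le n^{-c}$ by the choice of $\alpha,\beta$. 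A union bound over the at most $n$ candidate alignments then yields correctness with high probability.

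The only delicate step is the parameter tuning: one must arrange $(k+1)\alpha\beta = \Omega(\log n)$ while minimizing the expected sample size $\alpha m + \beta n$. This is a routine AM--GM optimization yielding the $\sqrt{nm/k}$ term, while the saturation cases $\alpha = 1$ and $\beta = 1$ produce the additive $n/k$ term. I do not anticipate any substantial obstacle, since Bernoulli sampling collapses the per-alignment failure probability to a clean product and avoids the negative-association gymnastics that fixed-size subsets would require; the entire argument boils down to a single union bound on top of the two-line optimization.
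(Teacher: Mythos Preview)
Your approach is essentially identical to the paper's: Bernoulli-sample positions in $P$ and $T$ independently, check every alignment, and exploit the product structure so that the per-alignment failure probability is exactly $(1-\alpha\beta)^{|M_i|}$.

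There is, however, a genuine glitch in your parameter tuning for the saturation case. When $\alpha$ clamps to $1$ (i.e., $mk \le cn\log n$), your symmetric formula $\beta = \sqrt{cm\log n/(nk)}$ no longer guarantees $\alpha\beta = \Theta((\log n)/k)$. Concretely, take $m = k = \sqrt{n}$: then $\alpha = 1$, $\beta = \Theta(\sqrt{(\log n)/n})$, and $(k+1)\alpha\beta = \Theta(\sqrt{\log n})$, so the per-alignment failure probability is only $\exp(-\Theta(\sqrt{\log n})) = n^{-o(1)}$ and the union bound over $\Theta(n)$ alignments fails. Your claim that ``the residual cost is $\beta n = \Ohtilde(n/k)$'' is also not what your formula gives in general.

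The fix is exactly what the paper does: impose the constraint $\alpha\beta = \min\{1,\Theta((\log n)/k)\}$ \emph{first}, then minimize $\alpha m + \beta n$ subject to $\alpha,\beta \le 1$. The unconstrained optimum is $\alpha = \sqrt{\beta n/m} \cdot \sqrt{\alpha\beta}$, and when this exceeds $1$ you set $\alpha=1$ and $\beta = \Theta((\log n)/k)$, which then legitimately yields $\beta n = \Ohtilde(n/k)$ and $\alpha m = m \le cn\log n/k = \Ohtilde(n/k)$. With this correction your argument goes through and matches the paper's proof sketch.
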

\begin{proof}[Proof sketch.]
Sample $R_T \subseteq \fragmentco{0}{n}$ and $R_P\subseteq \fragmentco{0}{m}$ uniformly at random with rates $r_T$ and~$r_P$, respectively, such that $r_P \cdot r_T = \min(1,\frac{2\ln n}{k})$, and read the characters at the sampled positions.
For every $i\in \fragmentcc{0}{n-m}$, return \Yes if $P\position{j}=T\position{j+i}$ holds for all $j\in R_P$ such that $j+i\in R_T$.

If $i\in \Occ(P,T)$, then the algorithm clearly outputs \Yes while processing $i$.
If $i\notin \Occ_k(P,T)$, then the probability of the algorithm returning \Yes while processing $i$ is at most $(1-r_P\cdot r_T)^k \le \exp(-2\ln n) = \frac{1}{n^2}$.
By the union bound, if $\Occ_k(P,T)=\emptyset$, then the algorithm returns \No with probability at least $1-\frac{1}{n}$.

The algorithm's query complexity is $\Oh(r_Tn + r_Pm)$, optimized for $r_P = \min\Big(1, \sqrt{\frac{2 n\ln n}{km}}\Big)$. %
\end{proof}

Unfortunately, the algorithm in \cref{fct:folklore} examines each candidate position $i\in\fragmentcc{0}{n-m}$ individually, yielding a total running time of $\Omega(n-m+1)$, which is $\Omega(n)$ unless $n=m+o(m)$.
The task of designing a sublinear-time property tester for exact pattern matching remained open until 2020, when Chan, Golan, Kociumaka, Kopelowitz, and Porat~\cite{ChanGKKP20} proved the following:
\begin{theorem}[{\cite[Theorem 10.6]{ChanGKKP20}}]\label{thm:previous}
\cref{pr:tester} can be solved (correctly with high probability) by a non-adaptive algorithm with time complexity 
$\Ohtilde\Big(\sqrt[3]{\tfrac{n^2m}{k}} + \tfrac{n}{k}\Big)$.\lipicsEnd
\end{theorem}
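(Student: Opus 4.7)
The plan is to upgrade the folklore non-adaptive sampler from \cref{fct:folklore} so that its sublinear query complexity also translates to a sublinear \emph{running time}. The folklore argument is already query-optimal up to polylogarithmic factors in the relevant regimes, so the improvement must come from avoiding its per-shift verification loop, which iterates over every candidate $i \in \fragmentcc{0}{n-m}$ and thus takes $\Omega(n-m+1)$ time regardless of the query budget. I would replace this enumeration by a batched procedure that exploits the standard structural dichotomy for $k$-mismatch occurrences: at scale, such occurrences are either few in number or else tightly clustered around the arithmetic progression induced by an approximate period of $P$.

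\textbf{Structural decomposition.} I would partition the candidate shift range $\fragmentcc{0}{n-m}$ into $\Theta(n/m)$ overlapping windows of length $\Theta(m)$ and process the windows independently. Within a window $W$, I would invoke the following dichotomy: either $P$ admits an approximate period $p = \Ohtilde(m/k)$, in which case the $k$-mismatch starting positions of $P$ lying in $W$ are confined to $O(k)$ arithmetic progressions of common difference $p$; or $W$ contains only $O(k)$ such starting positions in total. Either way, the candidates in $W$ that could survive the folklore sieve admit an $O(k)$-sized implicit description extractable from a non-adaptive sample of size $\Ohtilde((nm/k)^{1/3})$ per window, and each implicit candidate can be verified against the global sample in $\Ohtilde(1)$ amortized time via a precomputed hash.

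\textbf{Balancing and the main obstacle.} The per-window cost is $\Ohtilde(k + (nm/k)^{1/3})$; summed over $\Theta(n/m)$ windows and combined with the folklore sampling budget, this collapses to $\Ohtilde(\sqrt[3]{n^2m/k} + n/k)$. The hardest step is reconciling non-adaptivity with the per-window dichotomy: all queries must be committed before any structural information about $P$ or $T$ is revealed, so a single batch of uniform samples must simultaneously detect the approximate period $p$ of $P$ whenever one exists, isolate all $O(k)$ sparse candidates in every window when it does not, and enable verification of each implicit candidate. I would argue this through a union bound over the $O(n/m)$ windows and all plausible values of $p$, selecting the sample density so that the per-event failure probability is $n^{-\omega(1)}$; the delicate point is that the density needed to reliably detect an approximate period depends on the period itself, so the argument must be designed to work uniformly across every candidate period.
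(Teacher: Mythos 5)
This theorem is not proved in the paper at all: it is quoted verbatim from \cite[Theorem 10.6]{ChanGKKP20}, and \cref{sec:overview} only summarizes the technique behind it. That technique is entirely different from your plan: the sample is $p$-periodic for a random prime $p=\Theta(sk\log n)$ and is derived from a single set $B\subseteq\ZZ_p$ via structured sets $U,V$ with $\ZZ_p=U+V$, so that all $n-m+1$ candidates are checked in a batch by comparing sliding-window Karp--Rabin fingerprints of $\Oh(\sqrt{p})$ fixed subsequences; the exponent $1/3$ then comes from balancing the $\Ohtilde(n\sqrt{s/k})$ cost of this batched comparison against the $\Ohtilde(n/s)$ cost of intersecting the survivor sets of $\Oh(\log n)$ independent repetitions. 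Your window decomposition plus the periodicity dichotomy for $\Occ_k$ is a genuinely different route, but it has gaps that are not merely ``delicate points.''

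The central gap is the claim that ``each implicit candidate can be verified against the global sample in $\Ohtilde(1)$ amortized time via a precomputed hash.'' This is precisely the technical crux of the theorem, and you supply no mechanism for it: with an unstructured uniform sample $R_P,R_T$, the set of pairs $\{(j,\,j+i): j\in R_P,\ j+i\in R_T\}$ that must be compared for candidate $i$ varies arbitrarily with $i$, so there is no fixed pair of subsequences whose hashes certify $i$ --- this is exactly why the folklore tester of \cref{fct:folklore} pays $\Omega(1)$ per candidate. Batched verification only becomes possible after imposing the periodic mod-$p$ structure, which your proposal omits. Second, the dichotomy does not yield an $\Oh(k)$-sized candidate set per window: in the approximately periodic case the progressions can contain $\omega(k)$, indeed up to $\Theta(m)$, starting positions, and extracting the approximate period and the progressions requires the exact mismatch structure of $P$ against its own shifts, which a non-adaptive $o(m)$-size sample cannot determine; moreover the folklore survivors are only contained in $\Occ_k(P,T)$ with high probability, so one still has to certify that containment, which is the original problem. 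Finally, the accounting does not close: $\Theta(n/m)$ windows times an $\Ohtilde(k)$ verification term is $\Ohtilde(nk/m)$, which already for $n=\Theta(m)$ and $k\ge m^{3/4}$ exceeds $\Ohtilde(\sqrt[3]{n^2m/k})$, and summing a per-window sample of $\Ohtilde(\sqrt[3]{nm/k})$ over $\Theta(n/m)$ windows gives $\Ohtilde(n^{4/3}m^{-2/3}k^{-1/3})$, which exceeds the target whenever $n\gg m^{3/2}$; matching $\sqrt[3]{n^2m/k}$ for $n\gg m$ requires asymmetric sampling rates in $P$ and $T$ as in \cref{fct:folklore}, which the per-window argument does not provide.
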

Although truly sublinear for every $k=n^{\Omega(1)}$, the running time in \cref{thm:previous} falls short of the query complexity achieved in \cref{fct:folklore}. 
For example, when $m = \Theta(n)$, the two complexities reduce to $\Ohtilde(n/\sqrt[3]{k})$ and $\Ohtilde(n/\sqrt{k})$, respectively.
This led to the following question motivating our work:
\begin{quote}
    \itshape Is it possible to solve \cref{pr:tester} in time that matches the query complexity of \cref{fct:folklore}?
\end{quote}
Our first main contribution is a positive answer to this question.

\begin{theorem}\label{thm:main}
    \cref{pr:tester} can be solved (correctly with high probability) by a non-adaptive algorithm with time complexity 
    $\Oh\Big(\Big(\sqrt{\tfrac{nm\log^9 n}{k}} + \tfrac{n}{k}\Big)\log^2 n\Big)$.\lipicsEnd
\end{theorem}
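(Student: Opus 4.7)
The plan is to exploit the paper's novel mismatch characterization, which yields sets $S_P\subseteq\fragmentco{0}{m}$ and $S_T\subseteq\fragmentco{0}{n}$ of combined size $\Ohtilde(kn/m)$ such that, for every alignment $i\in\fragmentcc{0}{n-m}$, at least $\min(k,k_i)$ of the $k_i$ mismatches between $P$ and $T\fragmentco{i}{i+m}$ ``touch'' a selected position, meaning $j\in S_P$ or $j+i\in S_T$. Let $c_i$ denote the number of such captured mismatches. Since $c_i\le k_i$ trivially and $c_i\ge\min(k,k_i)$ by the characterization, one obtains the clean dichotomy $c_i\ge k \iff k_i\ge k$, reducing \cref{pr:tester} to testing whether $\min_i c_i<k$.

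I would then build a non-adaptive query set of size $\Ohtilde(\sqrt{nm/k}+n/k)$ by (i) querying every position of $S_P\cup S_T$, which costs $\Ohtilde(kn/m)$ and lies within the budget, and (ii) additionally sampling $P$ and $T$ at rates $r_P$ and $r_T$ with $r_P\cdot r_T=\Ohtilde(1/k)$, balanced as in \cref{fct:folklore} so that $r_Pm+r_Tn=\Ohtilde(\sqrt{nm/k})$. For any captured mismatch, one end is known deterministically because it lies in the characterization, and the ``partner'' end falls in the sampled set with probability $\Omega(r_P)$ or $\Omega(r_T)$. A Chernoff argument then shows that counting sampled-captured mismatches estimates $c_i$ well enough to threshold with high probability; the multiplicative $\Theta(\log n)$ slack in this concentration step is precisely what produces the gap $k'=\Omega(k/\log n)$ promised in the abstract.

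To meet the time bound, I would process all candidates in a single batched sweep: iterate over the sampled-captured pairs $(j,j')$, incrementing a counter indexed by the shift $i=j'-j$, and finally threshold all counters. The expected number of pairs is $|S_P|\cdot r_Tn+|S_T|\cdot r_Pm$, which also balances to $\Ohtilde(\sqrt{nm/k})$ for the chosen rates, and a plain array of length $n-m+1$ (or a hash table when this range is too large) gives constant-time updates, producing the optional output set $A$ as the subthreshold shifts.

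The principal obstacle is the joint calibration of the characterization size, the two sampling rates, and the detection threshold so that the query count, the batched-sweep size, and the Chernoff concentration simultaneously land at $\Ohtilde(\sqrt{nm/k})$ with high-probability correctness; this is where the polylogarithmic overheads and the $k/k'$ gap arise. A secondary challenge is realizing $S_P,S_T$ non-adaptively from only the allotted $\Ohtilde(\sqrt{nm/k}+n/k)$ queries, which forces the characterization itself to admit a construction that avoids reading $P$ or $T$ in full.
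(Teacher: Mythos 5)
There is a genuine gap, and it is exactly the issue you relegate to a ``secondary challenge'' at the end: the sets $S_P,S_T$ from the mismatch characterization (\cref{thm:mismatchcontainer}) are \emph{existential} objects whose construction depends on the full content of $P$ and $T$ (they are built from string synchronizing sets, runs, and heavy-light decompositions of suffix tries of the concatenation $P\cdot T$). No sublinear-query algorithm, adaptive or not, can identify them, so step (i) of your query set --- ``querying every position of $S_P\cup S_T$'' --- is not realizable. The paper never queries these sets; its algorithm is the structured periodic sampler of \cite{ChanGKKP20} (a random prime $p$, a random $B\subseteq\ZZ_p$ at rate $\Theta(\frac{\log n}{k})$, and covering sets $U,V$ of size $\Theta(\sqrt p)$ with fingerprint comparisons in a sliding window), and the combinatorial lemma enters only in the \emph{analysis}: it shows that a single random prime $p=\Thtilde(k)$ satisfies $|M_i\bmod p|>0.49k$ simultaneously for all $i$ with $|M_i|>k$ except with probability $\polylog(n)/s$, which lets one take $s=\polylog n$ and accept as soon as each of $\Oh(\log n)$ independent repetitions has a nonempty surviving set.

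Even granting oracle access to $S_P$ and $S_T$, your batched sweep does not meet the time bound. Take $n=\Theta(m)$, so $|S_P|+|S_T|=\Ohtilde(k)$ and $r_Pm+r_Tn=\Ohtilde(n/\sqrt{k})$ with $r_P,r_T=\Ohtilde(1/\sqrt{k})$. The number of candidate pairs you must enumerate is then
\[
|S_P|\cdot r_T n + |S_T|\cdot r_P m \;=\; \Ohtilde(k)\cdot\Ohtilde\bigl(n/\sqrt{k}\bigr)\;=\;\Ohtilde\bigl(n\sqrt{k}\bigr),
\]
which exceeds the target $\Ohtilde(n/\sqrt{k})$ by a factor of $k$; the claim that this product ``balances to $\Ohtilde(\sqrt{nm/k})$'' is an arithmetic error. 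This is precisely the obstruction that forces the paper to use the $p$-periodic sample with the $U+V$ covering structure and Karp--Rabin fingerprints of concatenated subsequences, so that all $\Delta$ shifts are compared in batches via range queries rather than pair-by-pair counter updates.
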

With the time and query complexity matching up to logarithmic factors, we ask the following question:
\begin{quote}
    \itshape Is the query complexity of \cref{fct:folklore,thm:main} optimal up to logarithmic factors?
\end{quote}
Our next result proves such optimality, even among \emph{adaptive} algorithms, for $n=m+\Omega(m)$.

\begin{restatable}{theorem}{thmlbadaptive}\label{thm:lb_adaptive}
Fix integer parameters $1 \le k < m \le n$ such that $k > 4\ln(5n)$.
Every algorithm solving \cref{pr:tester} correctly with probability at least $0.9$ makes at least $\tfrac{1}{55}\left(\sqrt{\tfrac{m(n-m+1)}{k}}+\tfrac{n}{k}\right)$ queries to the input strings $P$ and $T$.
If $k \le \frac{m}{4}$, then the lower bound holds for binary strings. \lipicsEnd
\end{restatable}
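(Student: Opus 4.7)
The plan is to apply Yao's minimax principle: I construct a distribution $\mu_Y$ supported on \Yes instances and a distribution $\mu_N$ concentrated on \No instances, and then bound the total-variation distance between the views of any deterministic $Q$-query algorithm under the two distributions. Both share the following skeleton: draw a uniformly random pattern $P \in \Sigma^m$, an offset $i^* \in \fragmentcc{0}{n-m}$, and a text $T \in \Sigma^n$, all mutually independent. In $\mu_Y$, overwrite $T\fragmentco{i^*}{i^*+m} \gets P$; in $\mu_N$, additionally draw a uniformly random $M \subseteq \fragmentco{0}{m}$ of size $k+1$ and set $T\position{i^*+j}$ to a uniformly random character distinct from $P\position{j}$ for each $j \in M$. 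A Chernoff/Hoeffding argument combined with a union bound over $i' \ne i^*$, leveraging the hypotheses $k > 4\ln(5n)$ and (in the binary case) $k \le m/4$, shows that $\mu_N$ yields a \No instance with probability $1-o(1)$; the remaining slack is absorbed into the distinguishing-advantage constants.

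The core of the argument is to bound the algorithm's distinguishing advantage through three complementary couplings of $\mu_Y$ and $\mu_N$, each producing a different bound in terms of $q_P$ and $q_T$, the number of queries to $P$ and to $T$, respectively. The \emph{standard coupling} shares the randomness so that $T_Y$ and $T_N$ differ only at $\{i^*+j : j \in M\}$; execution diverges iff the algorithm queries such a position, giving $\textup{TV} \le q_T(k+1)/(n-m+1)$ by a union bound. The \emph{flipped-$P$ coupling} symmetrically arranges $T_Y = T_N$ while $P_Y$ and $P_N$ differ on $M$ (adjusting the planted characters appropriately), giving $\textup{TV} \le q_P(k+1)/m$. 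The \emph{pair-based coupling} is lazy: for each $j \in M$, whichever of $P\position{j}$ or $T\position{i^*+j}$ is queried first is answered by a fresh uniform character common to both worlds, with the unqueried partner's implicit value differing across $\mu_Y$ and $\mu_N$. Divergence is thus postponed until the algorithm \emph{completes} a paired query. The crucial point is that before divergence every answer is a uniform character with law independent of $(i^*, M)$; hence the algorithm's adaptive query set is itself independent of $(i^*, M)$. Summing the expected number of completed pairs over positions then yields $\textup{TV} \le q_P q_T(k+1)/(m(n-m+1))$.

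Combining the three bounds, the $0.9$ success guarantee forces the total-variation distance between the two views to exceed a positive constant, so each of the three quantities $q_P(k+1)/m$, $q_T(k+1)/(n-m+1)$, and $q_P q_T(k+1)/(m(n-m+1))$ must be $\Omega(1)$. The first two imply $q_P = \Omega(m/k)$ and $q_T = \Omega((n-m+1)/k)$, whence $Q \ge q_P + q_T = \Omega(n/k)$; the third, together with the arithmetic-geometric inequality $q_P + q_T \ge 2\sqrt{q_P q_T}$, gives $Q = \Omega(\sqrt{m(n-m+1)/k})$. Using $Q \ge \tfrac{1}{2}(Q + Q)$ to combine the two lower bounds and tracking constants carefully yields the stated $\tfrac{1}{55}(\sqrt{m(n-m+1)/k}+n/k)$ bound. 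I expect the main obstacle to be the rigorous justification of the independence claim underpinning the pair-based coupling in the adaptive setting: the algorithm might try to ``learn'' $(i^*, M)$ from the sequence of answers, but the lazy coupling ensures that each answer is a uniform draw whose marginal law does not depend on $(i^*, M)$, preventing any such leakage before a paired query is completed. Granting this, the expected-pair bound factorizes cleanly via linearity of expectation.
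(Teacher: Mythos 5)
Your high-level strategy (Yao's principle with a planted-occurrence distribution versus an independent one) matches the paper's, but your hard distribution and your method for bounding the distinguishing advantage are genuinely different from the paper's --- and unfortunately the difference is fatal. The gap you yourself flag as ``the main obstacle'' is real and your proposed resolution does not work. In the pair-based coupling, the answers before divergence are \emph{not} distributed independently of $(i^*,M)$: whenever the algorithm completes a pair $(P[j],\,T[i^*+j])$ with $j\notin M$, the second answer \emph{echoes} the first in both worlds. This echo does not cause divergence between $\mu_Y$ and $\mu_N$, but it leaks $i^*$ within each world, so the adaptive query set is very much a function of $(i^*,M)$ and the factorization $\Pr[\text{divergence}]\le q_Pq_T(k+1)/(m\Delta)$ fails. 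The same leakage invalidates the $1/\Delta$ factor in your first coupling (once $i^*$ is known, only the randomness of $M$ remains, degrading that bound to $q_T(k+1)/m$). This is not merely a proof defect: your construction is actually too easy. An adaptive algorithm can first \emph{locate} $i^*$ by completing $O(\log n)$ pairs per candidate shift, which costs $O(\sqrt{\Delta\log n})$ queries (choose $P$-positions and $T$-positions so that every shift in $\fragmentco{0}{\Delta}$ is realized by $\Theta(\log n)$ aligned pairs; the true shift is the one where all pairs echo), and then \emph{probe} the known window with $O((m/k)\log n)$ aligned pairs to look for a corrupted position in $M$. The total cost $\Ohtilde(\sqrt{\Delta}+m/k)$ is far below the claimed $\Omega(\sqrt{m\Delta/k}+n/k)$ in many regimes (e.g.\ $\Delta=\Theta(m)$, $k=\sqrt{m}$ gives $\Ohtilde(\sqrt{m})$ versus a claimed $\Omega(m^{3/4})$), for the binary alphabet as well as large ones.

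The paper avoids both problems by changing the distribution, not the coupling. Characters are drawn i.i.d.\ from $\Ber(2k/m)$ rather than uniformly, and the \No distribution is fully independent $P$ and $T$ (no explicit corruption set $M$). Two consequences: first, under the null, \emph{every} shift $t$ already has $\approx 2k>k$ mismatches, so there is no single planted window to locate; second, because almost all characters are $0$, observing a matched pair at shift $t$ is nearly uninformative (likelihood ratio close to $1$), and the only way to gain evidence against shift $t$ is to witness an actual $(0,1)$ or $(1,0)$ pair, which requires reading a rare $1$. The paper then replaces the coupling by a likelihood-ratio argument over realized runs (its Lemmas~6.3--6.5): for any fixed admissible run $r$, $\Pr_{\Planted_t}[r]\ge\Pr_{\Random}[r]$ unless $r$ exhibits a mismatch at shift $t$, and the number of shifts eliminated by $r$ is at most $q_P\cdot q_T'+q_P'\cdot q_T$ where $q'$ counts the \emph{ones} read, which is $O(kq/m)$ with constant probability under $\Random$. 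Conditioning on the run sidesteps adaptivity entirely, and the extra factor $k/m$ from the sparsity is exactly what upgrades the ``locate the window'' bound $\min(mq,q^2)=\Omega(\Delta)$ (which is all your dense construction can enforce, and is the paper's separate $k=m-1$ bound) to $\frac{k}{m}\min(mq,q^2)=\Omega(\Delta)$, yielding $q=\Omega(\sqrt{m\Delta/k})$. If you want to salvage your write-up, the essential change is to make the strings sparse and to argue via likelihood ratios of runs rather than couplings.
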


\noindent
The bounds of \cref{thm:main,thm:lb_adaptive} diverge for $n=m+o(m)$, which yields the following question:
\begin{quote}
    \itshape Can one improve upon the time and query complexity of \cref{thm:main} for $n=m+o(m)$?
\end{quote}
Surprisingly, we match the lower bound of \cref{thm:lb_adaptive} for the whole spectrum of parameters.
In particular, as long as $n = m + \Oh(m/k)$, we achieve $\Ohtilde(m/k)$ running time. This time bound is already needed to distinguish $i\in\Occ(P,T)$ from $i\notin\Occ_k(P,T)$ for a single candidate $i\in\fragmentcc{0}{n-m}$.
\begin{theorem}\label{thm:adaptive-tester}
    \Cref{pr:tester} can be solved (correctly with high probability)  by an adaptive algorithm with time complexity 
        $\Oh\left(\left(\sqrt{\tfrac{m(n-m+1)\log^{29}n}{k}} + \tfrac{n}{k}\right)\log^{14} n\right)$. \lipicsEnd
\end{theorem}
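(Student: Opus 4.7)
The plan is to extend \cref{thm:main} so as to match the adaptive lower bound \cref{thm:lb_adaptive} across all parameter regimes, gaining from adaptivity exactly when $\ell \coloneqq n-m+1$ is much smaller than~$m$. If $\ell \ge m$, then $n = \Theta(\ell)$ and \cref{thm:main} already yields the target $\Ohtilde(\sqrt{nm/k}+n/k) = \Ohtilde(\sqrt{m\ell/k}+n/k)$ non-adaptively; the substantive case is $\ell < m$ (hence $n \le 2m$), where \cref{thm:main} only guarantees $\Ohtilde(m/\sqrt{k})$, a factor of up to $\sqrt{m/\ell}$ above the target.

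For the case $\ell < m$, I would build on the structural tool behind \cref{thm:main}: a set $S \subseteq P \cup T$ of $\Ohtilde(kn/m) = \Ohtilde(k)$ ``selected'' positions such that for every candidate~$i$, at least $\min(k,k_i)$ of the $k_i$ mismatches between $P$ and $T\fragmentco{i}{i+m}$ involve a position from $S$. A non-adaptive use of $S$ would cost $\Theta(k\ell)$ cross-queries between $S$ and the $\ell$ candidate shifts. The adaptive remedy has two components. \emph{Filtering.} Draw a random text sample $R_T \subseteq \fragmentco{0}{n}$ with $|R_T| = \Ohtilde(\sqrt{m\ell/k})$. For each $t \in R_T$, adaptively query $P\position{t-i}$ only for candidates $i \in \fragmentco{0}{\ell}$ currently ``alive'' (not yet ruled out by a detected mismatch). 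The covering property of $S$, combined with a Chernoff bound over the sampling, should ensure that every candidate with more than $k$ mismatches is ruled out with high probability, while exact and near-exact occurrences survive. \emph{Verification or aggregation.} When $\ell$ is so small that random sampling cannot reliably hit a mismatch for every bad candidate (roughly $\ell \lesssim m/k$), the additive $\Ohtilde(n/k) = \Ohtilde(m/k)$ budget instead supports direct per-candidate checks, with overlapping-window work shared through \modelname primitives such as $\lceOp{\cdot}{\cdot}$ and $\ipmOp{\cdot}{\cdot}$ implemented on the accessed portions of $P$ and~$T$.

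The main obstacle is the adaptive amortization argument: I need to calibrate the sampling rate so that $P$-side and $T$-side adaptive queries both fit within $\Ohtilde(\sqrt{m\ell/k})$, while using the covering guarantee of $S$ to argue that every bad candidate is ruled out before too many $P$-queries accumulate. A second subtlety is the union bound: one must show that filtering succeeds with high probability over all $\ell \le m$ candidates simultaneously, requiring $\Omega(\log n)$ boosting rounds and careful conditioning. A third is the hand-off between the two phases along the threshold $\ell k \approx m$, where neither the $\sqrt{m\ell/k}$ nor the $m/k$ term clearly dominates and one must avoid double-counting queries. Combining these with the $\log^{29}n$ factor inherited from \cref{thm:main} (invoked as a subroutine in several subcases split by the ratio $\ell k/m$) would yield the $\log^{14}n$ overhead appearing in the statement.
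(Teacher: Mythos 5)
Your proposal correctly reduces to the regime $\Delta \coloneqq n-m+1 < m$ (handing $\Delta \ge m$ to \cref{thm:main}) and correctly identifies that the covering lemma (\cref{lem:temp-mismatchcontainer}/\cref{thm:mismatchcontainer}) must drive the elimination of bad candidates. However, the central step is missing. Your ``filtering'' phase compares each sampled text position $t \in R_T$ against every alive candidate $i$ individually; since a candidate with only $k+1$ mismatches survives a uniformly random probe with probability $1-\Theta(k/m)$, each candidate needs $\Omega(m/k)$ probes before it dies, and each probe you charge separately. The resulting cost is $\Omega(\Delta \cdot m/k)$ (or $|R_T|\cdot|\{\text{alive}\}|$ per round), not $\Ohtilde(\sqrt{m\Delta/k})$ --- you flag this as ``the main obstacle'' but offer no mechanism to resolve it. The paper's resolution is a two-level structure absent from your sketch: (i) use $\Ohtilde(n/k)$ rejection-sampling queries to locate a \emph{good block} of length $\Delta$ that contains $\Omega(k\Delta/n)$ positions each eliminating a $1/\polylog n$ fraction of the current candidate distribution; (ii) run the \emph{non-adaptive reporting tester} of \cref{sec:algo} on the $\Theta(\Delta)$-length neighborhood of that block with the \emph{reduced} threshold $k_0 = \Thtilde(k\Delta/n)$, so that all $\Delta$ candidates are processed in a single batched pass costing $\Ohtilde(\Delta/\sqrt{k_0}) = \Ohtilde(\sqrt{\Delta n/k})$; (iii) iterate $\polylog n$ times. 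It is exactly the fingerprint-based batching of \cref{alg:oneexecution}, not per-candidate probing, that brings the cost per round down to the target.

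Two further gaps: your ``verification'' phase for $\Delta \lesssim m/k$ invokes \modelname primitives such as $\lceOp{\cdot}{\cdot}$, but these cannot be implemented sublinearly on unread portions of the input; the paper needs no separate phase here since $\sqrt{\Delta n/k}+n/k = \Ohtilde(m/k)$ already in that regime. And even granting steps (i)--(ii), maintaining the intersection of the surviving candidate sets across iterations naively costs $\Oh(\Delta)$ per round, which exceeds the time budget; the paper avoids this by zipping the sub-instances into a single instance over the alphabet $\Sigma^{d}$ and measuring progress via the potential $\mathbf{E}[\ln(1+|A|)]$ rather than via $|O|$ itself (\cref{lem:algomultiple}, \cref{lem:terminate-fast}). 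Neither of these ingredients is recoverable from your outline, so the proposal as written does not establish the theorem.
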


Unfortunately, unlike all previous solutions, the property tester of \cref{thm:adaptive-tester} is adaptive.
\begin{quote}
    \itshape Is adaptivity necessary to improve upon the query complexity of \cref{thm:main} for $n=m+o(m)$?
\end{quote}
It turns out that there is a separation between adaptive and non-adaptive algorithms.
Among non-adaptive solutions, the query complexity of \cref{thm:main} remains optimal already for $n=m+\Omega(k)$, not merely for $n=m+\Omega(m)$.

\begin{restatable}{theorem}{thmlbnonadaptive}\label{thm:lb_nonadaptive}
    Fix integer parameters $1 \le k < m \le n$ such that $k > 4\ln(5n)$.
    Every non-adaptive algorithm solving \cref{pr:tester} correctly with probability at least $0.9$ makes at least $\tfrac{1}{204}\min\left(\frac{n\sqrt{n-m+1}}{k},\sqrt{\frac{nm}{k}}+\tfrac{n}{k}\right)$ queries to the input strings $P$~and~$T$. If $k \le \frac{m}{4}$, then the lower bound holds for binary strings. \lipicsEnd
\end{restatable}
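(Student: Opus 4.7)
The plan is to apply Yao's minimax principle: I will exhibit a distribution over paired \Yes and \No instances on which every deterministic non-adaptive algorithm with fewer queries than the claimed threshold errs with probability exceeding $0.1$. Let $\ell := n-m+1$. Sample $P \in \{0,1\}^m$ uniformly and a plant position $i^* \in \fragmentcc{0}{\ell-1}$ uniformly. In the \Yes instance, set $T\fragmentco{i^*}{i^*+m} := P$ and fill the remaining characters of $T$ with independent uniform bits; in the \No instance, set $T\fragmentco{i^*}{i^*+m}$ to $P$ with $k+1$ uniformly random positions flipped, again filling the rest of $T$ with uniform bits. I will verify that for $k \le m/4$ the \No instance satisfies $\Occ_k(P,T) = \emptyset$ with high probability, via Hoeffding-type concentration of $\HD(P, T\fragmentco{i}{i+m})$ around $m/2$ for every $i \ne i^*$.

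For a non-adaptive algorithm with query sets $Q_P \subseteq \fragmentco{0}{m}$ and $Q_T \subseteq \fragmentco{0}{n}$, define the pair coverage $c(i) := |\{j \in Q_P : j+i \in Q_T\}|$ and the XOR vectors $V_i := (P[j] \oplus T[j+i])_{j \in Q_P,\, j+i \in Q_T} \in \{0,1\}^{c(i)}$ for each $i \in \fragmentcc{0}{\ell-1}$. The first technical step is to argue that all queried bits not participating in any pair are marginally uniform under both distributions, so $(V_i)_{i \in \fragmentcc{0}{\ell-1}}$ is a sufficient statistic. Conditioned on $i^*$: under \Yes, $V_{i^*} = 0$ and $V_i$ is uniform for $i \ne i^*$; under \No, $V_{i^*}$ is the indicator of $F$ restricted to $Q_P \cap (Q_T - i^*)$, where $F$ is a uniform random $(k{+}1)$-subset of $\fragmentco{0}{m}$, and $V_i$ remains uniform for $i \ne i^*$.

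Next I would split the bound into its two constituents and combine by taking the minimum. For the $\sqrt{nm/k} + n/k$ term, a chi-squared / Le Cam computation together with the pair-counting identity $\sum_{i=0}^{\ell-1} c(i) \le q_P\,q_T$ and a density argument (each of the $k{+}1$ flips sits at $i^* + f$ for a uniform $f \in \fragmentco{0}{m}$, so flips are roughly uniform in $\fragmentco{0}{n}$, forcing $q_T = \Omega(n/k)$ to hit even one) yields $q_P + q_T = \Omega(\sqrt{nm/k} + n/k)$. For the $n\sqrt{\ell}/k$ term I would exploit the convolutional structure $c(\cdot) = \mathbf{1}_{Q_P}\star \mathbf{1}_{Q_T^-}$: to achieve $c(i) = \Omega(m/k)$ for a constant fraction of $i \in \fragmentcc{0}{\ell-1}$ one needs both $\min(q_P,q_T) = \Omega(m/k)$ and $\max(q_P,q_T) = \Omega(m/k + \ell)$ by an interval-covering / cross-correlation argument on $Q_P$ and $Q_T$, and combining with the density requirement $q_T = \Omega(n/k)$ and optimizing the balance of $q_P$ and $q_T$ produces $q_P + q_T = \Omega(n\sqrt{\ell}/k)$ in the small-$\ell$ regime where it beats the other bound.

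The main obstacle will be the chi-squared analysis itself: bounding the cross-terms $\Pr_{\No}[V_i = 0 \land V_{i'} = 0]$ for distinct shifts requires analyzing the $\mathbb{F}_2$-rank of a bipartite graph of XOR equations on $Q_P \cup Q_T$, whose edges encode the pair queries at the two shifts. The na\"ive analysis (discarding cross-terms and using only $\sum_i c(i) \le q_P q_T$) recovers merely the adaptive bound $\Omega(\sqrt{m\ell/k})$; recovering the extra $\sqrt{n/k}$ factor that separates non-adaptive from adaptive algorithms requires the structural convolution argument together with the density lower bound $q_T = \Omega(n/k)$, and the cross-term contributions must be shown to remain linear in the single-shift contributions so that the hypothesis-testing argument actually closes.
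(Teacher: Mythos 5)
Your construction has a genuine gap: it cannot yield the $\frac{n\sqrt{n-m+1}}{k}$ branch of the bound, which is precisely the part where non-adaptive algorithms are provably weaker than adaptive ones. In your distribution the entire pattern $P$ is informative (dense uniform bits) and the only hidden data are the shift $i^*\in\fragmentco{0}{\Delta}$ and the flip set. Your own sufficient-statistic reduction then shows that the only requirement on the query sets is that $c(i)=|Q_P\cap(Q_T-i)|=\Omega(m/k)$ for a constant fraction of $i\in\fragmentco{0}{\Delta}$ (plus $q_T=\Omega(n/k)$ from the spread of the flips). But this constraint is cheap to satisfy non-adaptively with \emph{contiguous} query sets: taking $Q_P=\fragmentco{0}{L}$ and $Q_T=\fragmentco{0}{L+\Delta}$ with $L=\Theta(\frac{m}{k}\log n)$ gives $c(i)\ge L$ for \emph{every} shift $i$, and the rule ``answer \Yes iff some shift $i$ has $P[j]=T[j+i]$ for all $j\in\fragmentco{0}{L}$'' solves your instance with high probability using $\Oh(\frac{m}{k}\log n+\Delta+\frac{n}{k})$ queries. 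For, say, $n=\Theta(m)$, $k=\Delta=\sqrt{m}$ this is $\Ohtilde(\sqrt{m})$, far below the claimed $\Omega(n\sqrt{\Delta}/k)=\Omega(m^{3/4})$. Correspondingly, your ``interval-covering / cross-correlation argument'' can only force $q=\Omega(\frac{m}{k}+\Delta+\frac{n}{k})$, and no convex combination of these terms reaches $\frac{n\sqrt{\Delta}}{k}$. What your distribution does prove is essentially the \emph{adaptive} bound $\Omega(\sqrt{m\Delta/k}+n/k)$ (it is a close cousin of the paper's adaptive construction in \cref{sec:lb_adaptive}, with Bernoulli$(2k/m)$ characters replaced by uniform bits plus explicit flips), and this coincides with the non-adaptive bound only in the regime $\Delta=\Omega(m)$.

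The missing idea is to \emph{hide the location} of the informative content, which is what actually exploits non-adaptivity. The paper plants a short sparse gadget $S_P\in\{0,1\}^s$ with i.i.d.\ $\Ber(2k/s)$ entries at a uniformly random offset $p\in\fragmentcc{0}{m-s}$ inside an all-zero pattern (and $S_T$ at offset $p+t$ in the text), with $s=\min(m,\max(4k,\Delta))$. An adaptive algorithm could first locate the support and then concentrate its queries, but a non-adaptive one must, for a constant fraction of offsets $p$, place $\Omega(\max(\sqrt{s\Delta/k},(s+\Delta)/k))$ queries inside the window $P\fragmentco{p}{p+s}\cup T\fragmentco{p}{p+s+n-m}$ (\cref{lem:hpt,lem:hp}); since each position of $P$ lies in only $s$ of the $m-s+1$ windows, averaging over $p$ multiplies the count by roughly $\frac{m}{s+\Delta}$, which is where the extra $\sqrt{n/k}$-type factor comes from (\cref{prp:lb_nonadaptive}). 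As a secondary remark, the chi-squared/Le Cam machinery and the $\mathbb{F}_2$-rank analysis of cross-terms you anticipate are unnecessary: the paper gets by with a one-sided likelihood-ratio comparison (any run not witnessing a mismatch is at least as likely under the planted distribution as under the independent one), combined with a Markov bound on the number of ones read.
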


In order to complete the characterization of \cref{pr:tester}, we note that a simplified version of the approach from~\cite{ChanGKKP20} already yields a near-optimal non-adaptive algorithm for $n = m + \Oh(k)$:
\begin{fact}\label{fct:simplified}
    \cref{pr:tester} can be solved (correctly with high probability) by a non-adaptive algorithm with time complexity 
    $\Oh\Big(\tfrac{n\sqrt{n-m+1}}{k} \log^2 n\Big)$.\lipicsEnd
\end{fact}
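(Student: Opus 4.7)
The plan is to derive \cref{fct:simplified} by combining the folklore sampling scheme of \cref{fct:folklore} with an efficient per-shift mismatch detection procedure, following the approach of~\cite{ChanGKKP20}. The argument is tailored to the regime $n - m = \Oh(k)$, where \cref{fct:simplified} offers the tightest bound among our results: the target then becomes $\Ohtilde(n/\sqrt{k})$, matching the folklore query complexity up to polylogarithmic factors.

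First, I would sample $R_P \subseteq \fragmentco{0}{m}$ and $R_T \subseteq \fragmentco{0}{n}$ independently at rates $r_P$ and $r_T$ satisfying $r_P \cdot r_T \ge c\log(n)/k$ for a sufficiently large constant $c$. By the argument of \cref{fct:folklore} combined with a union bound over the $N := n - m + 1$ candidate shifts, every shift $i$ with $\HD(P, T\fragmentco{i}{i+m}) > k$ admits at least one sampled mismatch pair $(j, i+j) \in R_P \times R_T$ with $P\position{j} \ne T\position{i+j}$, with probability at least $1 - 1/n$. Balancing the rates so that $r_P m \approx r_T n \approx \sqrt{nm \log(n)/k}$ yields query complexity $\Ohtilde(\sqrt{nm/k})$, which in the regime $n - m = \Oh(k)$ reduces to $\Ohtilde(n/\sqrt{k})$, fitting within the target.

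Second, to compute, for each $i \in \fragmentco{0}{N}$, whether a sampled mismatch pair exists at shift $i$, I would sweep $t \in R_T$ in sorted order and iterate through the sliding window $R_P \cap (t - N, t]$, maintained via a standard hash-based data structure; for each $j$ in the window, if $P\position{j} \ne T\position{t}$, the corresponding shift $t - j$ is marked as rejected. After the sweep, any unmarked shift is reported as a candidate exact match. The cost of this naive iteration is $\Oh(r_P r_T m N) = \Oh(m N \log(n)/k)$, which for $n - m = \Oh(k)$ evaluates to $\Oh(m \log n)$.

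The main obstacle is closing the residual $\sqrt{k}/\log n$ gap between this pair-iteration cost and the sharper target $\Oh(n\sqrt{n-m+1}/k \cdot \log^2 n)$. To achieve the target, I would invoke a refinement from \cite{ChanGKKP20}: partition the $N$ shifts into $\Theta(\sqrt{N})$ groups of size $\Theta(\sqrt{N})$ and, within each group, batch the mismatch detection via a sparse Boolean convolution on the restricted sample sets (for instance, using a sparse FFT to compute $\sum_j \mathbbm{1}[j \in R_P, j+i \in R_T, P\position{j} \ne T\position{j+i}]$ simultaneously for all shifts $i$ in the group). Because the sample supports within each group have size $\Ohtilde(\sqrt{nm/k}/\sqrt{N})$ and the per-group output range has size $\Theta(\sqrt{N})$, each group is processed in $\Ohtilde(\sqrt{nm/k}/\sqrt{N} + \sqrt{N})$ time; summing over the $\sqrt{N}$ groups gives the claimed $\Ohtilde(n\sqrt{n-m+1}/k)$ total running time, with the extra $\log^2 n$ factor absorbing the sparse-convolution overhead and the union-bound tax.
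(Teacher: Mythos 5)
Your proposal does not follow the paper's route, and it has two gaps, the first of which is fatal to the approach as a whole.

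First, independent (product) sampling of $R_P$ and $R_T$ cannot meet the bound of \cref{fct:simplified} in the regime where that bound matters. To catch a shift with more than $k$ mismatches with high probability you need $r_P\cdot r_T=\Omega(\log n/k)$, and then AM--GM forces $|R_P|+|R_T|\ge r_Pm+r_Tn\ge 2\sqrt{r_Pr_T\,mn}=\Omega\big(\sqrt{nm\log n/k}\big)$ characters to be read, regardless of how you balance the rates. But $\tfrac{n\sqrt{\Delta}}{k}\log^2 n$ with $\Delta:=n-m+1$ drops well below $\sqrt{nm/k}$ once $\Delta\ll k/\polylog n$ (e.g., for $\Delta=1$ the target is $\Ohtilde(n/k)$), and this small-$\Delta$ regime is exactly where \cref{fct:simplified} is the operative bound (cf.\ \cref{fig:results} and \cref{thm:lb_nonadaptive}). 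Your reduction to ``$n-m=\Oh(k)$, hence the target is $\Ohtilde(n/\sqrt{k})$'' silently assumes $\Delta=\Theta(k)$. The paper avoids this by using the correlated periodic sample of \cite{ChanGKKP20}: $R_P=B-U$ and $R_T=B+V$ where $U+V$ only needs to cover the $\min(p,\Delta)$ residues of actual shifts, so one can take $|U|,|V|=\Oh(\sqrt{\Delta})$ and the query count scales with $\sqrt{\Delta}$ rather than with $\sqrt{m}$. Formally, \cref{fct:simplified} is obtained from \cref{thm:main-approx} with $k'=0$, taking the branch $z,z'=\Theta(\sqrt{\min(2\hat p,\Delta)n/m})$ in \cref{lem:algo-find-fingerprint-matches}.

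Second, the batching step you invoke to recover the running time is not substantiated and is misattributed. The quantity $\sum_j \mathbf{1}[j\in R_P,\, j+i\in R_T,\, P\position{j}\ne T\position{j+i}]$ is not a sparse Boolean convolution of the supports: it depends on the character values, and computing it for all shifts is (a sampled version of) text-to-pattern Hamming distance, for which no algorithm running in time near-linear in the support size is known. The claim that the support relevant to a group of $\sqrt{N}$ consecutive shifts has size $\Ohtilde(\sqrt{nm/k}/\sqrt{N})$ is unjustified---such a group still interacts with essentially all of $R_P$---and even taking your per-group costs at face value the total is $\Ohtilde(\sqrt{nm/k}+N)$, which exceeds $\Ohtilde(n\sqrt{N}/k)$ already for $N=\Theta(k)$ with $k>n^{2/3}$. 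What \cite{ChanGKKP20} actually does is compare Karp--Rabin fingerprints of the structured subsequences $X_u$ and $Y_v(i)$, maintained in a sliding window as $i$ grows and matched in batches via the interval structure of $U$ (\cref{alg:oneexecution}, \cref{lem:algo-find-fingerprint-matches}); correctness additionally needs control of $|\MM(P,T\fragmentco{i}{i+m})\bmod p|$ simultaneously for all $i$, which in this paper comes from \cref{thm:mismatchcontainer}.
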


\begin{figure}
\begin{tikzpicture}[yscale=1.1,xscale=0.8]
    
    \draw[-latex] (0,0) -- (14.5,0) node[below]{$\Delta = n-m+1$};
    \foreach \x/\t in {0/$1$,3/$\tfrac{m}{k}$,6/$k$,9/$m$,12/$mk$} {
        \draw (\x, 0.1) -- (\x, -0.1) node[below] {\t\vphantom{$\frac{m}{k}$}};
    }
    \draw[thick] (14, 1.7) -- (0, 1.7) -- (0, 0.3) -- (14, 0.3);
    \draw[thick, dotted] (15, 1.7) -- (14, 1.7)  (15, 0.3) -- (14, 0.3);
    \draw[thick] (0, 1) -- (9,1) -- (9, 0.3) (12, 0.3) -- (12, 1.7) (3,0.3) -- (3,1) (6,1) -- (6,1.7);
    \draw (1.5, 0.65) node {$\widetilde \Theta\Big(\tfrac{m}{k}\Big)$};
    \draw (6, 0.65) node {$\widetilde \Theta\Big(\sqrt{\tfrac{m\Delta}{k}}\Big)$};
    \draw (3, 1.35) node {$\widetilde \Theta\Big(\frac{m\sqrt{\Delta}}{k}\Big)$};
    \draw (10.5, 1) node {$\widetilde \Theta\Big(\sqrt{\tfrac{nm\vphantom{\Delta}}{k}}\Big)$};
    \draw (13.5, 1) node {$\widetilde \Theta\Big(\frac{n}{k}\Big)$};

    \draw (0, 0.65) node[left] {Adaptive:};
    \draw (0, 1.35) node[left] {Non-Adaptive:};

\end{tikzpicture}
\caption{The optimal time and query complexities of adaptive (below) and non-adaptive (above) property testers solving \cref{pr:tester} for different ranges of $\Delta\coloneqq n-m+1$. (The breakpoints $\frac{m}{k}$ and $k$ drawn on the horizontal axis may switch order depending on whether $k\ge \sqrt{m}$.)}\label{fig:results}
\end{figure}
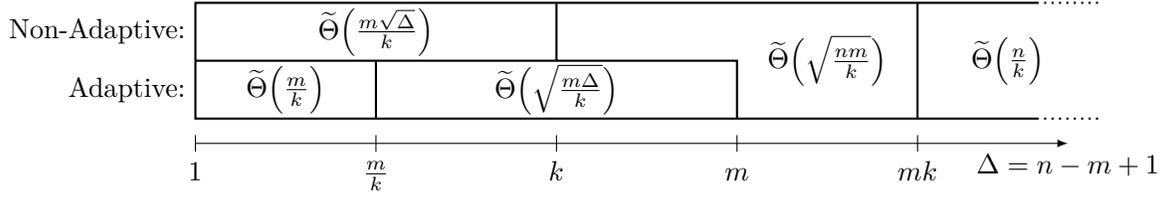

The full characterization of the complexity of \cref{pr:tester} is summarized in \cref{fig:results}.

\paragraph*{Extensions beyond Property Testing}

Having established the time and query complexity of \cref{pr:tester}, we now turn to extensions of this problem previously considered in~\cite{ChanGKKP20}.
The first extension can be interpreted as the tolerant version of \cref{pr:tester}, where the tester must return \Yes not only if $P$ has an exact occurrence in $T$ but also when it has a $k'$-mismatch occurrence for some parameter $k' < k$, typically assumed to be much smaller than $k$.
Setting $k'=0$ recovers the statement of \cref{pr:tester}.

\defproblem{pr:approximation}{Tolerant Property Testing for Pattern Matching}{A pattern $P\in \Sigma^m$, a text $T\in \Sigma^n$, and two integer thresholds $0\le k' < k < m$.}{\Yes if $\Occ_{k'}(P,T)\ne \emptyset$, \No if $\Occ_k(P,T)=\emptyset$, and an arbitrary answer otherwise.}{}

In the second extension, we ask the algorithm to output a set $A$ of candidate occurrences such that $\Occ_{k'}(P,T) \subseteq A \subseteq \Occ_k(P,T)$.
Intuitively, this requires distinguishing $\HD(P,T\fragmentco{i}{{i+m}})\le k'$ and $\HD(P,T\fragmentco{i}{i+m})> k$ for each $i\in \fragmentcc{0}{n-m}$ rather than just for $\min_{i\in \fragmentcc{0}{n-m}}\HD(P,T\fragmentco{i}{i+m})$.

\defproblem{pr:approximationreport}{Tolerant Property Testing for Pattern Matching (reporting version)}{A pattern $P\in \Sigma^m$, a text $T\in \Sigma^n$, and two integer thresholds $0\le k' < k < m$.}{A set $A$ such that $\Occ_{k'}(P,T) \subseteq A \subseteq \Occ_k(P,T)$.}{}

We adjust our non-adaptive algorithms to handle extensions at the price of small overheads in the time and query complexities. 
The multiplicative overhead for the tolerant version is $\Oh(1)$ if $k' = \Oh(k/\log n)$,
and $n^{o(1)}$ if $k' = o(k)$.
\begin{theorem}\label{thm:main-approx}
\Cref{pr:approximation} can be solved (correctly with high probability) by a non-adaptive algorithm with time complexity 
   $\Oh\left(\min\left(\sqrt{\tfrac{nm\log^9 n}{k}}+ \tfrac{n}{k},\tfrac{n\sqrt{n-m+1}}{k}\right)\cdot \log^2 n\right)\cdot n^{\Oh(k'/k)}$.

\Cref{pr:approximationreport} can be solved (correctly with high probability) by a non-adaptive algorithm with time complexity 
    $\Oh\left(\min\left(\sqrt{\tfrac{nm\log^9 n}{k}}+ \tfrac{n}{k},\tfrac{n\sqrt{n-m+1}}{k}\right)\log^2 n+ |\Occ_k(P,T)|\log n\right)\cdot n^{\Oh(k'/k)}$.\lipicsEnd
\end{theorem}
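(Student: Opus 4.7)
The plan is to derive both statements by running the non-adaptive testers of \cref{thm:main} and \cref{fct:simplified} a controlled number of times with independent randomness and aggregating the results. The key probabilistic observation is that both base testers are sampling algorithms: each candidate $i \in \fragmentcc{0}{n-m}$ is (implicitly) accepted iff no mismatch between $P$ and $T\fragmentco{i}{i+m}$ is witnessed among the sampled pairs, and a single fixed mismatch is witnessed with probability at least $\rho = \Theta(\log n / k)$. Consequently, a candidate with more than $k$ mismatches is accepted with probability at most $(1-\rho)^k \le 1/n^{\Omega(1)}$, whereas one with at most $k'$ mismatches is accepted with probability at least $(1-\rho)^{k'} \ge n^{-\Oh(k'/k)}$.

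To handle \cref{pr:approximation}, I would run the base tester $t = n^{\Oh(k'/k)}\log n$ times with fresh randomness and output \Yes iff at least one run does. For a \Yes-instance, fix any $i \in \Occ_{k'}(P,T)$; the probability that every run rejects $i$ is $(1-n^{-\Oh(k'/k)})^t \le 1/n^{\Omega(1)}$. For a \No-instance, a union bound over the $t$ runs (each correct with probability $\ge 1-1/n^{\Omega(1)}$ by \cref{thm:main} and \cref{fct:simplified}) keeps the total error at $1/n^{\Omega(1)}$. The overall running time is $t$ times the base complexity, which matches the $n^{\Oh(k'/k)}$ overhead in the announced bound.

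To handle \cref{pr:approximationreport}, I would instead collect, from each run, the set of surviving candidates and return their union as $A$. The above Chernoff-style bound shows that $\Occ_{k'}(P,T) \subseteq A$ with high probability after a union bound over candidates, while each $i \notin \Occ_k(P,T)$ survives some run with probability at most $t\cdot 1/n^{\Omega(1)}$; tuning the base tester's per-candidate error so that the exponent $\Omega(1)$ exceeds $\Oh(k'/k)$ by a sufficient margin (at a constant-factor cost in the base complexity) gives $A \subseteq \Occ_k(P,T)$ after a final union bound. The extra $|\Occ_k(P,T)|\log n$ term accommodates enumerating and deduplicating survivors across the $t$ runs.

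The main obstacle I anticipate is instrumenting the base tester of \cref{thm:main}, which runs in sublinear time and so cannot iterate over all $n-m+1$ positions, to emit the set of surviving candidates per run within the stated budget. Since that tester hinges on the compact characterization of mismatches via $\Ohtilde(kn/m)$ selected positions, the surviving set ought to be expressible as a small intersection of structured candidate families (one per sampled pattern/text comparison), each representable and intersectable in near-output-size time. Verifying that this succinct representation is preserved under repetition and aggregation across the $t$ independent runs is the principal technical step I would need to work out carefully.
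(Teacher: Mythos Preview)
Your proposal has a genuine gap in the probability analysis for the \textsc{No} case. You assert that in the base tester of \cref{thm:main} ``a candidate with more than $k$ mismatches is accepted with probability at most $(1-\rho)^k \le 1/n^{\Omega(1)}$.'' This would be true for the i.i.d.\ sampler of \cref{fct:folklore}, but that tester takes $\Omega(n)$ time. The sublinear-time tester underlying \cref{thm:main} uses a \emph{structured} $p$-periodic sample (a single execution of Algorithm~\ref{alg:oneexecution}), and there the exponent is not $|M_i|$ but $|M_i \bmod p|$. For a fixed candidate $i$ with $|M_i|>k$, the inequality $|M_i\bmod p|\ge 0.49k$ (the event $\mathcal{E}_s$ in \cref{prop:eventgoodprime}) is only guaranteed with probability $1-\Theta(1)$ over the random prime $p$ when $\hat p=\Theta(k\log^9 m)$ as in \cref{thm:main}. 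Hence, \emph{unconditionally}, a bad candidate $i\notin\Occ_k(P,T)$ may be accepted with $\Theta(1)$ probability in any single run. Your ``OR over $t=n^{\Oh(k'/k)}\log n$ runs'' therefore almost surely outputs \Yes\ on \textsc{No} instances: a constant fraction of the runs have a bad prime and can accept spurious candidates.

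The paper's proof addresses exactly this: it \emph{inflates} $\hat p$ by a factor of $n^{\Theta(k'/k)}$ so that $\Pr[\neg\mathcal{E}_s]=\Oh(n^{-4k'/k})$, bringing the acceptance probability of a bad candidate down to the same scale $n^{-\Theta(k'/k)}$ as that of a $k'$-mismatch candidate. Since these two probabilities now differ only by a constant factor, the paper uses a \emph{threshold vote} (not an OR, not a union) over $r=\Theta(n^{4k'/k}\log n)$ executions, separated via Chernoff. For the reporting version there is a further subtlety you did not anticipate: runs in which $\mathcal{E}_s$ fails may produce sets $A_{\ell,s}$ of size far exceeding $|\Occ_k(P,T)\cap I_s|$, so even enumerating them would blow the budget. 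The paper handles this by, for each piece $I_s$, discarding the $\alpha r$ runs with the largest $|A_{\ell,s}|$ before enumeration (the $L_s$ mechanism), and by dividing the text into length-$\Theta(m)$ pieces so that one bad prime does not contaminate the whole output. The ``succinct representation'' issue you flagged is comparatively routine and already handled in \cref{lem:algo-find-fingerprint-matches}.
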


\paragraph*{Open Questions}\label{sec:open}
Our work establishes the near-optimal query and time complexity of \textsf{Property Testing for Pattern Matching} (\cref{pr:tester}).
Nevertheless, the extensions give rise to several interesting open questions.
\begin{itemize}
    \item We are not aware how to generalize our adaptive algorithm (\cref{thm:adaptive-tester}) to the tolerant version (\cref{pr:approximation}) or the reporting version (\cref{pr:approximationreport} with $k'=0$).
    In both cases, we suggest considering $n = m + O(m/k)$ first; in that setting, our algorithm runs in $\Ohtilde(m/k)$ time (see \cref{fig:results}), and many of its steps simplify (see the $\Ohtilde(\Delta + n/k)$-complexity version in \cref{subsec:overview-adaptive}).
    \item In~\cite[Theorem 10.5]{ChanGKKP20}, \cref{pr:approximation} is solved with gap $k/k' = 1+\epsilon$ (for an arbitrarily small constant $\epsilon > 0$) in sublinear running time $\Ohtilde\big(n/k^{\epsilon^{1/3-o(1)}}\big)$.
    Unfortunately, our new techniques increase the gap by a factor of $2$, and thus they offer improvements only for sufficiently large constant gaps. 
    It remains open to improve upon the results of \cite{ChanGKKP20} for smaller gaps.
    \item In general, the time complexity of \cref{pr:approximation} with $k/k' = 1+\epsilon$ is wide open.
    \Cref{fct:folklore} can be easily modified to solve this version at the price of a $\poly(\frac{1}{\epsilon})$-factor overhead in the query complexity, so an improved unconditional lower bound is beyond the scope of the current techniques.
    Still, one can hope for lower bounds conditioned on fine-grained complexity hypotheses.
\end{itemize}
Finally, we remark that the rich structure of \cref{pr:tester} for $n=m+o(m)$ motivates exploring this regime for other approximate pattern matching problems.
To the best of our knowledge, even the time complexity of computing $\Occ_k(P,T)$ (the \textsf{$k$-Mismatch} problem) has not been optimized for $n=m+o(m)$.

\paragraph*{Paper Organization}
In \cref{sec:overview}, we give an overview of our proof techniques.
\cref{sec:prelim} contains useful notations and definitions. 
In \cref{sec:combinatoriallemma}, we state and prove a key combinatorial lemma used by our algorithms.
In \cref{sec:algo}, we present our non-adaptive algorithms, proving \cref{thm:main,fct:simplified,thm:main-approx}.
In \cref{sec:adaptive}, we present our adaptive algorithm, proving \cref{thm:adaptive-tester}.
In \cref{sec:lb_adaptive,sec:lbnonadaptive,sec:lb_large}, we prove our  lower bounds (\cref{thm:lb_adaptive,thm:lb_nonadaptive}).
\section{Technical Overview}
\label{sec:overview}

Henceforth, let $T[0\dd n)\in \Sigma^n$ and $P[0\dd m)\in \Sigma^m$ denote the input text and pattern, respectively.
For $i\in [0\dd n-m]$, let \[M_i \coloneqq  \{j\in [0\dd m): P[j]\neq T[i+j]\}\] 
be the set of mismatching positions between $P$ and $T\fragmentco{i}{i+m}$.
We use $\Ohtilde(\cdot)$, $\Omtilde(\cdot)$, and $\Thtilde(\cdot)$ to hide $\polylog n$ factors.

\subsection{Non-Adaptive Tester}
\label{subsec:overview-nonadap}
Our main non-adaptive property tester (\cref{thm:main}) builds upon the previous work by Chan, Golan, Kociumaka, Kopelowitz, and Porat~\cite{ChanGKKP20}.
In this overview, we first discuss their techniques and limitations.
We then explain how we overcome these limitations, thereby proving \cref{thm:main}.
For simplicity, we focus on the regime $n = \Oh(m)$, in which the desired time complexity in \cref{thm:main} becomes $\Ohtilde\big(\sqrt{nm/k} + n/k\big) = \Ohtilde(n/\sqrt{k})$, whereas the previous algorithm (\cref{thm:previous}) has time complexity $\Ohtilde\big(\sqrt[3]{n^2m/k} +n/k\big) = \Ohtilde(n/\sqrt[3]{k})$.  
The regime of $n = \omega(m)$ uses the same techniques but with appropriately adjusted parameters.

To enable efficient computation, \cite{ChanGKKP20} sampled a highly structured set of positions instead of using i.i.d.\ sampling as in the folklore tester (\cref{fct:folklore}).
The first crucial property is that the sample is $p$-\emph{periodic} for a random prime $p= \Theta(sk\log n)$, where the tunable parameter $s>1$ is specified later.
In other words, the algorithm queries the characters $P[j]$ with $j\bmod p\in R_P$ for some $R_P\subseteq \ZZ_p$ and $T[h]$ with $h\bmod p\in R_T$ for some $R_T\subseteq \ZZ_p$. 
By mimicking the proof of \cref{fct:folklore}, one could pick $R_P$ and $R_T$ uniformly at random with rates $r_P$ and $r_T$, respectively, such that $r_P\cdot r_T = \Theta(\frac{\log n}{k})$.
However, the sample of~\cite{ChanGKKP20} is more structured, derived from a single random subset $B\subseteq \ZZ_p$ sampled at rate $\beta = \Theta(\frac{\log n}{k})$. 
The idea is to pick two subsets $U,V\subseteq \ZZ_p$ of size $\Theta(\sqrt{p})$ such that $\ZZ_p = U+V \coloneqq \{(u+v)\bmod p : u\in U, v\in V\}$ and, based on them, define $R_P = B-U$ and $R_T = B+V$, for a query complexity of $\Oh((|R_P|+|R_T|)\cdot \frac{n}{p}) ={\Oh((|B||U|+|B||V|)\cdot \frac{n}{p})} = \Oh(\beta p\cdot \sqrt{p} \cdot \frac{n}{p})=\Ohtilde(\frac{n\sqrt{p}}{k})= \Ohtilde (\frac{n\sqrt{s}}{\sqrt{k}})$. 
Specifically, the algorithm of~\cite{ChanGKKP20} uses $U = [0\dd  z)$ and $V = z\cdot [0\dd \lceil p/z \rceil)$, where $z=\lceil{\sqrt{p}}\rceil$.
For every candidate position $i \in [0\dd n-m]$, the algorithm picks $u\coloneqq (i\bmod p)\bmod z\in U$ and $v\coloneqq (i\bmod p)-u\in V$ so that $i \equiv u+v \pmod{p}$, and reports \Yes if and only if $T[i+j]=P[j]$ holds for all $j\in [0\dd m)$ such that $j\bmod p \in B-u$ or, equivalently, $(i+j)\bmod p \in B+v$.

\begin{example}[see \cref{fig:cgkkp}]\label{ex:cgkkp}
    Let $p=11$ so that $U=\{0,1,2,3\}$ and $V=\{0,4,8\}$.
    If $B=\{2,8\}$, then the algorithm reads $P[j]$ whenever $j\bmod p \in B-U = \{0,1,2,5,6,7,8,10\}$ and $T[h]$ whenever $h\bmod p \in B+V = \{1,2,5,6,8,10\}$. 
    For $i=17$, we have $u=2$ and $v=4$. The algorithm compares $P[j]$ with $T[i+j]$ for all $j\in [0\dd m)$ such that $j\bmod p \in \{0,6\}$ or, equivalently, $(i+j) \bmod p\in \{1,6\}$.
    \lipicsEnd
\end{example}

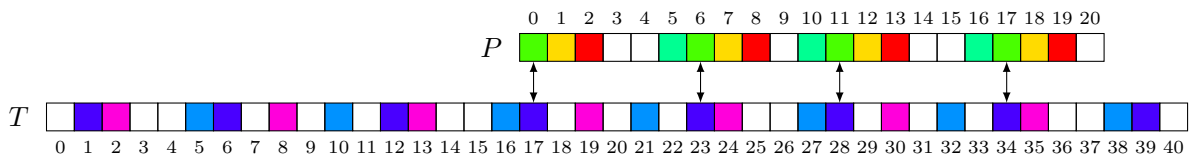
\begin{figure}
    \begin{center}
    \begin{tikzpicture}[scale=0.37]
        \begin{scope}[yshift=2.5cm,xshift=17cm]
        \foreach \x/\y in {0/2,1/1,2/0,5/3,6/2,7/1,8/0,10/3,11/2,12/1,13/0,16/3,17/2,18/1,19/0}{
            \pgfmathsetmacro{\h}{\y/7}    %
            \edef\hval{\h}                 %
            \definecolor{temp}{hsb}{\hval,1,1}
            \fill[temp] (\x,0) rectangle (\x+1,1);
        }
         \foreach \x in {0,1,...,20} {
            \draw (\x,0) rectangle (\x+1,1);
            \draw (\x+.5, 1) node[above]{\scriptsize $\x$};
        }
        \foreach \x in {0,6,11,17} {
            \draw[latex-latex] (\x+.5, 0) -- (\x+.5, -1.5);
        }
        \draw (0,0.5) node[left]{$P\;$};
        \end{scope}
        \foreach \x/\y in {1/1,2/0,5/2,6/1,8/0,10/2,12/1,13/0,16/2,17/1,19/0,21/2,23/1,24/0,27/2,28/1,30/0,32/2,34/1,35/0,38/2,39/1}{
            \pgfmathsetmacro{\h}{6/7-\y/7}    %
            \edef\hval{\h}                 %
            \definecolor{temp}{hsb}{\hval,1,1}
            \fill[temp] (\x,0) rectangle (\x+1,1);
        }
        \foreach \x in {0,1,...,40} {
            \draw (\x,0) rectangle (\x+1,1);
            \draw (\x+.5, 0) node[below]{\scriptsize $\x$};
        }
        \draw (0,0.5) node[left]{$T\;$};
    \end{tikzpicture}
    \end{center}
    \caption{Illustration of the execution of the algorithm of~\cite{ChanGKKP20} for $p = 11$ and $B=\{2,8\}$; see also \cref{ex:cgkkp}.
    The algorithm reads the colorful positions of $P$ and $T$, where each color corresponds to a separate element of the sets $U=\{0,1,2,3\}$ and $V=\{0,4,8\}$.
    When checking the candidate position $i=17$, the algorithm (implicitly) makes comparisons indicated by arrows. 
    In this example, the comparisons are between the green positions in $P$ (corresponding to $u=2$) and the dark blue positions in $T$ (corresponding to $v=4$).
    The subsequences $X_u$ of $P$ and $Y_v$ of $T$ are obtained by concatenating positions (of $P$ or $T$, respectively) of the same color.
    In our example, each position has at most one color; in general, they may have up to $|B|$ colors.}\label{fig:cgkkp}
\end{figure}

Remarkably, \cite{ChanGKKP20} implemented their algorithm to run in \emph{time} $\Ohtilde (\frac{n\sqrt{s}}{\sqrt{k}})$. 
In a nutshell, they compute and compare the Karp--Rabin fingerprints of appropriate subsequences of $P$ and $T$.
We can define these subsequences using the $\bigodot_{r\in R} S_r$ notation, which stands for the concatenation of the strings (or characters) $S_r$ in increasing order of the index $r\in R$.
The algorithm of \cite{ChanGKKP20} defines
\[X_u \coloneqq \bigodot_{j\in [0\dd m)\;:\; j\bmod p \;\in\; B-u}P[j] \quad\text{for }u\in U,\quad\text{and}\quad Y_v \coloneqq \bigodot_{h\in [0\dd n)\;:\; h\bmod p \;\in\; B+v}T[h] \quad\text{for }v\in V.\]
For each $i\in [0\dd n-m]$, the algorithm compares $X_u$ against an appropriate substring of $Y_v$,
where $i\equiv u+v\pmod{p}$.
Crucially, for a fixed $v\in V$, the relevant substrings of $Y_v$ evolve in a sliding-window fashion as $i$ grows from $0$ to $n-m$, and thus their fingerprints can be updated fast.
Furthermore, the interval structure of $U$ allows batched comparisons through range queries and enables a compact representation of the set of \Yes positions $i$; see details in \cref{alg:oneexecution} and the proof of \cref{lem:algo-find-fingerprint-matches}.
To match the complexity $\Ohtilde (\frac{n\sqrt{s}}{\sqrt{k}})$ of this algorithm with the target bound  $\Ohtilde(\frac{n}{\sqrt{k}})$, we would like to \mbox{choose $s = \Ohtilde(1)$}.

On the flip side, the $p$-periodic structure of the sample results in a slightly weaker correctness guarantee for this algorithm: in the \No case, where $|M_i| =\HD(P,T[i\dd i+m))>k$ for all $i\in [0\dd n-m]$, the probability of incorrectly reporting \Yes at position $i$ is $(1-\beta)^{|M_i\bmod p|} \le n^{-\Omega(|M_i\bmod p|/k)}$, which becomes $n^{-\Omega(1)}$ only if the stronger condition $|{M_i \bmod p}|\ge \Omega(k)$ holds as well.
This condition indeed holds for every \textbf{individual} position $i$ such that $|M_i|>k$, with probability at least $1- \frac{1}{s}$ over the random prime $p\in \Theta(sk\log n)$. 
Formally, if the constant hidden in the $\Theta(\cdot)$ notation is appropriate, the following holds by a standard divisor-counting argument with Markov's inequality (see the short proof of \cref{lem:randprime2}):
\begin{equation}
    \label{eqn:keycondition-simple}
\text{If $i\in [0\dd n-m]$ and $|M_i|> k$, then $\Pr_p[|M_i\bmod p| > 0.99k]\ge 1-\tfrac{1}{s}$.}
\end{equation}
However, \eqref{eqn:keycondition-simple} fails to guarantee that $|M_i \bmod p|\ge \Omega(k)$ holds \textbf{simultaneously for all} relevant $i\in[0\dd n-m]$, as the error rate $\frac{1}{s}$ is too high for a union bound over all positions.
There are a few natural attempts to overcome this key issue:
\begin{itemize}
    \item The solution in \cite{ChanGKKP20} is to repeat the aforementioned algorithm $\Oh(\log n)$ times independently. 
    With high probability, every $i$ with $|M_i|>k$ is eliminated at least once, so it is correct to finally return \Yes if some position survives as a \Yes position in all $\Oh(\log n)$ executions. 
    However, determining whether these $\Oh(\log n)$ sets of surviving positions have a nonempty intersection requires time linear in their total size, which is typically $\Omega(\tfrac{n}{s})$ per set given the error rate~$\tfrac{1}{s}$.
    As a result, the total running time is suboptimal: $\Ohtilde \big(\frac{n\sqrt{s}}{\sqrt{k}} + \tfrac{n}{s}\big )$, which becomes $\Ohtilde (n/\sqrt[3]{k})$ \mbox{when $s = \sqrt[3]{k}$.}
    \item One could ask whether a more sophisticated analysis could reduce the error rate below $\frac1s$, which would directly improve \cite{ChanGKKP20}'s time complexity.
    That is,  for a given set $M_i \subseteq [0\dd m)$ of size $|M_i|>k$, we would like a much better upper bound on the number of primes $p\in \Theta(sk\log n)$ such that, say, $|M_i\bmod p|\le 0.1k$.
    This question is related to results from analytical number theory such as the large sieve \cite{bombieri} and larger sieve \cite{largersieve} (see also \cite{helfgott}), which show that a large set cannot occupy too few residue classes modulo many primes. However, the quantitative bounds achieved by these results do not  seem strong enough for our purpose. 
\end{itemize}

At a high level, we address this issue by demonstrating that the bad events for different candidates $i\in \fragmentcc{0}{n-m}$ are highly correlated---a phenomenon reminiscent of the container method in probabilistic combinatorics.
Specifically, we prove the following over the random prime $p\in \Theta(sk\log n)$:%
\begin{equation}
    \label{eqn:keycondition-wishful}
\text{$\Pr_p[\forall_{i\in [0\dd n-m]}\; |M_i|>k \Rightarrow |M_i\bmod p| > 0.49k]\ge 1-\tfrac{\polylog n}{s}$.}
\end{equation}
Crucially, this upper bound applies to the union of bad events across \textbf{all} $i\in [0\dd n-m]$; thus, the total error probability is $\tfrac{\polylog n}{s}$ rather than the much weaker $\frac{n}{s}$ (which results from a naive union bound).
Consequently, we can pick a sufficiently large $s = \polylog n$ and re-use the algorithm of \cite{ChanGKKP20}.
The key difference is that we can now return \Yes already if, for each of the $\Oh(\log n)$ independent iterations, some position survives as a \Yes position. 
Compared to the original setting, the quantifiers are swapped:  the surviving \Yes positions can be different across iterations. 
Thus, although each set of surviving positions is of size $\Omega(\frac{n}{s})=\Omtilde(n)$ on average per iteration, we can halt an iteration as soon as we detect that the set is non-empty.
This yields \cref{thm:main} for $n=\Oh(m)$.

What remains is to prove the purely combinatorial statement of~\eqref{eqn:keycondition-wishful} for an appropriate $s = \polylog n$.
To illustrate our idea, it is instructive to first consider the special case where both the text $T\in \{0,1\}^{n}$ and the pattern $P\in \{0,1\}^m$ are binary strings of Hamming weights $\Ohtilde(k)$.
 Let $M^{(T)}\subseteq [0\dd n)$ and $M^{(P)}\subseteq [0\dd m)$ be the sets of positions containing ones in $T$ and $P$, respectively, where $|M^{(T)}|,|M^{(P)}|= \Ohtilde(k)$.
 Since any mismatch is either between a one from $P$ and a zero from~$T$, or vice versa, the following observation holds: 
\begin{equation}
    \label{eqn:keycondition}
\text{If $i\in [0\dd n-m]$ and $j\in M_i$, then $j\in M^{(P)}$ or $i+j\in M^{(T)}$.}
\end{equation}
By considering which of the two cases in Property~\eqref{eqn:keycondition} occurs more often, we immediately get the following corollary:
\begin{equation}
    \label{eqn:keycondition2}
\text{If $i\in [0\dd n-m]$ and $|M_i|> k$, then $|M_i \cap M^{(P)}|> 0.5k$ or $|M_i \cap (M^{(T)}-i)| > 0.5k$.}
\end{equation}

To simultaneously control the shrinkage of all sets $M_i$ modulo a random prime $p$, we employ a more efficient argument based on Property~\eqref{eqn:keycondition2}, avoiding a naive union bound over all $i\in [0\dd n-m]$.
By the same divisor-counting argument as before, for a sufficiently large random prime $p \in \Theta(\frac1k \cdot (|M^{(T)}|+|M^{(P)}|)^2\cdot \log n) = \Ohtilde(k)$, we have:
\[\Pr_p\Big[|M^{(P)}\bmod p|\ge |M^{(P)}|-0.01k \;\;\text{and}\;\; |M^{(T)}\bmod p|\ge |M^{(T)}|-0.01k\Big] \ge 0.9.\] 
Under these conditions, every subset $M'\subseteq M^{(P)}$ (and every subset $M'\subseteq M^{(T)}$) also satisfies $|M'\bmod p| \ge |M'|-0.01k$. 
In particular, in the first case $|M_i \cap M^{(P)}|> 0.5k$ of Property~\eqref{eqn:keycondition2}, we conclude $|M_i\bmod p|\ge |{(M_i \cap M^{(P)}) \bmod p}| > 0.5k - 0.01k = 0.49k$ (the second case of Property~\eqref{eqn:keycondition2} is similar). In summary, we have established that, with probability at least $0.9$ over a sufficiently large random prime $p=\Thtilde(k)$, we have $|M_i\bmod p|> 0.49k$ simultaneously for all positions $i$ with $|M_i|> k$.
This resolves the key issue in \cite{ChanGKKP20}'s framework discussed earlier and leads to the desired $\Ohtilde(n/\sqrt{k})$-time non-adaptive tester (for the low-Hamming-weight special case under consideration).

For general input instances, which may not be binary with low Hamming weights,
our solution sketched above still works provided that Property~\eqref{eqn:keycondition2} holds for some sets 
$M^{(T)}$ and $M^{(P)}$ of size $\Ohtilde(k)$.
Surprisingly, sets $M^{(T)}$ and $M^{(P)}$ with the desired properties always exist, as we demonstrate in the next subsection.

\subsection{A Novel Characterization of Mismatches}
The following combinatorial lemma is the key technical innovation; it also plays a crucial role for our adaptive tester (see \cref{subsec:overview-adaptive}).

\begin{lemma}[see the full statement in \cref{thm:mismatchcontainer} and an illustration in \cref{fig:container}]
\label{lem:temp-mismatchcontainer}
Let $T\in \Sigma^n$ and $P\in \Sigma^m$, where $n  = \Oh(m)$, and let $k\in [1\dd m]$.
There exist sets $M^{(T)}\subseteq [0\dd n)$ and $M^{(P)}\subseteq [0\dd m)$ of size $\Ohtilde(k)$ such that the following holds for each $i\in [0\dd n-m]$:
\[|\{j\in M_i:  j\in M^{(P)} \text{ or } i+j \in M^{(T)}\}| \ge \min(k, |M_i|).\lipicsEnd\]
\end{lemma}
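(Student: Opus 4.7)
The plan is a two-phase construction: a random-sampling phase that dispatches alignments with many mismatches, followed by a structural phase that handles alignments $i$ for which $P$ and $T\fragmentco{i}{i+m}$ are already close in Hamming distance.

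First, I would include each position $j \in \fragmentco{0}{m}$ in $M^{(P)}$ independently with probability $p_P = \Theta(k\log n/m)$ and each $h \in \fragmentco{0}{n}$ in $M^{(T)}$ independently with probability $p_T = \Theta(k\log n/n)$. Both sets have expected size $\Ohtilde(k)$. For any alignment $i$, the expected number of mismatches $j \in M_i$ with $j \in M^{(P)}$ or $i+j \in M^{(T)}$ is $|M_i| \cdot \bigl(1-(1-p_P)(1-p_T)\bigr) = \Theta(|M_i| \cdot k\log n / m)$, using $n = \Oh(m)$. A Chernoff bound together with a union bound over the $\Oh(m)$ candidate alignments shows that every $i$ with $|M_i| \ge m/\log n$ receives coverage at least $\min(k,|M_i|) = k$ with high probability.

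For the remaining alignments, i.e., those with $|M_i| < m/\log n$, I would invoke a structural dichotomy on the set $S = \{i \in \fragmentcc{0}{n-m} : |M_i| < m/\log n\}$. Either $|S| = \Ohtilde(k)$, in which case each close alignment is handled individually by inserting its remaining uncovered mismatches into $M^{(P)}$, with amortization across nearby alignments keeping the total cost $\Ohtilde(k)$; or $P$ admits a small approximate period $\pi$ and the elements of $S$ form a small number of arithmetic progressions with common difference $\pi$. In the periodic case, mismatches of period-shifted alignments are coupled modulo $\pi$, so a carefully chosen set of $\Ohtilde(k)$ period-aligned positions---a few per residue class $\{j : j \equiv c \pmod \pi\}$---added to $M^{(P)}$ contributes the required coverage to every alignment in $S$ simultaneously. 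The construction known for $k=1$ and referenced in the excerpt should serve as the template for the base case of one residue class.

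The main obstacle will be executing the periodic construction quantitatively: I must simultaneously satisfy the full-coverage requirement when $|M_i| < k$ and the $k$-coverage requirement when $k \le |M_i| < m/\log n$, all within an $\Ohtilde(k)$ budget of period-aligned positions. My strategy is to process one arithmetic progression at a time, using quantitative bounds on the approximate period (in the spirit of those underlying \cite{ChanGKKP20}) to control the per-AP cost, and to aggregate via a union bound across the $\Ohtilde(1)$ progressions. A secondary subtlety is the transition regime where $|M_i|$ is near the threshold $m/\log n$: either the sampling rates should be tuned slightly so that a logarithmic buffer absorbs these borderline alignments, or the structural argument should be extended to cover them, ensuring a smooth handoff between the two phases without inflating the $\Ohtilde(k)$ bound.
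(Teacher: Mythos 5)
There is a genuine gap. Your sampling phase is sound but only disposes of alignments with $|M_i| \gtrsim m/\log n$: the expected coverage is $\Theta(|M_i|\cdot k\log n/m)$, so for $|M_i|=\Theta(k)$ it is only $\Theta(k^2\log n/m)$, typically far below $k$. The entire difficulty of the lemma therefore lands on your structural phase, and that phase does not work as described. First, the dichotomy is miscalibrated: with closeness threshold $m/\log n$, the standard periodicity dichotomy for $k'$-mismatch occurrences would give ``either $\Oh(m/\log n)$ close alignments, or $P$ is within $\Oh(m/\log n)$ of a string with period $\Oh(\log n)$'' --- not ``$\Ohtilde(k)$ close alignments or clean periodicity,'' and when $k\ll m/\log n$ neither branch is within budget. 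Second, even granting $|S|=\Ohtilde(k)$, handling each $i\in S$ ``individually'' costs up to $\min(k,|M_i|)$ fresh positions per alignment, i.e., $\Ohtilde(k^2)$ in total; distinct alignments have no reason to share mismatch positions, so the claimed amortization has no basis. The whole content of the lemma is that a \emph{single} $\Ohtilde(k)$-size set serves all $\Theta(m)$ alignments simultaneously. Third, in the periodic branch, even for exactly $\pi$-periodic $P$ the alignments in one residue class mod $\pi$ have genuinely different mismatch sets (determined by where $P$ and $T$ deviate from the period), and ``a few period-aligned positions per residue class'' is an unsubstantiated claim that essentially restates the problem; the paper explicitly notes that the known $k=1$ constructions do not generalize, so invoking them as a template does not close this.

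For comparison, the paper's proof avoids any occurrence-counting dichotomy. It first reduces to thresholds $2^\ell k$ via subsampling of level-$\ell$ sets at rate $\Ohtilde(2^{-\ell})$ (this plays the role of your sampling phase, but applied to already-structured sets). Within a level, it picks a random prime $p=\Thtilde(k)$ so that each individual mismatch of an alignment with $|M_i|\le k$ is, with constant probability, isolated in its residue class mod $p$; this converts the problem into covering the unique mismatch of every pair of length-$\floor{m/p}$ fragments of the strided rearrangement of $P\cdot T$ that agree everywhere else. That 1-mismatch covering problem is then solved with $\Oh(p\log p)$ positions using a $\tau$-synchronizing set ($\tau=\Theta(m/p)$) to anchor each such pair, a heavy-light decomposition of the trie of suffixes starting at synchronizing positions to select divergence points, and run boundaries to handle the highly periodic case. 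If you want to pursue your own route, the step you must supply is exactly this: a single $\Ohtilde(k)$-size set that simultaneously hits (at least) the required number of mismatches for \emph{every} alignment with $|M_i|=\Theta(k)$, which your current outline does not provide.
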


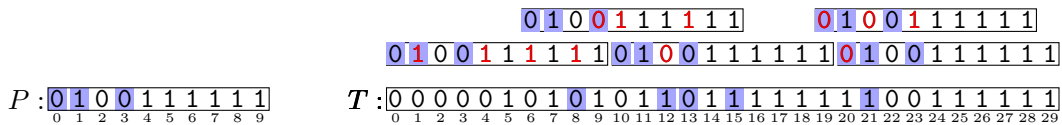
\begin{figure}
    \begin{center}
        \begin{tikzpicture}[scale=.3]
        \begin{scope}
        \foreach \c in {0,1,3} {
         \fill[blue!35] (\c-.4,-.5) rectangle (\c+.4,.5);
        }
        \draw (-.4, -.5) rectangle (9.4,.5);
        \foreach \x [count=\c from 0] in {0,1,0,0,1,1,1,1,1,1}{
            \draw (\c, 0) node{$\mathtt{\x}$};
            \draw (\c, -.2) node[below]{\tiny{\c}};
        }
        \draw (-.1,0) node[left]{$P:$};
        \end{scope}

        \begin{scope}[xshift=15cm]
        \foreach \c in {8,12,13,15,21} {
         \fill[blue!35] (\c-.4,-.5) rectangle (\c+.4,.5);
        }
        \draw (-.4, -.5) rectangle (29.4,.5);
        \foreach \x [count=\c from 0] in {0,0,0,0,0,1,0,1,0,1,0,1,1,0,1,1,1,1,1,1,1,1,0,0,1,1,1,1,1,1}{
            \draw (\c, 0) node{$\mathtt{\x}$};
            \draw (\c, -.2) node[below]{\tiny{\c}};
        }
        \foreach \t/\u in {0/2,6/3.5,10/2,19/3.5,20/2} {
                    \draw (-.4+\t, -.5+\u) rectangle (9.4+\t,.5+\u);
        \foreach \c in {0,1,3} {
            \fill[blue!35] (\c+\t-.4,-.5+\u) rectangle (\c+\t+.4,.5+\u);
            }
                    \draw (-.1,0) node[left]{$T:$};

            \foreach \x [count=\c from 0] in {0,1,0,0,1,1,1,1,1,1}{
            \draw (\c+\t, \u) node{$\mathtt{\x}$};
        }
        }
        \foreach \x/\y/\z in {1/2/1,4/2/1,6/2/1,8/2/1,9/3.5/0,10/3.5/1,13/3.5/1,12/2/0,19/3.5/0,21/3.5/0,23/3.5/1,20/2/0} {
            \draw[red] (\x,\y) node{$\mathtt{\z}$};
        }
        \end{scope}
        \end{tikzpicture}
    \end{center}
    \caption{Illustration of \cref{lem:temp-mismatchcontainer} for sample binary strings $P,T\in\{\mathtt{0},\mathtt{1}\}^{*}$ and $k=2$. In this case, we can set $M^{(P)}=\{0,1,3\}$ and $M^{(T)}=\{8,12,13,15,21\}$; selected positions are highlighted in blue. 
    For each $i\in [0\dd n-m]$, at least $\min(2, |M_i|)$ mismatches involve a position in $M^{(P)}$ or $M^{(T)}$; the figure illustrates this for $i\in \{0,6,10,19,20\}$, with mismatches drawn in red.
    For $i=0$, the four mismatches include $P[1]\ne T[1]$ (with $1\in M^{(P)}$) and $P[8]\ne T[8]$ (with $8\in M^{(T)}$);
    for $i=10$, the only mismatch is $P[2]\ne T[12]$ (with $12\in M^{(T)}$);
    for $i=19$, the three mismatches include $P[0]\ne T[19]$ (with $0\in M^{(P)}$) and $P[2]\ne T[21]$ (with $21\in M^{(T)}$).}\label{fig:container}
\end{figure}

\cref{lem:temp-mismatchcontainer} establishes a $k$-capped version of Property~\eqref{eqn:keycondition}: there exist $\Ohtilde(k)$ positions in $T$ and $P$ which together cover at least $k$ mismatches between $P$ and $T[i\dd i+m)$, for every $i\in [0\dd n-m]$ with $|M_i| > k$.
Note that \cref{lem:temp-mismatchcontainer} still implies Property~\eqref{eqn:keycondition2} as a direct corollary.
Thus, by incorporating \cref{lem:temp-mismatchcontainer} into the discussions in \cref{subsec:overview-nonadap}, we obtain our $\Ohtilde(n/\sqrt{k})$-time non-adaptive property tester in the $n = \Oh(m)$ regime, as claimed in \cref{thm:main}.

We remark that the $k=1$ case of \cref{lem:temp-mismatchcontainer} directly follows from Vishkin's duel method~\cite{Vishkin85}, which has found applications in parallel~\cite{Vishkin85} and quantum~\cite{HariharanV03} algorithms for exact pattern matching.
However, since known proofs for the $k=1$ case do not seem to generalize to large $k$, novel ideas were needed for \cref{lem:temp-mismatchcontainer}. 
Our argument makes interesting use of string synchronizing sets of Kempa and Kociumaka \cite{KempaK19} and heavy-light decompositions in tries.

While sketching the proof of \cref{lem:temp-mismatchcontainer}, we first focus on handling positions $i$ that satisfy $|M_i|\le k$.
In other words, we would like to construct sets $M^{(T)}$ and $M^{(P)}$ of size $\Ohtilde(k)$ such that:
\begin{equation}
    \label{eqn:keycondition-kbounded}
\text{If $i\in [0\dd n-m]$, $|M_i|\le k$, and $j\in M_i$, then $j\in M^{(P)}$ or $i+j\in M^{(T)}$.}
\end{equation}
We then extend this to handle exponentially growing ranges $|M_i|\in \fragmentoc{2^{\ell-1}k}{2^{\ell}k}$.
For this, we construct sets $M^{(P)}_\ell$ and $M^{(T)}_\ell$ of size $\Ohtilde(2^\ell k)$ satisfying Property~\eqref{eqn:keycondition-kbounded} with $2^\ell k$ replacing $k$.
Next, we subsample $\tilde M^{(P)}_\ell\subseteq M^{(P)}_\ell$ and $\tilde M^{(T)}_\ell\subseteq M^{(T)}_\ell$ with rate $\Ohtilde(2^{-\ell})$. 
Finally, we take the unions over all $\ell$ to form $M^{(P)} = \bigcup_\ell \tilde M^{(P)}_\ell$ and $M^{(T)} = \bigcup_\ell \tilde M^{(T)}_\ell$.
By Chernoff bound, the sets $M^{(P)}$ and $M^{(T)}$ satisfy both the ``correctness'' and ``size''  conditions of \cref{lem:temp-mismatchcontainer} with high probability.

To achieve Property~\eqref{eqn:keycondition-kbounded}, the idea is to first isolate the mismatches in $M_i$ modulo a random prime.  (This is analogous to the mod-prime idea from \cite{ChanGKKP20}, but here we are using it in an existential argument via probabilistic method rather than in the actual algorithm.)
Specifically, for $|M_i|\le k$ and each $j\in M_i$, a standard divisor-counting argument shows that, with $\Omega(1)$ probability over a random prime $p\in \Thtilde(k)$, the mismatch $j$ is isolated modulo $p$ from all other mismatches in $M_i$ (that is, $j\not \equiv j'\pmod{p}$ for all $j'\in M_i\setminus \{j\}$). 
Now, for a given prime $p$, we would like to construct sets $M^{(P)}$ and $M^{(T)}$ of size $\Ohtilde(p) = \Ohtilde(k)$ such that, if $j \in M_i$ is isolated modulo $p$, then $j\in M^{(P)}$ or $i+j\in M^{(T)}$. 
Once this is achieved, we take the union over $\Oh(\log n)$ repetitions with independent random primes $p\in \Ohtilde(k)$ so that all  $j\in M_i$ are covered in this way.

Fix a prime $p$ and suppose $j\in M_i$ is isolated from $M_i\setminus \{j\}$ modulo $p$.
Then, the following two equal-length strings, formed by concatenating characters in $P$ and $T$ with spacing $p$,
\[
P_{j\bmod p} \coloneqq \bigodot_{j'\in [0\dd m)\;:\; j'\equiv j\!\!\pmod{p}}P[j'], \qquad\quad
T_{(i+j)\bmod p}^{(i)} \coloneqq 
\bigodot_{j'\in [0\dd m)\;:\; j'\equiv j\!\!\pmod{p}}T[i+j'],
\]
have exactly one mismatch, corresponding to $P[j]\neq T[i+j]$. 
This can also be interpreted as a $1$-mismatch occurrence of $P_{j\bmod p}$ in a string $T_{(i+j)\bmod p}$, where
\[T_{h} \coloneqq \bigodot_{h'\in [0\dd n)\;:\; h'\equiv h\!\!\pmod{p}}T[h'].\]
Thus, our goal is to select $\Ohtilde(p)$ positions within the strings $P_0,\ldots,P_{p-1}$ and $T_0,\ldots,T_{p-1}$ so that,
for each $1$-mismatch occurrence of $P_j$ in $T_h$ (for any $j,h\in \ZZ_p$), at least one of the two mismatching positions is selected.
In order to avoid dealing with multiple strings, we consider their concatenation $S=\bigodot_{j=0}^{p-1} P_j \cdot \bigodot_{h=0}^{p-1} T_h$ (of length $n+m=\Oh(m)$) and, instead, select $\Ohtilde(p)$ positions in $S$ so that the following holds for every $i,i'\in [0\dd |S|-\floor{m/p}]$ (see \cref{lem:onemismatchcontainer}):
\begin{equation}\label{eq:sboth}
\text{If $\HD(S[i\dd i{+}\floor{m/p}),S[i'\dd i'{+}\floor{m/p}))=1$, then one of the mismatching positions is selected.}
\end{equation}
Since a $1$-mismatch occurrence of $P_j$ in $T_h$ corresponds to a pair of length-$\floor{m/p}$ fragments of $S$ with a single mismatch, the modified setting given by \eqref{eq:sboth} is sufficient.
Note that if $\HD(S[i\dd i+\floor{m/p}),\allowbreak S[i'\dd i'+\floor{m/p}))=1$, there are $\Omega(\frac{m}{p})$ matching positions before or after the unique mismatch. 
Focusing on the former case, we can rephrase \eqref{eq:sboth} using the notion of \emph{longest common extensions}
\newcommand{\LCE}{\mathsf{LCE}}
\[\LCE(i,i') \coloneqq \max\{\ell \in \ZZ_{\ge 0} : S[i\dd i+\ell)=S[i'\dd i'+\ell)\}.\]
Namely, we seek a set $M\subseteq [0\dd |S|]$ of size $\Ohtilde(p)$ so that the following holds for every $i,i'\in [0\dd |S|)$:%
\begin{equation}\label{eq:slce}
\text{If $\LCE(i,i')=\Omega(m/p)$, then $i+\LCE(i,i')\in M$ or $i'+\LCE(i,i')\in M$.}
\end{equation}
Our strategy is to proceed in two steps. First, we select a set $\mathsf{S}\subseteq [0\dd |S|)$ of size $\Oh(p)$ so that:
\begin{equation}\label{eq:lces}
\text{If $\LCE(i,i')=\Omega(m/p)$, then $i+\delta\in \mathsf{S}$ and $i'+\delta \in \mathsf{S}$ hold for some $\delta \in [0\dd \LCE(i,i')]$.}
\end{equation}
Next, we select a set $M\subseteq [0\dd |S|]$ of size $\Oh(|\mathsf{S}|\log |\mathsf{S}|)=\Oh(p\log p)$ so that:
\begin{equation}\label{eq:slces}
\text{If $i,i'\in \mathsf{S}$, then $i+\LCE(i,i')\in M$ or $i'+\LCE(i,i')\in M$.}
\end{equation}
These two properties imply \eqref{eq:slce} because $\LCE(i+\delta,i'+\delta)=\LCE(i,i')-\delta$ for every $\delta \in [0\dd \LCE(i,i')]$.

For~\eqref{eq:slces}, we consider a heavy-light decomposition of the trie of all suffixes $S[i\dd |S|)$ with $i\in \mathsf{S}$. 
In the root-to-leaf paths corresponding to $S[i\dd |S|)$ and $S[i'\dd |S|)$, the two nodes representing $S[i+\LCE(i,i')]$ and $S[i'+\LCE(i,i')]$ are siblings, and at least one of them is a light child. 
Hence, the selection rule is to include all positions corresponding to the light nodes on each of the $|\mathsf{S}|$ root-to-leaf paths in the trie; the resulting set $M$ must include one of the required mismatching positions.
Each path has $\Oh(\log|\mathsf{S}|)$ light nodes, so in total we included $\Oh(|\mathsf{S}|\log |\mathsf{S}|)$ positions.

We achieve~\eqref{eq:lces} as a simple application of the $\tau$-synchronizing set \cite{KempaK19} with $\tau=\Theta(\frac{m}{p})$, 
resulting in a set $\mathsf{S}$ of size $|\mathsf{S}|=\Oh(|S|/\tau)=\Oh(p)$.
The only caveat is that the condition \eqref{eq:lces} may fail if $S[i\dd i+\LCE(i,i'))=S[i'\dd i'+\LCE(i,i'))$ is highly periodic with period at most $\frac{\tau}{3}$. 
In that case, however, the period breaks at $S[i+\LCE(i,i')]$ or $S[i'+\LCE(i,i')]$. 
To ensure that \eqref{eq:slce} still holds in this periodic case, we augment $M$ with all positions where a sufficiently short period breaks after running for at least $3\tau-1$ positions.
Periodic structures of this type are well-studied, and the number of breakpoints is known to be $\Oh(|S|/\tau)=\Oh(p)$.
This concludes our proof sketch of \cref{lem:temp-mismatchcontainer}.

\subsection{Extensions of the Non-Adaptive Algorithm}
We now briefly explain how extensions of the algorithm in \cref{subsec:overview-nonadap} allow us to handle the tolerant and reporting variants of our property testing problem (\cref{pr:approximation,pr:approximationreport}), as formalized in \cref{thm:main-approx}.

Recall from \cref{subsec:overview-nonadap} that, assuming the random prime $p$ satisfies the desired conditions
$|M^{(P)}\bmod p|\ge |M^{(P)}|-0.01k$ and
$|M^{(T)}\bmod p|\ge |M^{(T)}|-0.01k$, for any position $i\in [0\dd {n-m}]\setminus \Occ_k(P,T)$, the algorithm reports \Yes with probability $(1-\beta)^{|M_i\bmod p|} \le (1-\beta)^{0.49k}$. 
On the other hand, for a smaller threshold $k'$ and any position $i\in \Occ_{k'}(P,T)$, the algorithm reports \Yes with probability $(1-\beta)^{|M_i\bmod p|} \ge (1-\beta)^{|M_i|} \ge (1-\beta)^{k'}$.
As long as $\frac{k'}{0.49k}<1-\Omega(1)$, there exists a constant probability gap that can be amplified through independent repetitions, enabling reliable distinction between $i\in \Occ_{k'}(P,T)$ and $i\notin \Occ_{k}(P,T)$.%
\footnote{Our method cannot achieve a gap $k/k'<2$ because the transition from \cref{lem:temp-mismatchcontainer} to Property~\eqref{eqn:keycondition2} causes a loss factor of roughly $2$, resulting only in a guarantee of $|M_i\bmod p|> 0.49k$ rather than $|M_i\bmod p|> 0.99k$.
In contrast, \cite{ChanGKKP20} achieves sublinear-time tolerant testing even for gap as small as $k/k'=1+\eps$, using approximate nearest neighbor search; see also the related open question on \cpageref{sec:open}.}

Our actual implementation for \cref{thm:main-approx} is more complicated due to a few technical details
stemming from the non-negligible failure probability of the random prime $p$.
\begin{itemize}
        \item  When $n\in \omega(m)$, we must divide the text into pieces of length $\Theta(m)$ and handle each separately, as a prime $p \in \Thtilde(k)$ may succeed in some pieces while failing in others.
        \item  In the reporting version, the algorithm filters candidate positions generated by one execution of \cref{subsec:overview-nonadap}, reporting only those that appear frequently across all independent repetitions.
        If a prime $p$ fails, it can result in an excessively long list that is infeasible to process.
\end{itemize}

\subsection{Adaptive Tester}\label{subsec:overview-adaptive}
We now overview our adaptive tester, which achieves near-optimal $\Ohtilde(\sqrt{\Delta n/k} + n/k)$ query and time complexity (\cref{thm:adaptive-tester}), focusing on the case when $\Delta := n-m+1 \le 0.1m$; for larger $\Delta$, the non-adaptive tester of \cref{thm:main} suffices.
Notably, our adaptive algorithm uses \cref{lem:temp-mismatchcontainer} for a purpose distinct from that in the non-adaptive setting.
Since our final algorithm is rather complicated, we explain our key techniques in an incremental manner.

\paragraph*{\boldmath Tester with $\Ohtilde(\Delta + n/k)$ Query and Time Complexity}
As a warm-up, we first sketch an adaptive tester with weaker query and time complexity $\Ohtilde(\Delta + \frac{n}{k})$. 
The tester maintains a set $O$ of candidate occurrences, initialized with $O=[0\dd \Delta)$. 
Whenever a mismatch $P[j]\neq T[i+j]$ is observed, certifying that $i\notin \Occ(P,T)$, we eliminate $i$ from $O$. 
The algorithm outputs \No precisely when $O$ becomes empty.
Since exact occurrences are never eliminated, the answer is always correct for \Yes instances. 
For \No instances, where $\Occ_k(P,T) = \emptyset$, the goal is to use few queries to ensure $O$ becomes empty.

We will show shortly that, in the \No case, in $\Ohtilde(\frac{n}{k})$ time and query complexity, with high probability, we can identify $\polylog n$ positions in $P$ and $T$ that collectively eliminate all candidate occurrences $i\in [0\dd\Delta)$ from $O$. (Here, we say that a position $j$ in $P$ eliminates a candidate $i$ if $P[j]\neq T[i+j]$; analogously, a position $h$ in $T$ eliminates $i$ if $T[h]\neq P[h-i]$.)
Once such $\polylog n$ positions are known, we can perform all required eliminations with an extra $\Ohtilde(\Delta)$ query and time complexity by comparing each identified position against $|O|$ positions in the other string.
It remains to show how to identify $\polylog n$ positions achieving such elimination.

In fact, the existence of such $\polylog n$ positions already follows from \cref{lem:temp-mismatchcontainer} applied with parameter $1$ in place of~$k$, but it is very difficult to stumble upon any of these positions.
Hence, we instead apply \cref{lem:temp-mismatchcontainer} with parameter $k$, which shows the existence of size-$\Ohtilde(k)$ sets $M^{(P)}$ and $M^{(T)}$.
Observe that a random position in the disjoint union $M^{(P)}\sqcup M^{(T)}$ eliminates a fixed candidate $i\in [0\dd \Delta)\setminus \Occ_k(P,T)$ with probability at least $\frac{k}{|M^{(P)}|+|M^{(T)}|}\ge 1/\polylog n$.
Hence, drawing $\polylog n$ random samples from $M^{(P)}\sqcup M^{(T)}$ suffices to eliminate all such candidates $i$ with high probability.
Since $|M^{(P)}|+|M^{(T)}| \ge k$, a random sample of $\Ohtilde(n/k)$ positions in $P[0\dd m)$ and $T[0\dd n)$ likely includes at least one random position from $M^{(P)}\sqcup M^{(T)}$. 
If a membership oracle for $M^{(P)}\sqcup M^{(T)}$ existed, we could draw $\polylog n$ samples from $M^{(P)}\sqcup M^{(T)}$ via rejection sampling in $\Ohtilde(n/k)$ query and time complexity, thus obtaining the desired $\polylog n$ eliminating positions.
However, we cannot assume a membership oracle for the unknown set $M^{(P)}\sqcup M^{(T)}$. 

To relax this unrealistic assumption, we modify the aforementioned algorithm as follows. 
Recall that $O\subseteq [0\dd \Delta)$ is the set of remaining candidates.
Instead of trying to sample from $M^{(P)}\sqcup M^{(T)}$, we directly sample a \emph{good position} in $P$ or $T$: one that eliminates at least a $1/\polylog n$ fraction of the candidates in $O$.
The Chernoff bound allows us to approximate the number of eliminated candidates using $\polylog n$ samples drawn from $O$.
Moreover, the reverse Markov inequality shows that $M^{(P)}$ and $M^{(T)}$ collectively contain $\Omega(k)$ good positions.
Hence, by rejection sampling, we can sample a random good position in $\Ohtilde(n/k)$ query and time complexity.
Our algorithm performs $\polylog n$ iterations, where in each iteration we sample a random good position (with respect to the current set of candidates $O$) and use it to eliminate a $1/\polylog n$ fraction of $O$.
In the \No case, with high probability $O$ becomes empty after $\polylog n$ iterations, as desired.
This concludes the description of our warm-up tester with $\Ohtilde(\Delta + \frac{n}{k})$ query and time complexity.

\paragraph*{\boldmath Query-Optimal Tester with $\Ohtilde(\Delta)$ Extra Time}
We now improve the previous tester to achieve near-optimal query complexity $\Ohtilde(\sqrt{\Delta n/k} + n/k)$, though still incurring an additive $\Ohtilde(\Delta)$ term in the time complexity.
As before, we maintain a candidate set $O \subseteq [0\dd \Delta)$ which is iteratively updated as positions are eliminated.
The key idea is to eliminate candidates more efficiently by using the non-adaptive tester from \cref{subsec:overview-nonadap} to operate on \emph{blocks} of positions, thereby avoiding the $\Ohtilde(\Delta)$ queries needed for individual positions.

We partition both text $T$ and pattern $P$ into \emph{blocks} of size~$\Delta$.
Recall that $M^{(P)}$ and $M^{(T)}$ together contain $\Omega(k)$ good positions.
By averaging over the $\Theta(n/\Delta)$ blocks, individual blocks contain $\Omega(k\Delta/n)$ good positions on average.
We call a block \emph{good} if it contains at least $0.1$ times the average number of good positions. Consequently, at least $90\%$ of all good positions belong to good blocks.
Recall that we can sample a random good position with $\Ohtilde(n/k)$ queries and time.
Consequently, with success probability at least $0.9$, we can find a good block with $\Ohtilde(n/k)$ queries and time by sampling a random good position and returning the block it belongs to.

Suppose we have found a good block $P[b\dd b+\Delta)$, assumed in $P$ without loss of generality. 
This block contains $\Omega(k\Delta/n)$ good positions $j$, each satisfying $\Pr_{i\in O}[P[j]\neq T[i+j]] \ge 1/\polylog n$.
Setting $k_0 = \Thtilde(k\Delta/n)$, it follows from linearity of expectation and the reverse Markov inequality that $\Pr_{i\in O}[\HD(P[b\dd b+\Delta), T[i+b\dd i+b+\Delta))>k_0] \ge 1/\polylog n$.
In other words, running the reporting version of our non-adaptive algorithm on $T[b\dd b+2\Delta-1)$ and $P[b\dd b+\Delta)$ with threshold $k_0$ will, with high probability, rule out at least a $1/\polylog n$ fraction of the candidates in~$O$.
The time complexity of this non-adaptive subroutine is $\Ohtilde(\Delta / \sqrt{k_0}) =\Ohtilde(\sqrt{\Delta n/k}) $. 

As before, our algorithm starts with the candidate set $O = [0\dd\Delta)$ and performs $\polylog n$ adaptive iterations. 
In each iteration, we first find a good block using $\Ohtilde(\frac{n}{k})$ query and time complexity. 
Then, we run our non-adaptive algorithm on the $\Theta(\Delta)$-length neighborhood of this good block, as described in the previous paragraph, in $\Ohtilde(\sqrt{\Delta n/k})$ query and time complexity, and obtain an answer set.
We then update $O$ by intersecting it with this answer set.
In the \No case, in each iteration, $|O|$ decreases by a $1-1/\polylog n$ factor with constant probability, so $O$ becomes empty with high probability after $\polylog n$ iterations.
Most steps in this algorithm have time and query complexity within the budget $\Ohtilde(\sqrt{\Delta n/k} + n/k)$, except for computing the narrowed down candidate set~$O$, which takes $\Oh(|O|)\le \Oh(\Delta)$ extra time.

\paragraph*{Achieving Near-optimal Time Complexity}
Now we briefly discuss the ideas in our final algorithm to avoid spending $\Oh(\Delta)$ time iterating over $O\subseteq [0\dd \Delta)$. In the previous algorithm, suppose that the first $d-1$ iterations identified 
the pairs of fragments $(P_1,T_1),\dots,(P_{d-1},T_{d-1})$ for the non-adaptive algorithm to solve.
Then, at the beginning of the $d$-th iteration, the set $O$ is the intersection of answer sets returned by the non-adaptive algorithm run on $\{(P_\ell,T_\ell)\}_{\ell=1}^{d-1}$.
Observe that the algorithm only requires sample access to $O$.

Hence, we can modify the previous algorithm as follows.
To avoid the set intersection, we instead use the answer set $\tilde O$ returned by the non-adaptive algorithm run on a single zipped instance $(P',T')$, where $P'[j]\coloneqq\bigodot_{\ell=1}^{d-1} P_\ell[j]$ is a symbol in an enlarged alphabet $\Sigma^{d-1}$, and $T'$ is defined analogously.
The set $\tilde O$ still excludes all $i\notin \bigcap_{\ell=1}^{d-1} \Occ_{k_0}(P_\ell,T_\ell)$, so it serves the same purpose as taking intersection,  yet it avoids the extra $\Ohtilde (\Delta)$ maintenance time.
The required sampling access to $\tilde O$ is also provided by the succinct representation returned by the non-adaptive algorithm.
However, the downside is that the procedure for sampling from $\tilde O$ now depends on the randomness of the non-adaptive algorithm that generates $\tilde O$.
In other words, instead of sampling a random element from a single set $\tilde O$, we first sample a set from some distribution, and only then a random element of the sampled set.
In particular, the size of $\tilde O$ is a random variable rather than a fixed integer.
Consequently, we no longer can use the size decrement of $\tilde O$ to measure the progress of the algorithm.
Instead, we need to associate the distribution with a potential function and analyze the potential drop in each iteration.

\subsection{Lower Bounds}
We conclude this overview with a brief discussion of our lower bounds.
In all cases, we rely on Yao's principle: we devise a ``\emph{hard}'' input distribution and prove a lower bound on the worst-case number of queries required by a deterministic algorithm to achieve a sufficiently high success probability.
More specifically, we define two input distributions: $D_\Yes$ such that $\Pr_{D_\Yes}[\Occ(P,T)= \emptyset] = 0$ (that is, each sample from $D_\Yes$ contains an exact occurrence), and $D_\No$ such that $\Pr_{D_\No}[\Occ_k(P,T)=\emptyset]$ is close to $1$.
Every solution to \cref{pr:tester} correct on the mixture of these two distributions needs to distinguish them with a sufficiently large success probability.

\paragraph*{\boldmath Warm-Up: Lower Bound for $k=m-1$}
In the case of $k=m-1$, discussed in \cref{sec:lb_large}, we consider strings over an alphabet of sufficiently large size $|\Sigma| = \Theta(n^2)$.
In $D_{\No}$, we independently pick $P$ and $T$ as uniformly random strings in $\Sigma^m$ and~$\Sigma^n$, respectively.
The alphabet size is chosen so that $P\cdot T$ typically has pairwise distinct characters and, in particular, $\HD(P, T\fragmentco{t}{t+m})=m$ typically holds for all $t\in \fragmentcc{0}{n-m}$.
In $D_{\Yes}$, the strings $P$ and $T$ also have uniform marginal distributions, but they are coupled so that $P = T\fragmentco{t}{t+m}$ holds for a uniformly random $t\in \fragmentcc{0}{n-m}$.
In other words, in order to sample $(P,T)\sim D_\Yes$, we pick $t\in \fragmentcc{0}{n-m}$ and
$T\in \Sigma^n$ uniformly at random, and we set $P = T\fragmentco{t}{t+m}$.

For a fixed $t \in \fragmentcc{0}{n-m}$, the algorithm cannot distinguish between $D_\Yes$ and $D_\No$ until it queries both $P[i]$ and $T[i+t]$ for some $i \in \fragmentco{0}{m}$.
A sequence of $q$ queries reveals both $P[i]$ and $T[j]$ for at most $\min(mq, q^2)$ pairs $(i,j)\in\fragmentco{0}{m}\times \fragmentco{0}{n}$,
and thus it may satisfy the aforementioned condition for at most $\min(mq, q^2)$ out of $\Delta$ values $t\in \fragmentcc{0}{n-m}$.
The algorithm needs to be successful on $\Omega(\Delta)$ of values $t\in \fragmentcc{0}{n-m}$, and thus we obtain a lower bound of $\Omega(\max(\sqrt{\Delta}, \frac{\Delta}{m}))$. 
Simple calculations reveal that our adaptive and non-adaptive algorithms achieve this time and query complexity, up to polylogarithmic factors, not only for $k=m-1$ but, in general, as long as $k=\Omtilde(m)$.
Thus, we henceforth consider $k\le \frac{m}{4}$, for which \cref{thm:lb_adaptive,thm:lb_nonadaptive} promise lower bounds for binary strings.

\paragraph*{Lower Bound for Adaptive Algorithms}
The construction behind our binary adaptive lower bound (\cref{thm:lb_adaptive}, \cref{sec:lb_adaptive}) resembles the one described above for $k=m-1$.
However, to derive this bound, the characters of $P$ and $T$ are distributed according to the Bernoulli distribution $\Ber(\frac{2k}{m})$ (where the character is $1$ with probability $\frac{2k}{m}$ and $0$ otherwise), rather than the uniform distribution $\mathcal{U}(\Sigma)$.
In case of $D_\No$, where $P$ and $T$ are independent, we still show that $\Pr_{D_\No}[\Occ_k(P,T)=\emptyset] > 0.8$ (for which we crucially rely on $k \le \frac{m}{4}$).
Using more refined arguments, we show that, for $\Omega(\Delta)$ values of $t \in \fragmentcc{0}{n-m}$, any successful solution to \cref{pr:tester} must not only query the pairs $P[i]$ and $T[i+t]$ but must also encounter a mismatch $P[i] \neq T[i+t]$.
Such a mismatch must involve a 1, which occurs with rate $\Theta(\frac{k}{m})$.
Effectively, the condition $\min(mq, q^2) = \Omega(\Delta)$ is strengthened to $\frac{k}{m}\cdot \min(mq, q^2)  = \Omega(\Delta)$, from which we derive \cref{thm:lb_adaptive}.

\paragraph*{Lower Bound for Non-Adaptive Algorithms}
The construction behind our binary non-adaptive lower bound (\cref{thm:lb_nonadaptive}, \cref{sec:lbnonadaptive}) uses a slightly different approach. 
We fix a parameter $s$ and draw uniformly random values $p\in \fragmentcc{0}{m-s}$ and $t\in \fragmentcc{0}{n-m}$.
We set $P = 0^{p} \cdot S_P \cdot 0^{m-s-p}$ and $T=0^{p+t}\cdot S_T \cdot 0^{n-s-p-t}$, where the characters of $S_P$ and $S_T$ are distributed according to $\Ber(\frac{2k}{s})^s$ (that is, they are binary strings of length $s$ with characters distributed independently according to $\Ber(\frac{2k}{s})$).
In $D_\Yes$, we enforce $S_P=S_T$, whereas in $D_\No$, the strings $S_P$ and $S_T$ are chosen independently.
Intuitively, this construction takes a hard instance $(S_P,S_T)\in \Sigma^{s}\times \Sigma^s$ of the problem of distinguishing $\HD(S_P,S_T)=0$ from $\HD(S_P,S_T)>k$, and then it \emph{hides} this instance at a random, unknown location $(p,t)$ within the full input strings $(P,T)\in \Sigma^m \times \Sigma^n$.
To succeed overall, the algorithm must succeed for a large fraction of locations $(p,t)$, and each such location imposes a constraint on the set of positions that the algorithm reads:
for any successful location $(p,t)$, the algorithm must read $S_P[i] = P[p+i]$ and $S_T[i] = T[p+t+i]$ for $\Omega(\frac{s}{k})$ positions $i\in \fragmentco{0}{s}$.
Crucially, since the algorithm is non-adaptive, the above condition must simultaneously hold across a large fraction of all possible locations $(p,t)$.
We specifically show that, for a large fraction of values $p\in\fragmentcc{0}{m-s}$, the algorithm must read at least $\Omega(\max(\sqrt{s\Delta/k},(s+\Delta)/k))$ characters within $P\fragmentco{p}{p+s}$ and $T\fragmentco{p}{p+s+n-m}$. 
With further calculations based on the appropriate choice of $s=\min(m, \max(4k, \Delta))$, we derive the lower bound of \cref{thm:lb_nonadaptive}.

\section{Preliminaries}
\label{sec:prelim}

\subparagraph*{Generic Notation}
For two real numbers $a \le b$, we use the following integer interval notation: $\fragmentcc{a}{b}\coloneqq \{m\in \ZZ : a \le m \le b\}$, $\fragmentco{a}{b}\coloneqq \{m\in \ZZ : a \le m < b\}$, $\fragmentoc{a}{b}\coloneqq \{m\in \ZZ : a < m \le b\}$, and  $\fragmentoo{a}{b}\coloneqq \{m\in \ZZ : a < m < b\}$.
Typically $a,b\in \ZZ$, and then $\fragmentcc{a}{b} = \fragmentco{a}{b+1}=\fragmentoc{a-1}{b}=\fragmentoo{a-1}{b+1}$.
For $n \in \ZZ_{\ge 0}$, we also write $[n]\coloneqq \fragmentco{0}{n}=\{0,1,\ldots, n-1\}$.

For $A\subseteq \ZZ$ and $b\in \ZZ$, we write $A+b\coloneqq \{a+b: a\in A\}$ and $A-b\coloneqq \{a-b: a\in A\}$.
For a set $A\subseteq \ZZ$ and a modulus $m\in \Zp$, define $A \bmod m \coloneqq \{a\bmod m: a\in A\}$, where $a\bmod m$ is the unique $r\in \fragmentco{0}{m}$ such that $a-r$ is divisible by $m$.

We use $\mathbf{1}[p]$ to denote the \emph{indicator} of a statement $p$, equal to $1$ if $p$ is true, and $0$ if $p$ is false.

\subparagraph*{Strings}
A string $S\in \Sigma^*$ is a finite sequence of characters $S[0]S[1]\cdots S[s-1]$ from an alphabet $\Sigma$; its length is $|S|\coloneqq s$.
In this work, we assume $\Sigma=[0\dd \sigma)$ for some $\sigma\in \Zp$.
Our algorithmic results further require $\sigma \le n^{\Oh(1)}$, where $n$ denotes the input size, so that basic operations on symbols can be performed in constant time on the standard word RAM model with word length $\Theta(\log n)$.

For a \emph{position} $i \in \fragmentco{0}{|S|}$, we
say that $S\position{i}$ is the $i$-th character of $S$.
Given indices $0 \leq i \leq j \leq |S|$, we say that $S\fragmentco{i}{j} \coloneqq S\position{i}\cdots S\position{j-1}$ is a \emph{fragment} of $S$.
The fragment $S\fragmentco{i}{j}$ is called a \emph{prefix} of $S$ if $i=0$ and a \emph{suffix} of $S$ if $j=|S|$.
We may also write $S\fragmentcc{i}{j-1}$, $S\fragmentoc{i-1}{j-1}$,
or $S\fragmentoo{i-1}{j}$ for the fragment $S\fragmentco{i}{j}$.
We say that $S$ is a \emph{substring} of a string $T$ if there is a fragment $T\fragmentco{i}{j}$ of $T$, called an (exact) \emph{occurrence} of $S$ in $T$, such that $S=T\fragmentco{i}{j}$.
Since every occurrence of $S$ in $T$ is uniquely defined by its starting position, we define the set of \emph{exact occurrences} as
\[\Occ(S,T) = \{i\in \fragmentcc{0}{|T|-|S|} : S = T\fragmentco{i}{i+|S|}\}.\]
We remark that the empty string, denoted $\emptystring$, has $|T|+1$ occurrences in $T$ and $\Occ(\emptystring,T)=\fragmentcc{0}{|T|}$.

For two strings $S$ and $S'$ of the same length $|S|=s=|S'|$, we define the set of \emph{mismatches} by
\[\MM(S,S') = \{i\in \fragmentco{0}{s} : S[i] \ne S'[i]\}.\]
The \emph{Hamming distance} of $S$ and $S'$ can then be expressed as $\HD(S,S')=|\MM(S,S')|$.
For a non-negative threshold $k$, we say that $T\fragmentco{i}{j}$ is a \emph{$k$-mismatch occurrence} of $S$ in $T$ if $\HD(S, T\fragmentco{i}{j} )\le k$. 
The set of $k$-mismatch occurrences is uniquely determined by their starting positions, that is, the set 
\[\Occ_k(S,T) = \{i \in \fragmentcc{0}{|T|-|S|} : \HD(S, T\fragmentco{i}{i+|S|}) \le k\}.\]

Let $S\cdot T$ denote the concatenation of two strings $S$ and $T$.
For a finite indexed family $(S_i)_{i\in I}$, let $\bigodot_{i\in I} S_i$ denote concatenation of the strings $S_i$ in increasing order of the running index $i\in I$.

We say that an integer $p \in [1 \dd |S|]$ is a \emph{period} of a string $S\in \Sigma^*$ if $S[i] = S[i + p]$ for all $i \in [0 \dd |S| - p)$; equivalently, $S\fragmentco{0}{|S|-p}=S\fragmentco{p}{|S|}$, that is, the prefix and the suffix of $S$ of length $|S|-p$ \emph{match} (are occurrences of the same string).
The shortest period of a non-empty string $S$ is denoted as $\per(S)$. 
If $\per(S)\le |S|/2$, we say that $S$ is \emph{periodic}.
A \emph{run} in $S$ is a
periodic fragment that cannot be extended (to the left nor to the right) without an increase of its
shortest period.

\subparagraph*{Karp--Rabin Fingerprints}

We use the standard Karp--Rabin fingerprints summarized as follows. 
\begin{lemma}[Fingerprints]
\label{lem:fingerprints}
For prime $q>\sigma$, define the family $\mathcal{F}_q = \{F_{x,q}:x \in \mathbb{Z}_q\setminus \{0\} \}$ where $F_{x,q}\colon \Sigma^* \to \mathbb{Z}_q$ is the function $F_{x,q}(S) = \big ( \sum_{i=0}^{|S|-1} S[i]\cdot x^i\big)\bmod q$.  Then:
\begin{itemize}
    \item $F_{x,q}(S)$ can be evaluated in $O(|S|)$ time.
    \item After preprocessing $x,q,m$ in $O(\log (mq))$ time, the following can be done in constant time:  given  $S[0],S[m]$, and $F_{x,q}(S[0\dd m-1])$, compute $F_{x,q}(S[1\dd m-1])$ and $F_{x,q}(S[0\dd m])$.
    \item For fixed strings $X,Y\in \Sigma^m$ with $X\neq Y$,  we have $\Pr_{F\in \mathcal{F}_q}[F(X)=F(Y)] \le \frac{m-1}{q-1}$. 
\end{itemize}
\end{lemma}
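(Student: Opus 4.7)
The plan is to handle the three bullets of the lemma in turn; all three rest on textbook facts about polynomial evaluation and root counts over the field $\mathbb{Z}_q$, so the main work is simply to be explicit about where modular arithmetic, modular inversion, and fast exponentiation are invoked.

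For the first bullet, I would rewrite
$F_{x,q}(S)=\sum_{i=0}^{|S|-1}S[i]\cdot x^{i}\bmod q$
in Horner form or, equivalently, compute it by a single left-to-right scan that maintains a running power $\pi_i\coloneqq x^{i}\bmod q$ and a running sum $\sigma_i\coloneqq \sum_{j\le i}S[j]\cdot \pi_j\bmod q$. Each step updates $\pi_{i+1}\gets \pi_i\cdot x\bmod q$ and $\sigma_{i+1}\gets \sigma_i+S[i+1]\cdot \pi_{i+1}\bmod q$ in $O(1)$ word-RAM operations, giving $O(|S|)$ total time.

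For the second bullet, the preprocessing consists of computing $x^{-1}\bmod q$ by the extended Euclidean algorithm in $O(\log q)$ time and $x^{m}\bmod q$ by fast exponentiation in $O(\log m)$ time, so $O(\log(mq))$ overall. Given $F_{x,q}(S\fragmentco{0}{m})$, $S[0]$, and $S[m]$, I compute
\[F_{x,q}(S\fragmentoc{0}{m})=F_{x,q}(S\fragmentco{0}{m})+S[m]\cdot x^{m}\bmod q\]
and
\[F_{x,q}(S\fragmentco{1}{m})=\bigl(F_{x,q}(S\fragmentco{0}{m})-S[0]\bigr)\cdot x^{-1}\bmod q,\]
both in $O(1)$ time. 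The first identity is the definition; the second follows by noting that $F_{x,q}(S\fragmentco{0}{m})-S[0]=\sum_{i=1}^{m-1}S[i]\cdot x^{i}$, and pulling out the factor $x$ requires the precomputed inverse $x^{-1}$ (which exists since $q$ is prime and $x\ne 0$).

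For the third bullet, I would work with the difference polynomial $\Delta(X)\coloneqq \sum_{i=0}^{m-1}(X[i]-Y[i])\cdot X^{i}\in\mathbb{Z}_q[X]$. Because $q>\sigma$ and $X\ne Y$ as strings, at least one coefficient $X[i]-Y[i]$ is nonzero modulo $q$, so $\Delta$ is a nonzero polynomial of degree at most $m-1$ over the field $\mathbb{Z}_q$. By the standard fact that such a polynomial has at most $m-1$ roots in $\mathbb{Z}_q$, and since $F_{x,q}(X)=F_{x,q}(Y)$ precisely when $\Delta(x)\equiv 0\pmod q$, among the $q-1$ admissible choices $x\in\mathbb{Z}_q\setminus\{0\}$ at most $m-1$ induce a collision, yielding the claimed bound $\frac{m-1}{q-1}$.

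The only subtle point is ensuring that part 2 really needs no dependence on $|S|$ at query time: this is why $x^{m}$ is precomputed once and reused, and why $x^{-1}$ is precomputed rather than re-derived via another fast-exponentiation (which would cost $O(\log q)$ per call). None of these steps presents a genuine obstacle; the lemma is mostly a convenient packaging of well-known ingredients for later reference.
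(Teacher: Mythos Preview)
Your proof is correct and follows essentially the same approach as the paper: precompute $x^m\bmod q$ and $x^{-1}\bmod q$ for the sliding-window identities, and use the root bound for the nonzero degree-$(m-1)$ difference polynomial over $\mathbb{Z}_q$. The only minor quibble is the notational overload in the third bullet, where $X$ denotes both the string and the polynomial indeterminate; renaming the latter (e.g., to $z$) would avoid ambiguity.
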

\begin{proof}
   The first property is trivial. The second property can be proved using the identities
   $F_{x,q}(S[0\dd m-1]) - S[0]  \equiv F_{x,q}(S[1\dd m-1]) \cdot x \pmod{q} $
   and 
   $F_{x,q}(S[0\dd m-1]) + S[m]\cdot x^{m}  \equiv F_{x,q}(S[0\dd m]) \pmod{q} $, so after precomputing $x^{m}\bmod q$ and $x^{-1}\bmod q$ in $O(\log (mq))$ time, we can compute $F_{x,q}(S[1\dd m-1])$ and $F_{x,q}(S[0\dd m])$ from the given values in constant time.
   The third property follows from the fact that $F(X)-F(Y)$, viewed as a polynomial in $\mathbb{Z}_q[x]$, is  non-zero and has degree less than $m$, and thus has at most $m-1$ roots.
\end{proof}

\section{Combinatorial Lemma}
\label{sec:combinatoriallemma}
The goal of this section is to prove \cref{thm:mismatchcontainer}.

\begin{theorem}
    \label{thm:mismatchcontainer}
Let $P \in \Sigma^m$, $T\in \Sigma^{n}$, and $k\in [1\dd m]$.
 There exists a set $M\subseteq [0\dd n)$ of size $|M| = \Oh(\frac{n}{m}\cdot k\log^4 m)$ such that the following holds for each $i\in [0\dd n-m]$:
\[|\MM(P,T[i\dd i+m)) \;\cap\; (M\cup (M-i))| \ge \min(k, \HD(P, T[i\dd i+m))).\lipicsEnd\]
\end{theorem}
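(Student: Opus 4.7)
Following the outline given in \cref{subsec:overview-nonadap} leading up to \cref{lem:temp-mismatchcontainer}, the proof would proceed by a sequence of reductions. First, I would reduce to $n = \Oh(m)$ by partitioning $T$ into $\Theta(n/m)$ overlapping pieces of length $\Theta(m)$, each containing every fragment $T\fragmentco{i}{i+m}$ that starts within it; applying the local result to each piece blows up the size of $M$ by a $\Theta(n/m)$ factor. Within the local setting, I would handle varying magnitudes of $|M_i|$ via a dyadic decomposition: for each level $\ell \in \fragmentcc{0}{\lceil \log_2(m/k) \rceil}$, I would build a set $M_\ell$ of size $\Ohtilde(2^\ell k)$ satisfying a capped coverage property with threshold $2^\ell k$ in place of $k$, subsample $\tilde M_\ell \subseteq M_\ell$ at rate $\Ohtilde(2^{-\ell})$, and take $M := \bigcup_\ell \tilde M_\ell$. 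A Chernoff and union bound over candidates $i$ then yields both the target size $\Ohtilde(k)$ and correctness for all $i$ simultaneously.

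For the single-level problem with cap $K := 2^\ell k$, I would invoke a random-prime isolation argument: a standard divisor-counting bound shows that for a random prime $p = \Thtilde(K)$ and any $j \in M_i$ with $|M_i| \le K$, the residue $j \bmod p$ separates $j$ from $M_i \setminus \{j\}$ with $\Omega(1)$ probability. Repeating with $\Oh(\log n)$ independent primes reduces the task, per prime $p$, to selecting $\Ohtilde(p)$ positions that cover every \emph{$p$-isolated} mismatch. Concatenating the interleaved subsequences into $S$ as in the overview, a $p$-isolated mismatch corresponds to a $1$-mismatch pair of length-$\lfloor m/p \rfloor$ fragments of $S$, so the problem becomes: find $M' \subseteq \fragmentcc{0}{|S|}$ of size $\Ohtilde(p)$ such that whenever $\LCE(i, i') = \Omega(m/p)$, at least one of $i + \LCE(i,i')$, $i' + \LCE(i,i')$ lies in $M'$.

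I would solve this in two steps. Step 1: apply the $\tau$-synchronizing set of Kempa--Kociumaka~\cite{KempaK19} with $\tau = \Theta(m/p)$, yielding $\mathsf{S} \subseteq \fragmentco{0}{|S|}$ of size $\Oh(|S|/\tau) = \Oh(p)$ such that, outside highly periodic regions, whenever $\LCE(i,i') \ge \tau$ there is a common shift $\delta \in \fragmentcc{0}{\LCE(i,i')}$ with $i+\delta, i'+\delta \in \mathsf{S}$. Step 2: build the trie of the suffixes $S\fragmentco{s}{|S|}$ for $s \in \mathsf{S}$, run a heavy-light decomposition, and insert into $M'$ the positions along each root-to-leaf path where that path enters a light child; since siblings at any branching node include at least one light child, for any $s, s' \in \mathsf{S}$ at least one of $s + \LCE(s,s')$, $s' + \LCE(s,s')$ is selected. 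This contributes $\Oh(|\mathsf{S}| \log |\mathsf{S}|) = \Ohtilde(p)$ positions in total.

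The main obstacle will be the highly periodic case, in which $S\fragmentco{i}{i + \LCE(i,i')}$ admits a period at most $\tau/3$ and Step 1 may fail to provide a synchronized shift. To handle this, I would augment $M'$ with every position where such a short period \emph{terminates} after running for at least $3\tau - 1$ consecutive characters; the classical theory of maximal runs bounds the number of such breakpoints by $\Oh(|S|/\tau) = \Oh(p)$, preserving the budget. Verifying that every omitted pair $(i, i')$ is indeed covered --- concretely, that the common short period of the matching prefix must break at $i + \LCE(i,i')$ or $i' + \LCE(i,i')$ (since the LCE does not extend further) --- is the most delicate piece of case analysis, and this is where I expect to spend the bulk of the technical effort.
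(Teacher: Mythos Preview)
Your proposal is correct and follows essentially the same route as the paper: random-prime isolation (\cref{lem:kmismatchcontainer}), a $1$-mismatch container built from a $\tau$-synchronizing set plus heavy--light tries on the induced suffixes (\cref{lem:onemismatchcontainer}), and dyadic subsampling across levels (\cref{cor:kmismatchcontainer_above}), finally instantiated on $S=P\cdot T$. One minor omission to patch when you write it out: a single suffix trie only handles the case where the unique mismatch lies in the second half of the length-$\lfloor m/p\rfloor$ window (i.e., $\LCE$ is long); the paper symmetrically builds the trie of \emph{reversed prefixes} (step~3 in the proof of \cref{lem:onemismatchcontainer}) to cover mismatches near the beginning.
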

Note that \cref{lem:temp-mismatchcontainer} directly follows from \cref{thm:mismatchcontainer} by setting $M^{(T)}=M$ and $M^{(P)}=M \cap [0\dd m)$.

We need the following simple lemma (which is similar to \cite[Lemma 3.1]{ChanGKKP20}):
\begin{lemma}
\label{lem:randprime2}
Let $ M\subseteq [0\dd m)$ and $p$ be a random prime from $[\hat p,2\hat p)$.
 Then,
 \begin{enumerate}
     \item For any $d\in M$,
 \[\Pr_{p}\big[\exists_{d'\in M\setminus \{d\}} : d \equiv d' \pmod{p}\big] = \Oh\left(\frac{|M|\log m}{\hat{p}}\right).\]
 \label{item:randprime1}
 \item For any $k>0$, 
 \[ \Pr_{p} \big [|M\bmod p|\le |M| - k \big ]= \Oh\left (\frac{|M|^2\log m}{k\hat p}\right ).\]
 \label{item:randprime2}
 \end{enumerate}
\end{lemma}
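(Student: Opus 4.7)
The plan is to count, for each part, the number of ``bad'' primes in $[\hat p, 2\hat p)$ and divide by the total number of primes in that interval. For part~\ref{item:randprime1}, the event $d \equiv d' \pmod{p}$ occurs precisely when $p$ divides the positive integer $|d-d'| < m$. Any such integer has at most $\log m / \log \hat p$ prime divisors in $[\hat p, 2\hat p)$, since every such divisor contributes a factor of at least $\hat p$ to the factorization. Summing over the at most $|M|-1$ choices of $d' \in M \setminus \{d\}$, the total number of primes in $[\hat p, 2\hat p)$ that collide $d$ with some other element of $M$ is at most $(|M|-1)\log m / \log \hat p$. By Chebyshev's elementary bounds on the prime-counting function, the interval $[\hat p, 2\hat p)$ contains $\Omega(\hat p / \log \hat p)$ primes, so the probability that a uniformly random prime is bad is $\Oh(|M|\log m / \hat p)$; the $\log \hat p$ factor cancels exactly.

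For part~\ref{item:randprime2}, I would reduce to part~\ref{item:randprime1} via linearity of expectation and Markov's inequality. Define the collision deficit $Y \coloneqq |M| - |M \bmod p|$ and the larger auxiliary quantity
\[
Y' \coloneqq \big|\{d \in M : d \equiv d' \!\!\!\pmod{p} \text{ for some } d' \in M \setminus \{d\}\}\big|.
\]
Every congruence class of size $s$ contributes $\max(s-1,0)$ to $Y$ and $s\cdot \mathbf{1}[s \ge 2]$ to $Y'$, so $Y \le Y'$ pointwise. Linearity of expectation and part~\ref{item:randprime1} yield
\[
\Exp_p[Y'] = \sum_{d \in M} \Pr_p\big[\exists_{d' \in M \setminus \{d\}} : d \equiv d' \!\!\!\pmod{p}\big] = \Oh\!\left(\frac{|M|^2 \log m}{\hat p}\right).
\]
Markov's inequality applied to the nonnegative variable $Y'$ then yields $\Pr_p[Y \ge k] \le \Pr_p[Y' \ge k] \le \Exp_p[Y']/k$, which is exactly the claimed bound.

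The only quantitative subtlety, and the one worth double-checking, is the divisor count in part~\ref{item:randprime1}: a naive bound of $\Oh(\log m)$ on the number of prime divisors of $|d-d'|$ would introduce a spurious $\log \hat p$ factor in the final estimate. Restricting to prime divisors of size at least $\hat p$ tightens this count to $\log m / \log \hat p$, and this improvement exactly compensates for the density factor $\log \hat p$ arising from the prime number theorem. Beyond this observation, both parts reduce to routine probabilistic bookkeeping in the style of \cite[Lemma~3.1]{ChanGKKP20}, and I do not anticipate further obstacles.
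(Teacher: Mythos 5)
Your proof is correct and follows essentially the same route as the paper's: part~\ref{item:randprime1} via the divisor count $\log_{\hat p} m = \log m/\log\hat p$ against the $\Omega(\hat p/\log\hat p)$ prime density, and part~\ref{item:randprime2} by bounding the deficit $|M|-|M\bmod p|$ by the number of colliding elements and applying linearity of expectation plus Markov. The $Y \le Y'$ observation you spell out is exactly the inequality the paper uses implicitly.
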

\begin{proof}
\begin{enumerate}
    \item 
For each $d'\in M\setminus \{d\}$, we have $\Pr[d \equiv d' \pmod{p}]=\Pr[p \text{ divides } d-d']$. 
Since $0<|d-d'|< m$,  the number $d-d'$ has at most $\log_{\hat p} m$ prime factors in the interval $[\hat p,2\hat p)$.
Moreover, by the prime number theorem, there are $\Omega(\hat p/\log \hat p)$ primes in $[\hat p,2\hat p)$.
Hence, $\Pr[d \equiv d' \pmod{p}] \le \frac{\log_{\hat p} m}{\Omega(\hat p/\log \hat p)} = \Oh(\frac{\log  m}{\hat p})$. 
The lemma follows by the union bound over $|M|-1$ choices for $d'$.
\item Observe that $0\le |M| - |M\bmod p| \le \sum_{d\in M} \mathbf{1}[\exists_{d'\in M\setminus \{d\}} : d \equiv d' \pmod{p}]$. Then, by \cref{item:randprime1} and linearity of expectation, we have $\mathbf{E}_{p}[|M| - |M\bmod p|] = \Oh\left (\frac{|M|^2\log m}{\hat p}\right )$. The claim then follows from Markov's inequality. \qedhere
\end{enumerate}
\end{proof}

We need the string synchronizing sets introduced by \cite{KempaK19}. The following statement follows from \cite[Definition 3.1 and Proposition 8.10]{KempaK19}.
\begin{proposition}[$\tau$-synchronizing sets \cite{KempaK19}]
\label{lem:syncset}
Let $T\in \Sigma^n$ and let $\tau \le \frac{1}{2}n$ be a positive integer. 
There exists a set $\mathsf{S} \subseteq [0 \dd  n - 2\tau]$ of size at most $\frac{30n}{\tau}$ satisfying the following conditions:
\begin{enumerate}
    \item  if $T[i\dd  i + 2\tau) = T [j \dd j + 2\tau)$, then $i \in \mathsf{S}$ holds if and only if $j \in \mathsf{S}$ (for $i, j \in [0 \dd n - 2\tau]$), and
    \item $\mathsf{S} \cap [i \dd i + \tau) = \emptyset$ if and only if $i \in \mathsf{R}$ (for $i \in [0\dd n - 3\tau + 1]$), where 
    \[\mathsf{R} = \{i \in [0 \dd n - 3\tau + 1] : \per(T [i \dd i + 3\tau - 1) )\le \tfrac{1}{3}\tau\}.\lipicsEnd\]
\end{enumerate}
\end{proposition}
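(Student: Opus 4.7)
The plan is to reconstruct the $\tau$-synchronizing set of \cite{KempaK19} through three stages: a local-rule definition ensuring consistency (Property~1), a periodicity-based case analysis for density (Property~2), and an amortization for the size bound. The governing principle is that the membership of $i$ in $\mathsf{S}$ must be decidable from $T\fragmentco{i}{i+2\tau}$ alone; once that is arranged, Property~1 follows for free, and only Property~2 and the cardinality estimate require real work.

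First I would fix a total order $\prec$ on $\Sigma^{\tau}$ (either a uniformly random bijection in an existential argument, or a deterministic canonical ranking). For each $i\in [0\dd n-2\tau]$, the inclusion rule for $\mathsf{S}$ is phrased in terms of the $\tau+1$ length-$\tau$ substrings $T\fragmentco{j}{j+\tau}$ with $j\in [i\dd i+\tau]$, all of which are carved out of $T\fragmentco{i}{i+2\tau}$. Put $i\in\mathsf{S}$ iff the $\prec$-minimum among these substrings is achieved at the left (or right) end of this window, with ties broken by smallest index. Because this test reads only $T\fragmentco{i}{i+2\tau}$, the rule is translation-invariant under equal windows, which is exactly Property~1.

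Next I would verify Property~2 through a dichotomy. If $i\in \mathsf{R}$, then $T\fragmentco{i}{i+3\tau-1}$ has some period $p\le \tau/3$; inside such a periodic stretch, the length-$\tau$ substrings $T\fragmentco{j}{j+\tau}$ for $j\in[i\dd i+\tau)$ fall into $p$ translation classes and the choice of $\prec$-minimum recurs with period $p$. The selection rule is tuned so that the minimiser never lies inside the interior interval $[i\dd i+\tau)$ but always at an endpoint of the comparison range, so no synchronizer is placed in $[i\dd i+\tau)$. Conversely, if $i\notin \mathsf{R}$, the Fine--Wilf theorem applied to $T\fragmentco{i}{i+3\tau-1}$ rules out short periodicity, which forces the $\tau+1$ length-$\tau$ substrings to contain at least two genuinely distinct strings. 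One then shows the $\prec$-minimum must be attained at some $j\in [i\dd i+\tau)$, contributing a synchronizer there. Combining the two directions yields Property~2 precisely as stated.

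For the size bound, I would amortize the contribution of $\mathsf{S}$ against blocks of length $\tau$. In a non-periodic stretch, Property~2 plus the consistency of Property~1 lets one charge each selected position against a distinct $\tau$-aligned window, giving $O(n/\tau)$ positions. In periodic runs (where positions accumulate in $\mathsf{R}$ and no synchronizers appear in their interior), all selected positions lie at run boundaries, and standard run-counting (runs of period $\le \tau/3$ have boundary structure controlled by $O(n/\tau)$ breakpoints) absorbs them. Summing both contributions and calibrating constants yields $|\mathsf{S}|\le 30n/\tau$. The main obstacle is the density dichotomy: tuning a single local selection rule so that both directions of Property~2 hold exactly with respect to the threshold $p\le \tau/3$ (rather than merely up to constant factors) requires the delicate combinatorial analysis of \cite{KempaK19}, which describes how runs of short period meet non-periodic regions within a window of length $3\tau-1$; this is the technical heart of the proof.
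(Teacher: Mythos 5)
The paper does not actually prove this proposition: it is imported directly from \cite{KempaK19} (their Definition~3.1 and Proposition~8.10), so the ``proof'' in the paper is a citation. Your reconstruction attempts more than the paper does, and it has the right general shape---a local selection rule depending only on $T\fragmentco{i}{i+2\tau}$, phrased via the minimizer of an identifier/order on the $\tau$-length fragments starting in $\fragmentcc{i}{i+\tau}$, which makes Property~1 automatic. That part is fine.

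The genuine gap is in the periodic direction of Property~2. With the plain rule ``$i\in\mathsf{S}$ iff the $\prec$-minimum over $j\in\fragmentcc{i}{i+\tau}$ is attained at $j=i$ or $j=i+\tau$,'' a stretch with $\per(T\fragmentco{i}{i+3\tau-1})=p\le\tau/3$ does \emph{not} become synchronizer-free: the fragment values in each window recur with period $p$, so the minimum is attained at about $\tau/p\ge 3$ positions of the window, and for roughly $\tau/p$ of the starting positions $i'\in\fragmentco{i}{i+\tau}$ (one residue class modulo $p$) the smallest-index minimizer lands exactly on the endpoint $i'$; those $i'$ get selected, violating the ``only if'' direction. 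You say the rule is ``tuned'' to avoid this but never specify how, and you explicitly defer this ``technical heart'' back to \cite{KempaK19}---but this is precisely the step that needs a new idea, not a calibration. The actual fix is to delete from the identifier's domain every position $j$ with $\per(T\fragmentco{j}{j+\tau})\le\tau/3$, take the minimum only over the surviving positions of the window, and select $i$ only if that restricted minimum exists and sits at $i$ or $i+\tau$; then $i\in\mathsf{R}$ forces every candidate in the relevant windows to be ineligible, which is what empties $\mathsf{S}\cap\fragmentco{i}{i+\tau}$. The converse direction and the $\Oh(n/\tau)$ size bound also lean on this exclusion (the size bound in \cite{KempaK19} comes from a charging/probabilistic argument over the identifier assignment using that eligible positions in a window carry many distinct fragments, not from charging selected positions to disjoint $\tau$-aligned blocks as you propose). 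As written, your argument does not go through.
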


\begin{lemma}\label{lem:onemismatchcontainer}
Let $S\in \Sigma^n$ and $m\in [1\dd n]$. There is a set $M \subseteq [0\dd n)$ of size  $\Oh(\frac{n}{m}\log \frac{2n}{m})$ such that
\[\MM(S[i\dd i+m), S[j\dd j+m)) \subseteq (M-i)\cup (M-j)\]
holds for all $i,j\in [0\dd n-m]$ satisfying $\HD(S[i\dd i+m), S[j\dd j+m))=1$.
\end{lemma}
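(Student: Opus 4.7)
The plan is to build $M$ as the union $M=M_1\cup M_2\cup M_1'\cup M_2'$, where $M_1,M_2$ handle the case in which the mismatch sits at the right end of a long matching prefix and $M_1',M_2'$, constructed symmetrically from the reversed string, cover the mirror case. Fix $\tau=\lfloor m/6\rfloor$ (the tiny cases $m<6$ can be handled by taking $M=\fragmentco{0}{n}$). First I would invoke \cref{lem:syncset} to obtain a $\tau$-synchronizing set $\mathsf{S}\subseteq\fragmentco{0}{n-2\tau+1}$ of size $\Oh(n/\tau)=\Oh(n/m)$, build the compacted trie $\mathcal{T}$ of the suffixes $\{S\fragmentco{i}{n}:i\in\mathsf{S}\}$, perform a heavy-light decomposition of $\mathcal{T}$, and set
\[M_1=\bigl\{i+d : i\in\mathsf{S}\text{ and the root-to-leaf path of }S\fragmentco{i}{n}\text{ enters a light child at string-depth }d\bigr\}.\]
Since each root-to-leaf path meets $\Oh(\log|\mathsf{S}|)$ light edges, $|M_1|=\Oh(\tfrac{n}{m}\log\tfrac{2n}{m})$. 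I would also set
\[M_2=\bigl\{h\in\fragmentco{3\tau-1}{n} : \exists\,p\le\tau/3\text{ such that $p$ is a period of }S\fragmentco{h-3\tau+1}{h}\text{ and }S[h]\ne S[h-p]\bigr\},\]
and define $M_1',M_2'$ by running the same construction on the reversed string (translating indices back to those of $S$).

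For verification, fix $i\ne j$ with $\HD(S\fragmentco{i}{i+m},S\fragmentco{j}{j+m})=1$ and let $\ell\in\fragmentco{0}{m}$ denote the mismatch offset, so $S\fragmentco{i}{i+\ell}=S\fragmentco{j}{j+\ell}$ and $S[i+\ell]\ne S[j+\ell]$. Since $\ell+(m-\ell-1)=m-1\ge 6\tau-1$, at least one of $\ell$ and $m-\ell-1$ is $\ge 3\tau-1$; by symmetry I treat the case $\ell\ge 3\tau-1$ (the other uses $M_1',M_2'$) and consider the window $W=S\fragmentco{i+\ell-3\tau+1}{i+\ell}=S\fragmentco{j+\ell-3\tau+1}{j+\ell}$. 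If $\per(W)\le\tau/3$, write $q=\per(W)$; then $S[i+\ell]=S[i+\ell-q]$ and $S[j+\ell]=S[j+\ell-q]$ cannot both hold, else using $S[i+\ell-q]=S[j+\ell-q]$ (as $\ell-q<\ell$) I would obtain $S[i+\ell]=S[j+\ell]$, a contradiction; hence $i+\ell$ or $j+\ell$ lies in $M_2$. Otherwise $\per(W)>\tau/3$, and item (2) of \cref{lem:syncset} applied at position $i+\ell-3\tau+1$ produces $\delta\in\fragmentco{\ell-3\tau+1}{\ell-2\tau+1}$ with $i+\delta\in\mathsf{S}$. Because $\delta+2\tau\le\ell$, the length-$2\tau$ prefixes of $S\fragmentco{i+\delta}{n}$ and $S\fragmentco{j+\delta}{n}$ coincide, so item (1) of \cref{lem:syncset} also places $j+\delta$ in $\mathsf{S}$. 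In the trie $\mathcal{T}$ the two corresponding leaves branch at string-depth $\ell-\delta$, and the heavy-light rule forces at least one of the two descending edges to be light, inserting $i+\ell$ or $j+\ell$ into $M_1$.

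The hardest part will be the size bound $|M_2|=\Oh(n/\tau)$: a naive per-period count gives $\Oh(n/\tau)$ maximal long runs for each of the $\Theta(\tau)$ candidate periods $p\le\tau/3$, summing to $\Oh(n)$. The fix is to charge each $h\in M_2$ to the maximal periodic run under its \emph{shortest} period $p^\ast\le\tau/3$, and to invoke Fine--Wilf to show that two such distinct maximal runs can overlap by less than $2\tau/3$ (a larger overlap would have both short periods and hence the gcd as a common period, forcing the runs to merge). The qualifying maximal runs therefore pack into $\fragmentco{0}{n}$ with length at least $3\tau-1$ each (up to $\Oh(1)$-factor slack), giving $|M_2|=\Oh(n/\tau)=\Oh(n/m)$ and the advertised total $|M|=\Oh(\tfrac{n}{m}\log\tfrac{2n}{m})$.
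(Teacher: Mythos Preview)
Your proof is correct and follows essentially the same approach as the paper: the same choice $\tau=\lfloor m/6\rfloor$, the same $\tau$-synchronizing set plus heavy--light decomposition of the suffix trie for the non-periodic case, and the same ``period-break'' positions for the periodic case, together with the symmetric construction on the reversal. The only cosmetic differences are that the paper phrases the periodic part via endpoints of $\tau$-runs (and cites the standard $\tau$-run overlap bound) whereas you define $M_2$ directly and spell out the Fine--Wilf packing argument; your $M_2$ is in fact a subset of the paper's right-endpoints of $\tau$-runs.
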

\newcommand{\SSS}{\mathsf{S}}
\begin{proof}
If $m = \Oh(1)$, then $M = [0\dd n)$ clearly satisfies the requirement and has size $|M|= n \in \Oh(\frac{n}{m}\log \frac{2n}{m})$. Thus, we assume $m\ge 6$. Let $\SSS$ be a $\tau$-synchronizing set $\SSS$ of $S$ for $\tau = \lfloor \frac{m}{6}\rfloor $ from \cref{lem:syncset}, with $|\SSS|\le \frac{30n}{\tau} = \Oh(\frac{n}{m})$.
Our construction of $M$ consists of the following three steps:
\begin{enumerate}
    \item
    Define a \emph{$\tau$-run} as a run of length at least $3\tau-1$ with shortest period at most $\frac{\tau}{3}$. 
    For each $\tau$-run $S[\ell\dd r)$, include $\ell-1$ and $r$ in $M$.
    
    Since every two different $\tau$-runs have overlapping length less than $\frac{\tau}{3}$, the total number of $\tau$-runs in $S$ is at most $\frac{|S|}{(3\tau -1 ) - \frac{\tau}{3}} = \Oh(\frac{n}{m})$. Thus, we have added $\Oh(\frac{n}{m})$ elements to $M$.
\item Consider the trie of suffixes $S[i\dd n)$ for $i\in \SSS$.\footnote{The \emph{trie} of a finite set $\mathcal{X}\subseteq \Sigma^*$ of strings is a rooted tree whose nodes are in one-to-one correspondence with the set of prefixes $\{X[0\dd i): X\in \mathcal{X}, i\in [0\dd |X|]\}$, where the root is labeled by the empty string $\emptystring$, and a node labeled by a non-empty string $T$ is a child of the node labeled by $T[0\dd |T|-1)$.}
We say an edge in the trie connecting parent $a$ and child $b$ is \emph{light} if $\mathrm{size}(a)\ge 2\cdot \mathrm{size}(b)$, where $\mathrm{size}(u)$ denotes the number of leaf nodes in the subtree rooted at $u$. 
For each such suffix $S[i\dd n)$, and for each light edge on the corresponding root-to-leaf path (clearly, there can be at most $\log_2|\SSS|$ many), connecting a node with label $S[i\dd i+d)$ with a node with label $S[i\dd i+d]$, add $i+d$ to $M$.

In total, we added at most $|\SSS|\cdot \log_2|\SSS| = \Oh(\frac{n}{m}\log\frac{2n}{m})$ elements to $M$.
\item  Symmetrically, consider the trie of reverse prefixes $\overline{S[0\dd i)} = S[i-1]\cdots S[1]S[0]$ for $i\in \SSS$ and add to $M$ the positions corresponding to light edges on each root-to-leaf path.
\end{enumerate}

Now, consider $i,j\in [0\dd n-m]$ satisfying $\HD(S[i\dd i+m), S[j\dd j+m))=1$.
Let $p$ be the only element of $\MM(S[i\dd i+m), S[j\dd j+m))$, so $S[i+p]\neq S[j+p]$.
By symmetry of our construction of $M$, assume $p\ge \lfloor \frac{m}{2}\rfloor \ge 3\tau - 1$ without loss of generality.
Let $p' = 3\tau - 1\le  p$, and $i'=i+p-p'\ge i$, and $j'=j+p-p' \ge j$.
Consider two cases:
\begin{itemize}
    \item  If $S[i'\dd i+p)=S[j'\dd j+p)$ is periodic with period at most $\frac\tau3$, then both fragments are part of some $\tau$-runs, and the period breaks at position $i+p$ or $j+p$, and thus one of these positions belongs to $M$.
   \item   Otherwise, by the second condition of \cref{lem:syncset}, $i'\notin \mathsf{R}$, and there is a shift $\delta \in [0\dd \tau)$ such that $i'+\delta \in \SSS$. Then, by the first condition of \cref{lem:syncset}, we have $j'+\delta \in \SSS$ as well. The root-to-leaf paths corresponding to the suffixes $S[i'+\delta\dd n)$ and $S[j'+\delta\dd n)$ diverge at depth $p'-\delta$, and one of these paths must use a light edge at that depth.
We then have $i'+p'=i+p\in M$ or $j'+p'=j+p\in M$ depending on which path uses such an edge.
\end{itemize}
In both cases, we have concluded that $i+p\in M$ or $j+p\in M$, which is equivalent to the claimed statement $\{p\}=\MM(S[i\dd i+m), S[j\dd j+m)) \subseteq (M-i)\cup (M-j)$.
\end{proof}

\begin{lemma}\label{lem:kmismatchcontainer}
Let $S\in \Sigma^{n}$, $m\in [1\dd n]$, and $k\in \mathbb{Z}_{+}$.
There exists a randomized set (i.e., a distribution of sets) $M\subseteq [0\dd n)$ of worst-case size $\Oh(\frac{n}{m}\cdot k\log^2 n)$ such that, for each $i,j\in [0\dd n-m]$ satisfying $\HD(S[i\dd i+m), S[j\dd j+m)) \le k$ and each $d\in \MM(S[i\dd i+m), S[j\dd j+m))$, we have 
\[\Pr_M[d \in (M-i)\cup (M-j)] \ge 0.99.\]
\end{lemma}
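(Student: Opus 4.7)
The plan is to reduce to the $1$-mismatch case of \cref{lem:onemismatchcontainer} via random-prime isolation. Sample a prime $p$ uniformly from the primes in $[\hat p, 2\hat p)$ where $\hat p = C k\log n$ for a sufficiently large absolute constant $C$; by the first part of \cref{lem:randprime2}, for any fixed set $M^* \subseteq [0\dd m)$ with $|M^*|\le k$ and any fixed $d \in M^*$, with probability at least $0.99$ no other element of $M^*$ is congruent to $d$ modulo $p$. If $m$ is so small that $\tfrac{nk\log^2 n}{m} \ge n$, the trivial choice $M = [0\dd n)$ works; otherwise $p < m$, and I proceed as follows.

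For each residue $r \in [0\dd p)$, form $S_r$ by concatenating $S[h]$ over $h\in[0\dd n)$ with $h\equiv r \!\!\pmod p$, and then pad on the right with a fresh sentinel $\#\notin\Sigma$ to the common length $L \coloneqq \lceil n/p \rceil$. Let $S' \coloneqq S_0 \cdot S_1 \cdots S_{p-1}$, of length $pL = \Oh(n)$, and let $\psi\colon [0\dd n) \to [0\dd |S'|)$ be the canonical position-preserving embedding of $S$ into $S'$. Apply \cref{lem:onemismatchcontainer} to $S'$ with both window lengths $\ell_1 \coloneqq \lfloor m/p\rfloor$ and $\ell_2 \coloneqq \lceil m/p\rceil$, obtaining sets $M'_1, M'_2$, and set $M \coloneqq \psi^{-1}(M'_1 \cup M'_2) \subseteq [0\dd n)$. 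The worst-case-over-$p$ size bound is then $|M| \le |M'_1|+|M'_2| = \Oh\bigl(\tfrac{|S'|}{m/p}\log\tfrac{2|S'|}{m/p}\bigr) = \Oh\bigl(\tfrac{nk\log^2 n}{m}\bigr)$.

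For correctness, fix $i,j \in [0\dd n-m]$ with $|\MM(S[i\dd i+m),S[j\dd j+m))| \le k$ and $d$ in that mismatch set. Conditioned on the isolation event (probability at least $0.99$), let $D \coloneqq \{d' \in [0\dd m) : d' \equiv d \!\!\pmod p\}$, so $|D| \in \{\ell_1,\ell_2\}$. The two restricted sub-sequences $\bigodot_{d'\in D} S[i+d']$ and $\bigodot_{d'\in D} S[j+d']$ form contiguous length-$|D|$ fragments of $S_{(i+d)\bmod p}$ and $S_{(j+d)\bmod p}$, respectively; each lies entirely inside a single $S_r$-block of $S'$ since $|D| \le L$. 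By isolation, they differ in exactly one coordinate---the one corresponding to $d$---so \cref{lem:onemismatchcontainer} applied at window length $|D|$ forces $\psi(i+d)$ or $\psi(j+d)$ into $M'_1 \cup M'_2$, i.e., $i+d \in M$ or $j+d \in M$.

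The main obstacle I anticipate is the careful indexing in the last step: verifying that the two restricted sub-sequences appear as contiguous length-$|D|$ fragments of $S'$ confined within single padded blocks of length $L \ge |D|$, and that the surviving mismatch is positioned identically within the two $S'$-fragments so that $\psi$ correctly pulls the $1$-mismatch guarantee back to coverage of $\{i+d, j+d\}$ in $S$. Running the lemma at both candidate window lengths absorbs the off-by-one in $|D|$ across residue classes, the sentinel padding ensures $L$ is the same across all blocks so boundary crossings do not interfere, and the size bound follows immediately from $\hat p = \Theta(k\log n)$ and $|S'| = \Oh(n)$.
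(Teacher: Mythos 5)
Your proof is correct and follows essentially the same route as the paper's: isolate the chosen mismatch modulo a random prime $p=\Theta(k\log n)$ via \cref{lem:randprime2}, rearrange $S$ by residue classes, and invoke \cref{lem:onemismatchcontainer} on the rearranged string to cover one of the two mismatching positions. Your sentinel padding and the use of both window lengths $\lfloor m/p\rfloor$ and $\lceil m/p\rceil$ are cosmetic variants of the paper's unpadded permutation with the single window length $\lfloor m/p\rfloor$ (where one instead slides a length-$\lfloor m/p\rfloor$ window to contain the surviving mismatch), and the rest of the argument, including the trivial large-$k$ case and the size accounting, matches.
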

\begin{proof}
If $k = \Omega(m/\log^2 n)$, then we simply pick $M=[0\dd n)$; this set clearly satisfies the claimed property due to $[0\dd m)\subseteq (M-i)\cup (M-j)$. 
Thus, we henceforth assume $k = o(m/\log^2 n)$.

Let $p$ be a random prime sampled from the interval $[\hat p,2\hat p)$, where $\hat p = \Theta(k\log n)$ is sufficiently large. Since $k = o(m/\log^2 n)$,  we can assume $m\ge 2\hat p>p$.
Define $\hat{S}$ as a permutation of $S$ with characters $S[j]$ ordered according to $(j \bmod p,j)$, that is,
\[\hat{S} = \bigodot_{s\in [0\dd p)} S_{s},\qquad\text{ where }\qquad S_{s} = \bigodot_{j\in [0\dd n) : j \bmod p\;=\; s} S[j].\]
Denote this permutation by $\pi\colon [n]\to [n]$, which satisfies $\hat S[\pi(i)] = S[i]$.
We apply \cref{lem:onemismatchcontainer} for $\hat{S}$ and $\hat{m}\coloneqq \lfloor m/p \rfloor \ge 1$, and we permute the obtained set $\hat{M}$ to derive $M\coloneqq \{i\in [n]: \pi(i)\in \hat M\}$.
Note that $|M|=|\hat{M}| = \Oh(\frac{n}{\hat{m}} \log \frac{2n}{\hat{m}})=\Oh(\frac{n}{m}\cdot p \log n)=\Oh(\frac{n}{m}\cdot k\log^2 n)$.

For $i,j\in [0\dd n-m]$, denote $D_{i,j}\coloneqq \MM(S[i\dd i+m), S[j\dd j+m))$.
Let us pick $i,j\in [0\dd n-m]$ such that $|D_{i,j}| \le k$ and a position $d\in D_{i,j}$.
By \cref{lem:randprime2} (\cref{item:randprime1}), a random prime $p\in [\hat p, 2\hat p)$ fails to isolate $d$ from other elements of $D_{i,j}$ with probability $\Oh((|D_{i,j}| \log m)/\hat p)$.
As long as $\hat p = \Theta(k\log n)$ is sufficiently large, we can bound this probability by $0.01$, that is, $\Pr_p[\exists_{d'\in D_{i,j}\setminus\{d\}} d\equiv d' \pmod{p}]\le 0.01$.
Equivalently, with probability at least $0.99$, every $d'\in D_{i,j}\setminus\{d\}$ satisfies
$d \not\equiv d' \pmod{p}$.
We henceforth assume that this event holds.
Then, the fragments of $S_{(i+d)\bmod p}$ and $S_{(j+d)\bmod p}$ of $\hat{S}$ corresponding to $S[i\dd i+m)$ and $S[j\dd j+m)$, respectively, are at Hamming distance one.
The length of these two fragments is at least $\hat{m}=\floor{m/p}$. 
Thus, by \cref{lem:onemismatchcontainer}, the set $\hat{M}$ contains the position of $\hat{S}$ corresponding to $S[i+d]$ or the position of $\hat{S}$ corresponding to $S[j+d]$; in other words, $\pi(i+d)\in \hat M$ or $\pi(j+d)\in \hat M$ holds.
By construction, this means that $i+d\in M$ or $j+d\in M$, i.e., $d\in (M-i)\cup (M-j)$.
We assumed an event with a probability of at least $0.99$, so we conclude that $\Pr[d\in (M-i)\cup (M-j)] \ge 0.99$ holds as claimed.
\end{proof}

\begin{corollary}\label{cor:kmismatchcontainer_below}
Let $S\in \Sigma^{n}$, $m\in [1\dd n]$, and $k\in \mathbb{Z}_{+}$.
There exists a set $M\subseteq [0\dd n)$ of size $\Oh(\frac{n}{m}\cdot k\log^3 n)$ satisfying the following property for every $i,j\in [0\dd n-m]$ such that $\HD(S[i\dd i+m), S[j\dd j+m))\le k$:
\[\MM(S[i\dd i+m), S[j\dd j+m)) \subseteq  (M-i)\cup (M-j).\]
\end{corollary}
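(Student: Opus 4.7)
The plan is to derandomize Lemma~\ref{lem:kmismatchcontainer} by boosting the success probability through independent repetitions and applying the probabilistic method with a union bound over all relevant triples. Concretely, I would take $L = \Theta(\log n)$ independent samples $M_1,\ldots,M_L$ of the randomized set from Lemma~\ref{lem:kmismatchcontainer} and define $M \coloneqq \bigcup_{\ell=1}^{L} M_\ell$.

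The first observation is a deterministic size bound: each $M_\ell$ has worst-case size $\Oh(\frac{n}{m}\cdot k\log^2 n)$, so $|M|\le L\cdot \Oh(\frac{n}{m}\cdot k\log^2 n)=\Oh(\frac{n}{m}\cdot k\log^3 n)$, matching the target. Next, fix any pair $i,j\in[0\dd n-m]$ with $\HD(S[i\dd i+m),S[j\dd j+m))\le k$ and any $d\in \MM(S[i\dd i+m),S[j\dd j+m))$. By Lemma~\ref{lem:kmismatchcontainer}, for each $\ell$ independently, $d\in (M_\ell-i)\cup(M_\ell-j)$ holds with probability at least $0.99$, so
\[
\Pr\bigl[d\notin (M-i)\cup(M-j)\bigr]\le 0.01^L.
\]

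To apply the union bound, I would count triples $(i,j,d)$: there are at most $(n-m+1)^2\le n^2$ pairs $(i,j)$, and since $\HD\le k$ ensures at most $k$ mismatches, the number of relevant triples is at most $n^2 k\le n^3$. Choosing $L=\lceil c\log n\rceil$ for a sufficiently large constant $c$ makes $0.01^L\cdot n^3 < 1$, so with positive probability, \emph{every} relevant triple is covered by $M$. The probabilistic method then yields the existence of a deterministic set $M$ meeting both the size and the coverage requirements.

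The proof is essentially a routine amplification, so no real obstacle arises; the only thing to be careful about is ensuring the worst-case size of each $M_\ell$ (not just expected size) is bounded by the randomized statement, which the wording of Lemma~\ref{lem:kmismatchcontainer} explicitly guarantees, and that the union bound is over triples rather than pairs so that the covered mismatch $d$ is quantified inside the probabilistic statement.
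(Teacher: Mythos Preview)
The proposal is correct and essentially identical to the paper's proof: both take $\Theta(\log n)$ independent samples of the randomized set from Lemma~\ref{lem:kmismatchcontainer}, form their union, use the worst-case size bound to get $\Oh(\frac{n}{m}\cdot k\log^3 n)$, and apply a union bound over the at most $n^3$ triples $(i,j,d)$ with failure probability $0.01^{\Theta(\log n)}$ each.
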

\begin{proof}
 Let us independently sample $\ell \coloneqq \lceil 10\log n\rceil$ sets $(M_t)_{t=1}^\ell$ according to the distribution of \cref{lem:kmismatchcontainer} and pick $M=\bigcup_{t=1}^\ell M_t$.
 Observe that $|M|\le \ell \cdot \Oh(\frac{n}{m}\cdot k\log^2 n)=\Oh(\frac{n}{m}\cdot k\log^3 n)$.
 To complete the proof, it suffices to show that $M$ satisfies the claimed property with non-zero probability.
 For this, let us pick $i,j\in [0\dd n-m]$, and denote $D_{i,j}\coloneqq \MM(S[i\dd i+m), S[j\dd j+m))$.
 If $|D_{i,j}|\le k$, then 
 \cref{lem:kmismatchcontainer} guarantees $\Pr_{M_t}[d \notin (M_t-i)\cup (M_t-j)] \le 0.01$ for every $d\in D_{i,j}$.
 Since the sets $M_t$ are sampled independently of each other, we conclude that
 $\Pr[d \notin (M-i)\cup (M-j)] \le (0.01)^{\ell} = o(n^{-3})$
 and $\Pr[D_{i,j}\subseteq (M-i)\cup (M-j)] \ge 1-o(n^{-2})$.
 We finish by a union bound over all pairs $i,j$.
\end{proof}

\begin{corollary}\label{cor:kmismatchcontainer_above}
Let $S\in \Sigma^{n}$, $m\in [1\dd n]$ and $k\in \mathbb{Z}_{+}$.
There exists a set $M\subseteq [0\dd n)$ of size $\Oh(\frac{n}{m}\cdot k \log^4 n)$ satisfying the following property for every $i,j\in [0\dd n-m]$:
\[|\MM(S[i\dd i+m), S[j\dd j+m)) \;\cap\; ((M-i)\cup (M-j))| \ge \min(k,\HD(S[i\dd i+m), S[j\dd j+m))).\]
\end{corollary}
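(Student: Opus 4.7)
The plan is an exponential scale decomposition built on top of \cref{cor:kmismatchcontainer_below}. For each $\ell \in \fragmentcc{0}{L}$ with $L \coloneqq \lceil \log_2(m/k) \rceil$, set $k_\ell \coloneqq 2^\ell k$ and invoke \cref{cor:kmismatchcontainer_below} with threshold $k_\ell$ to obtain a set $M^{(\ell)} \subseteq \fragmentco{0}{n}$ of size $\Oh(\tfrac{n}{m} \cdot k_\ell \log^3 n)$ that fully captures the mismatch set of any pair of length-$m$ fragments at Hamming distance at most $k_\ell$ (note $k_L \ge m$, so every pair is covered at the top scale). Then independently subsample each element of $M^{(\ell)}$ with probability $p_\ell \coloneqq \min(1, C \log n / 2^\ell)$, for a sufficiently large constant $C$, to obtain $\tilde M^{(\ell)}$, and define the final set as $M \coloneqq \bigcup_{\ell=0}^{L} \tilde M^{(\ell)}$. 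I will argue that a random $M$ of this form meets the stated requirements with positive probability, thereby establishing deterministic existence.

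For the size bound, the expected cardinality of each $\tilde M^{(\ell)}$ is $p_\ell \cdot |M^{(\ell)}| = \Oh(\tfrac{n}{m} \cdot k \log^4 n)$, independently of $\ell$. A standard Chernoff bound shows each $|\tilde M^{(\ell)}|$ is within a constant factor of its mean with probability $1 - n^{-\omega(1)}$, and summing across the $\Oh(\log n)$ scales yields the claimed bound after absorbing logarithmic slack into the exponent.

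For correctness, fix $i, j \in \fragmentcc{0}{n - m}$ and let $D \coloneqq \HD(S\fragmentco{i}{i+m}, S\fragmentco{j}{j+m})$. Let $\ell^* \coloneqq \min\{\ell \in \fragmentcc{0}{L} : k_\ell \ge D\}$. By the guarantee of \cref{cor:kmismatchcontainer_below} applied at threshold $k_{\ell^*}$, the entire shifted mismatch set $\MM(S\fragmentco{i}{i+m}, S\fragmentco{j}{j+m})$ lies in $(M^{(\ell^*)} - i) \cup (M^{(\ell^*)} - j)$. Hence the number of mismatches retained in $(\tilde M^{(\ell^*)} - i) \cup (\tilde M^{(\ell^*)} - j)$ is a sum of $D$ independent $\mathrm{Bernoulli}(p_{\ell^*})$ variables with mean $D p_{\ell^*}$. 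If $\ell^* = 0$, that is, $D \le k$, then $p_0 = 1$ and all $D = \min(k, D)$ mismatches are kept deterministically. If $\ell^* \ge 1$, the minimality of $\ell^*$ gives $D > k_{\ell^* - 1} = k_{\ell^*}/2$, so $D p_{\ell^*} \ge C k \log n / 2$, and a Chernoff bound yields that at least $k = \min(k, D)$ mismatches survive with probability $\ge 1 - n^{-3}$ for $C$ chosen large enough. A union bound over the $\Oh(n^2)$ pairs $(i, j)$ completes the proof.

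The main obstacle is the delicate balance in the choice $p_\ell = \Theta(\log n / 2^\ell)$: it must be small enough that the products $p_\ell \cdot |M^{(\ell)}|$ no longer depend on $\ell$ (so the contributions of all scales have comparable size rather than geometrically exploding with $k_\ell$), yet large enough that the expected number of retained mismatches at the ``correct'' scale $\ell^*$ is $\Omega(k \log n)$, which is the threshold at which a Chernoff bound gives failure probability below $n^{-3}$ suitable for the union bound over $\Oh(n^2)$ pairs. A subsidiary point is that the regime $D \le k$ cannot be handled probabilistically, since Chernoff concentration degrades when the expected surviving count is small; this is exactly why the construction leaves $\tilde M^{(0)} = M^{(0)}$ un-subsampled via the choice $p_0 = 1$.
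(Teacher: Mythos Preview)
Your high-level strategy---geometric scales with subsampling and a union bound---matches the paper's. However, there are two gaps.

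First, the claim that the number of retained mismatches is ``a sum of $D$ independent $\mathrm{Bernoulli}(p_{\ell^*})$ variables'' is not justified: for distinct $d,d'\in D_{i,j}$ one can have $d+i=d'+j$, so the survival events share a coin flip. The fix is the same split the paper uses: partition $D_{i,j}$ into those $d$ with $d+i\in M^{(\ell^*)}$ and the rest (with $d+j\in M^{(\ell^*)}$); one part has size $\ge D/2$ and within it the relevant subsampling events \emph{are} independent, so Chernoff applies with a harmless factor-two loss.

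Second, and more substantively, your construction overshoots the stated size bound by a $\log n$ factor. Each $\tilde M^{(\ell)}$ has expected size $p_\ell\,|M^{(\ell)}|=\Oh(\tfrac{n}{m}k\log^4 n)$, and summing over $L+1=\Oh(\log n)$ scales gives $\Oh(\tfrac{n}{m}k\log^5 n)$, not $\Oh(\tfrac{n}{m}k\log^4 n)$; the phrase ``absorbing logarithmic slack into the exponent'' does not make this go away. The paper avoids this extra log by invoking \cref{lem:kmismatchcontainer} (size $\Oh(\tfrac{n}{m}K\log^2 n)$, per-element coverage probability $\ge 0.99$) rather than \cref{cor:kmismatchcontainer_below} (size $\Oh(\tfrac{n}{m}K\log^3 n)$, full deterministic coverage) at scales $K>4k$. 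The price is that one can no longer assert that all of $D_{i,j}$ is covered before subsampling; instead the paper uses Markov's inequality to get $\ge 0.4K$ covered mismatches, splits into the $-i$ and $-j$ sides to get $\ge 0.2K$ independent witnesses, applies Chebyshev after subsampling at rate $\Theta(k/K)$, and finally amplifies the resulting constant success probability via $\Oh(\log n)$ independent copies per scale. Your simpler route is correct for a $\log^5 n$ bound, but to hit the stated $\log^4 n$ you need to drop back to \cref{lem:kmismatchcontainer} at the large scales and carry out this more delicate analysis.
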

\begin{proof}
Assume $k\ge 20$; otherwise, setting $k$ to $20$ only affects the constant factor in the upper bound for $|M|$.

To construct $M$, first let $M_{4k}$ be the set of \cref{cor:kmismatchcontainer_below} for threshold $4k$.
Next, for each threshold $K\in \{8k,16k,32k,\dots\} \cap [2m]$, we independently sample $\ell \coloneqq \ceil{10\log n}$ sets $(M_{K,t})_{t=1}^\ell$ according to the distribution of \cref{lem:kmismatchcontainer} with threshold $K$, and we include each $s\in M_{K,t}$ in $\tilde M_{K,t}$ independently with probability $\frac{8k}{K}$. Define $M = M_{4k} \cup \bigcup_{K,t}\tilde M_{K,t}$. It suffices to show that $M$ satisfies the claimed property with non-zero probability.

Each $\tilde M_{K,t}$ has expected size $\frac{8k}{K}\cdot \Oh(\frac{n}{m}\cdot K\log^2 n) = \Oh(\frac{n}{m}\cdot k\log^2 n)$.
Hence, the size $|M|$ is at most $|M_{4k}| +\sum_{K,t}|\tilde M_{K,t}| = \Oh(\frac{n}{m}\cdot k\log^3 n)+  \log \left (\frac{m}{k}\right ) \cdot \ell\cdot\Oh(\frac{n}{m}\cdot k\log^2 n) = \Oh(\frac{n}{m}\cdot k\log^4 n)$ in expectation and, by Markov's inequality, with at least $0.99$ probability.

Now, consider $i,j\in [0\dd n-m]$ and denote $D_{i,j}\coloneqq \MM(S[i\dd i+m), S[j\dd j+m))$.
If $|D_{i,j}|\le 4k$, then \cref{cor:kmismatchcontainer_below} guarantees $D_{i,j}\subseteq (M_{4k}-i)\cup (M_{4k}-j)$, so $|D_{i,j}\;\cap\; ((M-i)\cup (M-j))|= |D_{i,j}|$ as desired.
In the remaining case, we have $|D_{i,j}|\in (K/2,K]$ for some $K\in \{8k,16k,32k,\dots\} \cap [2m]$.
\Cref{lem:kmismatchcontainer} guarantees that, for each $t\in [1\dd \ell]$, the expected number of elements $d\in D_{i,j}$ that are \emph{not} included in $(M_{K,t}-i)\cup (M_{K,t}-j)$ does not exceed $0.01|D_{i,j}|\le 0.01K$.
By Markov's inequality, with probability at least $0.9$, at most $0.1K$ elements $d\in D_{i,j}$ are not included in $(M_{K,t}-i)\cup (M_{K,t}-j)$,
and thus at least $|D_{i,j}|-0.1K > 0.4K$ many elements of $D_{i,j}$ are included in $(M_{K,t}-i)\cup (M_{K,t}-j)$.
This means $|D_{i,j}\cap (M_{K,t} - i)|\ge 0.2K$ or $|D_{i,j}\cap (M_{K,t} - j)|\ge 0.2K$ holds. Assuming the former holds, since $(\tilde M_{K,t}-i)$ is sampled from $(M_{K,t}-i)$ at rate $\frac{8k}{K}$,  by Chebyshev's inequality, $|D_{i,j}\cap (\tilde M_{K,t} - i)|\ge 0.2K \cdot \frac{8k}{K}  - 2 \sqrt{0.2K \cdot \frac{8k}{K}}\ge k$ holds with at least $3/4$ probability (here we used the assumption that $k\ge 20$).  In the latter case we similarly have $|D_{i,j}\cap (\tilde M_{K,t} - j)| \ge k$ with at least $3/4$ probability.
Hence, by a union bound, each $t\in [1\dd \ell]$ succeeds for $D_{i,j}$, namely satisfying $|D_{i,j}\cap ((\tilde M_{K,t} - i) \cup (\tilde M_{K,t} - j))|\ge k$, with $\ge 0.65$ probability. Then, with $1-0.35^{\ell} \ge 1-1/n^{10}$ probability, at least one of $t\in [1\dd\ell]$ succeeds for $D_{i,j}$, in which case we have $|D_{i,j}\cap ((M - i) \cup (M - j))|\ge k$ as desired. We then finish by a union bound over all pairs $i,j$.
\end{proof}

Now, the main theorem immediately follows from \cref{cor:kmismatchcontainer_above}.
\begin{proof}[Proof of \cref{thm:mismatchcontainer}]
Apply \cref{cor:kmismatchcontainer_above} for the concatenated string $S=P\cdot T$ with parameters $m=|P|$ and $k$, and obtain a set $M'\subseteq [0\dd n+m)$ of size $|M'|= \Oh(\frac{|S|}{m}\cdot k\log^4 |S|) = \Oh(\frac{n}{m}\cdot k\log^4 n)$. Transform $M'$ to the set $M\subseteq [0\dd n)$ defined as $M = \big (M' \cup (M' - m)\big ) \cap [0\dd n)$, which still has size $|M| \le 2|M'| = \Oh(\frac{n}{m}\cdot k\log^4 n)$.

It remains to verify that $M$ satisfies the claimed property in \cref{thm:mismatchcontainer} for every $i\in [0\dd n-m]$. Let $D_i = \MM(P,T[i\dd i+m)) = \MM(S[0\dd m), S[i+m\dd i+2m))$. By \cref{cor:kmismatchcontainer_above},  $|D_i \;\cap\; (M'\cup (M'-m-i))| \ge \min(k,|D_i|)$.  Consider any $d\in D_i$. If $d \in M'$, then $d\in M'\cap [0\dd n) \subseteq M$. If $d\in (M'-m-i)$, then $i+d \in (M'-m)\cap [0\dd n) \subseteq M$, and thus $d\in M-i$. Therefore, $D_i \;\cap\; (M'\cup (M'-m-i)) \subseteq M\cup (M-i)$. Then we conclude that $|D_i \cap (M\cup (M-i))| \ge |D_i \cap (M'\cup (M'-m-i))| \ge \min(k,|D_i|)$ as desired.
\end{proof}

\section{Non-Adaptive Algorithm}
\label{sec:algo}
In this section, we present our non-adaptive algorithm for \textsf{Tolerant Property Testing for Pattern Matching}
(\cref{pr:approximation}) and its reporting version
(\cref{pr:approximationreport}), thereby proving \cref{thm:main-approx}. The property testers claimed in \cref{thm:main} and \cref{fct:simplified} immediately follow from \cref{thm:main-approx} by setting $k'=0$.

Our algorithm runs multiple executions of the (slightly adapted) generic algorithm from \cite{ChanGKKP20} and finally performs a post-processing on the set of candidate answers returned by these executions. 
\cref{alg:oneexecution} specifies the behavior of one execution of the generic algorithm, which takes two extra integer parameters $\hat p \ge k$ and $z\in \fragmentcc{1}{\min(\hat{p},\Delta)}$ whose values will be set later.
Naively running the pseudocode in \cref{alg:oneexecution} would be too slow, but later in \cref{lem:algo-find-fingerprint-matches} it is shown that \cref{alg:oneexecution} can be implemented efficiently.

\begin{algorithm}[h]
  \DontPrintSemicolon
\SetKwInput{KwInput}{Input}
\SetKwInput{KwOutput}{Output}
\SetKwComment{Comment}{$\triangleright$\ }{}
\KwInput{$P\in \Sigma^m$, $T\in \Sigma^n$, $k \in \fragmentoc{0}{m}$, $\hat p\ge k$, and $z \in \fragmentoc{0}{\min(\hat p,\Delta)}$, where $\Delta=n-m+1$.}
\KwOutput{A set $A \subseteq \fragmentco{0}{\Delta}$ of candidate answers.}
Denote $z'\coloneqq \lceil\min(2\hat p,\Delta)/z\rceil$\\
Pick a random prime $p\in [\hat p,2\hat p]$ \label{line:samplep}\\
Pick a random subset $B\subseteq \ZZ_p$ with sampling rate $\beta \coloneqq 1-n^{-4/k}$\\
Pick a random fingerprint function $F\in {\cal F}_M$ from \cref{lem:fingerprints}, where $M\in [n^{10},2n^{10}]$ is prime\label{line:samplef}\\
\lForEach{$u\in \fragmentco{0}{z}$}{$\displaystyle X_u\: \coloneqq\: \bigodot_{j\in \fragmentco{0}{m}:\ (j+u) \bmod p\;\in\; B} P[j]$}\label{line:defnXu}
\lForEach{$v\in \fragmentco{0}{z'}$ \KwSty{and} $i\in \fragmentco{0}{\Delta}$}{
$\displaystyle Y_v(i) \:\coloneqq\: \bigodot_{j\in \fragmentco{0}{m}:\ (i+j-vz) \bmod p \;\in\; B } T[i+j]$}\label{line:defnYv}
\lForEach{$i\in \fragmentco{0}{\Delta}$}{write $(i\bmod p)$ as $u_i+v_iz$ with $u_i\in \fragmentco{0}{z}$ and $v_i\in \fragmentco{0}{z'}$}\label{line:defnuivi}
\Return{$A\coloneqq \{i \in \fragmentco{0}{\Delta} : F(X_{u_i}) = F(Y_{v_i}(i)) \}$}\label{line:defnA}
\caption{One execution of \cite{ChanGKKP20}'s generic algorithm} \label{alg:oneexecution}
\end{algorithm}

Recall that we assume a polynomially bounded alphabet $\Sigma$.
For concreteness, assume $\sigma=|\Sigma| < n^{10}$ (otherwise, one should appropriately increase the exponent at Line~\ref{line:samplef}).

\cite{ChanGKKP20} showed the following key property of \cref{alg:oneexecution}:
\begin{lemma}[\cite{ChanGKKP20}]
\label{lem:ineqprob}
In \cref{alg:oneexecution}, after $p$ is chosen, for every $i\in \fragmentco{0}{\Delta}$ it holds that
\[ \Pr_{B}[X_{u_i}\neq Y_{v_i}(i)\mid p] = 1 - (1-\beta)^{|\MM(P,T\fragmentco{i}{i+m})\bmod p|}.\]
\end{lemma}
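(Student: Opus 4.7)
The plan is to identify the event $X_{u_i} \neq Y_{v_i}(i)$ with the event that a particular subset $S \subseteq \ZZ_p$ of size $|M_i \bmod p|$ meets the random set $B$, where $M_i \coloneqq \MM(P,T\fragmentco{i}{i+m})$. Once such an identification is in place, the conclusion will follow immediately because $B$ samples each element of $\ZZ_p$ independently with probability $\beta$, so the probability of avoiding any fixed set of size $s$ is $(1-\beta)^s$.

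The first step is to verify that $X_{u_i}$ and $Y_{v_i}(i)$ are extracted from $P$ and $T\fragmentco{i}{i+m}$ at \emph{exactly the same} set of indices $j \in \fragmentco{0}{m}$. By the choice of $u_i, v_i$ on Line~\ref{line:defnuivi}, we have $u_i + v_i z \equiv i \pmod{p}$, so $(i + j - v_i z) \equiv (j + u_i) \pmod{p}$ for every $j$. Hence the selection criterion $(i+j-v_i z) \bmod p \in B$ defining $Y_{v_i}(i)$ (Line~\ref{line:defnYv}) coincides with the criterion $(j+u_i) \bmod p \in B$ defining $X_{u_i}$ (Line~\ref{line:defnXu}). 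Writing $J \coloneqq \{j \in \fragmentco{0}{m} : (j+u_i) \bmod p \in B\}$, both $X_{u_i}$ and $Y_{v_i}(i)$ have length $|J|$, and their characters are aligned, so $X_{u_i} = Y_{v_i}(i)$ if and only if $P[j]=T[i+j]$ for every $j \in J$. Equivalently, $X_{u_i} \neq Y_{v_i}(i)$ iff $J \cap M_i \neq \emptyset$.

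The second step is to translate the event $J \cap M_i \neq \emptyset$ into a statement about $B$. Using the bijection $j \mapsto (j+u_i) \bmod p$ on $\ZZ_p$, this event is equivalent to $B \cap S \neq \emptyset$, where $S \coloneqq (M_i + u_i) \bmod p$. Because a shift modulo $p$ preserves cardinalities of subsets of $\ZZ_p$, we have $|S| = |M_i \bmod p|$. Conditioning on $p$ (and therefore fixing $u_i$ and $S$), the set $B$ is still a $\beta$-rate independent sample from $\ZZ_p$, so
\[\Pr_B[X_{u_i} \neq Y_{v_i}(i) \mid p] \;=\; \Pr_B[B \cap S \neq \emptyset \mid p] \;=\; 1 - (1-\beta)^{|S|} \;=\; 1 - (1-\beta)^{|M_i \bmod p|},\]
which is the claim. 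The only point that requires care is the modular-arithmetic bookkeeping in the first step; once the two index sets are identified, the remainder is a one-line Bernoulli-sampling computation with no further probabilistic subtlety.
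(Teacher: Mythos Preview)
Your proof is correct and follows essentially the same approach as the paper: both verify via the congruence $i \equiv u_i + v_i z \pmod{p}$ that $X_{u_i}$ and $Y_{v_i}(i)$ are extracted at the same index set $J$, then reduce $X_{u_i}\ne Y_{v_i}(i)$ to the event that the $\beta$-rate sample $B$ meets a fixed shift of $M_i \bmod p$. The only cosmetic difference is that the paper writes the intersection as $(M_i\bmod p)\cap((B-u_i)\bmod p)$ whereas you write it as $B\cap((M_i+u_i)\bmod p)$, which is plainly equivalent.
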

\begin{proof}
By the definitions in Lines~\ref{line:defnYv}--\ref{line:defnuivi}, we have
\[Y_{v_i}(i) = \bigodot_{j\in \fragmentco{0}{m}:\ (i+j-v_iz) \bmod p \;\in\; B } T[i+j] = \bigodot_{j\in \fragmentco{0}{m}:\ (j+u_i) \bmod p \;\in\; B } T[i+j].\]
By Line~\ref{line:defnXu}, 
\[X_{u_i} = \bigodot_{j\in \fragmentco{0}{m}:\ (j+u_i) \bmod p \;\in\; B } P[j].\]
Comparing these two definitions, we see that $X_{u_i}\neq Y_{v_i}(i)$ holds if and only if
there exists $j\in \MM(P,T\fragmentco{i}{i+m})$ such that $(j+u_i)\bmod p \in B$. 
This is equivalent to the event that $\MM(P,T\fragmentco{i}{i+m}) \bmod p$ has non-empty intersection with $(B-u_i) \bmod p$.
Since $B\subseteq \ZZ_p$ is a random sample with rate $\beta$, the claim immediately follows.
\end{proof}

We would like to use \cref{lem:ineqprob} to eliminate positions $i\notin \Occ_k(P,T)$. 
For this purpose, we would like $|\MM(P,T\fragmentco{i}{i+m})\bmod p|$ to be large compared to $\HD(P,T\fragmentco{i}{i+m})$. 
We achieve this using the following lemma, which is a corollary of the main combinatorial lemma (\cref{thm:mismatchcontainer}):
\begin{lemma}
\label{cor:primegoodforwindow}
Let $P \in \Sigma^m$ and $T\in \Sigma^{n}$ with $m\le n\le 2m$, and $k\in \ZZ_+$.

For a uniformly random prime $p \in [\hat p,2\hat p]$, the following holds:
\[\Pr_p\left[\{i\in \fragmentco{0}{\Delta}: |\MM(P,T\fragmentco{i}{i+m}) \bmod p| < 0.49k\} \subseteq \Occ_{k}(P,T)\right] \ge 1- \Oh\left(\tfrac{k\log^9 n}{\hat p}\right).\]
\end{lemma}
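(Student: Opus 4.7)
The plan is to instantiate \cref{thm:mismatchcontainer} to extract a small ``witness'' set $M\subseteq \fragmentco{0}{n}$ that covers many mismatches at every candidate position, and then argue that the image of $M$ under reduction modulo a random prime $p$ is only mildly compressed. Combined, these two facts force $|\MM(P,T\fragmentco{i}{i+m})\bmod p|$ to stay above $0.49k$ whenever $i\notin \Occ_k(P,T)$.

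Concretely, I would first invoke \cref{thm:mismatchcontainer} with the given $P$, $T$, and $k$. Because $n\le 2m$, the resulting set $M\subseteq \fragmentco{0}{n}$ has size $|M|=\Oh(k\log^4 n)$, and for every $i\in \fragmentco{0}{\Delta}$ it satisfies
\[
|\MM(P,T\fragmentco{i}{i+m})\cap (M\cup (M-i))|\ \ge\ \min\!\bigl(k,\HD(P,T\fragmentco{i}{i+m}))\bigr).
\]
Next, I would apply \cref{lem:randprime2}(\ref{item:randprime2}) to $M$ with threshold $0.01k$: this shows that the ``good event''
\[
\mathcal{E}\ :\ |M\bmod p|\ \ge\ |M|-0.01k
\]
holds with probability at least $1-\Oh\!\bigl(\tfrac{|M|^2\log n}{0.01k\,\hat p}\bigr)=1-\Oh\!\bigl(\tfrac{k\log^9 n}{\hat p}\bigr)$ over a uniformly random prime $p\in \fragmentcc{\hat p}{2\hat p}$. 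This matches the bound stated in the lemma.

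The key observation is that $\mathcal{E}$ is ``hereditary'' for our purposes: since the number of colliding pairs modulo $p$ in any subset $M'\subseteq M$ is bounded by the number of colliding pairs in $M$, we obtain $|M'\bmod p|\ge |M'|-0.01k$ for every such $M'$. The same conclusion applies to subsets of $M-i$ for any $i$, because translation by $-i$ is a bijection on $\ZZ_p$ and therefore preserves the number of congruence collisions. I would then condition on $\mathcal{E}$ and consider an arbitrary $i\in \fragmentco{0}{\Delta}\setminus \Occ_k(P,T)$; by \cref{thm:mismatchcontainer} and pigeonhole there exists $M'\subseteq \MM(P,T\fragmentco{i}{i+m})$ with $|M'|\ge k/2$ such that either $M'\subseteq M$ or $M'\subseteq M-i$. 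Using the hereditary property,
\[
|\MM(P,T\fragmentco{i}{i+m})\bmod p|\ \ge\ |M'\bmod p|\ \ge\ |M'|-0.01k\ \ge\ 0.49k,
\]
and the contrapositive gives the claimed containment. The main technical point is the transfer from the ``full set'' bound of \cref{lem:randprime2} to arbitrary subsets of $M$ and $M-i$; once that is spelled out, everything else is bookkeeping.
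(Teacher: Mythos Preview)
Your proposal is correct and follows essentially the same approach as the paper: invoke \cref{thm:mismatchcontainer} to get $M$, apply \cref{lem:randprime2}(\ref{item:randprime2}) to bound the collision loss by $0.01k$, then pigeonhole into the $M$ or $M-i$ branch and subtract. The paper phrases the hereditary step by explicitly extracting a set of $\le 0.01k$ ``bad'' elements whose removal leaves $M$ with distinct residues, which is equivalent to your observation that $|M'|-|M'\bmod p|\le |M|-|M\bmod p|$ for every $M'\subseteq M$.
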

\begin{proof}
Apply \cref{thm:mismatchcontainer} to $P$, $T$, and $k$, and obtain a set $M$ of size $|M|=\Oh(k\log^4 n)$.  
By \cref{lem:randprime2} (\cref{item:randprime2}), $|M\bmod p| \ge |M| - 0.01 k$ holds with probability at least $1 - \Oh(\frac{|M|^2\log n}{0.01 k\hat p})=1 - \Oh(\frac{k\log^9 n}{\hat p})$.
Now assume this event holds,  and thus there exists a subset $M'\subseteq M$ of size $|M'| \le 0.01 k$ such that $M\setminus M'$ have distinct remainders modulo $p$.

For each $i\in \fragmentco{0}{\Delta} \setminus \Occ_{k}(P,T)$, by \cref{thm:mismatchcontainer} we have 
\begin{align*}
k  &\le |\MM(P,T\fragmentco{i}{i+m}) \cap (M\cup (M-i))|\\ &\le |\MM(P,T\fragmentco{i}{i+m}) \cap M| + |\MM(P,T\fragmentco{i}{i+m}) \cap  (M-i)|.
\end{align*}
Then, either $|\MM(P,T\fragmentco{i}{i+m}) \cap M| \ge k/2$ or $|\MM(P,T\fragmentco{i}{i+m}) \cap (M-i)| \ge k/2$ holds. We assume the former holds (the proof for the latter case is similar).
Since $M\setminus M'$ have distinct remainders modulo $p$, we have
\begin{align*}
    |\MM(P,T\fragmentco{i}{i+m}) \bmod p| 
 & \ge|\MM(P,T\fragmentco{i}{i+m}) \cap (M\setminus M')| \\
 & \ge |\MM(P,T\fragmentco{i}{i+m}) \cap M| - |M'| \\
 & \ge 0.5k - 0.01 k, 
\end{align*}
finishing the proof.
\end{proof}

In order to use \cref{cor:primegoodforwindow}, which only applies when $n=\Oh(m)$, we need to divide our possibly long text into pieces each of length $\Oh(m)$.
More specifically, for $P\in \Sigma^m$ and $T\in \Sigma^n$, we partition all candidate positions $\fragmentco{0}{\Delta}$ into $\lfloor n/m\rfloor$ pieces, where the $s$-th piece  is \[I_s\coloneqq\big[sm\dd \min((s+1)m, \Delta)\big ), \;\;\text{ where } s\in [0\dd \lfloor n/m\rfloor),\] with length $m$ (except for the last piece which may be shorter than $m$).
In \cref{alg:oneexecution}, we define the following good event for each piece $I_s$:
\begin{definition}
   \label{prop:eventgoodprime}
   In \cref{alg:oneexecution}, for each $s\in [0\dd \lfloor n/m\rfloor)$, define $\mathcal{E}_{s}$ as the following event:  
   \[\{i\in I_s: |\MM(P,T\fragmentco{i}{i+m}) \bmod p| < 0.49k\} \subseteq \Occ_{k}(P,T).\]

For each $s\in [0\dd \lfloor n/m\rfloor)$, we have $\Pr_{p}[\mathcal{E}_{s}]\ge 1-\Oh(\frac{k\log^9 m}{\hat p})$ by applying \cref{cor:primegoodforwindow} to the text $T\fragmentco{sm}{\min((s+2)m,n)}$ and pattern $P$. \lipicsEnd
\end{definition}

Then, the following lemma summarizes the correctness guarantee of \cref{alg:oneexecution} within each piece $I_s$ (under the event $\mathcal{E}_s$).
\begin{lemma}
   \label{prop:fingerprint-match-probability}
Let $i\in \fragmentco{0}{\Delta}$ and $s= \lfloor i/m\rfloor $ (so that $i\in I_s$). Then, the set $A$ returned by \cref{alg:oneexecution} has the following properties:
\begin{enumerate}
   \item If $i\notin \Occ_k(P,T)$, then $\Pr_{F, B}[i\notin A \mid \mathcal{E}_{s}] \ge 1-n^{-1.9}$.
       \label{item:fingerprint-match-item1}
      \item If $i\in \Occ_{k'}(P,T)$, then $\Pr_{F, B}[i\in A] \ge n^{-4k'/k}$.
       \label{item:fingerprint-match-item2}
\end{enumerate}
\end{lemma}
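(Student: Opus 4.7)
}
The plan is to reduce both items to \cref{lem:ineqprob} combined with the key arithmetic identity $1-\beta = n^{-4/k}$, using the fingerprint collision bound from \cref{lem:fingerprints} to handle the random choice of $F$. Throughout, I will exploit the fact that the prime $p$, the set $B$, and the fingerprint $F$ are drawn independently in Lines~\ref{line:samplep}--\ref{line:samplef}, so the event $\mathcal{E}_{s}$ (which depends only on $p$) is independent of $B$ and $F$.

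For item~\ref{item:fingerprint-match-item2}, I will argue that if $i\in\Occ_{k'}(P,T)$, then $|\MM(P,T\fragmentco{i}{i+m})|\le k'$, and hence $|\MM(P,T\fragmentco{i}{i+m})\bmod p|\le k'$ for every prime $p$. By \cref{lem:ineqprob}, this gives
\[ \Pr_B[X_{u_i}=Y_{v_i}(i)\mid p] \;=\; (1-\beta)^{|\MM(P,T\fragmentco{i}{i+m})\bmod p|} \;\ge\; (1-\beta)^{k'} \;=\; n^{-4k'/k}. \]
Since $X_{u_i}=Y_{v_i}(i)$ trivially implies $F(X_{u_i})=F(Y_{v_i}(i))$, and hence $i\in A$ by Line~\ref{line:defnA}, the desired bound follows after taking expectation over $p$.

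For item~\ref{item:fingerprint-match-item1}, I will condition on $\mathcal{E}_{s}$ and use the contrapositive form of the definition: since $i\in I_s$ but $i\notin\Occ_k(P,T)$, the event $\mathcal{E}_{s}$ forces $|\MM(P,T\fragmentco{i}{i+m})\bmod p|\ge 0.49k$. Applying \cref{lem:ineqprob} again,
\[ \Pr_B[X_{u_i}=Y_{v_i}(i)\mid p,\mathcal{E}_{s}] \;\le\; (1-\beta)^{0.49k} \;=\; n^{-1.96}. \]
Now $i\in A$ occurs in one of two disjoint ways: either $X_{u_i}=Y_{v_i}(i)$ (bounded above) or $X_{u_i}\ne Y_{v_i}(i)$ but a fingerprint collision occurs. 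By \cref{lem:fingerprints}, the latter has probability at most $\tfrac{|X_{u_i}|-1}{M-1}\le \tfrac{m}{n^{10}-1}$ over the independent choice of $F$. A union bound then yields
\[ \Pr_{F,B}[i\in A\mid \mathcal{E}_{s}] \;\le\; n^{-1.96} + \tfrac{m}{n^{10}-1} \;\le\; n^{-1.9} \]
for all sufficiently large $n$, which establishes the claim.

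I do not expect any serious obstacle: the two items are essentially direct consequences of \cref{lem:ineqprob} once one notices the role of the parameter $\beta = 1-n^{-4/k}$. The only subtlety worth flagging is the independence of $p$, $B$, and $F$, which justifies conditioning on $\mathcal{E}_{s}$ while preserving the distributions of $B$ and $F$; this is immediate from the sampling order in Lines~\ref{line:samplep}--\ref{line:samplef}.
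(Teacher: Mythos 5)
Your proposal is correct and follows essentially the same route as the paper's proof: both items reduce to \cref{lem:ineqprob} via the identity $1-\beta=n^{-4/k}$, with the event $\mathcal{E}_s$ supplying the bound $|\MM(P,T\fragmentco{i}{i+m})\bmod p|\ge 0.49k$ for item~\ref{item:fingerprint-match-item1} and the trivial bound $|\MM(P,T\fragmentco{i}{i+m})\bmod p|\le k'$ for item~\ref{item:fingerprint-match-item2}, and the fingerprint collision probability handled by a union bound exactly as in the paper. Your explicit remark on the independence of $p$, $B$, and $F$ is a harmless elaboration of what the paper leaves implicit.
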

\begin{proof}
    Recall from Line~\ref{line:defnA} that $i\in A$ if and only if $F(X_{u_i})=F(Y_{v_i}(i))$ so it suffices to analyze the probability that $X_{u_i}$ and $Y_{v_i}(i)$ have the same fingerprint.
   \begin{enumerate}
      \item If $i \notin \Occ_k(P,T)$ then, by definition of event $\mathcal{E}_{s}$ (\cref{prop:eventgoodprime}), we have $|\MM(P,T\fragmentco{i}{i+m}) \bmod p| \ge 0.49k$. Then by \cref{lem:ineqprob}, $\Pr_{B}[X_{u_i} \neq Y_{v_i} (i)] \ge 1-(1-\beta)^{0.49k} = 1-n^{-4/k\cdot 0.49k} = 1-n^{-1.96}$.
Assuming $X_{u_i} \neq Y_{v_i} (i)$ holds, by \cref{lem:fingerprints}, their fingerprints $F(X_{u_i})$ and $F(Y_{v_i}(i))$ are equal with probability at most $\frac{|X_{u_i}|-1}{M-1} \le n^{-9}$. 
By a union bound, the probability of $F(X_{u_i})\neq F(Y_{v_i}(i))$ is at least $1-n^{-1.96}-n^{-9}\ge 1-n^{-1.9}$ as claimed.
\item If $i\in \Occ_{k'}(P,T)$, then 
$|\MM(P,T\fragmentco{i}{i+m})\bmod p|\le k'$.
Hence, by \cref{lem:ineqprob}, with probability at least $(1-\beta)^{k'} = (n^{-4/k})^{k'} = n^{-4k'/k}$ 
it holds that $X_{u_i} = Y_{v_i}(i)$, in which case their fingerprints are also equal. \qedhere
   \end{enumerate}
\end{proof}

\cite{ChanGKKP20} showed that \cref{alg:oneexecution} can be implemented efficiently, and this remains true with our slight adaptations.
\begin{lemma}
   \label{lem:algo-find-fingerprint-matches}
  We can implement \cref{alg:oneexecution} in expected time
   $\Oh\big (\frac{zm+z'n}{k}\log n\big )$  and obtain the output $A\subseteq \fragmentco{0}{\Delta}$ in an implicit representation supporting constant-time random access (specifically, $|A|$ is returned and, given $a\in \fragmentco{0}{|A|}$, the $a$-th smallest element of $A$ can be retrieved in constant time).
\end{lemma}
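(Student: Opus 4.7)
The plan is to implement the pseudocode of \cref{alg:oneexecution} so that each of its main subroutines---sampling, computing $F(X_u)$ for all $u$, maintaining $F(Y_v(i))$ for all $v$ and $i$, and assembling the output $A$---stays within the stated budget.

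Sampling $p$, $B$, and $F$ takes $\Oh(p)=\Oh(\hat p)$ expected time, which fits the budget since $\hat p\ge k$ and $z,z'\ge 1$. Computing $F(X_u)$ for each $u\in\fragmentco{0}{z}$ reduces to enumerating the positions $j\in\fragmentco{0}{m}$ with $(j+u)\bmod p\in B$ in increasing order and feeding them into a Karp--Rabin accumulator. I maintain a sorted view of $(B-u)\bmod p$ by keeping $B$ presorted and using a cyclic pointer that tracks how many elements of $B$ are below $u$; the pointer advances monotonically, costing $\Oh(|B|)$ total as $u$ ranges over $\fragmentco{0}{z}$. With this view, each $p$-sized stride of $P$ is enumerated in $\Oh(|B|)$ time, so the total work for all $X_u$ is $\Oh(\sum_u |X_u|)=\Oh(zm\beta)=\Oh(zm/k\cdot\log n)$, using $\beta=1-n^{-4/k}\le 4\ln n/k$.

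Next, for each $v\in\fragmentco{0}{z'}$ in sequence, I maintain a sliding-window Karp--Rabin fingerprint $F(Y_v(i))$ as $i$ grows from $0$ to $\Delta-1$. The initial $F(Y_v(0))$ is computed in $\Oh(m\beta)$ time by the same method used for $F(X_u)$, after storing $B$ as a bit vector over $\fragmentco{0}{p}$ for $\Oh(1)$ membership queries. When $i$ advances by $1$, the window $\fragmentco{i}{i+m}$ gains $T\position{i+m-1}$ and drops $T\position{i-1}$; the character appears in $Y_v(\cdot)$ iff its index lies in $S_v=\{h\in\fragmentco{0}{n}:(h-vz)\bmod p\in B\}$, which is checked by one bit-vector lookup. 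Each entering or leaving event updates $F(Y_v)$ in $\Oh(1)$ using $F(c\cdot Y)=c+x\cdot F(Y)$ and $F(Y\cdot c')=F(Y)+c'\cdot x^{|Y|}$, with $x^{-1}\bmod M$ precomputed once. The number of events per $v$ is $\Oh(|S_v\cap\fragmentco{0}{n}|)=\Oh(n\beta)$, so the total sliding-window cost is $\Oh(z'(m+n)\beta)=\Oh(z'n/k\cdot\log n)$.

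Finally, I emit $A$ by exploiting that $F(Y_v(\cdot))$ is a step function in $i$ with $\Oh(n\beta)$ maximal constant intervals per $v$. I preinsert the values $F(X_u)$ into a hash table $H$ mapping each fingerprint to the list of $u\in\fragmentco{0}{z}$ achieving it; this costs $\Oh(z)$. Then, for each $v$ and each maximal interval $J\subseteq\fragmentco{0}{\Delta}$ on which $F(Y_v(\cdot))$ equals some $\phi$, I look up $H[\phi]$; each $u\in H[\phi]$ contributes the arithmetic progression $\{u+vz+cp:c\in\mathbb{Z}\}\cap J$ to $A$, since the positions with $(u_i,v_i)=(u,v)$ are exactly this AP. Combining these AP descriptors with a prefix-sum index over their lengths and a rank/select structure over segment boundaries yields $A$ in an implicit representation that supports $\Oh(1)$-time access to its $a$-th smallest element. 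The main obstacle is bounding the total AP-emission work within the budget, because in principle $\sum_{v,J}|H[\phi_J]|$ could exceed the target. Using \cref{lem:fingerprints}, the probability that two unequal subsequences share a fingerprint is $\Oh(n^{-9})$, so the expected number of ``spurious'' AP descriptors contributed by collisions is $\Oh(zz'n\beta\cdot n^{-9})=o(1)$; the remaining ``genuine'' AP descriptors, corresponding to equal subsequences $X_u=Y_v(i)$, are emitted as compact AP objects rather than element by element, so their count is again bounded by the number of events $\Oh(z'n\beta)$. This keeps both the implicit representation and the overall expected running time within $\Oh((zm+z'n)/k\cdot\log n)$.
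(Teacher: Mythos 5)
The main gap is in your final assembly step. For each $v$ and each maximal constant interval $J$ of $i\mapsto F(Y_v(i))$ you iterate over \emph{every} $u\in H[\phi_J]$, and you claim that the number of ``genuine'' AP descriptors so produced is bounded by the number of events $\Oh(z'n\beta)$. That quantity bounds the number of intervals $J$, not the number of $(J,u)$ pairs: a single interval can genuinely match up to $z$ values of $u$. Concretely, take $P=c^m$, $T=c^n$ with $n=m+\sqrt{m}-1$, $k=\hat p=m^{0.9}$, $z=z'\approx m^{1/4}$. Then $\beta=\Theta(\log n/k)$, so with probability $1-o(1)$ the set $B$ misses all $\Oh(z+z'\Delta)$ boundary residues; in that case every $X_u$ equals every $Y_v(i)$, each $v$ has a single interval $J=\fragmentco{0}{\Delta}$, and your loop emits $zz'=\Theta(\Delta)=\Theta(m^{1/2})$ descriptors, while the budget $\frac{zm+z'n}{k}\log n$ is only $\Theta(m^{0.35}\log n)$. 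This is exactly where the paper exploits that $U=\fragmentco{0}{z}$ is an interval: between consecutive critical positions of $i\mapsto (u_i,v_i,F(Y_{v_i}(i)))$, the map $i\mapsto u_i$ is an order-preserving bijection onto at most two intervals of $u$-values, so the matching $i$'s correspond to a \emph{contiguous} range of the presorted list $\{u:F(X_u)=f\}$, located by $\Oh(1)$ predecessor/successor queries and stored as a pointer to a sublist rather than enumerated $u$ by $u$. A hash table keyed by fingerprints does not provide this batching, and without it the stated time bound fails.

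Two further steps would also fail as described. First, your sliding-window phase advances $i$ one unit at a time and performs a bit-vector lookup at every step; this alone costs $\Omega(z'\Delta)$, which is not dominated by $\frac{z'n}{k}\log n$ once $\Delta k\gg n\log n$ (e.g.\ $n=2m$, $k=\Theta(m)$). Your accounting charges only the update events, but the per-increment membership checks dominate; you must instead jump between consecutive elements of the sorted set $I_v=\{i':(i'-vz)\bmod p\in B\}$, as the paper does, so the cost per $v$ is $\Oh(|I_v|)$. Second, sampling $B$ by $p$ independent coin flips costs $\Oh(\hat p)$, and the justification ``$\hat p\ge k$ and $z,z'\ge1$'' does not place this within the budget, since the lemma allows $\hat p$ to be polynomially larger than $\frac{zm+z'n}{k}\log n$; $B$ must be generated in $\Oh(\mathbf{E}[|B|])=\Oh(\beta\hat p)$ expected time via geometric gaps. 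The last two issues are routine fixes, but the assembly step needs the contiguous-sublist idea, not merely a fingerprint-collision bound.
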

\begin{proof}
We follow the same implementation as in \cite{ChanGKKP20}. 
Note that Lines~\ref{line:samplep}--\ref{line:samplef} are not the bottleneck:  the random prime $p$ can be sampled in $\poly\log(\hat p)$ expected time, the random subset $B\subseteq \ZZ_p$ can be sampled in $\Oh(\mathbf{E}_B[|B|]) = \Oh(\frac{\hat p\log n}{k})=\Oh(\frac{n}{k}\log n)$ expected time, and the fingerprint function $F$ can be sampled in $\Oh(\log n)$ time. 

We first analyze the complexity of Line~\ref{line:defnXu}. By \cref{lem:fingerprints}, each fingerprint $F(X_u)$ can be evaluated in $\Oh(|X_u|+1)$ time, where $|X_u|$ is the number of positions $j\in \fragmentco{0}{m}$ satisfying $(j+u)\bmod p\in B$.
Since $(j+u)\bmod p\in B$ happens with probability $\beta$ for each $j\in \fragmentco{0}{m}$,  by linearity of expectation, $\mathbf{E}_B[|X_u|]= m\beta =\Oh(\frac{m\log n}{k})$. 
Thus, summing over all $u\in \fragmentco{0}{z}$, the total time of Line~\ref{line:defnXu} is $ \Oh( z\cdot (\frac{m\log n}{k}+1)) = \Oh( \frac{zm\log n}{k})$ in expectation.

Now we implement Line~\ref{line:defnYv}. For each $v \in \fragmentco{0}{z'}$, let $I_v\coloneqq \{i'\in \fragmentco{0}{n}: (i'-vz)\bmod p\in B\}$, which has expected size $\mathbf{E}_B[|I_v|] = \beta n = \Oh(\frac{n\log n}{k})$ because $\beta = 1-n^{-4/k} \le 1-\exp(-\tfrac{4\ln n}{k}) \le \tfrac{4\ln n}{k}$.
Then, $Y_v(i) = \bigodot_{i'\in I_v \cap \fragmentco{i}{i+m}}T[i']$.
Note that the string $Y_v(i)$ differs from $Y_v(i+1)$  only if either $i$ or $i+m$ is in $I_v$.
We maintain a sliding window corresponding to $\fragmentco{i}{i+m} \cap I_v$ (namely, the positions of the characters forming $Y_v(i)$) as $i$ increases, and keep track of the events when elements of $I_v$ enter or leave this window. At the same time, by \cref{lem:fingerprints}, we can maintain the fingerprint $F(Y_v(i))$ upon appending a character or deleting the leftmost character.
The entire process can be done by scanning over the elements of $I_v$ in sorted order in $O(|I_v|)$ time, and it results in the fingerprints $F(Y_v(i))$ computed for all $i\in \fragmentco{0}{\Delta}$, represented as a list of $O(|I_v|)$ tuples $(\fragmentoc{a}{b}; f)$ indicating that $F(Y_v(i))=f$ for all $i\in \fragmentoc{a}{b}$. 
Summing over all $v\in \fragmentco{0}{z'}$, the total time of Line~\ref{line:defnYv} is thus $\Oh(\frac{z'n\log n}{k})$ in expectation.

Now, we  describe how to compute all positions $i$ in the answer set $A$ (Line~\ref{line:defnA}).
We first identify all critical positions $i\in \fragmentco{0}{\Delta-1}$ for which $Y_{v_i}(i)$ may differ from $Y_{v_{i+1}}(i+1)$.
From the definition in Line~\ref{line:defnuivi} we see that $v_i\neq v_{i+1}$ happens at most $\Oh(\frac{\Delta}{p}+\frac{\Delta}{z}+1) = \Oh(\frac{\Delta}{z})$ times.
If $v_i= v_{i+1}=v$, then $Y_{v_i}(i)$ differs from $Y_{v_{i+1}}(i+1)$ only if either $(i-vz)\bmod p$ or $(i+m-vz)\bmod p$ is in $B$, which happens with probability at most $2\beta$. 
Hence, the number of critical positions is $\Oh(\frac{\Delta}{z} + \beta n) = \Oh(\frac{\Delta}{z} +\frac{n\log n}{k})$ in expectation.
(These critical positions $i$ and the fingerprints $F(Y_{v_i}(i))$ at these positions can be determined in linear time, based on the information computed in the previous paragraph.)
Between each pair of consecutive critical positions $a$ and $b$,  the fingerprint $F(Y_{v_i}(i))$ remains the same value $f$, and it suffices to find all $i\in \fragmentoc{a}{b}$ such that $F(X_{u_i})=f$, and include them in the answer~$A$. 
To do this, note that $i\mapsto u_i$ maps $\fragmentoc{a}{b}$ into at most two intervals, so we can find all such positions $i$ by performing predecessor/successor queries on the  sorted list $\{u\in \fragmentco{0}{z}: F(X_{u})= f\}$ (precomputed and preprocessed into a data structure), and the results can be succinctly represented as (concatenation of) contiguous sublists of these precomputed lists.
Finally, the answer $A$, as a sorted list, is the concatenation of all these sublists of positions. We can easily compute $|A|$ by summing up the sizes of these sublists, and can support random access into $A$ by performing predecessor/successor queries on the list of endpoints of these sublists. 
(Here, all predecessor/successor queries can be implemented in constant time using integer data structures in word RAM, such as \cite{PatrascuT14}.)
Therefore, the total time for Line~\ref{line:defnA} is linear in the number of critical positions, which is $\Oh(\frac{\Delta}{z} +\frac{n\log n}{k})$ in expectation. 
Due to $z'\ge 1$, the $\frac{n\log n}{k}$ term is dominated by $\frac{z'n}{k}\log n$.
The same is true for the $\frac{\Delta}{z}$ term because \[\frac{z'n}{k} \ge \frac{\min(2\hat p, \Delta)n}{k}\ge \frac{\min(k, \Delta)\max(k,\Delta)}{kz}=\frac{k\Delta}{kz}=\frac{\Delta}{z}.\]
The overall expected time complexity is thus $\Oh(\frac{zm+z'n}{k}\log n)$.
\end{proof}

Now we are ready to prove the main result:
\begin{proof}[Proof of \cref{thm:main-approx}]
Recall that our goal is to  distinguish between Hamming distance larger than $k$ versus at most $k'$.
In the following, we assume that $k' \le \frac{k}{5}$.
Otherwise, we can solve the problem by brute force in $\Oh(nm)\subseteq \Oh(n^{10 k'/k})$ time.

We set the parameter $\hat p$ in \cref{alg:oneexecution} to be $\hat p = \Theta(n^{4k'/k}\cdot k\log^9 m)$ with a sufficiently large constant factor.
We run $r \coloneqq \lceil 216\cdot n^{4k'/k}\cdot \ln n \rceil$ executions of \cref{alg:oneexecution} (implemented in \cref{lem:algo-find-fingerprint-matches}) independently, and let $\hat A_\ell\subseteq \fragmentco{0}{\Delta}$ denote the set returned by the $\ell$-th execution of \cref{alg:oneexecution}.
Let $\mathcal{T}_{\ell}$ denote the event that execution $\ell$  exceeds $20\cdot n^{4k'/k}$ times the expected time complexity upper bound of \cref{lem:algo-find-fingerprint-matches}, with probability $\Pr[\mathcal{T}_\ell]\le \frac{1}{20}\cdot n^{-4k'/k}$ by Markov's inequality.  
In case of $\mathcal{T}_{\ell}$, we abort execution $\ell$ and let $A_\ell =\fragmentco{0}{\Delta}$; otherwise, execution $\ell$ finishes, and we let $A_\ell = \hat A_\ell$. 
For each execution $\ell \in \fragmentco{0}{r}$ and piece $s\in [0\dd \lfloor n/m\rfloor)$, let $A_{\ell,s} \coloneqq A_\ell \cap I_s$.
We can compute $|A_{\ell,s}|$ for all $\ell \in \fragmentco{0}{r}$ and all pieces $s$.

Let $\alpha \coloneqq \frac{1}{6}\cdot n^{-4k'/k}$. We show that sets $A_{\ell,s}$ have the following two properties:
\begin{claim}
   \label{prop:nocase}
  For each piece $s\in [0\dd \lfloor n/m\rfloor)$, we have
        \[\Pr[| \{ \ell\in \fragmentco{0}{r}: A_{\ell,s} \not\subseteq \Occ_{k}(P,T)\}|\le \alpha r]\ge 1-n^{-6}.\]
\end{claim}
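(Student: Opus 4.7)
The plan is to show, for each piece $s$, that a single execution $\ell$ violates $A_{\ell,s}\subseteq\Occ_k(P,T)$ with probability at most $\alpha/2$, and then amplify via a Chernoff bound over the $r$ independent executions.

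Fix $s$ and $\ell$, and set $\mu \coloneqq \Pr[A_{\ell,s}\not\subseteq\Occ_k(P,T)]$. I decompose the failure event according to whether the execution times out ($\mathcal T_\ell$) and whether the sampled prime is good for the piece ($\mathcal E_s$):
\[
\mu\;\le\;\Pr[\mathcal T_\ell]\;+\;\Pr[\neg\mathcal E_s]\;+\;\Pr\!\bigl[\hat A_{\ell,s}\not\subseteq\Occ_k(P,T)\,\bigm|\,\mathcal E_s\bigr].
\]
The first term is at most $\tfrac{1}{20}n^{-4k'/k}$ by the choice of the timeout threshold. The second term is $O(k\log^9 m/\hat p)$ by \cref{prop:eventgoodprime}, and is thus bounded by $\tfrac{1}{60}n^{-4k'/k}$ once the hidden constant in $\hat p=\Theta(n^{4k'/k}\,k\log^9 m)$ is taken large enough. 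For the third term, I apply \cref{prop:fingerprint-match-probability}(\ref{item:fingerprint-match-item1}) to each $i\in I_s\setminus\Occ_k(P,T)$: conditional on $\mathcal E_s$, each such $i$ lies in $\hat A_\ell$ with probability at most $n^{-1.9}$, so a union bound over $|I_s|\le m\le n$ positions gives $n^{-0.9}$. The standing assumption $k'\le k/5$ yields $4k'/k\le 0.8$, hence $n^{-0.9}\le n^{-0.1}\cdot n^{-4k'/k}\le \tfrac{1}{60}n^{-4k'/k}$ whenever $n$ exceeds a fixed universal constant (the finitely many smaller values of $n$ can be dispatched by brute force within the same time budget). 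Summing the three contributions gives $\mu\le\bigl(\tfrac{1}{20}+\tfrac{1}{60}+\tfrac{1}{60}\bigr)n^{-4k'/k}=\tfrac{1}{12}n^{-4k'/k}=\alpha/2$.

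Since different executions of \cref{alg:oneexecution} draw $p$, $B$, and $F$ independently, the indicators $\mathbf 1[A_{\ell,s}\not\subseteq\Occ_k(P,T)]$ are mutually independent across $\ell\in\fragmentco{0}{r}$. Their sum $X$ has expectation $\Exp[X]\le r\mu\le\alpha r/2=18\ln n$, and since $\alpha r = 36\ln n\ge 2\,\Exp[X]$, a multiplicative Chernoff bound in the regime $\delta\ge 1$ gives
\[
\Pr[X>\alpha r]\;\le\;\exp\!\bigl(-\tfrac{1}{6}\alpha r\bigr)\;=\;\exp(-6\ln n)\;=\;n^{-6},
\]
which is exactly the claimed inequality.

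The main obstacle is absorbing the third contribution into a bound of the form $c\cdot n^{-4k'/k}$: the per-candidate fingerprint bound $n^{-1.9}$ from \cref{prop:fingerprint-match-probability} degrades by a factor of $|I_s|\le n$ under the union bound, and the only reason the argument closes is the polynomial gap between $0.9$ and $4k'/k\le 0.8$. This is precisely where the assumption $k'\le k/5$ is invoked; everything else is a straightforward application of Markov and Chernoff to independent samples.
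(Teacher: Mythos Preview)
Your proof is correct and follows essentially the same approach as the paper: decompose the single-execution failure into the timeout event $\mathcal T_\ell$, the bad-prime event $\neg\mathcal E_s$, and the fingerprint-collision event conditioned on $\mathcal E_s$, bound each by a constant times $n^{-4k'/k}$ (using $k'\le k/5$ for the third), and then apply a Chernoff bound over the $r$ independent executions. The only cosmetic differences are the specific constants you chose and that $\alpha r/2\ge 18\ln n$ rather than exactly equal (due to the ceiling in $r$), neither of which affects the argument.
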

\begin{claimproof}
 Consider each execution $\ell \in \fragmentco{0}{r}$. Let $\mathcal{E}_{\ell,s}$ denote the good event defined in \cref{prop:eventgoodprime} in execution $\ell$ for piece $s$, where $\Pr[\mathcal{E}_{\ell,s}]\ge 1 - \Oh(\frac{k\log^9 m}{\hat p})\ge 1-\frac{1}{100}\cdot n^{-4k'/k}$, provided that the hidden constant in the definition of $\hat p$ is sufficiently large.
 For every $i \in I_{s} \setminus \Occ_{k}(P,T)$, the first statement of \cref{prop:fingerprint-match-probability} implies $\Pr[i\in \hat A_{\ell}\mid \mathcal{E}_{\ell,s}] \le n^{-1.9}$.
 Then, by a union bound over all $i\in I_s\setminus \Occ_{k}(P,T)$, we have \[\Pr\big[\hat A_{\ell} \cap I_s \not\subseteq \Occ_{k}(P,T) \mid \mathcal{E}_{\ell,s} \big] \le n^{-1.9} \cdot |I_s\setminus \Occ_{k}(P,T)| \le  n^{-0.9} \le \tfrac1{100}\cdot n^{-0.8} \le  \tfrac{1}{100}\cdot n^{-4k'/k},\]
 where the two last inequalities require sufficiently large $n$ and $k' \le \frac{k}{5}$.
Hence,
 \[ \Pr\big [\hat A_{\ell} \cap I_s \not\subseteq \Occ_{k}(P,T)  \big ] \le \Pr[\neg \mathcal{E}_{\ell,s}] + \tfrac{1}{100}\cdot n^{-4k'/k} \le \tfrac{2}{100}\cdot n^{-4k'/k}.\]
 Since $A_\ell = \hat A_{\ell}$ when $\neg \mathcal{T}_\ell$, by another union bound,
 \[ \Pr\big [A_{\ell} \cap I_s \not\subseteq \Occ_{k}(P,T)  \big ] \le \Pr[\mathcal{T}_{\ell}] +\tfrac{2}{100}\cdot n^{-4k'/k} \le \tfrac{7}{100}\cdot n^{-4k'/k} < \tfrac{1}{12}\cdot n^{-4k'/k}.\]
 Hence, \[\Pr[A_{\ell,s}\not\subseteq \Occ_k(P,T)] = \Pr\big [A_{\ell} \cap I_s \not\subseteq \Occ_{k}(P,T) \big ]\le \tfrac{1}{12}\cdot n^{-4k'/k}\] for every $\ell \in \fragmentco{0}{r}$.
  
  Since all $r$ executions are independent, by Chernoff bound, we have
 \[ \Pr\big [ | \{ \ell\in \fragmentco{0}{r}: A_{\ell,s} \not\subseteq \Occ_{k}(P,T)\}|\ge 2\cdot \tfrac{1}{12}\cdot n^{-4k'/k} \cdot r \big] \le \exp(- \tfrac13 \cdot \tfrac{1}{12}\cdot n^{-4k'/k} \cdot r) \le n^{-6},  \]
where we used $r \ge 216\cdot n^{4k'/k} \cdot \ln n$.
The claim follows since $2\cdot \frac1{12}\cdot n^{-4k'/k}\cdot r=  \alpha r $.
\end{claimproof}

\begin{claim}
   \label{prop:yescase}
   Let $i\in \Occ_{k'}(P,T)$ and $s= \lfloor i/m\rfloor $ (so that $i\in I_s$).
Then,
\[\Pr[|\{\ell \in \fragmentco{0}{r} : i \in A_{\ell,s} \}| \ge 3\alpha r] \ge 1-n^{-27}.\]
\end{claim}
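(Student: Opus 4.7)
}
The plan is to lower-bound, for each individual execution $\ell\in\fragmentco{0}{r}$, the probability that $i\in A_{\ell,s}$, and then apply a Chernoff bound across the $r$ independent executions. Concretely, I would first show that $\Pr[i\in A_{\ell,s}]\ge n^{-4k'/k}=6\alpha$ for every $\ell$. For this, observe that $i\in I_s$, so $i\in A_\ell$ and $i\in A_{\ell,s}$ are equivalent events. Now split on whether execution $\ell$ is aborted: if $\mathcal{T}_\ell$ occurs, then $A_\ell=\fragmentco{0}{\Delta}\ni i$ trivially; if $\neg\mathcal{T}_\ell$, then $A_\ell=\hat A_\ell$, so $\{i\in \hat A_\ell\}\subseteq\{i\in A_\ell\}$. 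Thus, regardless of the correlation between $\hat A_\ell$ and $\mathcal{T}_\ell$, we have $\{i\in A_\ell\}\supseteq\{i\in\hat A_\ell\}$ as events, giving $\Pr[i\in A_{\ell,s}]\ge \Pr[i\in\hat A_\ell]\ge n^{-4k'/k}$, where the last inequality is the second statement of \cref{prop:fingerprint-match-probability} applied with $i\in\Occ_{k'}(P,T)$.

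Next, since the $r$ executions of \cref{alg:oneexecution} are independent, the indicators $\mathbf{1}[i\in A_{\ell,s}]$ for $\ell\in\fragmentco{0}{r}$ are independent Bernoulli random variables, each with success probability at least $6\alpha$. Let $X\coloneqq |\{\ell\in\fragmentco{0}{r}:i\in A_{\ell,s}\}|$, so $\mathbf{E}[X]\ge 6\alpha r$. A standard multiplicative Chernoff lower-tail bound with deviation parameter $\delta=\tfrac12$ yields
\[
\Pr[X<3\alpha r]\;\le\;\Pr\!\left[X\le \tfrac12\cdot 6\alpha r\right]\;\le\;\exp\!\left(-\tfrac{1}{8}\cdot 6\alpha r\right)\;=\;\exp(-\tfrac{3}{4}\alpha r).
\]
Plugging in $\alpha=\tfrac{1}{6}n^{-4k'/k}$ and $r\ge 216\,n^{4k'/k}\ln n$ yields $\tfrac34\alpha r\ge \tfrac34\cdot 36\ln n=27\ln n$, so the bound is at most $n^{-27}$, as required.

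The argument is essentially routine once the right decomposition is in place; the only subtle point is the potential dependence between the success event $\{i\in\hat A_\ell\}$ and the abort event $\mathcal{T}_\ell$ within a single execution. I would expect this to be the main obstacle if one tried to handle things additively, but it dissolves by the monotone containment $\{i\in\hat A_\ell\}\subseteq\{i\in A_\ell\}$ noted above, so no independence between $\hat A_\ell$ and $\mathcal{T}_\ell$ is needed. Independence is only invoked across the $r$ executions, where it holds by construction.
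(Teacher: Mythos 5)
Your proposal is correct and follows essentially the same route as the paper's proof: lower-bound $\Pr[i\in A_{\ell,s}]\ge n^{-4k'/k}$ per execution via the second statement of \cref{prop:fingerprint-match-probability} together with the observation that aborting only enlarges $A_\ell$, then apply a multiplicative Chernoff lower-tail bound across the $r$ independent executions. Your arithmetic in the exponent is correct (and in fact slightly cleaner than the paper's, which contains a harmless typo in the Chernoff exponent).
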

\begin{claimproof}
Consider each execution $\ell \in \fragmentco{0}{r}$. 
By the second statement of \cref{prop:fingerprint-match-probability}, 
$\Pr[i\in \hat A_\ell] \ge n^{-4k'/k}$.
Since $A_\ell= \hat A_{\ell}$ when $\neg \mathcal{T}_{\ell}$, and $A_{\ell} = \fragmentco{0}{\Delta}\ni i$ when $\mathcal{T}_{\ell}$, we have $\Pr[i\in  A_{\ell,s}]= \Pr[i\in  A_\ell]\ge \Pr[i\in \hat A_\ell]\ge n^{-4k'/k}$.
 Since all $r$ executions are independent, by Chernoff bound, we have
 \[ \Pr\big [ | \{ \ell\in \fragmentco{0}{r}: i\in A_{\ell,s}\}|\le \tfrac{1}{2}\cdot n^{-4k'/k} \cdot r \big] \le \exp(- \tfrac{1}{8}\cdot n^{-4k'/k} \cdot r/8) \le n^{-27},  \]  
where we used $r \ge  216\cdot n^{4k'/k} \cdot \ln n$.
The claim follows since $\frac12 \cdot n^{-4k'/k} \cdot r= 3\alpha r$.
\end{claimproof}

Now we describe the last stage of our algorithm.
For each piece $s$, we sort all executions $\ell\in \fragmentco{0}{r}$ in increasing order of $|A_{\ell,s}|$, and then let $L_{s}\subset \fragmentco{0}{r}$ consist of the first $|L_{s}| \coloneqq\lceil (1-\alpha)r \rceil$ executions in this order.
  If $|A_{\ell,s}|=0$ for all $\ell \in L_s$, then we declare that piece $s$ contains no solution. If none of the pieces has solutions found, we return \No. Otherwise, we return \Yes. If we are solving the reporting version, then in the \Yes case, we report all the solutions in
  \[ A_{out} \coloneqq \bigcup_{s} \big\{ i \in I_s : |\{\ell\in L_s: i\in A_{\ell,s}\}|\ge 2\alpha r \big \},\]
  which can be easily computed after explicitly listing all elements of sets $A_{\ell,s}$ for all pieces $s$ and executions $\ell \in L_s$. This finishes the description of our algorithm.

\paragraph*{Correctness}
 Now we show the correctness of our algorithm. 
 We assume the properties in \cref{prop:nocase} and \cref{prop:yescase} indeed hold, which happens with probability at least $1-(n/m)\cdot n^{-6} - n\cdot n^{-27} \ge 1-n^{-4}$ by union bound. 
 
 Consider any piece $s$.
By \cref{prop:nocase}, for at least $(1-\alpha) r$ many executions $\ell$, it holds that $A_{\ell,s} \subseteq \Occ_{k}(P,T)$ and thus $|A_{\ell,s}| \le |\Occ_k(P,T)\cap I_s|$. 
Since we defined $L_s\subset \fragmentco{0}{r}$ to contain the $\lceil (1-\alpha) r\rceil$ executions with the smallest $|A_{\ell,s}|$, we have
\begin{equation}
\label{eqn:sizebound}
|A_{\ell,s}|\le |\Occ_{k}(P,T) \cap I_s| \text{ for all } \ell \in L_{s}.
\end{equation}
In particular, in the case of $\Occ_k(P,T)=\emptyset$, we have $|A_{\ell,s}|=0$ for all pieces $s$ and all $\ell \in L_s$, so our algorithm indeed returns \No as required.

Now we argue the correctness in the case where $\Occ_{k'}(P,T)\neq \emptyset$. Let $i \in \Occ_{k'}(P,T)$, and suppose $i\in  I_s$. By \cref{prop:yescase},  at least $3\alpha r - (r-|L_s|) \ge 2\alpha r$ many $\ell \in L_s$ satisfy $i\in A_{\ell,s}$.
Hence, our algorithm does not return \No. Moreover, this implies $i\in A_{out}$. Hence, the reporting version of our algorithm satisfies $\Occ_{k'}(P,T) \subseteq A_{out}$ as required.

It remains to prove that $A_{out} \subseteq \Occ_{k}(P,T)$. Take any $i_{bad}\in \fragmentco{0}{\Delta}\setminus \Occ_{k}(P,T)$, and suppose $i_{bad}\in I_s$. \cref{prop:nocase} states that $|\{\ell \in \fragmentco{0}{r}: A_{\ell,s}\setminus \Occ_k(P,T)\neq \emptyset \}| \le \alpha r$, so  $|\{\ell \in L_s: i_{bad}\in A_{\ell,s}\}|\le \alpha r< 2\alpha r$, which means $i_{bad}\notin A_{out}$. This finishes the proof of correctness of our algorithm with high probability.

  \paragraph*{Time complexity}
   We now analyze the time complexity of the algorithm.
   We run $r=\Oh(n^{4k'/k}\log n)$ executions of \cref{alg:oneexecution}, and each execution only runs up to $n^{4k'/k}$ times its expected time complexity upper bound in \cref{lem:algo-find-fingerprint-matches} before being aborted. 
   Recall $\hat p = \Theta(n^{4k'/k} \cdot k\log^9 m)$.
  By \cref{lem:algo-find-fingerprint-matches}, the total time complexity is
  \begin{align*}
   r\cdot n^{4k'/k}\cdot  \Oh\left (\frac{(zm+z'n)\log n}{ k} \right )   = n^{8k'/k} \cdot \log^2 n\cdot  \Oh\left ( \frac{zm+z'n}{ k}\right ) .
  \end{align*}
  Recall that $z$ can be chosen from $[1\dd \min(\hat p,\Delta)]$ and that $z'=\lceil \min(2\hat p,\Delta)/z\rceil$. If $\hat p \le 2n/m$ holds, then we choose $z = \hat p$ and, due to $z'\le 2$, get time complexity
  \[n^{8k'/k} \cdot \log^2 n\cdot  \Oh\left ( \frac{\hat pm + n}{ k}\right ) = \Oh\left (n^{8k'/k} \cdot \log^2 n\cdot  \frac{n }{k} \right ). \]
  If $\hat p> 2n/m$, then we choose $z = \lceil \sqrt{\min(2\hat p,\Delta)n/m}\rceil$. 
  In this case, $n/m \le \hat p/2$, and thus $z \le \hat p$.
  Moreover, we always have $n/m \le \Delta$ and therefore $z \le \Delta \le n$.
  Consequently, the requirement $z\in [1\dd \min(\hat p,\Delta)]$ is satisfied.
  Moreover, $z \le 2\sqrt{\min(2\hat p,\Delta)n/m}$ and, due to $z \le \min(2\hat p,\Delta)$, similarly $z' =  \lceil\min(2\hat p,\Delta)/z\rceil  \le  \lceil\sqrt{\min(2\hat p,\Delta)n/m}\rceil \le 2\sqrt{\min(2\hat p,\Delta)n/m}$.
  Consequently, we get the time complexity of 
  \[ \Oh\left (n^{8k'/k} \cdot \log^2 n\cdot  \frac{\sqrt{\min(\hat p,\Delta)n m} }{k} \right ) = \Oh\left(n^{10k'/k} \cdot \log^{2} n\cdot \min\left(\sqrt{\frac{nm\log^9 m}{k}},\frac{n\sqrt{\Delta}}{k}\right)\right).\]
  Hence, the overall time complexity for the decision version is both in $\Oh\left (\sqrt{\frac{nm}{k}}\log^{6.5}n + \frac{n}{k}\log^2 n\right )\cdot n^{\Oh(k'/k)}$ and in $\Oh\left (\frac{n\sqrt{\Delta}}{k}\log^2 n\right )\cdot n^{\Oh(k'/k)}$  as claimed.
  
  Now we analyze the extra time for solving the reporting version. This requires explicitly listing all elements of sets $A_{\ell,s}$ for all pieces $s$ and executions $\ell \in L_s$, and then computing $A_{out}$ by definition in linear time in the total number of listed elements. Hence, by \cref{eqn:sizebound}, the total extra  time is (up to a constant factor) $\sum_{s}\sum_{\ell \in L_s}|A_{\ell,s}|\le \sum_{s}\sum_{\ell \in L_s}|\Occ_{k}(P,T) \cap I_s|  \le r \cdot  \sum_{s}|\Occ_{k}(P,T) \cap I_s| = r\cdot |\Occ_k(P,T)| = \Oh(n^{4k'/k}\cdot \log n)\cdot |\Occ_k(P,T)|$.
\end{proof}
\section{Adaptive Algorithm}
\label{sec:adaptive}
In this section, we present our adaptive property tester, proving \cref{thm:adaptive-tester}.
Let pattern $P\in \Sigma^m$, text $T\in \Sigma^n$, and threshold $k\in [1\dd m-1]$. We are required to distinguish between the \Yes case where $\Occ(P,T)\neq \emptyset$ and the \No case  where $\Occ_{k}(P,T)=\emptyset$.

Denote $\Delta = n-m+1\ge 1$. 
    If $\Delta > 0.1n$, then we can simply apply our non-adaptive algorithm from \cref{thm:main}  to solve the problem in time $\Oh\Big(\Big(\sqrt{\tfrac{nm\log^9 n}{k}} + \tfrac{n}{k}\Big)\log^2 n\Big)$, which is within the time bound  $\Oh\left(\left(\sqrt{\tfrac{m\Delta\log^{29}n}{k}} + \tfrac{n}{k}\right)\log^{14} n\right)$ claimed in \cref{thm:adaptive-tester}. Therefore, in the following we assume $\Delta \le 0.1n$. In particular, $\Delta \le 0.2m$, and $n<1.2m$.

Throughout this section, let $c\ge 1$ denote the absolute constant hidden by the $\Oh(\cdot )$ notation in \cref{thm:mismatchcontainer}, and define
\begin{equation}
    \label{eqn:defneps}
 \eps \coloneqq \left (20c\log^4 m\right )^{-1}.
\end{equation}

We follow the outline in the technical overview (\cref{subsec:overview-adaptive}).
As mentioned in the last paragraph of \cref{subsec:overview-adaptive}, we need to consider \emph{distributions} over candidate occurrences in $[0\dd \Delta)$.
For a distribution $O$ supported on $[0\dd \Delta)$,
(almost) good  positions are positions in $P$ or $T$ which can eliminate candidate occurrences with total probability mass at least $1/\poly \log n$ in~$O$, formally defined as follows:

\begin{definition}[Good and almost good positions]
\label{defn:goodpos}
Let $O$ be a distribution supported on $[0\dd \Delta)$.
We say that a position $i\in [0\dd n)$ in $T$ is \emph{good with respect to $O$} if 
\begin{equation}
\label{eqn:goodposT}
   \Pr_{x\sim O}\big[ i-x\in [0\dd m) \text{ and } P[i-x]\neq T[i] \big] \ge 4\eps.
\end{equation}
Similarly, we say a position $i\in [0\dd m)$ in $P$ is \emph{good with respect to $O$} if
   \begin{equation}
\label{eqn:goodposP}
   \Pr_{x\sim O}\big[P[i]\neq T[i+x] \big] \ge 4\eps.
   \end{equation}

We say the position $i$ in $T$ (or $P$) is \emph{almost good with respect $O$} if the inequality in \eqref{eqn:goodposT} or \eqref{eqn:goodposP} is satisfied with  the right-hand side replaced by $2\eps$.\lipicsEnd
\end{definition}

The following lemma shows that there exist $\Omega(k)$ good positions in the \No case.

\begin{lemma}
\label{lem:numgoodpos}
Suppose $\Occ_k(P,T) = \emptyset$, and let $O$ be a distribution supported on $[0\dd \Delta)$.
Then, $T$ and $P$ in total have at least $k/4$ good positions with respect to $O$.
\end{lemma}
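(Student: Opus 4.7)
}
The plan is to invoke the container theorem (\cref{thm:mismatchcontainer}) as an existential device and then convert the resulting covering guarantee into a probabilistic statement via linearity of expectation, exactly as in the non-adaptive analysis but applied pointwise in $O$ rather than in $p$. Since $\Occ_k(P,T)=\emptyset$, we have $\HD(P,T\fragmentco{x}{x+m})>k$ for every $x\in\fragmentco{0}{\Delta}$, and therefore $\min(k,\HD(P,T\fragmentco{x}{x+m}))=k$.

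Apply \cref{thm:mismatchcontainer} to $P,T,k$ to obtain a set $M\subseteq\fragmentco{0}{n}$ with $|M|\le c\cdot\tfrac{n}{m}\cdot k\log^4 m$ such that, for every $x\in\fragmentco{0}{\Delta}$,
\[
|\MM(P,T\fragmentco{x}{x+m})\,\cap\,(M\cup(M-x))|\ge k.
\]
Define $M^{(P)}\coloneqq M\cap\fragmentco{0}{m}$ (indexing positions in $P$) and $M^{(T)}\coloneqq M$ (indexing positions in $T$). Splitting the left-hand side via the union bound $|A\cup B|\le|A|+|B|$ and recognizing $\{j\in[0\dd m):P[j]\ne T[x+j],\,j+x\in M\}$ as a sum over $i=j+x\in M^{(T)}$ with the validity constraint $i-x\in[0\dd m)$, we get
\[
\sum_{j\in M^{(P)}}\mathbf{1}[P[j]\ne T[x+j]]\;+\;\sum_{i\in M^{(T)}}\mathbf{1}\bigl[i-x\in[0\dd m)\land P[i-x]\ne T[i]\bigr]\;\ge\;k.
\]

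Now take expectation over $x\sim O$ and swap sums to obtain
\[
\sum_{j\in M^{(P)}}\Pr_{x\sim O}[P[j]\ne T[x+j]]\;+\;\sum_{i\in M^{(T)}}\Pr_{x\sim O}\bigl[i-x\in[0\dd m)\land P[i-x]\ne T[i]\bigr]\;\ge\;k.
\]
By \cref{defn:goodpos}, each summand corresponding to a position that is \emph{not} good is strictly less than $4\eps$, while each summand is at most $1$. Hence, letting $G$ denote the total number of good positions across $P$ and $T$,
\[
k\;\le\;G\;+\;4\eps\cdot\bigl(|M^{(P)}|+|M^{(T)}|\bigr)\;\le\;G\;+\;8\eps|M|.
\]

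Finally, plug in the bound on $|M|$ together with $n\le 1.2m$ and the choice $\eps=(20c\log^4 m)^{-1}$ from \eqref{eqn:defneps}: we get $8\eps|M|\le 8\cdot\tfrac{1}{20c\log^4 m}\cdot c\cdot 1.2\cdot k\log^4 m=0.48\,k$, so $G\ge 0.52\,k\ge k/4$, completing the proof. The only subtlety worth flagging is the bookkeeping that $M^{(P)}$ and $M^{(T)}$ refer to positions in different strings (so there is no double-counting in $G$), and that the bound $n/m\le 1.2$ established in the first paragraph of the section is what makes the constants in $\eps$ comfortably sufficient; no further combinatorial work is needed beyond invoking \cref{thm:mismatchcontainer}.
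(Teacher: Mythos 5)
Your proposal is correct and follows essentially the same route as the paper: apply \cref{thm:mismatchcontainer}, take expectation over $x\sim O$, swap the order of summation, and bound the contribution of non-good positions by $4\eps$ times the size of the container set. The only (harmless) difference is that you avoid the paper's case split on which of the two sums is at least $k/2$ by bounding both at once, which yields the slightly better constant $0.52k$ in place of $0.26k$; both suffice for $k/4$.
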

\begin{proof}
Let $M\subseteq \fragmentco{0}{n}$ be the set obtained from applying \cref{thm:mismatchcontainer} to $P$, $T$, and $k$, with size $|M|\le c\cdot \frac{n}{m}k\log^4 m \le 1.2ck\log^4 m= 0.06k/\eps$.
Since $\Occ_k(P,T)=\emptyset$, the property of $M$ implies that $|\MM(P,T[x\dd x+m)) \cap (M \cup (M-x))| \ge k$ holds for all $x\in [0\dd \Delta)$. 
Then, since $O$ is supported on $[0\dd\Delta)$, we get 
\begin{align*}
 k &\le \mathbf{E}_{x\sim O}[|\MM(P,T[x\dd x+m)) \cap (M \cup (M-x))|] \\
 & \le  \mathbf{E}_{x\sim O}[|\MM(P,T[x\dd x+m)) \cap M |]  + \mathbf{E}_{x\sim O}[|\MM(P,T[x\dd x+m)) \cap (M-x) |].
\end{align*}
Then, at least one of the two terms is no less than $\frac{1}{2}k$. We consider the two cases separately:
\begin{itemize}
    \item Case $\frac{1}{2}k \le \mathbf{E}_{x\sim O}[|\MM(P,T[x\dd x+m)) \cap M|]$:

    By linearity of expectation, this means 
    \begin{align*}
    \tfrac{1}{2} k&\le \sum_{i\in M}\mathbf{E}_{x\sim O}[|\MM(P,T[x\dd x+m)) \cap \{i\}|]\\
    & = \sum_{i\in M \cap [m]}\Pr_{x\sim O}\big [P[i]\neq T[i+x]\big],
    \end{align*}
    where each summand is exactly the left-hand side of \cref{eqn:goodposP}.
 Using $|M\cap [m]|\le |M|\le  0.06k/\eps$, we see that the number of positions $i\in M\cap [m]$ satisfying \cref{eqn:goodposP} is at least $\frac{1}{2}k - 4\eps \cdot |M\cap [m]|\ge 0.26k\ge k/4$. Hence, there exist at least $k/4$ good positions in $P$ with respect to $O$.
    
    \item Case $\frac{1}{2}k\le \mathbf{E}_{x\sim O}[|\MM(P,T[x\dd x+m)) \cap (M-x)|]$:
    
     The proof is similar to the first case. We have
    \begin{align*}
    \tfrac{1}{2} k&\le \sum_{i\in M}\mathbf{E}_{x\sim O}[|\MM(P,T[x\dd x+m)) \cap \{i-x\}|]\\
    & = \sum_{i\in M}\Pr_{x\sim O}\big [i-x\in [0\dd m)\text{ and } P[i-x]\neq T[i]\big],
    \end{align*}
    where each summand is exactly the left-hand side of \cref{eqn:goodposT}.
Then, using $|M|\le  0.06k/\eps$, we see that the number of positions $i\in M$ satisfying \cref{eqn:goodposT} is at least $\frac{1}{2}k - 4\eps \cdot |M|\ge 0.26k\ge k/4$. Hence, there exist at least $k/4$ good positions in $T$ with respect to $O$.
\end{itemize}
Therefore, there exist at least $k/4$ good positions with respect to $O$, in either $P$ or $T$, concluding the proof.
\end{proof}

We now define blocks and good blocks:
\begin{definition}[Blocks and good blocks]
Partition $T\in \Sigma^n$ into $\lceil n/\Delta\rceil$ \emph{blocks}, where the $j$-th block $(j\ge 0)$ is the fragment $T\big [j \Delta \dd  \min((j+1)\Delta, n)\big )$.
 Similarly, partition $P\in \Sigma^m$ into  $\lceil m/\Delta\rceil$ blocks, where
 the $j$-th block $(j\ge 0)$ is the fragment $P\big [j \Delta \dd  \min((j+1)\Delta, m)\big )$.
 All blocks have lengths in $[1\dd \Delta]$.

For a distribution $O$ supported on $[0\dd\Delta)$, we say a block is \emph{good with respect to $O$} if it contains at least $\frac{k\Delta}{64n}$ almost good positions with respect to $O$.\lipicsEnd
\end{definition}

In the \No case, we can find a good block by sampling:

\begin{lemma}
\label{lem:findblock}
Suppose $\Occ_k(P,T) = \emptyset$, and let $O$ be a distribution supported on $[0\dd \Delta)$.
There is an algorithm which draws $L = \lceil 5\eps^{-2}\ln n \rceil =  \Oh(\log^{9}n)$ samples from $O$, spends additional time $\Oh\left (\frac{n}{k}\cdot \log^9 n\right )$, and returns the index of a good block (in $P$ or in $T$) with respect to $O$, with success probability at least $0.6$.
\end{lemma}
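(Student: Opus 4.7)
The plan is as follows. First, draw $L$ samples $x_1,\dots,x_L\sim O$ once up front and, for each position $p$ in $P$ or $T$, form the empirical mismatch-frequency estimator $\hat q_p$ (the right-hand side of \cref{eqn:goodposT} or \cref{eqn:goodposP} with $O$ replaced by the empirical distribution over $x_1,\dots,x_L$). Then sample $Q = \lceil 30 n/k\rceil$ positions $p_1,\dots,p_Q$ uniformly and independently from $P\cup T$, and return the block containing the first $p_t$ satisfying $\hat q_{p_t} \ge 3\eps$. Computing each $\hat q_{p_t}$ uses at most $L$ character probes at the offsets dictated by $x_1,\dots,x_L$, so the total additional work is $\Oh(Q L) = \Oh((n/k)\log^9 n)$ and only $L$ samples are drawn from $O$, matching the stated budget.

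For correctness, Hoeffding's inequality together with $L = \lceil 5\eps^{-2}\ln n\rceil$ yields $|\hat q_p - q_p| < \eps$ except with probability $\le 2n^{-10}$ for any fixed $p$; a union bound over all $n+m$ candidate positions places us in an event $\mathcal{E}$ (failing with $o(1)$ probability) on which the set $\mathcal{T} := \{p : \hat q_p \ge 3\eps\}$ is sandwiched as $G \subseteq \mathcal{T} \subseteq \widetilde{G}$, where $G$ and $\widetilde{G}$ denote the sets of good and almost good positions, respectively. By \cref{lem:numgoodpos} this gives $|\mathcal{T}| \ge |G| \ge k/4$.

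Concentration of almost good positions in good blocks comes from a direct counting bound. Since $n<1.2\,m$ and $\Delta\le 0.1\,n$, the total number of blocks is at most $\lceil m/\Delta\rceil+\lceil n/\Delta\rceil \le 2.2\,n/\Delta$, so bad blocks collectively contain fewer than $2.2\,n/\Delta \cdot k\Delta/(64 n) \le 0.035k$ almost good positions; hence on $\mathcal{E}$ at most a $0.035k/(k/4)\le 0.14$ fraction of $\mathcal{T}$ sits in bad blocks. Uniform i.i.d.\ sampling from $P\cup T$ gives $\Pr[\text{no }p_t\in\mathcal{T}] \le (1-|\mathcal{T}|/(n+m))^Q \le \exp(-Qk/(9n)) < 0.05$ for $Q = \lceil 30 n/k\rceil$, and by symmetry, conditional on some $p_t\in\mathcal{T}$, the first such $p_t$ is uniform on $\mathcal{T}$; its block is therefore good with probability at least $0.86$. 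Combining yields overall success probability at least $0.86\cdot 0.95 - o(1) > 0.6$.

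The main obstacle I anticipate is calibrating the threshold so that the noisy Chernoff estimator respects the gap between the $4\eps$ (good) and $2\eps$ (almost good) bands, giving the sandwich $G\subseteq\mathcal{T}\subseteq\widetilde{G}$; this inclusion is what lets us simultaneously lower-bound $|\mathcal{T}|$ through \cref{lem:numgoodpos} and control $|\mathcal{T}\cap\text{(bad blocks)}|$ through the block-counting bound, which together drive the whole argument.
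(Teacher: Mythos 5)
Your proposal is correct and follows essentially the same route as the paper's proof: empirical estimation of the mismatch frequencies from the $L$ samples with a $3\eps$ threshold to obtain a set sandwiched between the good and almost-good positions, a block-counting argument showing most of that set lies in good blocks, and rejection sampling of $\Oh(n/k)$ uniform positions to hit it. The only differences are cosmetic (you count almost-good positions in bad blocks directly rather than via the paper's light/heavy terminology, and your constants are slightly different but equally valid).
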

\begin{proof}
    Let $x_1,\dots, x_L$ be the independent samples drawn from $O$.
We use empirical means computed from these samples to approximate the left-hand sides of \cref{eqn:goodposT,eqn:goodposP}.
Let $\hat O$ denote the uniform distribution over the samples $x_1,\dots,x_{L}$.
For any fixed position $i\in [n]$ in $T$, by Chernoff bound, 
\begin{align}
\Bigg \lvert  &\Pr_{x\sim O}\Big[ i-x\in [0\dd m) \text{ and } P[i-x]\neq T[i] \Big]   - \Pr_{x\sim \hat O}\Big[ i-x\in [0\dd m) \text{ and } P[i-x]\neq T[i] \Big]\Bigg \rvert \le \eps \label{eqn:estimateprob}
\end{align}
holds with probability at least $1-2\exp(-2L \eps^2 ) \ge 1-2n^{-10}$ over the randomness of $x_1, x_2,\dots, x_{L}$.
 Then, by a union bound, the estimate in \cref{eqn:estimateprob} holds for all $i \in [n]$ with probability at least $1-2n^{-9}$.
Assuming this indeed holds, define a set $I_T\subseteq[n]$ of positions in $T$ as
\begin{equation}
I_T\coloneqq \Big \{i\in [n] \; : \; \Pr_{x\sim \hat O}\Big[ i-x\in [0\dd m) \text{ and } P[i-x]\neq T[i] \Big]  \ge 3\eps \Big \}.
\label{eqn:defnit}
\end{equation}
Then,  by \cref{eqn:estimateprob} and \cref{defn:goodpos}, all good positions in $T$ are contained in $I_T$, and conversely, all $i\in I_T$ are almost good positions in $T$.

Similarly, with probability at least $1-2n^{-9}$, for all positions $i\in [m]$ in $P$, the left-hand side of \cref{eqn:goodposP} is also $\eps$-additively approximated by the empirical means in analogy to \cref{eqn:estimateprob}. Assuming this holds, we can define the set $I_P\subseteq [m]$ in analogy to \cref{eqn:defnit}, so that all good positions in $P$ are contained in $I_P$, and all $i\in I_P$ are almost good positions in $P$.
 Given any position $i\in [n]$ (or $i\in [m]$), one can easily decide whether $i\in I_T$ (or whether $i\in I_P$) by definition in $\Oh(L)= \Oh(\eps^{-2}\log n) = \Oh(\log^9 n)$ time.

We say a block in $T$ (or in $P$) is \emph{light} if it contains at most $\frac{k\Delta}{64n}$ positions that are in $I_T$ (or in~$I_P$); otherwise, we say the block is \emph{heavy}. 
Then, a heavy block is a good block.
All light blocks in $T$ and $P$ in total contain at most $\frac{k\Delta}{64n}\cdot \left (\left \lceil \frac{n}{\Delta} \right \rceil + \left \lceil \frac{m}{\Delta}\right \rceil\right ) \le k/16$ positions in $I_T\sqcup I_P$.\footnote{Note that $I_T$ and $I_P$ are subsets of positions in $T$ and $P$ respectively, and are viewed as disjoint sets (even though they are indexed by overlapping sets $[n]$ and $[m]$).} However, by \cref{lem:numgoodpos}, there are at least $k/4$ good positions in total, which are all in $I_T\sqcup I_P$, so at least $\frac{k/4-k/16}{k/4} =\frac{3}{4}$ fraction of the positions in $I_T\sqcup I_P$ are in heavy blocks.
Hence, for a uniformly random position $i$ drawn from $I_T\sqcup I_P$, with at least $3/4$ probability, the block containing $i$ is a heavy block, and thus is a good block as desired.

Therefore, in order to find a good block, it suffices to draw a uniform sample from $I_T\sqcup I_P$, which can be done by a simple rejection sampling as follows: repeatedly draw uniform samples from positions $i\in [n]$ in $T$ and positions $i\in [m]$ in $P$, and return the first one that is found to be in $I_T\sqcup I_P$.
By \cref{lem:numgoodpos}, there are at least $k/4$ good positions, all of which are in $I_T\sqcup I_P$, so each sample is successful with probability at least $\frac{k/4}{n+m} \ge \frac{k}{8n}$. Hence, the number of required samples exceeds $\frac{80n}{k}$ with probability at most $(1-\frac{k}{8n})^{80n/k} \le e^{-10}$. Recall that testing whether a sampled position is in $I_T$ (or in $I_P$) takes $\Oh(\log^9 n)$ time, so the total time complexity for testing up to $\frac{80n}{k}$ samples is $\Oh\left (\frac{n}{k}\cdot \log^9 n\right )$.

Finally, by a union bound, the probability that we successfully return a position, and the block containing it is indeed a good block, is at least $\frac{3}{4}-2n^{-9}-2n^{-9} - e^{-10} \ge 0.6$.
\end{proof}

Given a good block, we use its length-$\Theta(\Delta)$ neighborhood to define a smaller property testing instance $(P_0,T_0)$ with smaller threshold $k_0$, so that solving this instance would allow us to eliminate a large fraction of candidate occurrences in $O$:

\begin{lemma}
\label{lem:eliminatestep}
Let block $B$ be $T[a\dd b)$ or $P[a\dd b)$.
Define $a_0\coloneqq \min\big (m-2\Delta, \max(a-\Delta,0) \big )$ and fragments $P_0=P[a_0\dd a_0+2\Delta)$, $T_0 = T[a_0\dd a_0+3\Delta-1)$. 
Then,  $\Occ(P,T)\subseteq \Occ(P_0,T_0)$.
Moreover, if $B$ is a good block with respect to a distribution  $O$ supported on $[0\dd \Delta)$, then 
    \begin{equation}
    \label{eqn:prhdbound}
 \Pr_{x\sim O} [x \notin \Occ_{k_0}(P_0,T_0) ] \ge \eps,
    \end{equation}
 where 
 \begin{align}
     \label{eqn:defnk0}
 k_0 &\coloneqq \left \lceil \tfrac{k\Delta}{64n}\cdot \eps\right \rceil - 1\ge 0.
 \end{align}
\end{lemma}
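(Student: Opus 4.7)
The plan is to handle the two assertions separately. For the inclusion $\Occ(P,T) \subseteq \Occ(P_0,T_0)$, I first verify that $0 \le a_0 \le m - 2\Delta$ so that $P_0 = P[a_0\dd a_0+2\Delta)$ is well-defined; this relies on the standing assumption $\Delta \le 0.1 n$, which forces $m \ge 2\Delta$. Then, for any $x \in \Occ(P,T)$ with $P = T[x\dd x+m)$, the equality $P_0 = P[a_0\dd a_0+2\Delta) = T[x+a_0\dd x+a_0+2\Delta) = T_0[x\dd x+2\Delta)$ (with $x \le \Delta - 1$ automatic) shows $x \in \Occ(P_0, T_0)$.

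For the probability bound, let $W$ denote the almost-good positions contained in $B$, so $|W| \ge \frac{k\Delta}{64n}$. For each $i \in W$, let $E_i(x)$ be the event ``position $i$ witnesses a mismatch in $\HD(P_0, T_0[x\dd x+2\Delta))$'': explicitly, $P[i] \neq T[i+x]$ if $B \subseteq P$, and $P[i-x] \neq T[i]$ together with the index-range constraint $i-x \in [a_0\dd a_0+2\Delta)$ if $B \subseteq T$. The core step is to show $\Pr_{x \sim O}[E_i] \ge 2\eps$ for every $i \in W$, by identifying $E_i$ with the almost-good event in \cref{defn:goodpos}.

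The main obstacle is this identification. When $B = P[a\dd b)$, a short case analysis on the three branches of the definition of $a_0$ (namely $a - \Delta \le 0$ on one extreme, $a > m - \Delta$ on the other, and $a_0 = a - \Delta$ in the middle) shows $a \ge a_0$ and $b \le a_0 + 2\Delta$, so $B \subseteq [a_0\dd a_0+2\Delta)$ and $E_i$ coincides with \eqref{eqn:goodposP}. The delicate case is $B = T[a\dd b)$: here $E_i$ asks for $i - x \in [a_0\dd a_0+2\Delta)$, whereas \eqref{eqn:goodposT} only asks for $i - x \in [0\dd m)$. One direction is trivial because $[a_0\dd a_0+2\Delta) \subseteq [0\dd m)$. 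For the converse, the same three-way case analysis gives $a_0 \le \max(0, a - \Delta + 1)$ and either $a_0 + 2\Delta = m$ or $a_0 + 2\Delta \ge b$; together with $i \in [a\dd b)$ and $x \in [0\dd \Delta)$, these force $i - x \in [a_0\dd a_0+2\Delta)$ whenever $i - x \in [0\dd m)$, making the two events coincide.

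Given $\Pr_x[E_i] \ge 2\eps$ for every $i \in W$, I conclude by a reverse-Markov estimate. Set $Z(x) = |\{i \in W : E_i(x)\}|$, so $0 \le Z \le |W|$, $\mathbf{E}_x[Z] \ge 2\eps|W|$, and distinct $i \in W$ yield distinct mismatches, giving $Z(x) \le \HD(P_0, T_0[x\dd x+2\Delta))$. Since $k_0 < \eps \cdot \frac{k\Delta}{64n} \le \eps|W|$ by construction, the bound $\mathbf{E}[Z] \le k_0 + |W| \Pr[Z > k_0]$ yields $\Pr[Z > k_0] \ge \frac{2\eps|W| - k_0}{|W|} > \eps$, and $\Pr_{x \sim O}[x \notin \Occ_{k_0}(P_0,T_0)] \ge \Pr[Z > k_0] > \eps$ follows, establishing \eqref{eqn:prhdbound}.
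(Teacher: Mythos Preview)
Your proof is correct and follows essentially the same approach as the paper: both verify the index containment $[a\dd b)$ (or its shift) lands inside $[a_0\dd a_0+2\Delta)$ via the definition of $a_0$, then apply a reverse-Markov argument to the sum $\sum_{i\in W}\Pr_{x\sim O}[E_i(x)]\ge 2\eps|W|$ to conclude $\Pr_x[\HD(P_0,T_0[x\dd x+2\Delta))>k_0]\ge \eps$. The only organizational difference is that the paper factors the argument through an intermediate inequality about $|\MM(P,T[x\dd x+m))\cap[a-\Delta\dd b)|$ before passing to $(P_0,T_0)$, whereas you work directly with the events $E_i$ and the Hamming distance of $(P_0,T_0)$; the content is the same.
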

\begin{proof}
First, note that $a_0\ge \min(m-2\Delta,0) \ge 0$, and $a_0\le m-2\Delta = n-3\Delta+1$, so $P_0$ and $T_0$ are well-defined fragments of $P[0\dd m)$ and $T[0\dd n)$ respectively, with $|T_0|-|P_0|+1= \Delta$. For any $x\in [0\dd \Delta)$, clearly $\MM(P_0,T_0[x\dd x+2\Delta)) + a_0 \subseteq \MM(P,T[x\dd x+m))$, so $\Occ(P,T)\subseteq \Occ(P_0,T_0)$ as claimed. It remains to show \cref{eqn:prhdbound} assuming $B$ is a good block.

We will establish
\begin{equation}
\label{eqn:toprove}
 \Pr_{x\sim O}\Big [\big \lvert\MM(P,T[x\dd x+m) )  \, \cap \, [a-\Delta \dd b) \big \rvert  > k_0  \Big ] \ge \eps.
\end{equation}
To derive \cref{eqn:prhdbound} from \cref{eqn:toprove}, first note that $\MM(P,T[x\dd x+m) ) \cap [a-\Delta \dd b) \subseteq [0\dd m) \cap [a-\Delta \dd b) \subseteq [a_0\dd a_0+2\Delta)$, where the last step follows from the definition of $a_0$ and that the given block has length $b-a\le \Delta$. Then, 
$\big \lvert\MM(P,T[x\dd x+m) )  \, \cap \, [a-\Delta \dd b) \big \rvert \le \big \lvert\MM(P,T[x\dd x+m) )  \, \cap \, [a_0\dd a_0 + 2\Delta) \big \rvert = \HD(P_0,T_0[x\dd x+2\Delta))$
holds for all $x\in [0\dd \Delta)$. Hence, \cref{eqn:toprove}
implies $\Pr_{x\sim O} \Big[\HD(P_0,T_0[x\dd x+2\Delta)) > k_0 \Big ] \ge \eps$, which is equivalent to \cref{eqn:prhdbound}.

To prove \cref{eqn:toprove}, we first consider the case  $B=T[a\dd b)$.
Let $G$ be the set of almost good positions in the good block  $T[a\dd b)$ with respect to $O$, where $|G|\ge \frac{k\Delta}{64n}$. Then, by \cref{defn:goodpos}, 
    \[ \sum_{i\in G} \Pr_{x\sim O}\big[ i-x\in [0\dd m) \text{ and } P[i-x]\neq T[i] \big] \ge 
     2\eps |G|.\]
    By changing the summation order, we get
    \begin{align*}
 2\eps|G| &\le \mathbf{E}_{x\sim O}\sum_{i\in G} \mathbf{1}\big [ i-x \in [0\dd m) \text{ and } P[i-x]\neq T[i]\big ]\\
        & = \mathbf{E}_{x\sim O} \big [|\MM(P,T[x\dd x+m) )  \, \cap \, (G-x) |\big ].
    \end{align*}
    Then, since $0 \le |\MM(P,T[x\dd x+m) )  \, \cap \, (G-x) | \le |G|$ always holds, we must have 
    \[\Pr_{x\sim O}\Big [|\MM(P,T[x\dd x+m) )  \, \cap \, (G-x) | \ge \eps |G|\Big ] \ge  \eps.\]
    Since $G-x \subseteq [a\dd b) - [0\dd \Delta)  \subset [a-\Delta\dd b)$ always holds, this means
    \[\Pr_{x\sim O}\Big [|\MM(P,T[x\dd x+m) )  \, \cap \, [a-\Delta\dd b) | \ge \eps |G|\Big ] \ge  \eps,\]
    which immediately implies \cref{eqn:toprove} since $\lceil \eps|G|\rceil  \ge \left \lceil \tfrac{k\Delta}{64n}\cdot \eps\right \rceil  = k_0+1$.

    Now we prove \cref{eqn:toprove} in the case $B=P[a\dd b)$. The proof is analogous to the previous paragraph. Let $G$ denote the set of almost good positions in the block $P[a\dd b)$ with respect to $O$. Then, by \cref{defn:goodpos} and changing the summation order, we get
    \begin{align*}
2\eps |G| \le      \sum_{i\in G} \Pr_{x\sim O}\big[P[i]\neq T[i+x] \big]  =  \mathbf{E}_{x\sim O}  \big [|\MM(P,T[x\dd x+m) )  \cap  G |\big ],
    \end{align*}
    which implies 
    \[\Pr_{x\sim O}\Big [|\MM(P,T[x\dd x+m) )   \cap  G | \ge \eps |G|\Big ] \ge  \eps.\]
    Then, \cref{eqn:toprove} follows from $G\subseteq[a\dd b) \subset [a-\Delta\dd b)$, and 
    $\lceil \eps|G|\rceil  \ge \left \lceil \tfrac{k\Delta}{64n}\cdot \eps\right \rceil  = k_0+1$.
\end{proof}

Now we introduce the following technical \cref{lem:algomultiple}, which builds on our non-adaptive property tester described in \cref{sec:algo}. To motivate,
suppose we have obtained multiple small property testing instances
$(P_1,T_1),\dots, (P_d,T_d) \in \Sigma^{2\Delta}\times \Sigma^{3\Delta-1}$ via \cref{lem:eliminatestep} (in particular, $\Occ(P,T) \subseteq \bigcap_{j=1}^d\Occ(P_j,T_j) $). Then, we may use \cref{lem:algomultiple} to solve these instances in a combined fashion, and obtain a candidate set $A$ that satisfies $\bigcap_{j=1}^d\Occ(P_j,T_j) \subseteq A \subseteq \bigcap_{j=1}^d\Occ_{k_0}(P_j,T_j)$.

We will apply \cref{lem:algomultiple} with  parameters $D,\delta^{-1}\le (\log n)^{\Oh(1)}$, so its time complexity will be $\Oh(\frac{\Delta}{\sqrt{k_0+1}})\cdot (\log n)^{\Oh(1)}$ in this case. For later analysis,  we need the technical condition (\cref{eqn:coupling}) that, with good probability under some coupling, adding an extra input instance $(P_d,T_d)$  would not introduce new elements to the answer set returned by the algorithm.

\begin{lemma}
\label{lem:algomultiple}
Let $D\ge 1,  k_0\in [0\dd 2\Delta]$, and $\delta \in (0,1)$. There is a randomized algorithm $\mathcal{A}$ 
with expected time complexity 
\[\Oh\left ( \frac{\Delta}{\sqrt{k_0+1}}\cdot \left (D\delta^{-0.5}\log^{5.5}(2\Delta) + D^{1.1}\delta^{-1}\right )\right ),\]
which takes $0\le d\le D$ pairs $(P_1,T_1),\dots, (P_d,T_d) \in \Sigma^{2\Delta}\times \Sigma^{3\Delta-1}$ as input, uses randomness $r$, and returns
     a set $A = \mathcal{A}_{r}(P_1,T_1,\dots,P_d,T_d) \subseteq [0\dd \Delta)$ in an implicit representation supporting constant-time random access, such that:
    \begin{enumerate}
        \item $[0\dd \Delta)\cap \bigcap_{j=1}^d \Occ(P_j,T_j) \subseteq A$ (deterministically), and
            \label{item:multiple-condition1}
        \item $\Pr_{r}\left [A \subseteq \bigcap_{j=1}^d \Occ_{k_0}(P_j,T_j)\right ] \ge 1-\delta$.
            \label{item:multiple-condition2}
    \end{enumerate}

    Additionally, for any $(P_1,T_1),\dots, (P_d,T_d) \in \Sigma^{2\Delta}\times \Sigma^{3\Delta-1}$ $(d\ge 1)$, it holds that
    \begin{equation}
        \label{eqn:coupling}
     \Pr_{r}[\mathcal{A}_{r}(P_1,T_1,\dots,P_{d-1},T_{d-1})  \supseteq \mathcal{A}_{r}(P_1,T_1,\dots,P_{d},T_{d})] \ge 1-\delta/\Delta. 
    \end{equation}
\end{lemma}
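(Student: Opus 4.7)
The plan is to reduce to the non-adaptive tester of \cref{thm:main-approx} by zipping the $d$ input pairs into a single instance over an enlarged alphabet. Form $P'$ and $T'$ of lengths $2\Delta$ and $3\Delta-1$ whose characters are tuples $P'[i] = (P_1[i], \ldots, P_d[i])$ and $T'[i] = (T_1[i], \ldots, T_d[i])$, replace each tuple by its linear hash $H(a_1, \ldots, a_d) = \sum_{j=1}^d r_j a_j \bmod Q$ for a large prime $Q=\Delta^{\Omega(1)}$ and fresh uniform coefficients $r_j\in\ZZ_Q$, and invoke the reporting-version algorithm of \cref{thm:main-approx} on the hashed strings with thresholds $k = \max(k_0,1)$ and $k' = 0$ (the degenerate $k_0=0$ case reduces to exact matching on the hashed instance, which also fits the budget since $\sqrt{k_0+1}=1$). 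Because each hashed-character access costs $\Oh(D)$ time, the tester's running time scales by a factor of $D$; after boosting its native $1/\poly(\Delta)$ failure probability down to $\delta$, the resulting complexity matches the first term of the claimed bound, while the second term absorbs the cost of amplification and of maintaining the implicit representation of $A$ with constant-time random access.

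For correctness, condition~\ref{item:multiple-condition1} is deterministic: if $i\in\bigcap_j\Occ(P_j,T_j)$, then $P'=T'\fragmentco{i}{i+2\Delta}$ already holds as strings of tuples and therefore also after hashing, so the reporting tester always includes $i$ in its output. For condition~\ref{item:multiple-condition2}, the tester ensures $A\subseteq\Occ_{k_0}(P'_H,T'_H)$ with probability $1-\Oh(\delta)$, where $P'_H,T'_H$ denote the hashed strings. A union bound over the $\Oh(\Delta^2)$ pairs of distinct tuples arising in $P'$ and $T'$ further guarantees, with probability $1-\Oh(\delta)$ (taking $Q$ sufficiently large), that no tuple mismatch collapses under $H$. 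On the intersection of these events, the hashed Hamming distance at any $i\in A$ equals the tuple Hamming distance $|\bigcup_j\MM(P_j,T_j\fragmentco{i}{i+2\Delta})|\le k_0$, which dominates each individual $\HD(P_j,T_j\fragmentco{i}{i+2\Delta})$ as required.

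The main obstacle is the coupling inequality \cref{eqn:coupling}. We realise $\mathcal{A}_r$ by using \emph{identical} internal randomness $r$ (the same prime $p$, subset $B$, shift, Karp--Rabin fingerprint, and coefficients $r_1,\ldots,r_{d-1}$) for both calls, and sampling the extra coefficient $r_d$ only inside the second call. Since the index sets extracted by \cref{alg:oneexecution} into $X_u$ and $Y_v(i)$ depend only on $p$, $B$, and offsets, the two calls compare exactly the same pairs of tuple subsequences; the $H$-hash of each subsequence in the second call equals that of the first plus $r_d$ times the hash of the $d$-th-coordinate projection. Propagating this linearity through the outer Karp--Rabin fingerprint, whenever the first call certifies a mismatch for some $i$ (i.e., the outer fingerprints of $X_{u_i}$ and $Y_{v_i}(i)$ differ), the second call fails to certify the same mismatch only if $r_d$ takes one specific value in $\ZZ_Q$, an event of probability $1/Q$. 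A union bound over the $\Oh(\Delta\polylog\Delta)$ relevant $(i,\text{execution})$ pairs gives total failure probability $\Oh(\delta/\Delta)$ for $Q$ large enough, so $\mathcal{A}_r(P_1,T_1,\ldots,P_{d-1},T_{d-1}) \supseteq \mathcal{A}_r(P_1,T_1,\ldots,P_d,T_d)$ with probability at least $1-\delta/\Delta$ as required.
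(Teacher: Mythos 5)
Your high-level reduction---zipping the $d$ pairs into one instance over $\Sigma^d$, hashing each tuple down to a single symbol, and running the non-adaptive machinery on the hashed strings---is exactly the paper's route. However, two steps do not go through as written. First, you invoke the full reporting version of \cref{thm:main-approx}, whereas the paper runs a \emph{single} execution of \cref{alg:oneexecution} with an enlarged prime range $\hat p = \Theta(\delta^{-1}k_0\log^9(2\Delta))$. This difference matters: the reporting algorithm explicitly lists its output and its post-processing costs $\Omega(|\Occ_{k_0}|)$ time, which can be $\Omega(\Delta)$ and thus exceeds the budget $\Oh\bigl(\tfrac{\Delta}{\sqrt{k_0+1}}\cdot\polylog\bigr)$ once $k_0$ is large; it also does not supply the required constant-time-random-access representation (only the single-execution implementation of \cref{lem:algo-find-fingerprint-matches} does). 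Worse, the post-processing of \cref{thm:main-approx} sorts the $r$ independent executions by output size, keeps the smallest $\lceil(1-\alpha)r\rceil$ of them, and thresholds by vote count; even if each individual execution's output shrinks when the $d$-th pair is added, the set of \emph{retained} executions can change, so per-execution containment does not imply containment of the final outputs. Your coupling claim "the two calls compare exactly the same pairs" is therefore only meaningful for a single execution of \cref{alg:oneexecution}, which is precisely why the paper restricts to that.

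Second, the coupling argument itself has a gap: "propagating this linearity through the outer Karp--Rabin fingerprint" is not valid, because the inner hash reduces modulo $Q$ before the outer fingerprint reduces modulo $M$, so $F(X_{u_i}^{(d)})-F(Y_{v_i}^{(d)}(i))$ is \emph{not} an affine function of $r_d$ over $\ZZ_M$ (the carries from the mod-$Q$ reduction depend on $r_d$), and the "one specific bad value of $r_d$" conclusion does not follow at the fingerprint level. The argument is salvageable if run at the character/string level: if some character of $X'_{u_i}$ differs from the aligned character of $Y'_{v_i}(i)$, then after adding $r_d$ times the $d$-th coordinates the two characters still differ except for one value of $r_d$, hence $X_{u_i}\neq Y_{v_i}(i)$ as strings, and only then does one invoke the soundness of the outer $F$. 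The paper avoids even this by noting the deterministic implication $\hat P[i]=\hat T[j]\Rightarrow \hat P'[i]=\hat T'[j]$ and only requiring the character hash to be collision-free on the $\Oh(\Delta^2)$ relevant pairs. Finally, your union bounds (e.g.\ driving the failure below $\delta/\Delta$) silently assume $\delta$ is not too small relative to $\Delta$; the paper covers the remaining regime with an explicit brute-force fallback when $\Delta\le 10D^{0.2}\delta^{-2}$, which you would also need.
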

\begin{proof}
First, we assume $\Delta \ge 10D^{0.2}\delta^{-2}$;
otherwise, 
the task can be solved by a brute force algorithm which deterministically computes all $\Occ(P_j,T_j)$ by exact pattern matching~\cite{MP70,KMP77,BM77} and returns $[0\dd \Delta) \cap \bigcap_{j=1}^d \Occ(P_j,T_j)$, in total time complexity $\Oh(\Delta D) = \Oh(\sqrt{\Delta} D\cdot D^{0.1}\delta^{-1})  = \Oh(\frac{\Delta}{\sqrt{k_0+1}}\cdot D^{1.1}\delta^{-1})$, which is within the desired time bound. One can verify that this brute force algorithm satisfies both conditions \cref{item:multiple-condition1,item:multiple-condition2} and \cref{eqn:coupling}.
By a similar argument, we can also assume $\Delta$ is larger than some sufficiently large constant, and assume that $k_0\ge 1$.

\paragraph*{Description of Algorithm $\mathcal{A}$}

First, define a zipped pattern $\hat P \in (\Sigma^d)^{2\Delta}$, where for each $i\in [0\dd 2\Delta)$,  $\hat P[i]$ is the concatenation $P_1[i]P_2[i]\cdots P_d[i]$ viewed as a symbol in $\Sigma^d$. Similarly, define a zipped text $\hat T \in (\Sigma^d)^{3\Delta-1}$.
 
 Sample a fingerprint function $G\colon \Sigma^* \to \mathbb{F}_{q}$ from \cref{lem:fingerprints} with prime $q \in [\Delta^9,2\Delta^9]$.  Define a hashed pattern $ P \in \mathbb{F}_q^{2\Delta}$ and a hashed text  $ T \in \mathbb{F}_q^{3\Delta-1}$, where $ P[i] = G(\hat P[i])$ and $ T[j] = G(\hat T[j])$.
 Note that every symbol in $ P$ and $ T$ can be accessed in $\Oh(d)=\Oh(D)$ time by evaluating the fingerprint of a  string in $\Sigma^d$.

 Then, we run \cref{alg:oneexecution} on $P$, $T$, with $n\coloneqq3\Delta-1$, $m\coloneqq 2\Delta$ (where $n$ is indeed larger than some sufficiently large constant, as required in \cref{sec:algo}), $k\coloneqq k_0$, and $\hat p \coloneqq \Theta (\delta^{-1}k_0\log ^9(2\Delta))$. The parameter $z$ of \cref{alg:oneexecution} will be chosen later. We return the set $A\subseteq [0\dd \Delta)$ returned by \cref{alg:oneexecution} as our answer.

\paragraph*{Proof of \cref{item:multiple-condition1,item:multiple-condition2}}

First,  by definition of $\hat P$ and $\hat T$, it holds that 
$
\Occ(\hat P,\hat T) = [0\dd\Delta) \cap \bigcap_{j=1}^d \Occ(P_j,T_j),
$
 and 
$ \Occ_{k_0}(\hat P,\hat T) \subseteq \bigcap_{j=1}^d \Occ_{k_0}(P_j,T_j)$.

By definition of $ P$ and $ T$, $ \Occ(\hat P,\hat T)\subseteq \Occ( P, T)$ always holds.
For all $i\in \Occ(P, T)$, by \cref{prop:fingerprint-match-probability} (\cref{item:fingerprint-match-item2}) with $k'=0$, we have $\Pr[i\in A] =1$, so $\Occ( P, T)\subseteq A$ holds deterministically.
Hence, $ [0\dd \Delta)\cap \bigcap_{j=1}^d \Occ(P_j,T_j)= \Occ(\hat P,\hat T) \subseteq \Occ( P, T) \subseteq A$,  which proves \cref{item:multiple-condition1}.

Since $ P[i]=G(\hat P[i])$ and $T[j] = G(\hat T[j])$, by the property of the fingerprint function $G$,
 we have $ P[j]\neq  T[i]$ for all $(i,j)\in [2\Delta]\times [3\Delta -1]$ with $\hat P[j]\neq \hat T[i]$,  with probability at least $1-2\Delta\cdot(3\Delta-1)\cdot \frac{d-1}{q-1} \ge 1-\Delta^{-2}$ (where we used $\Delta\ge 10D^{0.2}\ge 10d^{0.2}$ and $q\ge \Delta^9$) by a union bound over all $(i,j)$.  
Hence, we have $\Occ_{k_0}( P, T) = \Occ_{k_0}(\hat P,\hat T)$ with at least $1-\Delta^{-2}$ probability.

The good event $\mathcal{E}_0$ in \cref{prop:eventgoodprime} (here we only have $\lfloor n/m\rfloor = 1$ piece) happens with 
$\Pr[\mathcal{E}_{0}]\ge 1-\Oh(\frac{k_0\log^9 (2\Delta)}{\hat p}) = 1-\delta/2$ (assuming $\hat p$ is defined with a sufficiently large constant factor).
For any $i\notin \Occ_{k_0}( P, T)$,
by \cref{prop:fingerprint-match-probability} (\cref{item:fingerprint-match-item1}), we have $\Pr[i\notin A \mid \mathcal{E}_{0}] \ge 1-\Delta^{-1.9}$.
By a union bound over all $i\in [0\dd \Delta)\setminus \Occ_{k_0}( P, T)$, we then get $\Pr[A \subseteq \Occ_{k_0}( P, T)] \ge
\Pr[\mathcal{E}_0] - \Delta \cdot \Delta^{-1.9}\ge 
1- \delta/2-\Delta^{-0.9} $.
By the previous paragraph and another union bound,
$\Pr[A \subseteq \Occ_{k_0}( \hat P, \hat T)] \ge \Pr[A \subseteq \Occ_{k_0}(  P,  T)] - \Delta^{-2} \ge 1- \delta/2-\Delta^{-0.9} - \Delta^{-2} \ge 1-\delta$, where we used $\Delta\ge 10\delta^{-2}$ in the last step.
Since $ \Occ_{k_0}(\hat P,\hat T) \subseteq \bigcap_{j=1}^d \Occ_{k_0}(P_j,T_j)$, this proves \cref{item:multiple-condition2}.

\paragraph*{Time Complexity}
 By \cref{lem:algo-find-fingerprint-matches}, the time complexity of the algorithm is 
$\Oh\big (D\cdot \frac{z\cdot 2\Delta+z'\cdot (3\Delta-1)}{k_0 }\log (3\Delta-1)\big ) = \Oh\big (\frac{(z+z')\Delta}{k_0}D\log(2\Delta)\big )$ in expectation, where the extra $\Oh(D)$ factor is the time complexity for accessing each symbol of $ P$ and $ T$.
Recall that $z$ can be chosen from $[1\dd \min(\hat p,\Delta)]$ and that $z'=\lceil \min(2\hat p,\Delta)/z\rceil$. 
  We choose $z = \lceil \sqrt{\min(2\hat p,\Delta)}\rceil$, which satisfies the requirement $z\in [1\dd \min(\hat p,\Delta)]$ since $\hat p\ge 2$ and  $\Delta \ge 1$.
  Then, $z' =  \lceil\min(2\hat p,\Delta)/z\rceil \le \lceil\sqrt{\min(2\hat p,\Delta)}\rceil $.
  Consequently, $z+z' \le 2\lceil\sqrt{\min(2\hat p,\Delta)}\rceil= \Oh(\sqrt{\hat p})$, and we get the expected time complexity
  \[ \Oh\left ( \frac{\Delta\sqrt{\hat p}}{k_0} D\log(2\Delta) \right ) =\Oh\left ( \frac{\Delta\sqrt{\delta^{-1}k_0\log ^9(2\Delta)}}{k_0} D\log(2\Delta) \right ) = \Oh\left( \frac{\Delta}{\sqrt{k_0}}D\delta^{-0.5}\log^{5.5} (2\Delta)\right), \]
  which is within the claimed time bound (as we assumed $k_0\ge 1$).

  \paragraph*{Proof of \cref{eqn:coupling}}
Given $(P_1,T_1),\dots,(P_d,T_d)$, consider the zipped pattern $\hat P \in (\Sigma^d)^{2\Delta}$ and zipped text $\hat T \in (\Sigma^d)^{3\Delta-1}$ defined as before, as well as the zipped pattern $\hat P' \in (\Sigma^{d-1})^{2\Delta}$ and zipped text $\hat T' \in (\Sigma^{d-1})^{3\Delta-1}$ defined based on $(P_1,T_1),\dots,(P_{d-1},T_{d-1})$ only. Note that
$(\hat P[i]=\hat T[j]) \Rightarrow (\hat P'[i]=\hat T'[j])$ holds for all $(i,j)\in [2\Delta]\times [3\Delta-1]$.
Let $ P, P'\in \mathbb{F}_q^{2\Delta}$ and $T, T'\in \mathbb{F}_q^{3\Delta-1}$ be obtained by applying the fingerprint function $G\colon \Sigma^*\to \mathbb{F}_q$ entrywise to $\hat P,\hat P',\hat  T,\hat T'$. 
Then, by a union bound,  we get
\begin{equation}
( P[i]= T[j]) \Rightarrow ( P'[i]= T'[j])
\label{eqn:symbolcondition}
\end{equation}
for all $(i,j)\in [2\Delta]\times [3\Delta-1]$, with probability at least $1-2\Delta\cdot(3\Delta-1)\cdot \frac{d-1}{q-1} \ge 1-\Delta^{-2}$ (where we used $\Delta\ge 10D^{0.2}\ge 10d^{0.2}$ and $q\ge \Delta^9$) over the choice of $G$.

  Let $r=(G,p,B,F)$ denote the randomness used by our algorithm $\mathcal{A}$, which consists of the fingerprint function $G$, as well as the randomness used by \cref{alg:oneexecution} (which we denote by $\mathcal{A}^1$), namely
prime $p \in [\hat p,2\hat p]$,  subset $B\subseteq \ZZ_{p}$, and fingerprint function $F\colon \mathbb{F}_q^* \to \mathbb{F}_M$ (with $M\ge (3\Delta-1)^{10}$).

 By Line~\ref{line:defnA} of \cref{alg:oneexecution}, the returned set is
\[\mathcal{A}_{r}(P_1,T_1,\dots,P_{d},T_{d}) = \mathcal{A}^1_{p,B,F}(P,T) = \{i \in [0\dd \Delta) : F(X_{u_i}) = F(Y_{v_i}(i))\},\] where $X_{u_i}$ and $Y_{v_i}(i)$ were defined in Lines~\ref{line:defnXu} and \ref{line:defnYv} (based on $p$ and $B$, but not $F$). 
 For any $i\in [0\dd \Delta)$ with $X_{u_i}\neq Y_{v_i}(i)$, by \cref{lem:fingerprints}, $F(X_{u_i}) = F(Y_{v_i}(i))$ happens with probability  at most $\frac{|X_{u_i}|-1}{M-1}\le \Delta^{-9}$ over the choice of $F$. Hence, by a union bound over all $i\in [0\dd\Delta)$, with at least $1-\Delta^{-8}$ probability, we have
\[\mathcal{A}_{r}(P_1,T_1,\dots,P_{d},T_{d}) = \mathcal{A}^1_{p,B,F}(P,T) = \{i \in [0\dd \Delta) : X_{u_i} = Y_{v_i}(i)\}.\]
Similarly, with at least $1-\Delta^{-8}$ probability over the choice of $F$, we have
\[\mathcal{A}_{r}(P_1,T_1,\dots,P_{d-1},T_{d-1}) = \mathcal{A}^1_{p,B,F}(P',T') = \{i \in [0\dd \Delta) : X'_{u_i} = Y'_{v_i}(i)\},\]
where $X'_{u_i}$ and $Y'_{v_i}(i)$ consist of symbols from $P'$ and $T'$ (instead of $P$ and $T$) at the same positions as in the definitions of $X_{u_i}$ and $Y_{v_i}(i)$.
Notably, for $i\in [0\dd \Delta)$, if $X_{u_i} = Y_{v_i}(i)$, then \cref{eqn:symbolcondition} implies $X'_{u_i} = Y'_{v_i}(i)$ as well. 
Recall \cref{eqn:symbolcondition} holds with probability $\ge 1-\Delta^{-2}$ over the choice of $G$.
Therefore, by a union bound, we conclude that
\[\mathcal{A}_{r}(P_1,\dots,T_{d-1}) = \{i \in [0\dd \Delta) : X'_{u_i} = Y'_{v_i}(i)\} \supseteq  \{i \in [0\dd \Delta) : X_{u_i} = Y_{v_i}(i)\} = \mathcal{A}_{r}(P_1,\dots,T_{d}),\]
with probability at least $1 - \Delta^{-2}- \Delta^{-8} - \Delta^{-8} \ge 1 - \delta\Delta^{-1}$ (where we used $\Delta \ge 10\delta^{-2} >3\delta^{-1}$) over the random choice of $r=(G,p,B,F)$. This establishes \cref{eqn:coupling}, and finishes the proof of the lemma.
\end{proof}

The description of our main adaptive tester is given in \cref{alg:adaptive-tester}. In the following, let $\delta\coloneqq \eps/16 =  (320c\log^4 m )^{-1}$.
 \begin{algorithm}[h]
  \DontPrintSemicolon
\SetKwInput{KwInput}{Input}
\SetKwInput{KwOutput}{Output}
\SetKwComment{Comment}{$\triangleright$\ }{}
\For{iterations $d\gets 1,2,\dots, D\coloneqq \lceil 90\eps^{-1}\ln(1+\Delta)\rceil$}{
         Define $O_{d-1}^*$ as the output distribution of the following sampling process (which depends on $P_1,T_1,\dots,P_{d-1},T_{d-1}$):  \texttt{``Let $A\subseteq [0\dd \Delta)$ be returned by the randomized algorithm $\mathcal{A}(P_1,T_1,\dots,P_{d-1},T_{d-1})$ from \cref{lem:algomultiple} (with $\delta \coloneqq\eps/16$, and $k_0$ defined in \cref{eqn:defnk0}). If $A=\emptyset$, output $\bot$; otherwise, output a uniformly random element of  $A$.''}  \label{line:defnOd}\\
        Define $O_{d-1}$ as the distribution of $x\sim O^*_{d-1}$ conditioned on $x\neq \bot$.\label{line:defndistribO}\\
        \Comment{By definition, both $O_0^*$ and $O_0$ are the uniform distribution over $[0\dd \Delta)$}
        Draw $L = \lceil 5\eps^{-2}\ln n \rceil = \Oh(\log^{9}n)$ independent samples $x_1,\dots,x_{L}$ from $O^*_{d-1}$. (If the time spent at this line exceeds $10$ times its expected time upper bound from \cref{lem:algomultiple}, then abort this line and assign $x_1,\dots,x_L$ with all zeros.) \label{line:drawsamples}
        \\
        \lIf{$\bot \in \{x_1,\dots,x_L\}$}{\Return{\No}}
        Obtain a block $B$ by applying \cref{lem:findblock} to the distribution $O_{d-1}$, where the required $L$ samples from $O_{d-1}$ are supplied by $x_1,\dots,x_L$.\label{line:findblock}\\
        
        Let $P_d,T_d$ be the fragments $P_0,T_0$ defined in \cref{lem:eliminatestep} based on block $B$. \label{line:pdtd}
}
\Return{\Yes}
\caption{Adaptive property tester for \cref{thm:adaptive-tester}} \label{alg:adaptive-tester}
\end{algorithm}

We first observe that our tester is always correct in the \Yes case: 
\begin{proposition}
   If $\Occ(P,T)\neq\emptyset$, then \cref{alg:adaptive-tester} always returns \Yes.
 \label{prop:noonly}
\end{proposition}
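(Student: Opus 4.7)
The plan is to show that, regardless of the random choices made, every iteration produces a nonempty set $A$, so that $\bot$ can never appear among the drawn samples and the algorithm proceeds to the final \texttt{Return \Yes{}} statement.

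First I would argue inductively that at the beginning of every iteration $d$, we have $\Occ(P,T)\subseteq [0\dd \Delta)\cap \bigcap_{j=1}^{d-1}\Occ(P_j,T_j)$. Indeed, $\Occ(P,T)\subseteq [0\dd\Delta)$ holds by definition, and for each $j\in[1\dd d-1]$, the fragments $P_j,T_j$ are constructed in Line~\ref{line:pdtd} by applying \cref{lem:eliminatestep} to some block, whose first conclusion is precisely $\Occ(P,T)\subseteq \Occ(P_j,T_j)$. Taking intersection over all prior iterations gives the claim.

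Next, I would invoke \cref{item:multiple-condition1} of \cref{lem:algomultiple}, which holds \emph{deterministically} for every outcome of the randomness $r$ used by $\mathcal{A}$: for any input $(P_1,T_1),\dots,(P_{d-1},T_{d-1})$ the returned set $A$ satisfies $A\supseteq [0\dd\Delta)\cap \bigcap_{j=1}^{d-1}\Occ(P_j,T_j)$. Combined with the previous paragraph and the assumption $\Occ(P,T)\ne\emptyset$, this forces $A\ne\emptyset$ on every call to $\mathcal{A}$ inside the definition of $O^*_{d-1}$ in Line~\ref{line:defnOd}. By the definition of $O^*_{d-1}$, the symbol $\bot$ is produced only when $A=\emptyset$, so each of the samples $x_1,\dots,x_L$ drawn from $O^*_{d-1}$ in Line~\ref{line:drawsamples} is an element of $[0\dd\Delta)$ rather than $\bot$.

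Finally, I would address the abort clause in Line~\ref{line:drawsamples}: even if the sampling exceeds its time budget, the algorithm replaces $x_1,\dots,x_L$ with zeros, which again are not $\bot$. Hence the test $\bot\in\{x_1,\dots,x_L\}$ fails in every iteration, the algorithm never returns \No, and after completing all $D$ iterations it outputs \Yes{}. No step of this argument looks like a real obstacle; the only point requiring a moment of care is ensuring that the abort branch in Line~\ref{line:drawsamples} cannot accidentally introduce a $\bot$.
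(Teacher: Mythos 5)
Your proposal is correct and is essentially the paper's own argument stated in the contrapositive direction: the paper shows that a \No answer forces $\Occ(P,T)\subseteq\bigcap_j\Occ(P_j,T_j)=\emptyset$ via \cref{item:multiple-condition1} of \cref{lem:algomultiple} and the containment $\Occ(P,T)\subseteq\Occ(P_j,T_j)$ from \cref{lem:eliminatestep}, exactly the two ingredients you use. Your extra remark about the abort branch assigning zeros (not $\bot$) is a valid detail that the paper leaves implicit.
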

\begin{proof}
It suffices to show that, if \cref{alg:adaptive-tester} returns \No, then $\Occ(P,T)=\emptyset$.
   Note that \cref{alg:adaptive-tester} returns \No only if the samples drawn at Line~\ref{line:drawsamples} include $\bot$, which means that some execution of $\mathcal{A}(P_1,T_1,\dots,P_{d-1},T_{d-1})$ returned $A=\emptyset$. By \cref{lem:algomultiple} (\cref{item:fingerprint-match-item1}), this indicates $\bigcap_{j=1}^{d-1} \Occ(P_j,T_j) = \emptyset$. 
   From Line~\ref{line:pdtd} we see that $P_j,T_j$ are fragments defined in \cref{lem:eliminatestep}, which satisfy $\Occ(P,T) \subseteq \Occ(P_j,T_j)$. Hence, $\Occ(P,T) \subseteq \bigcap_{j=1}^{d-1} \Occ(P_j,T_j) = \emptyset$.
\end{proof}

We use the following  potential function $\Phi_d$ to measure the progress of our algorithm towards eliminating all candidate answers. For $d\in [0\dd D]$, the potential function $\Phi_d$ is a quantity determined by the state of \cref{alg:adaptive-tester} at the end of the $d$-th iteration.

\begin{definition}[$\Phi_d$]
       \label{defn:potential-func}
    For $d\in \fragmentcc{0}{D}$, suppose \cref{alg:adaptive-tester} has finished the first $d$ iterations (without returning \No) and has chosen the fragments $P_1,T_1,\dots,P_{d},T_{d}$ at Line~\ref{line:pdtd}. Then, define the potential function $\Phi_d= \Phi_d(P_1,T_1,\dots,P_d,T_d)$ as
       \begin{equation}
       \Phi_d\coloneqq \mathbf{E}_{A \sim\mathcal{A}(P_1,T_1,\dots,P_{d},T_d)}[\ln (1+|A|)] \in [0, \ln(1+\Delta)],
       \end{equation}
       where $\mathcal{A}$ is the randomized algorithm from \cref{lem:algomultiple}.
In the case where \cref{alg:adaptive-tester} has returned \No by the end of $d$-th iteration (i.e., $P_d,T_d$ are undefined), we artificially define $\Phi_d = -d$ for convenience. \lipicsEnd
\end{definition}
We first observe that $\Phi_d$ is almost non-increasing:
\begin{lemma}
    \label{lem:potential-non-increase}
    For $d\in \fragmentcc{1}{D}$, $\Phi_d \le \Phi_{d-1} + \eps/16$ always holds.
\end{lemma}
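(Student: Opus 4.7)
The plan is to reduce \cref{lem:potential-non-increase} to the coupling property~\eqref{eqn:coupling} from \cref{lem:algomultiple}, applied for the fixed tuple of fragments $(P_1,T_1,\dots,P_d,T_d)$ produced by the first $d$ iterations of \cref{alg:adaptive-tester}. Since both $\Phi_{d-1}$ and $\Phi_d$ are expectations over the internal randomness $r$ of the subroutine $\mathcal{A}$, I will couple them using the \emph{same} seed $r$ and compare $|\mathcal{A}_r(P_1,T_1,\dots,P_{d-1},T_{d-1})|$ with $|\mathcal{A}_r(P_1,T_1,\dots,P_{d},T_{d})|$.

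First I would dispense with the degenerate case of \cref{defn:potential-func}: if \cref{alg:adaptive-tester} had already returned \No by the end of iteration $d-1$, then $\Phi_{d-1} = -(d-1)$ and $\Phi_d = -d$, so $\Phi_d - \Phi_{d-1} = -1 \le \eps/16$ trivially. If it returns \No precisely in iteration $d$, then $\Phi_d = -d \le 0 \le \Phi_{d-1} + \eps/16$ because $\Phi_{d-1}\ge 0$. So the only nontrivial case is when both $\Phi_{d-1}$ and $\Phi_d$ are given by the expectation formula, and the tuples $(P_1,T_1,\dots,P_{d-1},T_{d-1})$ and $(P_1,T_1,\dots,P_d,T_d)$ are well-defined.

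In this main case, let $E_r$ denote the event (over the randomness $r$ used by $\mathcal{A}$) that $\mathcal{A}_r(P_1,T_1,\dots,P_{d-1},T_{d-1}) \supseteq \mathcal{A}_r(P_1,T_1,\dots,P_{d},T_{d})$. By~\eqref{eqn:coupling}, we have $\Pr_r[E_r] \ge 1 - \delta/\Delta$. On $E_r$, the set sizes satisfy $|\mathcal{A}_r(P_1,\dots,T_d)| \le |\mathcal{A}_r(P_1,\dots,T_{d-1})|$, and hence $\ln(1+|\mathcal{A}_r(P_1,\dots,T_d)|) \le \ln(1+|\mathcal{A}_r(P_1,\dots,T_{d-1})|)$. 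On the complement $\neg E_r$, I simply use the crude bound $\ln(1+|\mathcal{A}_r(P_1,\dots,T_d)|) - \ln(1+|\mathcal{A}_r(P_1,\dots,T_{d-1})|) \le \ln(1+\Delta)$ from the fact that $\mathcal{A}_r(\cdot) \subseteq [0\dd \Delta)$ and $\ln(1+|\cdot|) \ge 0$.

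Taking expectations and combining the two regimes,
\[
\Phi_d - \Phi_{d-1} \;\le\; \Pr_r[\neg E_r]\cdot \ln(1+\Delta) \;\le\; \frac{\delta}{\Delta}\cdot \ln(1+\Delta) \;\le\; \delta \;=\; \frac{\eps}{16},
\]
where the last inequality uses $\ln(1+\Delta)\le \Delta$. This yields the claim. There is no real obstacle here beyond the bookkeeping for the artificial \No case and remembering to couple the two executions of $\mathcal{A}$ with a \emph{single} seed $r$ so that~\eqref{eqn:coupling} is directly applicable; the rest is linearity of expectation and the trivial bound $\ln(1+\Delta)/\Delta\le 1$.
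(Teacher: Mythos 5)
Your proof is correct and follows essentially the same route as the paper: handle the artificial \No cases via the definition of $\Phi_d$, then couple the two executions of $\mathcal{A}$ on a common seed $r$, split the expectation according to the event $A_{d-1}^r\supseteq A_d^r$ guaranteed by \eqref{eqn:coupling} with probability $1-\delta/\Delta$, and absorb the complementary event with the crude bound $\ln(1+\Delta)\le\Delta$. The paper's computation is the same argument written as a lower bound on $\Phi_{d-1}-\Phi_d$ rather than an upper bound on $\Phi_d-\Phi_{d-1}$.
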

\begin{proof}
    If \cref{alg:adaptive-tester} has returned \No by the end of $d$-th iteration, we have $\Phi_d=-d$ and $\Phi_{d-1}\ge \min(0,-(d-1))=-d+1$, so the claim holds. Now assume \cref{alg:adaptive-tester} has finished the first $d$ iterations without returning \No.
    Denote $A^r_{d} = \mathcal{A}_r(P_1,T_1,\dots,P_{d},T_d)$ for short, where $r$ is the randomness used by the algorithm $\mathcal{A}$.
   By \cref{eqn:coupling} in \cref{lem:algomultiple}, $\Pr_r[A^r_{d-1} \supseteq A^r_{d}]\ge 1-\delta/\Delta$. Then,
  \begin{align*}
   \Phi_{d-1}-\Phi_{d} &= \mathbf{E}_r[\ln(1+|A_{d-1}^r|)] - \mathbf{E}_r[\ln(1+|A_{d}^r|)]\\
   &\ge \mathbf{E}_r\big [\ln(1+|A_{d-1}^r|)\cdot  \mathbf{1}[A_{d-1}^r \supseteq A_d^r]\big] - \mathbf{E}_r[\ln(1+|A_{d}^r|)]\\
   & = \mathbf{E}_r\big [\ln\big (\tfrac{1+|A_{d-1}^r|}{1+|A_d^r|}\big )\cdot  \mathbf{1}[A_{d-1}^r \supseteq A_d^r]\big] - \mathbf{E}_r\big[\ln(1+|A_{d}^r|)\cdot  \mathbf{1}[A_{d-1}^r \not \supseteq A_d^r]\big]\\
   & \ge 0 - \ln(1+\Delta)\Pr_r[A_{d-1}^r \not \supseteq A_d^r]\\
   &\ge - \tfrac{\delta \ln(1+\Delta)}{\Delta} \ge -\delta = -\eps/16,
  \end{align*} 
  which finishes the proof.
\end{proof}

The following lemma shows that the potential decreases rapidly in the \No case: 
\begin{lemma}
\label{lem:terminate-fast}
    Suppose $\Occ_k(P,T) = \emptyset$. 
Let $d\in \fragmentcc{1}{D}$. Then, with at least $1/2$ probability over the randomness in iteration $d$, it holds that $\Phi_{d-1}-\Phi_d \ge \eps/8$.\lipicsEnd
\end{lemma}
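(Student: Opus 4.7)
The plan is to fix $P_1,T_1,\dots,P_{d-1},T_{d-1}$ and reason about the iteration-$d$ randomness $r'$ (the samples drawn at Line~\ref{line:drawsamples} and the internal coins of \cref{lem:findblock}). Let $p_0:=\Pr_r[A^r_{d-1}=\emptyset]$, where $A^r_{d-1}:=\mathcal{A}_r(P_1,T_1,\dots,P_{d-1},T_{d-1})$ and $r$ is the independent randomness of the subroutine $\mathcal{A}$ from \cref{lem:algomultiple}. Each sample from $O^*_{d-1}$ equals $\bot$ with probability $p_0$, so at least one $\bot$ is drawn with probability $\alpha_\bot:=1-(1-p_0)^L$. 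When this ``$\bot$-drawn'' event occurs, \cref{alg:adaptive-tester} returns \No, the convention $\Phi_d=-d$ from \cref{defn:potential-func} applies, and the drop $\Phi_{d-1}-\Phi_d\ge 0-(-d)=d\ge 1>\eps/8$ is immediate.

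Conditional on no $\bot$ being drawn, the $L$ samples are i.i.d.\ from $O_{d-1}$, so by \cref{lem:findblock} the block $B$ chosen on Line~\ref{line:findblock} is good with respect to $O_{d-1}$ with conditional probability at least $0.6$. When this ``good-block'' event holds, \cref{lem:eliminatestep} yields $\Pr_{x\sim O_{d-1}}[x\notin \Occ_{k_0}(P_d,T_d)]\ge\eps$, and unfolding the two-stage definition of $O_{d-1}$ from Lines~\ref{line:defnOd}--\ref{line:defndistribO} rewrites this as
\[\mathbf{E}_r\!\left[\tfrac{|A^r_{d-1}\setminus \Occ_{k_0}(P_d,T_d)|}{|A^r_{d-1}|}\mathbf{1}[A^r_{d-1}\neq\emptyset]\right]\;\ge\;(1-p_0)\,\eps.\]
To convert this into a lower bound on $\Phi_{d-1}-\Phi_d=\mathbf{E}_r\!\left[\ln\tfrac{1+|A^r_{d-1}|}{1+|A^r_d|}\right]$, I plan to couple $A^r_{d-1}$ and $A^r_d$ through a shared $r$: by \cref{eqn:coupling} and property~\cref{item:multiple-condition2} of \cref{lem:algomultiple}, outside an event of probability at most $2\delta$ we have $A^r_d\subseteq A^r_{d-1}\cap \Occ_{k_0}(P_d,T_d)$; moreover the integrand is $\ge 0$ whenever $A^r_{d-1}\supseteq A^r_d$ and is bounded below by $-\ln(1+\Delta)$ in any case. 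Combining these observations with the elementary inequalities $\ln\tfrac{1+a}{1+a-b}\ge \tfrac{b}{1+a}\ge \tfrac{b}{2a}$ (valid for $0\le b\le a$ and $a\ge 1$; and $b=0$ whenever $a=0$) yields, on the good-block event,
\[\Phi_{d-1}-\Phi_d\;\ge\;\tfrac{(1-p_0)\,\eps}{2}-2\delta-\tfrac{\delta\ln(1+\Delta)}{\Delta}\;\ge\;\tfrac{(1-p_0)\,\eps}{2}-3\delta,\]
using $\ln(1+\Delta)/\Delta\le 1$ for $\Delta\ge 1$.

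The final step is a case analysis on $p_0$ with $\delta=\eps/16$. If $p_0\le 3/8$, the above bound is at least $\tfrac{5\eps}{16}-\tfrac{3\eps}{16}=\eps/8$, so both the $\bot$-drawn branch and the good-block branch deliver the required drop; their total probability is at least $\alpha_\bot+0.6(1-\alpha_\bot)=0.6+0.4\alpha_\bot\ge 0.6$. If instead $p_0>3/8$, then $\alpha_\bot\ge 1-(5/8)^L\ge 1/2$ already (since $L\ge 2$), and the $\bot$-drawn branch alone suffices. Either way, $\Pr_{r'}[\Phi_{d-1}-\Phi_d\ge \eps/8]\ge 1/2$, as required. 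The main technical obstacle is the tight bookkeeping of the two small error terms $2\delta$ and $\delta\ln(1+\Delta)/\Delta$ coming from the coupling: they must be just small enough that the threshold $p_0\le 3/8$ extracted from the good-block branch meshes with the $(5/8)^L\le 1/2$ estimate driving the $\bot$-branch, and the choice $\delta=\eps/16$ fixed at the top of \cref{sec:adaptive} is precisely tuned so that both endpoints land at~$1/2$.
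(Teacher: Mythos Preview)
Your argument follows essentially the same route as the paper's proof: case-split on the probability that the subroutine $\mathcal{A}$ returns an empty set, use \cref{lem:findblock} and \cref{lem:eliminatestep} to guarantee a good block with constant probability, and then couple $A^r_{d-1}$ and $A^r_d$ via \cref{eqn:coupling} and property~\cref{item:multiple-condition2} to convert the ``fraction eliminated'' bound into a potential drop. The bookkeeping in your displayed inequality for $\Phi_{d-1}-\Phi_d$ is correct (indeed slightly cleaner than the paper's three-term decomposition).

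There is, however, one genuine gap: you never account for the \emph{abortion} branch at Line~\ref{line:drawsamples}. The algorithm caps the sampling time at $10\times$ its expectation and, on timeout, overwrites all samples with zeros. Hence your claim ``at least one $\bot$ is drawn with probability $\alpha_\bot=1-(1-p_0)^L$'' and your claim ``conditional on no $\bot$ being drawn, the $L$ samples are i.i.d.\ from $O_{d-1}$'' are both false as stated---abortion can suppress a $\bot$ that would otherwise have appeared, and after abortion the samples are not from $O_{d-1}$ at all. The fix is a one-line union bound: abortion has probability at most $0.1$ by Markov, so subtract $0.1$ from your final probabilities. In your $p_0\le 3/8$ case you then get $\ge 0.6-0.1=0.5$, and in your $p_0>3/8$ case you get $\ge \alpha_\bot-0.1\ge 0.5$ since $L$ is large. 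This is exactly how the paper arrives at $0.6-0.1=1/2$ at the end of its proof.

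A second minor omission: you implicitly assume $P_1,T_1,\dots,P_{d-1},T_{d-1}$ are defined, i.e., that the algorithm did not return \No in an earlier iteration. That case is handled trivially by the convention $\Phi_{d-1}=-(d-1)$, $\Phi_d=-d$ from \cref{defn:potential-func}, but you should say so.
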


Before proving \cref{lem:terminate-fast}, we first  use it to prove the main theorem:
\begin{proof}[Proof of \cref{thm:adaptive-tester} assuming \cref{lem:terminate-fast}]
    We first show the correctness of \cref{alg:adaptive-tester}. In the \Yes case where $\Occ(P,T)\neq \emptyset$, \cref{alg:adaptive-tester} is correct by \cref{prop:noonly}.
    Now we consider the \No case where $\Occ_k(P,T)=\emptyset$.
The initial potential is $\Phi_0 \le  \ln(1+\Delta)$.
    In iteration $d\in \fragmentcc{1}{D}$, by \cref{lem:terminate-fast}, with at least $1/2$ probability we have potential decrement $\Phi_{d-1}-\Phi_d \ge \eps/8$. 
    By Chernoff bound, among all  $D= \lceil 90\eps^{-1}\ln(1+\Delta)\rceil$ iterations, with at least $1-\exp(-2\cdot D\cdot 0.1^2) \ge 1-\exp(-32c\log^4 m\ln(1+\Delta))\ge 1-n^{-10}$ probability, at least $0.4D$ iterations have potential decrement $\ge \eps/8$. In this case, by \cref{lem:potential-non-increase}, the total potential decrement is $\Phi_0-\Phi_D \ge 0.4D\cdot (\eps/8) + (D-0.4D) \cdot (-\eps/16) = D\eps/80 \ge \frac{9}{8}\ln(1+\Delta)>\Phi_0$, so $\Phi_D<0$, which indicates our algorithm has returned \No by \cref{defn:potential-func}.
  Therefore, we have showed that \cref{alg:adaptive-tester} is correct with high probability in the \No case, finishing the proof of correctness.

  It remains to analyze the time complexity.
The time complexity of \cref{alg:adaptive-tester} is dominated by the total time for drawing samples $x_1,\dots,x_L$ from $O_{d-1}^*$ at Line~\ref{line:drawsamples}, as well as the extra time complexity of \cref{lem:findblock} at Line~\ref{line:findblock}.
By the definition at Line~\ref{line:defnOd}, each sample from $O_{d-1}^*$ is drawn by running the algorithm $\mathcal{A}$ of \cref{lem:algomultiple} with $\delta = \eps/16$, $D =\lceil 90\eps^{-1}\ln(1+\Delta)\rceil$, and $k_0+1 = \Omega(\eps k\Delta/n)$ (by \cref{eqn:defnk0}), where $\eps^{-1} = \Theta(\log^4 m) $ (by \cref{eqn:defneps}).
Then, by \cref{lem:algomultiple}, drawing one sample from $O_{d-1}^*$ takes expected time 
$\Oh\left ( \frac{\Delta}{\sqrt{k_0+1}}\cdot \left (D\delta^{-0.5}\log^{5.5}(2\Delta) + D^{1.1}\delta^{-1}\right )\right ) = \Oh\left ( \sqrt{\frac{\Delta n}{k}}\cdot \log^{14.5}n\right ) $.
Over all $D$ iterations, we draw at most $DL = \Oh(D \log^9 n) = \Oh(\log^{14}n) $ samples, in total time $\Oh\left ( \sqrt{\frac{\Delta n}{k}}\cdot \log^{28.5}n\right ) $ (this is worst-case time rather than expected time, due to the abortion at Line~\ref{line:drawsamples}).
The extra time complexity of \cref{lem:findblock} is $\Oh(\frac{n}{k}\cdot \log^9 n)$ per execution, so the total time over all $D$ iterations is $\Oh(D\cdot \frac{n}{k}\cdot \log^9 n) = \Oh(\frac{n}{k}\log^{14} n)$.
In summary, the total time complexity of \cref{alg:adaptive-tester} is 
$ \Oh\left (\sqrt{\frac{\Delta n}{k}} \log^{28.5} n + \frac{n}{k} \log^{14} n\right )$ as claimed. 
\end{proof}

It remains to prove \cref{lem:terminate-fast}:
\begin{proof}[Proof of \cref{lem:terminate-fast}]
    First, if \cref{alg:adaptive-tester} already returned \No in any of the first $d-1$ iterations,  then the desired potential decrement automatically holds since
    $\Phi_{d-1}-\Phi_d = -(d-1) -(-d)= 1>\eps/8$.
Thus, in the following, we assume \cref{alg:adaptive-tester} has finished the first $d-1$ iterations without returning \No. Now we focus on the $d$-th iteration. For the same reason, if \cref{alg:adaptive-tester} returns \No in the $d$-th iteration, then we also have the desired potential decrement $\Phi_{d-1}-\Phi_d \ge 0 - (-d)>\eps/8$.
    
  If $\Pr_{x\sim O_{d-1}^*}[x \neq \bot] < 1/2$, then,  at Line~\ref{line:drawsamples}, the $L = \lceil 5\eps^{-2}\ln n \rceil$ samples drawn from $O_{d-1}^*$ include $\bot$ with at least $1-2^{-L}>1/2$ probability, and thus we return \No in the $d$-th iteration, achieving the desired potential decrement. 
  Hence, in the following, we assume
$\Pr_{x\sim O_{d-1}^*}[x \neq \bot] \ge 1/2$.

    Denote $A^r_{d-1} = \mathcal{A}_r(P_1,T_1,\dots,P_{d-1},T_{d-1})$ for short, where $r$ is the randomness used by the algorithm $\mathcal{A}$.
   By definition of $O^*_{d-1}$ and the assumption that $\Pr_{x\sim O_{d-1}^*}[x \neq \bot] \ge 1/2$, we have 
   \begin{equation}
   \label{eqn:premptyset}
   \Pr_{r}[A_{d-1}^r\neq \emptyset] \ge 1/2.
   \end{equation}

At Line~\ref{line:drawsamples}, we draw $L$ samples
$x_1,\dots,x_L$ from $O^*_{d-1}$, where the failure probability due to abortion is at most $0.1$, by Markov's inequality.
 If the samples $x_1,\dots,x_L$ contain $\bot$,
  then we return \No and achieve the desired potential decrement. 
   Otherwise, we have successfully drawn $L$ independent samples from $x\sim O^*_{d-1}$ conditioned on $x\neq \bot$, which are equivalent to $L$ independent samples from the distribution $O_{d-1}$ defined at Line~\ref{line:defndistribO} (note that $O_{d-1}$ is supported on $[0\dd \Delta)$). Then, by \cref{lem:findblock}, at Line~\ref{line:findblock} we find a good block which contains at least $\frac{k\Delta}{64n}$ almost good positions with respect to $O_{d-1}$, with at least $0.6$ success probability. Assuming this block is indeed good, by \cref{lem:eliminatestep}, at Line~\ref{line:pdtd} we obtain fragments
$P_d,T_d$ of lengths $|P_d|=2\Delta,|T_d|=3\Delta-1$ which satisfy $\Pr_{x\sim O_{d-1}} [x \in B_d  ] \ge \eps$,
where we denote 
\[B_d:= [0\dd \Delta)\setminus  \Occ_{k_0}(P_d,T_d)\]
for short; by the definition of $O_{d-1}$, this means
\begin{equation}
    \mathbf{E}_{r} \left [\frac{|A_{d-1}^r \cap B_d|}{ |A_{d-1}^r|} \, \Bigg \vert \, A_{d-1}^r \neq \emptyset \right ] \ge \eps.
    \label{eqn:temptemp}
\end{equation}
We use \cref{eqn:premptyset} to derive the following slight variant of \cref{eqn:temptemp}:
   \begin{align}
    \mathbf{E}_{r} \left [\frac{|A_{d-1}^r \cap B_d|}{ 1+|A_{d-1}^r|} \right ] &=     \mathbf{E}_{r}\left [\frac{|A_{d-1}^r \cap B_d|}{ 1+|A_{d-1}^r|} \, \Bigg \vert \, A_{d-1}^r \neq \emptyset\right ]\cdot \Pr_r[A_{d-1}^r \neq \emptyset  ] +  0\cdot \Pr_r [A_{d-1}^r = \emptyset  ]\nonumber \\
    & \ge \frac{1}{2}\mathbf{E}_{r}\left [\frac{|A_{d-1}^r \cap B_d|}{ 1+|A_{d-1}^r|} \, \Bigg \vert \, A_{d-1}^r \neq \emptyset\right ] \tag{by \cref{eqn:premptyset}}\\
    & \ge\frac{1}{2}\mathbf{E}_{r}\left [\frac{|A_{d-1}^r \cap B_d|}{2|A_{d-1}^r|} \, \Bigg \vert \, A_{d-1}^r \neq \emptyset\right ]\nonumber \\
      & \ge \eps/4, \label{eqn:expectkill}
   \end{align}
   where the last step used \cref{eqn:temptemp}.
   
Now we consider the behavior of $A^r_{d} = \mathcal{A}_r(P_1,T_1,\dots,P_{d},T_{d})$, where $r$ is the randomness used by the algorithm $\mathcal{A}$.
By \cref{lem:algomultiple} (\cref{item:multiple-condition2}), with at least $1-\delta$ probability, it holds that
$   A^r_d \subseteq \bigcap_{j=1}^d \Occ_{k_0}(P_j,T_j) \subseteq \Occ_{k_0}(P_d,T_d) = [0\dd \Delta)\setminus B_d$; in other words,
   \begin{equation}
   \label{eqn:nonintersect}
   \Pr_{r}[A_{d}^r \cap B_d \neq \emptyset]\le \delta. 
   \end{equation}

 By \cref{eqn:coupling} in \cref{lem:algomultiple}, we have
   \begin{equation}
       \label{eqn:coupleprob}
   \Pr_{r}[A_{d-1}^r\not \supseteq A_d^r] \le \delta/\Delta.
   \end{equation}

   Now we are ready to bound the potential decrement: 
   \begin{align*}
      \Phi_{d-1} - \Phi_d &=   \mathbf{E}_r[\ln(1+|A_{d-1}^r|)] - \mathbf{E}_r[\ln(1+|A_{d}^r|)] \\
      & = - \mathbf{E}_r \left [\ln \left (1 - \frac{|A_{d-1}^r| - |A_{d}^r|}{1+|A_{d-1}^r|}\right )\right ]\\
      & \ge \mathbf{E}_r \left [\frac{|A_{d-1}^r| - |A_{d}^r|}{1+|A_{d-1}^r|}\right ].\tag{by $-\ln(1-x)\ge x$}
   \end{align*}
   We rewrite the numerator using
  $|A_{d-1}^r|= |A_{d-1}^r \setminus B_d|+|A_{d-1}^r \cap B_d|$, and $|A_{d}^r|= |A_{d}^r \setminus Y|+|A_{d}^r \cap Y|$ with $Y = B_d\cap A_{d-1}^r$. After rearranging, we get
  \begin{equation}
      \Phi_{d-1} - \Phi_d \ge \mathbf{E}_r\left [\frac{|A_{d-1}^r\setminus B_d| - |A_{d}^r\setminus (B_d\cap A_{d-1}^r)| }{1+|A_{d-1}^r|}\right ] + \mathbf{E}_r\left [\frac{|A_{d-1}^r\cap B_d|}{1+|A_{d-1}^r|}\right ] - \mathbf{E}_r\left [\frac{|A_{d}^r\cap B_d\cap A_{d-1}^r| }{1+|A_{d-1}^r|}\right ]. \label{eqn:boundthis}
  \end{equation}

   The second term of \cref{eqn:boundthis} is at least $\eps/4$ due to \cref{eqn:expectkill}.
   The third term of \cref{eqn:boundthis} can be bounded as
   \begin{align*}
\mathbf{E}_r\left [\frac{|A_{d}^r\cap B_d\cap A_{d-1}^r| }{1+|A_{d-1}^r|}\right ] \le \mathbf{E}_r\left [\mathbf{1}[A_d^r \cap B_d\neq \emptyset]\cdot \frac{| A_{d-1}^r| }{1+|A_{d-1}^r|} \right ]
      \le \Pr_{r}[A_d^r \cap B_d\neq \emptyset] \le  \delta, 
   \end{align*}
where the last step follows from  \cref{eqn:nonintersect}.
  
   To bound the first term of \cref{eqn:boundthis}, observe that if $A_{d-1}^r \supseteq A_d^r$, then the numerator is 
   \[
 |A_{d-1}^r\setminus B_d| - |A_{d}^r\setminus (B_d\cap A_{d-1}^r)| = |A_{d-1}^r\setminus B_d| - |A_{d}^r\setminus B_d| = |(A_{d-1}^r \setminus A_d^r) \setminus B_d| \ge 0.\]
 Hence, the first term of \cref{eqn:boundthis} is
 \begin{align*}
\mathbf{E}_r\left [\frac{|A_{d-1}^r\setminus B_d| - |A_{d}^r\setminus (B_d\cap A_{d-1}^r)| }{1+|A_{d-1}^r|} \right ] &\ge \mathbf{E}_r\left [\frac{|A_{d-1}^r\setminus B_d| - |A_{d}^r\setminus (B_d\cap A_{d-1}^r)| }{1+|A_{d-1}^r|} \cdot \mathbf{1}[A_{d-1}^r \not \supseteq A_d^r]\right ] \\
& \ge  \mathbf{E}_r\left [-\Delta \cdot \mathbf{1}[A_{d-1}^r \not \supseteq A_d^r]\right ]\\
& \ge  -\Delta \cdot \delta/\Delta = -\delta,
 \end{align*}
where the last step follows from \cref{eqn:coupleprob}.

 Summing up the estimates for the three terms of \cref{eqn:boundthis}, we get
 \[ \Phi_{d-1}-\Phi_d \ge -\delta + \eps/4 - \delta = \eps/8\]
 as desired. This potential decrement holds under the condition that the sampling at Line~\ref{line:drawsamples} did not abort, and the block found by \cref{lem:findblock} at Line~\ref{line:findblock} is good. By a union bound, this holds with probability at least $0.6-0.1=1/2$.
\end{proof}

\section{Adaptive Lower Bound}\label{sec:lb_adaptive}
In this section, we prove the following theorem:
\thmlbadaptive*

We henceforth fix integers $1 \le k \le m \le n$ such that $4\ln(5n) < k \le \frac{m}{4}$ and denote $\Delta \coloneqq n-m+1$ (the case of $k>\frac{m}{4}$ will be handled at the end of this section).

Recall that, for a real parameter $p\in [0,1]$, the \emph{Bernoulli} distribution $\Ber(p)$ is defined so that $\Pr[X=1]=p$ and $\Pr[X=0] = 1-p$ if $X\sim \Ber(p)$. 
We define the following distributions on $(P,T)\in \{0,1\}^{m}\times \{0,1\}^n$ so that $P[i]\sim \Ber(\frac{2k}{m})$ for $i\in \fragmentco{0}{m}$ and $T[j] \sim \Ber(\frac{2k}{m})$ for $j\in \fragmentco{0}{n}$:
\begin{itemize}
    \item $\Random$, where the variables $\{P[i] : i\in \fragmentco{0}{m}\}$ and $\{T[j] : j\in \fragmentco{0}{n}\}$ are all independent;
    \item $\Planted_t$ for $t\in \fragmentco{0}{\Delta}$, where the variables $\{T[j] : j\in \fragmentco{0}{n}\}$ are all independent and $P=T[t\dd t+m)$.
    \item $\Planted$, which is a mixture of distributions $\Planted_t$ for $t\in \fragmentco{0}{\Delta}$ with weights $\frac{1}{\Delta}$ each.
    \item $\Mixed$, which is a mixture of distributions $\Random$ and $\Planted$ with weights $\frac12$ each.
\end{itemize}

By construction, the only correct answer for every instance drawn according to $\Planted$ is \textsc{Yes}.

\begin{observation}\label{obs:planted}
If $(P,T)\sim \Planted$, then $\Occ(P,T)\ne \emptyset$.\lipicsEnd
\end{observation}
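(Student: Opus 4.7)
The plan is to simply unfold the definition of $\Planted$. Since $\Planted$ is by construction the uniform mixture of the distributions $\Planted_t$ over $t \in \fragmentco{0}{\Delta}$, any sample $(P,T)\sim \Planted$ can be viewed as a sample from $\Planted_t$ for some $t\in \fragmentco{0}{\Delta}$. By the definition of $\Planted_t$, the identity $P = T\fragmentco{t}{t+m}$ holds almost surely, and the valid range $\fragmentco{0}{\Delta} = \fragmentcc{0}{n-m}$ is precisely the set of admissible starting positions of a length-$m$ substring of $T$. Hence $t \in \Occ(P,T)$ and, in particular, $\Occ(P,T)\neq \emptyset$.

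There is essentially no obstacle here; the observation is built into the coupling between $P$ and $T$ imposed by $\Planted_t$. The only minor sanity check is that the range constraint $t \in \fragmentco{0}{\Delta}$ does make $T\fragmentco{t}{t+m}$ a well-defined fragment of $T$, which is immediate from $\Delta = n-m+1$. I therefore expect the proof to consist of a single sentence tracing this chain of definitions, with no probabilistic content beyond invoking the support of the mixture.
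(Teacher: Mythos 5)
Your proposal is correct and matches the paper's (implicit) justification: the paper offers no separate proof beyond noting the observation holds ``by construction,'' which is precisely the definition-unfolding argument you give — under $\Planted_t$ we have $P=T\fragmentco{t}{t+m}$ with $t\in\fragmentcc{0}{n-m}$, so $t\in\Occ(P,T)$.
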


Moreover, the only correct answer for most instances drawn according to $\Random$ is \textsc{No}.

\begin{lemma}\label{lem:random}
If $(P,T)\sim \Random$, then $\Pr[\Occ_k(P,T)=\emptyset]> 0.8$.
\end{lemma}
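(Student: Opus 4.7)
}
The plan is to show that, for each fixed candidate position $t\in\fragmentcc{0}{n-m}$, the Hamming distance $\HD(P,T\fragmentco{t}{t+m})$ has mean noticeably larger than $k$, apply a multiplicative Chernoff lower tail to control the probability that it drops to $k$ or below, and then finish with a union bound over the at most $n$ candidates $t$, using the hypothesis $k>4\ln(5n)$.

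First I would fix $t$ and observe that, under $\Random$, the random variables $X_i \coloneqq \mathbf{1}[P[i]\neq T[t+i]]$ for $i\in\fragmentco{0}{m}$ are independent Bernoulli variables, since for distinct $i$ they involve disjoint pairs of underlying independent coins. Each has success probability
\[
p \;=\; 2\cdot \tfrac{2k}{m}\cdot\bigl(1-\tfrac{2k}{m}\bigr) \;\ge\; \tfrac{2k}{m},
\]
using the assumption $k\le \tfrac{m}{4}$, which gives $1-\tfrac{2k}{m}\ge \tfrac12$. Hence $\mu \coloneqq \Exp[\HD(P,T\fragmentco{t}{t+m})]=mp\ge 2k$.

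Next I would apply the standard multiplicative Chernoff lower tail, $\Pr[X\le (1-\delta)\mu]\le \exp(-\delta^2\mu/2)$, with $\delta=\tfrac12$. Since $\mu\ge 2k$, the event $\HD(P,T\fragmentco{t}{t+m})\le k$ is contained in $\{X\le \mu/2\}$, so
\[
\Pr\bigl[\HD(P,T\fragmentco{t}{t+m})\le k\bigr] \;\le\; \exp(-\mu/8) \;\le\; \exp(-k/4).
\]
By the union bound over the $\Delta=n-m+1\le n$ candidate positions,
\[
\Pr[\Occ_k(P,T)\ne\emptyset] \;\le\; n\exp(-k/4) \;<\; n\cdot \exp(-\ln(5n)) \;=\; \tfrac{1}{5},
\]
where we used $k>4\ln(5n)$. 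Thus $\Pr[\Occ_k(P,T)=\emptyset]>0.8$, as required.

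No step looks like a serious obstacle; the only delicate point is ensuring the Chernoff application is clean, which is why I would argue via $p\ge 2k/m$ (rather than computing $p$ exactly) to get the cleanest bound $\mu\ge 2k$ and a slack $\delta=\tfrac12$ that is comfortably inside the regime where the standard multiplicative form of Chernoff applies.
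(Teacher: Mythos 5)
Your proof is correct and follows essentially the same route as the paper: lower-bound the per-position mismatch probability by $\tfrac{2k}{m}$ using $k\le\tfrac{m}{4}$, note independence across positions so the expected Hamming distance is at least $2k$, apply the multiplicative Chernoff lower tail with $\delta=\tfrac12$ to get $\exp(-k/4)<\tfrac{1}{5n}$, and union-bound over the at most $n$ shifts. The only difference is presentational (you make the Chernoff parameters and the containment $\{\HD\le k\}\subseteq\{X\le\mu/2\}$ explicit).
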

\begin{proof}
Let us fix $t\in \fragmentco{0}{\Delta}$. 
Due to the assumption $k \le \frac{m}{4}$, for each $i\in \fragmentco{0}{m}$, we have 
\[\Pr[P[i]\ne T[i+t]] = 2\cdot \tfrac{2k}{m}\cdot (1-\tfrac{2k}{m}) \ge \tfrac{2k}{m},\]
and these events are independent across $i\in \fragmentco{0}{m}$.
Hence $\Exp[\HD(P, T\fragmentco{t}{t+m})]\ge 2k$ and, by the Chernoff bound, 
\[\Pr[\HD(P, T\fragmentco{t}{t+m}) \le k] \le \exp\left(-\tfrac12 \cdot (\tfrac{1}{2})^2 \cdot 2k\right) = \exp(-\tfrac{k}{4}) < \tfrac{1}{5n}.\]
Here, the last inequality is due to $k \ge 4\ln(5n)$.
By the union bound, $\Pr[\Occ_k(P,T)\ne\emptyset] < 0.2$.
\end{proof}

A combination of \cref{obs:planted} and \cref{lem:random} implies that any solution to \cref{pr:tester} must distinguish input distributions $\Planted$ and $\Random$.

We henceforth fix a \emph{deterministic} algorithm $A$ that makes exactly $q$ distinct queries (we append dummy queries when necessary to make the number of queries uniform across all executions).
We define a \emph{run} as the sequence of queries the algorithm makes and the answers it receives. 
Formally, a run is a sequence of $q$ triples $(S,i,a)$ such that $S\in \{P,T\}$, $i\in \fragmentco{0}{|S|}$, and $a=S[i]$.
We denote by $\Runs$ the family of all \emph{admissible} runs, that is, runs describing the execution of $A$ on some input.
Every input distribution $D$ yields a distribution on $\Runs$ so that $\Pr_D[r]$ is the probability that the execution of $A$ is described by $r$ provided that the input is drawn according~to~$D$.

The following lemma says that, unless a mismatch between $P$ and $T\fragmentco{t}{t+m}$ is discovered, it is more likely that the input comes from $\Planted_t$ than from $\Random$. 
The intuitive interpretation of the immediate corollary is that many such mismatches are needed across all $t\in \fragmentco{0}{\Delta}$ in order to confirm the hypothesis that the input comes from $\Random$ rather than $\Planted$.  
\begin{lemma}\label{lem:mismatch}
Consider a run $r\in \Runs$ and a position $t\in \fragmentco{0}{\Delta}$. 
If $r$ contains triples $(P,i,a)$ and $(T,i+t,b)$ for some $i\in \fragmentco{0}{m}$ and $a\ne b$, then $\Pr_{\Planted_t}[r]=0$.
Otherwise, $\Pr_{\Planted_t}[r]\ge \Pr_{\Random}[r]$.
\end{lemma}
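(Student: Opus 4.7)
The plan is to interpret each run probability as a product over queries (using the fact that $A$ is deterministic, so the entire run is determined by the sequence of answers received) and to exploit the constraint $P[i] = T[i+t]$ under $\Planted_t$ to merge aligned $P$- and $T$-queries into queries about a single underlying Bernoulli variable.

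For the first assertion, under $\Planted_t$ every $i \in \fragmentco{0}{m}$ satisfies $P[i] = T[i+t]$. Thus, if $r$ contains both $(P,i,a)$ and $(T,i+t,b)$, any instance drawn from $\Planted_t$ consistent with $r$ must satisfy $a = P[i] = T[i+t] = b$; the hypothesis $a \ne b$ makes this impossible, so $\Pr_{\Planted_t}[r] = 0$.

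For the second assertion, let $Q_P \subseteq \fragmentco{0}{m}$ and $Q_T \subseteq \fragmentco{0}{n}$ denote the positions queried in $P$ and $T$, with answers $a_i$ and $b_j$ specified by $r$, and let $p_1 = \tfrac{2k}{m}$ and $p_0 = 1 - p_1$. Under $\Random$, the $|Q_P| + |Q_T|$ queried characters are independent $\Ber(\tfrac{2k}{m})$ variables, so
\[\Pr_{\Random}[r] = \prod_{i \in Q_P} p_{a_i}\prod_{j \in Q_T} p_{b_j}.\]
Under $\Planted_t$, the entire instance is determined by $T$ alone; setting $U \coloneqq Q_T \cup (Q_P + t) \subseteq \fragmentco{0}{n}$, the no-mismatch hypothesis ensures that for every $j \in U$ the required value $v_j \in \{0,1\}$ for $T[j]$ is well defined (with $a_{j-t} = b_j$ on the overlap), and since $T$-characters are independent Bernoullis,
\[\Pr_{\Planted_t}[r] = \prod_{j \in U} p_{v_j}.\]
Comparing the two expressions, $\Pr_{\Random}[r]$ contains exactly the same factors as $\Pr_{\Planted_t}[r]$ together with an extra factor $p_{v_j} \in (0, 1]$ for each $j \in Q_T \cap (Q_P + t)$; hence $\Pr_{\Planted_t}[r] \ge \Pr_{\Random}[r]$.

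The principal subtlety is justifying the factorizations in the presence of adaptive queries: because $A$ is deterministic, the $i$-th query is a function of the first $i-1$ answers, and $\Pr_D[r]$ equals the joint probability that the predetermined sequence of queries receives the specified answers. Under $\Planted_t$, a query whose position lies in $Q_T \cap (Q_P + t)$ (in either order) is redundant with an earlier query and, given the no-mismatch condition, contributes a conditional probability of $1$; thus the product over $U$ captures each of the independent Bernoulli trials underlying $\Planted_t$ exactly once, which is what makes the counting above tight.
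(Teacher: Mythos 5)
Your proposal is correct and follows essentially the same route as the paper: both reduce $\Pr_D[r]$ to the probability that all queried characters take the specified values (using determinism and admissibility), observe that aligned contradictory queries are impossible under $\Planted_t$, and note that each aligned agreeing pair contributes a factor $p_v$ under $\Planted_t$ versus $p_v^2$ under $\Random$ while all other queried characters contribute identically. Your explicit product over $U=Q_T\cup(Q_P+t)$ is just a slightly more global phrasing of the paper's pair-by-pair comparison.
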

\begin{proof}
Since $A$ is deterministic and $r$ is admissible, we have $\Pr_D[r] = \Pr_D[\bigwedge_{(S,i,a)\in r} S[i]=a]$.
By definition of $\Planted_t$, we have $\Pr_{\Planted_t}[P[i]=a \wedge T[i+t]=b]=0$ for every $i\in \fragmentco{0}{m}$ and $a\ne b$.
On the other hand, \[\Pr_{\Planted_t}[P[i]=0 \wedge T[i+t]=0]=(1-\tfrac{2k}{m}) \ge (1-\tfrac{2k}{m})^2 = \Pr_{\Random}[P[i]=0 \wedge T[i+t]=0]\]
and 
\[\Pr_{\Planted_t}[P[i]=1 \wedge T[i+t]=1]=\tfrac{2k}{m} \ge (\tfrac{2k}{m})^2 = \Pr_{\Random}[P[i]=1 \wedge T[i+t]=1].\]
The remaining characters are independent and distributed identically under both $\Planted_t$ and $\Random$, and thus they do not affect the ratio between $\Pr_{\Planted_t}[r]$ and $\Pr_{\Random}[r]$.
\end{proof}
\begin{corollary}\label{cor:mismatch}
Every run $r\in \Runs$ satisfies
\[\tfrac{\Pr_{\Planted}[r]}{\Pr_{\Random}[r] } \ge 1-\tfrac{1}{\Delta}\left(|\{i: (P,i,0)\in r\}|\cdot |\{j: (T,j,1)\in r\}|+|\{i: (P,i,1)\in r\}|\cdot |\{j: (T,j,0)\in r\}|\right).\]
\end{corollary}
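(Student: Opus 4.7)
The plan is to combine the mixture structure of $\Planted$ with the per-shift dichotomy from \cref{lem:mismatch}. Since $\Planted$ is the uniform mixture of $\Planted_t$ for $t\in\fragmentco{0}{\Delta}$, I would first write
\[\Pr_{\Planted}[r] = \frac{1}{\Delta}\sum_{t\in\fragmentco{0}{\Delta}} \Pr_{\Planted_t}[r].\]
Let $\Bad \subseteq \fragmentco{0}{\Delta}$ denote the set of ``bad'' shifts $t$ for which $r$ contains triples $(P,i,a)$ and $(T,i+t,b)$ with $a\ne b$. Then \cref{lem:mismatch} says that every term indexed by $t\in \Bad$ vanishes, while every other term is at least $\Pr_{\Random}[r]$. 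Combining these bounds yields
\[\frac{\Pr_{\Planted}[r]}{\Pr_{\Random}[r]} \;\ge\; \frac{\Delta - |\Bad|}{\Delta} \;=\; 1 - \frac{|\Bad|}{\Delta},\]
reducing the task to an upper bound on $|\Bad|$.

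The second step is an injective counting argument. Each $t\in \Bad$ can be recovered as $t=j-i$ from any witness pair $(P,i,a),(T,j,b)\in r$ with $a\ne b$. Since the answers $a,b$ are binary, such witness pairs split into two disjoint types, $(a,b)=(0,1)$ and $(a,b)=(1,0)$, and the total number of witness pairs across both types is exactly
\[|\{i:(P,i,0)\in r\}|\cdot|\{j:(T,j,1)\in r\}| \;+\; |\{i:(P,i,1)\in r\}|\cdot|\{j:(T,j,0)\in r\}|.\]
Because each $t\in\Bad$ admits at least one witness pair and distinct $t$'s give disjoint witness sets (as $t=j-i$ is recoverable), $|\Bad|$ is upper-bounded by this quantity. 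Substituting into the previous display finishes the proof.

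I do not anticipate any serious obstacle: the only sanity check is that $\Pr_{\Random}[r]>0$ so that the ratio is well-defined, which holds because $\frac{2k}{m}\in(0,1)$ under the hypothesis $0<k\le \frac{m}{4}$, so every binary outcome has strictly positive probability under $\Random$ and, since $A$ is deterministic, $\Pr_D[r]=\Pr_D\big[\bigwedge_{(S,i,a)\in r} S[i]=a\big]>0$ for every admissible $r$. The only genuinely substantive step is the injectivity underlying the counting bound, but it is immediate from $t = j - i$.
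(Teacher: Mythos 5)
Your proposal is correct and follows essentially the same route as the paper: decompose $\Pr_{\Planted}[r]$ as the uniform mixture over $t$, apply \cref{lem:mismatch} to each term, and bound the number of bad shifts by the number of mismatching query pairs, which for binary answers is exactly the displayed quantity. The extra remarks (injectivity via $t=j-i$, positivity of $\Pr_{\Random}[r]$) are fine but add nothing beyond what the paper's one-line argument already implicitly uses.
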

\begin{proof}
Recall that $\Pr_{\Planted}[r] =\tfrac{1}{\Delta}\sum_{t=0}^{\Delta-1} \Pr_{\Planted_t}[r]$.
By \cref{lem:mismatch}, $\Pr_{\Planted_t}[r] \ge \Pr_{\Random}[r]$ holds unless $(P,i,a),(T,i+t,b)$ belong to $r$ for some $i\in \fragmentco{0}{m}$ and $a\ne b$. The number of such indices $t\in \fragmentco{0}{\Delta}$ is upper-bounded by the number of pairs $(P,i,a),(T,j,b)\in r$ with $a\ne b$.
\end{proof}

In the next lemma, we consider the set $\Bad\subseteq \Runs$ of runs which are significantly more likely under $\Random$ than under $\Planted$. 
Intuitively, these are the only runs for which the $\textsc{No}$ answer does not contribute too much to the overall error probability.
Our strategy is to show that, with insufficiently many queries and $\Random$ input distribution, the algorithm is unlikely to encounter a run in $\Bad$.
\begin{lemma}\label{lem:bad}
Let $\Bad=\{r \in \Runs : \Pr_\Planted[r] < \tfrac12 \Pr_\Random[r]\}$. If $q < \tfrac{1}{55}\left(\sqrt{\tfrac{m\Delta}{k}}+\tfrac{n}{k}\right)$, then $\Pr_\Random[\Bad] \le 0.4$.
\end{lemma}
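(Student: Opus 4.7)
The plan is to translate membership in $\Bad$ into a purely combinatorial event on a run via Corollary~\ref{cor:mismatch}. If $r\in\Bad$, the corollary forces
\[
X(r) \;\coloneqq\; |\{i:(P,i,0)\in r\}|\cdot|\{j:(T,j,1)\in r\}|+|\{i:(P,i,1)\in r\}|\cdot|\{j:(T,j,0)\in r\}| \;>\; \tfrac{\Delta}{2},
\]
for otherwise Corollary~\ref{cor:mismatch} would yield $\Pr_\Planted[r]/\Pr_\Random[r]\ge 1/2$. Hence it suffices to bound $\Pr_\Random[X>\Delta/2]$. Under $\Random$, each of the $q$ answers is an independent $\Ber(2k/m)$ bit, so the total number of $1$-answers $S=Q_{P,1}+Q_{T,1}$ is distributed as $\mathrm{Bin}(q,2k/m)$ with mean $\mu = 2kq/m$. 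Writing $q_P,q_T$ for the numbers of queries to $P$ and $T$, the deterministic inequality $X\le q_P\,Q_{T,1}+q_T\,Q_{P,1}\le qS$ reduces the task to controlling the binomial tail $\Pr[S>\Delta/(2q)]$.

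I would then dichotomize by which term of the hypothesis $q<\tfrac{1}{55}\bigl(\sqrt{m\Delta/k}+n/k\bigr)$ governs~$q$. When $q\le\tfrac{2}{55}\sqrt{m\Delta/k}$, Markov's inequality already closes the case: $\Pr[qS>\Delta/2]\le 2q\,\mu/\Delta = 4kq^2/(m\Delta)\le 16/55^2<0.01$. Otherwise the hypothesis forces $q<\tfrac{2n}{55k}$, and I would further split based on whether the threshold $\Delta/(2q)$ falls below $1$. If $q>\Delta/2$ (threshold below~$1$), then $\{S>\Delta/(2q)\}=\{S\ge 1\}$, and combining $q>\Delta/2$ with $q<2n/(55k)$ forces $n\le 55km/(55k-4)\le 55m/51$, so that $\mu<4n/(55m)\le 4/51$ stays below a small constant, and the trivial bound $\Pr[S\ge 1]\le\mu$ is enough.

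The principal obstacle is the remaining regime, where simultaneously $q>\tfrac{2}{55}\sqrt{m\Delta/k}$ and $q\le\Delta/2$. Here I would exploit the identity $\mu\cdot\Delta/(2q)=k\Delta/m$, which pins down the product of mean and threshold, together with the Case~2 lower bound $q>\tfrac{2}{55}\sqrt{m\Delta/k}$ giving $\Delta/(2q\mu) = m\Delta/(4kq^2)\le 55^2/16$. The sub-regime $\mu\le 0.4$ is handled by $\Pr[S\ge 2]\le\binom{q}{2}(2k/m)^2\le\mu^2/2\le 0.08$, while in the complementary sub-regime a standard Chernoff tail $\Pr[S\ge t]\le(e\mu/t)^t$ for $t\ge\mu$ applied with $t=\Delta/(2q)$ should deliver the required $\le 0.4$. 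The hardest part will be ensuring that the constant $55$ in the hypothesis was chosen with just enough slack for the Chernoff factors to remain below $0.4$ uniformly across the whole range $1<\Delta/(2q\mu)\le 55^2/16$.
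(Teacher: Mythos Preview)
Your reduction to the event $\{qS>\Delta/2\}$ via the bound $X\le q_P\,Q_{T,1}+q_T\,Q_{P,1}\le qS$ is too coarse, and the proof breaks in the regime where the $n/k$ term dominates. Concretely, suppose $\Delta\gg mk$ (e.g.\ $m=800$, $k=200$, $\Delta=10^{10}$, which satisfies $4\ln(5n)<k\le m/4$). The hypothesis then allows $q$ of order $n/k\approx\Delta/k$, say $q=5\times 10^5$. With $2k/m=1/2$, you get $\mu=2kq/m=2.5\times 10^5$ while the threshold is $t=\Delta/(2q)=10^4$, so $t\ll\mu$ and $\Pr[S>t]$ is essentially $1$. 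Your Chernoff step explicitly assumes $t\ge\mu$, and your stated range ``$1<\Delta/(2q\mu)\le 55^2/16$'' has no justified lower endpoint: nothing in the hypothesis forces $t/\mu=m\Delta/(4kq^2)$ to exceed~$1$.

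The missing idea is that one must exploit the asymmetry between $P$ and $T$: the algorithm can read at most $m$ characters of $P$, and the number of ones among those is at most the Hamming weight of $P$, whose mean is $2k$. The paper therefore works with the sharper inequality $X\le q_P\,Q_{T,1}+Q_{P,1}\,q_T\le m\,Q_{T,1}+Q_{P,1}\,q$ and adds, besides the event $\{S\ge 10kq/m\}$, the event $\{Q_{P,1}\ge 10k\}$ to its ``bad'' set $\Bad'$. Both events have probability at most $0.2$ by Markov. On the complement, $X\le m\cdot\tfrac{10kq}{m}+10k\cdot q=20kq$, which together with $X>\Delta/2$ forces $q\ge\Delta/(40k)$; combined with $q\ge m/(10k)$ (from $S\ge 1$) this yields $q\gtrsim n/k$. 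Your argument recovers the $\sqrt{m\Delta/k}$ part but cannot reach the $n/k$ part without separating out $Q_{P,1}$.
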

\begin{proof}
Define $\Bad'\subseteq \Runs$ be the set of admissible runs that read at least $\frac{10kq}{m}$ ones in total or at least $10k$ ones in $P$.
If the input distribution is $\Random$, then $A$ makes $q$ queries and each answer is distributed according to $\Ber(\frac{2k}{m})$ independently from the previous answers. 
Thus, the expected number of ones read is $\frac{2kq}{m}$ and, by Markov's inequality, the probability that $A$ reads at least $\frac{10kq}{m}$ ones does not exceed $0.2$.
Similarly, the expected total number of ones in $P$ is $2k$, and the probability that there are at least $10k$ ones in $P$ does not exceed $0.2$.
The union bound implies $\Pr_\Random[\Bad']\le 0.4$.

Consequently, it suffices to prove that the upper bound on $q$ implies $\Bad\subseteq \Bad'$.
For a proof by contradiction, we fix a run $r\in \Bad \setminus \Bad'$ that reads $q_P$ characters from $P$, including $q'_P$ ones, and $q_T$ characters from $T$, including $q'_T$ ones.
By \cref{cor:mismatch}, the assumption $r\in \Bad$ implies 
\begin{equation}\label{eq:mismatch}
\frac{1}{2} \ge \frac{\Pr_{\Planted}[r]}{\Pr_{\Random}[r] } \ge 1-\frac{(q_P-q'_P)\cdot q'_T + q'_P \cdot (q_T-q'_T)}{\Delta} \ge 1-\frac{q_P\cdot q'_T + q'_P \cdot q_T}{\Delta}.\end{equation}
Denote $q'\coloneqq q'_P+q'_T$ and recall that $q=q_P+q_T$. Now, \eqref{eq:mismatch} implies
\begin{equation}\label{eq:qqprim}
    q\cdot q' \ge q_P\cdot q'_T + q'_P \cdot q_T \ge \tfrac{\Delta}{2}.
\end{equation}
The assumption $r\notin \Bad'$ yields $q' \le \frac{10kq}{m}$, and thus \eqref{eq:qqprim} implies
\begin{equation}\label{eq:qsquared}
    q \cdot \tfrac{10kq}{m} \ge \tfrac{\Delta}{2},\qquad\text{that is,}\qquad q  \ge \sqrt{\tfrac{m\Delta}{20k}}.
\end{equation}
Equation \eqref{eq:qqprim} also implies $q'>0$. Since $q'$ is an integer and  $q' \le \frac{10kq}{m}$, we derive
\begin{equation}\label{eq:qmk}
    q  \ge \tfrac{m}{10k}.
\end{equation}
The assumption $r\notin \Bad'$ yields $q'_P \le 10k$ and, due to $q_P \le m$, $q'_T \le q' \le \frac{10kq}{m}$, $q_T\le q$, \eqref{eq:qqprim} also implies
\begin{equation}\label{eq:20kq}
    20kq = m \cdot \tfrac{10kq}{m} + 10k\cdot q\ge q_P\cdot q'_T + q'_P \cdot q_T \ge \tfrac{\Delta}{2},\qquad\text{that is,}\qquad q \ge \tfrac{\Delta}{40k}.
\end{equation}
A convex combination of \eqref{eq:qmk} and \eqref{eq:20kq} yields
\begin{equation}\label{eq:nk}
    q \ge \tfrac15 \cdot \tfrac{m}{10k}+\tfrac45 \cdot \tfrac{\Delta}{40k} = \tfrac{n+1}{50k} > \tfrac{n}{50k}.
\end{equation}
With a convex combination of \eqref{eq:qsquared} and \eqref{eq:nk}, we obtain 
\begin{equation}
    q \ge \tfrac1{11} \cdot \tfrac{1}{\sqrt{20}}\cdot \sqrt{\tfrac{m\Delta}{k}} + \tfrac{10}{11} \cdot \tfrac{1}{50}\cdot \tfrac{n}{k}\ge \tfrac1{55}\left(\sqrt{\tfrac{m\Delta}{k}}+\tfrac{n}{k}\right),
\end{equation}
which is the desired contradiction.
\end{proof}

We are now ready to prove that the algorithm $A$ needs to make many queries to achieve low error probability on $\Mixed$.

\begin{proposition}\label{prp:lb_adaptive}
    Every deterministic algorithm solving \cref{pr:tester} correctly with probability at least $0.9$ for the input distribution $\Mixed$ makes at least $\tfrac{1}{55}\left(\sqrt{\tfrac{m\Delta}{k}}+\tfrac{n}{k}\right)$ queries to the strings~$P$~and~$T$.
\end{proposition}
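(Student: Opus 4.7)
The plan is to argue by contradiction: assume that a deterministic algorithm $A$ makes $q<\tfrac{1}{55}\big(\sqrt{m\Delta/k}+n/k\big)$ queries yet still attains $\Pr_\Mixed[\mathrm{succ}]\ge 0.9$; I will derive $\Pr_\Mixed[\mathrm{err}]>0.1$, a contradiction. Writing $p_D \coloneqq \Pr_D[A=\Yes]$ for $D\in\{\Planted,\Random\}$ and letting $R_\No$ collect the runs on which $A$ answers $\No$, my starting point is the decomposition
\[\Pr_\Mixed[\mathrm{err}] \ge \tfrac12 \Pr_\Planted[A=\No] + \tfrac12 \Pr_\Random[A=\Yes \wedge \Occ_k(P,T)=\emptyset],\]
which is valid because \cref{obs:planted} ensures that $\No$ is always wrong under $\Planted$.

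The Random summand is lower-bounded by $p_\Random-0.2$ directly from \cref{lem:random}, which supplies the strict bound $\Pr_\Random[\Occ_k(P,T)\ne\emptyset]<0.2$. The Planted summand is where \cref{lem:bad} does the work: the hypothesis on $q$ yields $\Pr_\Random[\Bad]\le 0.4$, and the pointwise inequality $\Pr_\Planted[r]\ge \tfrac12 \Pr_\Random[r]$ that holds for every $r\in\Runs\setminus\Bad$ lets me write
\[\Pr_\Planted[A=\No] \ge \Pr_\Planted[R_\No\setminus\Bad] \ge \tfrac12 \Pr_\Random[R_\No\setminus\Bad] \ge \tfrac12\big((1-p_\Random)-0.4\big)= 0.3-\tfrac12 p_\Random.\]

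Substituting these two estimates and splitting into the cases $p_\Random \le 0.2$ and $p_\Random > 0.2$ shows that the resulting piecewise-linear lower bound on $\Pr_\Mixed[\mathrm{err}]$ is minimized at the crossover $p_\Random=0.2$, where the two contributions balance at exactly $0.1$. For $p_\Random<0.2$ the Planted summand alone gives a strict $\tfrac12(0.3-\tfrac12 p_\Random)>0.1$; for $p_\Random>0.2$ the extra Random term $\tfrac12(p_\Random-0.2)>0$ tips the sum strictly above $0.1$. The only tight spot -- and the main obstacle I anticipate -- is the crossover $p_\Random=0.2$ itself, where the needed positive slack is supplied precisely by the strict inequality in \cref{lem:random}: because $\Pr_\Random[\Occ_k(P,T)\ne\emptyset]<0.2$ strictly, the Random summand is strictly positive, promoting ``$\ge 0.1$'' to ``$>0.1$'' and yielding the desired contradiction.
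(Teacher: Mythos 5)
Your proof is correct and follows essentially the same route as the paper's: both combine \cref{obs:planted}, \cref{lem:random}, and \cref{lem:bad} with the pointwise bound $\Pr_\Planted[r]\ge\tfrac12\Pr_\Random[r]$ on $\Runs\setminus\Bad$ applied to the \No-runs. The paper's bookkeeping in terms of the two error probabilities $e_R$ and $e_P$ yields $e_P>0.2-e_R$ in one line and thereby avoids your case analysis around the crossover $p_\Random=0.2$, but the content is identical.
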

\begin{proof}
Fix a deterministic algorithm $A$ with error probabilities $e_{R}$ and $e_{P}$ under $\Random$ and $\Planted$, respectively.
For a proof by contradiction, suppose that $A$ makes $q < \tfrac{1}{55}\left(\sqrt{\tfrac{m\Delta}{k}}+\tfrac{n}{k}\right)$ queries.

Moreover, let $\RunsNo\subseteq \Runs$ be the set of runs on which $A$ returns \textsc{No}.
By \cref{lem:random}, the answer \textsc{No} is required on $\Random$ with probability strictly larger than $0.8$,
so $\Pr_\Random[\RunsNo] > 0.8-e_R$.
By \cref{obs:planted}, all \textsc{No} answers are incorrect under $\Planted$, so $e_P\ge \Pr_\Planted[\RunsNo]$.

Due to \cref{lem:bad}, we further have $\Pr_\Random[\RunsNo\setminus \Bad] > 0.8-e_R - 0.4 = 0.4-e_R$.
By definition of~$\Bad$, this means that
\[e_P \ge \Pr_{\Planted}[\RunsNo]\ge \Pr_{\Planted}[\RunsNo\setminus \Bad] \ge \tfrac12 \Pr_\Random[\RunsNo\setminus \Bad]> 0.2-\tfrac{e_R}{2} \ge 0.2-e_R.\]
Hence, the error probability of $A$ on $\Mixed$, which equals $\tfrac12(e_P+e_R)$, is strictly larger than $0.1$.
\end{proof}

\cref{thm:lb_adaptive} restricted to $k \le \frac{m}{4}$ follows from \cref{prp:lb_adaptive} by Yao's principle.
For $\frac{m}{4} < k < m$, we derive \cref{thm:lb_adaptive} from the following result proved in \cref{sec:lb_large} using a similar approach:
\begin{restatable}{theorem}{thmlbadaptivelargek}\label{thm:lb_adaptive-largek}
    Fix integer parameters $1 \le m \le n$. 
    Every algorithm solving \cref{pr:tester} with $k=m-1$ on strings from the alphabet $\Sigma = [1\dd 10n^2]$ correctly with probability at least $0.8$ makes at least $\frac1{6}\left(\sqrt{n-m+1}+\frac{n}{m}\right)$ queries to the input strings $P$ and $T$.\lipicsEnd
\end{restatable}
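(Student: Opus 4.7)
The plan is to apply Yao's principle along the lines sketched in \cref{sec:overview} for the warm-up case $k=m-1$. Fix $\Delta\coloneqq n-m+1$ and define two distributions over $\Sigma^m\times \Sigma^n$: under $D_\No$ the strings $P$ and $T$ are drawn independently and uniformly from $\Sigma^m$ and $\Sigma^n$, while under $D_\Yes$ one first samples $t\in \fragmentco{0}{\Delta}$ and $T\in \Sigma^n$ uniformly at random and sets $P\coloneqq T\fragmentco{t}{t+m}$. By construction $\Occ(P,T)\ne \emptyset$ always holds under $D_\Yes$, and under $D_\No$ a birthday bound against the alphabet size $|\Sigma|=10n^2$ shows that with probability at least $0.9$ all $n+m\le 2n$ characters of $P\cdot T$ are pairwise distinct; in that event $\HD(P,T\fragmentco{i}{i+m})=m>k$ for every $i$, and hence $\Occ_k(P,T)=\emptyset$. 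So any $0.8$-correct algorithm must separate the two distributions with a constant total-variation advantage on the $\tfrac12/\tfrac12$ mixture.

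Fix a deterministic algorithm $A$ making $q$ queries and consider its possible runs $r\in \Runs$. Denote by $Q_P(r)\subseteq \fragmentco{0}{m}$ and $Q_T(r)\subseteq \fragmentco{0}{n}$ the queried positions and, for every $t$, set $Q_t(r)\coloneqq \{i\in Q_P(r):i+t\in Q_T(r)\}$. A direct computation of the planted coupling yields
\[
\Pr_{\Planted_t}[r]=\begin{cases}|\Sigma|^{|Q_t(r)|-q} & \text{if $P[i]=T[i+t]$ for all $i\in Q_t(r)$,}\\ 0 & \text{otherwise,}\end{cases}
\]
while $\Pr_{\Random}[r]=|\Sigma|^{-q}$. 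Calling a run \emph{collision-free} if no queried pair $(P[i],T[j])$ has $P[i]=T[j]$, a union bound over the $\le \binom{n+m}{2}\le 2n^2$ such pairs gives $\Pr_{\Random}[\text{collision-free}]\ge 0.9$, and on every collision-free run
\[
\Pr_{\Planted}[r]=\Pr_{\Random}[r]\cdot \frac{\Delta - D(r)}{\Delta},\qquad D(r)\coloneqq |\{t\in \fragmentco{0}{\Delta}:Q_t(r)\ne \emptyset\}|.
\]

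The final step bounds $D(r)$ combinatorially: each pair $(i,j)\in Q_P(r)\times Q_T(r)$ contributes a single difference $t=j-i\in \fragmentco{0}{\Delta}$, so $D(r)\le |Q_P(r)|\cdot |Q_T(r)|$, and since $|Q_P(r)|\le m$ also $D(r)\le m\cdot |Q_T(r)|\le mq$. Splitting into the cases $q\le m$ (where AM-GM gives $|Q_P(r)|\cdot |Q_T(r)|\le q^2/4$) and $q>m$ (where the linear bound $mq$ is effective), the hypothesis $q<\tfrac{1}{6}(\sqrt{\Delta}+n/m)$ forces $D(r)<\Delta/2$ on every run, so $\Pr_{\Planted}[r]\ge \tfrac12\Pr_{\Random}[r]$ for every collision-free run. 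This caps the total-variation distance between the induced run distributions strictly below the constant required of a $0.8$-correct algorithm, once the $\ge 0.1$ slack coming from the birthday event under $D_\No$ is folded in, yielding the desired contradiction via Yao's principle.

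The main obstacle will be pinning down the constants so that the clean bound $\tfrac{1}{6}(\sqrt{\Delta}+n/m)$ emerges: the quantitative passage from the hypothesis on $q$ to the two inequalities on $D(r)$ requires a careful case split at $q\approx m$ and the use of $n/m\le 1+\Delta/m$ to turn the geometric/arithmetic bounds into a bound involving $n/m$, and the collision-probability, the $\tfrac12$ lower bound on $\Pr_{\Planted}[r]/\Pr_{\Random}[r]$, and the error budget of a $0.8$-correct algorithm must all be balanced simultaneously.
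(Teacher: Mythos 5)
Your construction and likelihood-ratio comparison are the same as the paper's: planted-versus-independent uniform strings over $\Sigma=[1\dd 10n^2]$, Yao's principle, and a bound on the number of shifts $t$ for which the run can certify $\Pr_{\Planted_t}[r]<\Pr_{\Random}[r]$, with the final case analysis $\min(mq,q^2)<\Delta/2$ matching the paper's. However, as written your final accounting does not close, and the reason is quantitative, not cosmetic. You route the argument through the event that the run is \emph{collision-free} and subtract its failure probability from the error budget. First, the union bound over $\binom{n+m}{2}\le 2n^2$ pairs with collision probability $(10n^2)^{-1}$ gives failure probability $\le 0.2$, so the collision-free event has probability at least $0.8$, not $0.9$. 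More importantly, the budget in this lower bound is exactly tight: a $0.8$-correct algorithm on the mixture has $e_P+e_R\le 0.4$, and the paper derives $e_P\ge\tfrac12\Pr_{\Random}[\RunsNo]>\tfrac12(0.8-e_R)$, i.e.\ $e_P+e_R>0.4$, with no slack. If you first discard the collision mass (whether $0.1$ or $0.2$) from $\Pr_{\Random}[\RunsNo]$ before applying the factor $\tfrac12$, you only get $e_P+e_R>0.3$ (or $>0.2$), which is consistent with $0.8$-correctness and yields no contradiction.

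The fix is that the conditioning is unnecessary, and your own exact computation of $\Pr_{\Planted_t}[r]$ shows why: for a shift $t$ with $Q_t(r)\ne\emptyset$ on which all queried pairs happen to \emph{agree}, you have $\Pr_{\Planted_t}[r]=|\Sigma|^{|Q_t(r)|-q}\ge|\Sigma|^{-q}=\Pr_{\Random}[r]$, so collisions only increase $\Pr_{\Planted}[r]$. Hence the inequality $\Pr_{\Planted}[r]\ge\frac{\Delta-D(r)}{\Delta}\Pr_{\Random}[r]$ holds for \emph{every} run, not just collision-free ones (this is exactly the paper's \cref{lem:mismatch-largek,cor:mismatch-largek}, which count only shifts where a genuine mismatch $a\ne b$ is observed and bound even that larger-looking count by $\min(mq,q^2)$). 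Dropping the collision-free conditioning restores the tight chain $e_P>\tfrac12(0.8-e_R)$ and completes the proof; the birthday bound is then needed only once, to establish $\Pr_{\Random}[\Occ_k(P,T)=\emptyset]>0.8$ (where your argument, with the corrected constant, suffices, though the paper's per-shift Markov bound gives the cleaner $\ge 0.9$).
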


We note that every algorithm solving \cref{pr:tester} for any $k\in \fragmentco{1}{m}$ also solves the problem for $k=m-1$.
Thus, \cref{thm:lb_adaptive-largek} yields a lower bound of $\frac1{6}\left(\sqrt{n-m+1}+\frac{n}{m}\right)$.
For $k > \frac{m}{4}$, we have 
\[\tfrac{1}{55}\!\left(\sqrt{\tfrac{m(n-m+1)}{k}}+\tfrac{n}{k}\right) < \tfrac{1}{55}\!\left(2\sqrt{n-m+1}+\tfrac{4n}{m}\right) < \tfrac{4}{55}\left(\sqrt{n-m+1}+\tfrac{n}{m}\right) < \tfrac{1}{6}\left(\sqrt{n-m+1}+\tfrac{n}{m}\right),\]
and thus \cref{thm:lb_adaptive} follows.
\section{Non-Adaptive Lower Bound}\label{sec:lbnonadaptive}
In this section, we prove the following theorem:
\thmlbnonadaptive*

We henceforth fix integers $1 \le k \le m \le n$ such that $4\ln(5n) < k \le \frac{m}{4}$   (the case of $k>\frac{m}{4}$ will be handled at the end of this section) and denote 
\[\Delta \coloneqq n-m+1\qquad\text{and}\qquad s\coloneqq\min(m,\max(4k,\Delta)).\]
Define a distribution $\Sparse$ on $S\in \{0,1\}^s$ so that $S[i] \sim \Ber(\frac{2k}{s})$ and the variables $\{S[i] : i\in \fragmentco{0}{s}\}$ are independent.
For integer parameters $p\in \fragmentcc{0}{m-s}$ and $t\in \fragmentco{0}{\Delta}$, define three distributions on $(P,T)\in \{0,1\}^{m}\times \{0,1\}^n$, where $P=0^p \cdot S_P \cdot 0^{m-s-p}$ and $T=0^{p+t} \cdot S_T \cdot 0^{n-s-p-t}$:
\begin{itemize}
    \item $\Positive_{p,t}$, where $S_P=S_T\sim \Sparse$;
    \item $\Negative_{p,t}$, where $S_P,S_T\sim \Sparse$ are independent;
    \item $\Hybrid_{p,t}$ which is the mixture of distributions $\Positive_{p,t}$ and $\Negative_{p,t}$ with equal weights $\frac12$.
\end{itemize}
Moreover, let $\Hybrid_p$ be the mixture of distributions $\Hybrid_{p,t}$ for $t\in \fragmentco{0}{\Delta}$ with equal weights $\frac{1}{\Delta}$, and let $\Hybrid$ be the mixture of distributions $\Hybrid_{p}$ for $p\in \fragmentcc{0}{m-s}$ with equal weights $\frac{1}{m-s+1}$.

By construction, the only correct answer for every instance drawn according to $\Positive_{p,t}$ is \textsc{Yes}.
   
\begin{observation}\label{obs:positive}
If $(P,T)\sim \Positive_{p,t}$ for $p\in \fragmentcc{0}{m-s}$ and $t\in \fragmentco{0}{\Delta}$, then $\Occ(P,T)\ne \emptyset$.\lipicsEnd
\end{observation}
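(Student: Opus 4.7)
The plan is to exhibit $t$ itself as an exact occurrence of $P$ in $T$, which immediately gives $\Occ(P,T) \ne \emptyset$ under $\Positive_{p,t}$. By construction, $P = 0^p \cdot S_P \cdot 0^{m-s-p}$ and $T = 0^{p+t} \cdot S_T \cdot 0^{n-s-p-t}$, and in the support of $\Positive_{p,t}$ the components satisfy $S_P = S_T$. First I would confirm that all the lengths are consistent: the condition $p \in \fragmentcc{0}{m-s}$ gives $0 \le p$ and $p+s \le m$, while $t \in \fragmentco{0}{\Delta}$ with $\Delta = n-m+1$ forces $t \le n-m$, so $t+m \le n$ and $p+t+s \le n$. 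In particular, $T\fragmentco{t}{t+m}$ is a genuine length-$m$ fragment of $T$, and the block decompositions of $P$ and of this fragment align with one another, because shifting $T$ by $t$ cancels the extra $0^t$ at the front of $T$ relative to $P$.

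Next I would verify $P[i] = T[t+i]$ for every $i \in \fragmentco{0}{m}$, splitting into the three natural ranges arising from the block decomposition. For $i \in \fragmentco{0}{p}$, both sides equal $0$: $i$ lies in the left padding of $P$, while $t+i < p+t$ lies in the left padding of $T$. For $i \in \fragmentco{p}{p+s}$, we have $P[i] = S_P[i-p]$ and $T[t+i] = S_T[(t+i)-(p+t)] = S_T[i-p]$, and these agree because $S_P = S_T$ holds everywhere in the support of $\Positive_{p,t}$. For $i \in \fragmentco{p+s}{m}$, both sides again equal $0$, as $i$ lies in the right padding of $P$ and $t+i \ge p+t+s$ lies in the right padding of $T$.

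Putting this together yields $P = T\fragmentco{t}{t+m}$, so $t \in \Occ(P,T)$ and the claim follows deterministically, not just with high probability. The whole argument is a direct unrolling of the definitions, so there is no genuine obstacle; the only thing worth checking carefully is the arithmetic of the ranges, and the constraints $p \in \fragmentcc{0}{m-s}$ and $t \in \fragmentco{0}{\Delta}$ were evidently chosen precisely to keep all indices in range.
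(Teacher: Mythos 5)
Your proof is correct and is exactly the intended argument: the paper leaves this observation unproved as immediate "by construction," and your verification that $t\in\Occ(P,T)$ via the three index ranges is the straightforward unrolling of the definitions that justifies it.
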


Moreover, the only correct answer for most instances drawn according to $\Negative_{p,t}$ is \textsc{No}.

\begin{lemma}\label{lem:negative}
If $(P,T)\sim \Negative_{p,t}$ for $p\in \fragmentcc{0}{m-s}$ and $t\in \fragmentco{0}{\Delta}$, then \[\Pr[\Occ_k(P,T)=\emptyset] > 0.8.\]
\end{lemma}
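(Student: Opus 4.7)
The plan is to mirror the proof of \cref{lem:random}: for each candidate alignment $i\in\fragmentco{0}{\Delta}$, show that $\Exp[\HD(P,T\fragmentco{i}{i+m})]\ge 2k$ and that the individual mismatch indicators are independent, then apply Chernoff and a union bound. The only new ingredient is handling the block structure imposed by the zero-padding.

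First I would fix $i\in \fragmentco{0}{\Delta}$ and decompose the positions $j\in\fragmentco{0}{m}$ according to whether $P\position{j}$ and/or $T\position{i+j}$ falls inside the random window ($S_P$ occupies $P\fragmentco{p}{p+s}$ and $S_T$ occupies $T\fragmentco{p+t}{p+t+s}$). Since $p\in\fragmentcc{0}{m-s}$ guarantees $\fragmentco{p}{p+s}\subseteq \fragmentco{0}{m}$, there are exactly $s$ positions $j$ where $P\position{j}$ is a Bernoulli variable. For each such $j$, the mismatch indicator $X_j\coloneqq\mathbf{1}[P\position{j}\neq T\position{i+j}]$ satisfies $\Pr[X_j=1]\ge \frac{2k}{s}$: either $T\position{i+j}=0$ deterministically, giving mismatch probability exactly $\frac{2k}{s}$, or both coordinates are independent $\Ber(\frac{2k}{s})$'s, giving mismatch probability $2\cdot\tfrac{2k}{s}(1-\tfrac{2k}{s})\ge\tfrac{2k}{s}$ (using that $\frac{2k}{s}\le\frac12$, which holds because $s\ge 4k$ by the definition of $s$ together with $k\le m/4$). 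Summing over these $s$ positions yields $\Exp[\HD(P,T\fragmentco{i}{i+m})]\ge 2k$.

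Next I would argue independence of the relevant $X_j$'s: each $X_j$ is a function of at most one entry of $S_P$ and at most one entry of $S_T$, and the shift $j\mapsto(j-p,\,j-p-t+i)$ is injective, so distinct values of $j$ probe distinct coordinates of $S_P$ and distinct coordinates of $S_T$. Since $S_P$ and $S_T$ are independent with independent entries, the $X_j$'s are mutually independent. A standard multiplicative Chernoff bound then gives
\[
\Pr\bigl[\HD(P,T\fragmentco{i}{i+m})\le k\bigr]\le \exp\!\bigl(-\tfrac{1}{2}\cdot(\tfrac12)^2\cdot 2k\bigr)=\exp(-k/4)<\tfrac{1}{5n},
\]
where the last inequality uses the standing hypothesis $k>4\ln(5n)$.

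Finally, a union bound over the at most $n$ candidate positions $i\in\fragmentco{0}{\Delta}$ gives $\Pr[\Occ_k(P,T)\neq\emptyset]<\frac15$, completing the proof. The only slightly delicate step is the independence verification in the middle paragraph, because one has to keep track of which coordinates of $S_P$ and $S_T$ are probed by each $X_j$; everything else is essentially identical to \cref{lem:random}.
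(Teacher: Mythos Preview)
Your proposal is correct and follows essentially the same approach as the paper's proof: fix an alignment, restrict attention to the $s$ positions where $P$ carries a $\Ber(\tfrac{2k}{s})$ character, lower-bound each mismatch probability by $\tfrac{2k}{s}$ (using $s\ge 4k$), verify independence, apply Chernoff, and take a union bound. The paper's write-up is slightly terser but the logic is identical.
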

\begin{proof}
    Let us fix $u \in \fragmentco{0}{\Delta}$.
    Observe that the characters $P[i]$ for $i\in \fragmentco{p}{p+s}$ are independent and distributed according to $\Ber(\frac{2k}{s})$.
    The characters $T[i+u]$ for $i\in \fragmentco{p}{p+s}$ are also independent and distributed according to $\Ber(\frac{2k}{s})$ or $\Ber(0)$, depending on whether $i+u\in \fragmentco{p+t}{p+t+s}$ or not.
    Thus, the events $P[i]\ne T[i+u]$ are independent across $i\in \fragmentco{p}{p+s}$.
    The probability of each such event is $\frac{2k}{s}$ (if $T[i+u]\sim \Ber(0)$)
    or $2\cdot \frac{2k}{s}\cdot (1-\frac{2k}{s}) \ge \frac{2k}{s}$ (otherwise, the inequality is due to $s \ge 4k$).
    In either case, the probability is at least $\frac{2k}{s}$ and hence $\Exp[\HD(P\fragmentco{p}{p+s},T\fragmentco{p+u}{p+u+s})] \ge 2k$.
    By the Chernoff bound,
    \[\Pr[\HD(P\fragmentco{p}{p+s},T\fragmentco{p+s}{p+s+s}) \le k] \le \exp\left(-\tfrac12 \cdot\! (\tfrac{1}{2})^2 \!\cdot 2k\right) = \exp(-\tfrac{k}{4}) < \tfrac{1}{5n},\]
    where the last inequality is due to $k > 4\ln(5n)$.
    Hence, $\Pr[\HD(P, T\fragmentco{u}{u+m})\le k] < \tfrac{1}{5n}$  and, by the union bound, $\Pr[\Occ_k(P,T)\ne \emptyset] < 0.2$.
\end{proof}

We henceforth fix a \emph{deterministic} \emph{non-adaptive} algorithm $A$ that makes exactly $q$ distinct queries.
Recall from the previous section that the execution of $A$ on some input can be described by a \emph{run}: a sequence of $q$ triples $(S,i,a)$ such that $S\in \{P,T\}$, $i\in \fragmentco{0}{|S|}$, $a=S[i]$. 
Moreover, $\Runs$ denotes the family of all \emph{admissible} runs.
Since $A$ is non-adaptive, it can be described by two sets $R_P\subseteq \fragmentco{0}{m}$ and $R_T\subseteq \fragmentco{0}{n}$ of $|R_P|+|R_T|\le q$ positions that it reads.
Thus, for each input $(P,T)\in \{0,1\}^m\times \{0,1\}^n$, the corresponding run is 
$r = \{(P,i,P[i]) : i\in R_P\}\cup \{(T,j,T[j]) : j\in R_T\}$.
Every input distribution $D$ yields a distribution on $\Runs$ so that $\Pr_D[r]$ is the probability that the execution of $A$ is described by $r$ provided that the input is drawn according~to~$D$.

The following lemma formalizes the intuition that, unless a mismatch between $P$ and $T\fragmentco{t}{t+m}$ is discovered, it is more likely that the input comes from $\Positive_{p,t}$ than from $\Negative_{p,t}$. 
\begin{lemma}\label{lem:mismatch2}
    Consider a run $r\in \Runs$ as well as positions $t\in \fragmentco{0}{\Delta}$ and $p\in \fragmentcc{0}{m-s}$.
    If $r$ contains triples $(P,i,a)$ and $(T,i+t,b)$ for some $i\in \fragmentco{p}{p+s}$ and $a\ne b$, then $\Pr_{\Positive_{p,t}}[r]=0$. Otherwise, $\Pr_{\Positive_{p,t}}[r]\ge \Pr_{\Negative_{p,t}}[r]$.
 \end{lemma}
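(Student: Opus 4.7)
The plan is to follow the blueprint of \cref{lem:mismatch} while accounting for the zero-padded structure of the strings. Since $A$ is non-adaptive, for every distribution $D$ the probability $\Pr_D[r]$ equals $\prod_{(P,i,a)\in r}\Pr_D[P[i]=a]\cdot \Pr_D[\text{conditional on $P$, the $T$-reads agree with $r$}]$, so everything reduces to computing marginals and couplings on the read positions $R_P\cup R_T$.

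First, I would dispose of the boundary cases. Under both $\Positive_{p,t}$ and $\Negative_{p,t}$, every character of $P$ outside $\fragmentco{p}{p+s}$ and every character of $T$ outside $\fragmentco{p+t}{p+t+s}$ is deterministically $0$. Thus, if $r$ contains a triple $(P,i,a)$ with $i\notin\fragmentco{p}{p+s}$ and $a\neq 0$, or $(T,j,b)$ with $j\notin\fragmentco{p+t}{p+t+s}$ and $b\neq 0$, both probabilities are zero and the claim is immediate. Otherwise, the contribution of the zero-padded positions is a factor of $1$ for both distributions, so I can restrict attention to the triples inside the sparse windows.

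Next, I partition the remaining reads. Let $I_P\coloneqq\{i : (P,i,a)\in r,\ i\in \fragmentco{p}{p+s}\}$, let $J_T\coloneqq\{j : (T,j,b)\in r,\ j\in \fragmentco{p+t}{p+t+s}\}$, and let the coupled positions be $C\coloneqq I_P\cap(J_T-t)$. For $i\in C$, the run forces both $P[i]$ and $T[i+t]$ to equal some $a_i,b_i\in\{0,1\}$. Under $\Positive_{p,t}$ we have $P[i]=T[i+t]=S[i-p]$ with $S\sim\Sparse$, so $\Pr_{\Positive_{p,t}}[r]=0$ unless $a_i=b_i$ for every $i\in C$, which is exactly the case 1 conclusion. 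For case 2, assume $a_i=b_i$ for all $i\in C$. Then the reads in $I_P\setminus C$ and $J_T\setminus(C+t)$ involve independent $\Ber(\frac{2k}{s})$ variables under both distributions, so their contributions to $\Pr_{\Positive_{p,t}}[r]$ and $\Pr_{\Negative_{p,t}}[r]$ cancel in the ratio. For each $i\in C$, however, the contribution equals $\Pr[\Ber(\tfrac{2k}{s})=a_i]$ under $\Positive_{p,t}$ but $\Pr[\Ber(\tfrac{2k}{s})=a_i]^2$ under $\Negative_{p,t}$, and since each such probability is at most~$1$, the ratio $\Pr_{\Positive_{p,t}}[r]/\Pr_{\Negative_{p,t}}[r]=\prod_{i\in C}\Pr[\Ber(\tfrac{2k}{s})=a_i]^{-1}\ge 1$.

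The only delicate point is the bookkeeping around the two shifts: $P$'s sparse window lies at $\fragmentco{p}{p+s}$ while $T$'s lies at $\fragmentco{p+t}{p+t+s}$, and the coupling under $\Positive_{p,t}$ pairs position $i$ of $P$ with position $i+t$ of $T$. Confirming that $C$ is well-defined (i.e., that $i\in C$ forces both $i\in\fragmentco{p}{p+s}$ and $i+t\in\fragmentco{p+t}{p+t+s}$) and that no $P$- or $T$-read is double-counted is the main sanity check; once this is established, the cancellation argument from the adaptive lemma goes through verbatim.
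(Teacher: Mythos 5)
Your proposal is correct and follows essentially the same route as the paper's proof: factor $\Pr_D[r]$ over positions, observe that a coupled pair $(P[i],T[i+t])$ with $i$ in the sparse window has probability $0$ under $\Positive_{p,t}$ when the read values disagree, contributes $\Pr[\Ber(\tfrac{2k}{s})=a]$ versus $\Pr[\Ber(\tfrac{2k}{s})=a]^2$ when they agree, and that all remaining (uncoupled and padded) positions are identically distributed under both distributions and cancel in the ratio. Your extra bookkeeping for the zero-padded positions is sound but is subsumed by the paper's remark that the remaining characters are distributed identically under both distributions.
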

    \begin{proof}
    Since $A$ is deterministic and $r$ is admissible, we have $\Pr_D[r] = \Pr_D[\bigwedge_{(S,i,a)\in r} S[i]=a]$.
    By definition of $\Positive_{p,t}$, we have $\Pr_{\Positive_{p,t}}[P[i]=a \wedge T[i+t]=b]=0$ for every $i\in \fragmentco{p}{p+s}$ and $a\ne b$.
    On the other hand, \[\Pr_{\Positive_{p,t}}[P[i]=0 \wedge T[i+t]=0]=(1-\tfrac{2k}{m}) \ge (1-\tfrac{2k}{m})^2 = \Pr_{\Negative_{p,t}}[P[i]=0 \wedge T[i+t]=0]\]
    and 
    \[\Pr_{\Positive_{p,t}}[P[i]=1 \wedge T[i+t]=1]=\tfrac{2k}{m} \ge (\tfrac{2k}{m})^2 = \Pr_{\Negative_{p,t}}[P[i]=1 \wedge T[i+t]=1].\]
    The remaining characters are independent and distributed identically under both $\Positive_{p,t}$ and $\Negative_{p,t}$, and thus they do not affect the ratio between $\Pr_{\Positive_{p,t}}[r]$ and $\Pr_{\Negative_{p,t}}[r]$.
\end{proof}

Next, we prove that, in order to succeed on $\Positive_{p,t}$ and $\Negative_{p,t}$, the algorithm needs to read sufficiently many aligned characters of $S_P=P\fragmentco{p}{p+s}$ and $S_T=T\fragmentco{p+t}{p+t+s}$.
\begin{lemma}\label{lem:hpt}
If a deterministic non-adaptive algorithm solves \cref{pr:tester} correctly with probability at least $0.7$ under the input distribution $\Hybrid_{p,t}$ for some $p\in \fragmentcc{0}{m-s}$ and $t\in \fragmentco{0}{\Delta}$, then the algorithm reads both $P[p+j]$ and $T[t+p+j]$ for at least $\frac{s}{20k}$ positions $j\in \fragmentco{0}{s}$.
\end{lemma}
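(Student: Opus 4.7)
The plan is to argue via an indistinguishability argument: if $J \coloneqq \{j\in \fragmentco{0}{s} : p+j\in R_P \text{ and } t+p+j\in R_T\}$ is small, then the run distributions under $\Positive_{p,t}$ and $\Negative_{p,t}$ are too close in total variation distance for the algorithm to distinguish them reliably, which contradicts the required success probability on $\Hybrid_{p,t}$. Here $R_P\subseteq \fragmentco{0}{m}$ and $R_T\subseteq \fragmentco{0}{n}$ denote the (fixed, since $A$ is non-adaptive) sets of positions queried by $A$.

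First, I will reduce the run distribution to the joint distribution of the non-trivial observations. Outside $\fragmentco{p}{p+s}$ in $P$ and outside $\fragmentco{p+t}{p+t+s}$ in $T$, every character is deterministically $0$ under both $\Positive_{p,t}$ and $\Negative_{p,t}$, so these queries contribute nothing to the total variation distance. For $j\in \fragmentco{0}{s}$ with $p+j\in R_P$ but $t+p+j\notin R_T$ (or vice versa), the algorithm observes a single Bernoulli variable with parameter $2k/s$ under both distributions, with the same marginal law. Only for $j\in J$ does the algorithm observe the pair $(S_P[j],S_T[j])$, which under $\Positive_{p,t}$ is supported on $\{(0,0),(1,1)\}$ with probabilities $1-2k/s$ and $2k/s$, while under $\Negative_{p,t}$ the two coordinates are independent.

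Next, I will bound the total variation distance. A direct calculation gives $\mathrm{TV}(\mu_{\Positive,1},\mu_{\Negative,1}) = 2\cdot \tfrac{2k}{s}(1-\tfrac{2k}{s}) \le \tfrac{4k}{s}$ for a single pair. Because both joint distributions factorize as products over $j\in J$ (independence across coordinates in $\Sparse$), subadditivity of TV under products yields
\[\mathrm{TV}(\text{run}_{\Positive_{p,t}},\text{run}_{\Negative_{p,t}}) \le |J|\cdot \tfrac{4k}{s}.\]

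Finally, I will translate this into a bound on the success probability. By \cref{obs:positive}, success on $\Positive_{p,t}$ requires the output \Yes, so the success probability on $\Positive_{p,t}$ equals $\Pr_{\Positive_{p,t}}[R_{\Yes}]$, where $R_{\Yes}$ is the set of runs leading to \Yes. By \cref{lem:negative}, success on $\Negative_{p,t}$ is upper bounded by $\Pr_{\Negative_{p,t}}[R_{\Yes}^c] + \Pr_{\Negative_{p,t}}[\Occ_k(P,T)\ne \emptyset] < \Pr_{\Negative_{p,t}}[R_{\Yes}^c]+0.2$ (accounting for the ``arbitrary answer'' regime). Combining,
\[\text{success on }\Hybrid_{p,t} \;\le\; 0.6 + \tfrac12\bigl(\Pr_{\Positive_{p,t}}[R_{\Yes}] - \Pr_{\Negative_{p,t}}[R_{\Yes}]\bigr) \;\le\; 0.6 + \tfrac12 \cdot |J|\cdot \tfrac{4k}{s}.\]
Imposing the success lower bound $0.7$ yields $|J|\ge \tfrac{s}{20k}$, as claimed. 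I do not anticipate a serious obstacle: the only subtle point is handling the case $\Occ=\emptyset \wedge \Occ_k\ne\emptyset$ correctly in the correctness accounting on $\Negative_{p,t}$, which \cref{lem:negative} cleanly absorbs into a $0.2$-additive slack.
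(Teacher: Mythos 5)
Your proposal is correct, and the constant works out to exactly the claimed $\frac{s}{20k}$. It is the same indistinguishability argument at heart, but packaged differently from the paper. The paper's proof is a one-sided likelihood-ratio argument: it defines the set $\Bad$ of runs with $\Pr_{\Positive_{p,t}}[r]<\Pr_{\Negative_{p,t}}[r]$, shows via \cref{lem:mismatch2} that a run lands in $\Bad$ only if some doubly-read pair reveals a one, and bounds $\Pr_{\Negative_{p,t}}[\Bad]\le 2|R|\cdot\frac{2k}{s}\le 0.2$ by Markov's inequality on the expected number of ones at the $|R|$ paired positions; the error accounting then runs through $\RunsNo$. You instead bound the total variation distance between the two run distributions by tensorizing over the independent coordinates: only the $|J|$ paired coordinates have differing laws, each contributing $2\cdot\frac{2k}{s}(1-\frac{2k}{s})\le\frac{4k}{s}$, and the final accounting compares $\Pr[R_{\Yes}]$ under the two distributions. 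The two bounds are essentially the same quantity (your per-pair TV is at most the paper's per-pair probability of revealing a one), and both correctness accountings handle the non-promise regime of $\Negative_{p,t}$ via \cref{lem:negative}'s $0.2$ slack. What the paper's formulation buys is uniformity with its adaptive lower bound (\cref{sec:lb_adaptive}), where the query set is not fixed and a product/TV-tensorization argument would not apply directly; what your formulation buys is a cleaner, symmetric statement that exploits non-adaptivity (a fixed query set makes the observation vector an honest product random variable), which is exactly the setting of this lemma.
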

\begin{proof}
Let $R \coloneqq (R_P-p)\cap (R_T-(p+t))\cap \fragmentco{0}{s}$ be the set of positions $j\in \fragmentco{0}{s}$ such that the algorithm reads both $P[p+j]$ and $T[t+p+j]$.
For a proof by contradiction, suppose that $|R| \le \frac{s}{20k}$.

In the following claim, we consider the set $\Bad\subseteq \Runs$ of runs which are more likely under $\Negative_{p,t}$ than under $\Positive_{p,t}$, and we prove that the algorithm is unlikely to encounter a run in $\Bad$.
\begin{claim}\label{clm:mismatch}
The set $\Bad = \{r \in \Runs : \Pr_{\Positive_{p,t}}[r] <  \Pr_{\Negative_{p,t}}[r]\}$
satisfies $\Pr_{\Negative_{p,t}}[\Bad] \le 0.2$.
\end{claim}
\begin{claimproof}
Observe that, for every $j\in R$, both $P[p+j]$ and $T[t+p+j]$ are distributed according to $\Ber(\frac{2k}{s})$.
Hence, the total expected number of ones at these positions does not exceed $2|R|\cdot \frac{2k}{s} \le 0.2$.
By Markov's inequality, with probability at least $0.8$, we have that $P[p+j]=T[t+p+j]=0$ holds for all $j\in R$.
For all runs satisfying this property, \cref{lem:mismatch2} implies $\Pr_{\Positive_{p,t}}[r]\ge \Pr_{\Negative_{p,t}}[r]$, and thus none of them belongs to $\Bad$.
Hence,  $\Pr_{\Negative_{p,t}}[\Bad] \le 0.2$ holds as claimed.
\end{claimproof}

Suppose that the algorithm errs with probability $e_E$ on $\Positive_{p,t}$ and $e_I$ on $\Negative_{p,t}$.
By \cref{lem:negative}, the answer \textsc{No} is required on $\Negative_{p,t}$ with probability strictly larger than $0.8$.
Hence, the set of $\RunsNo\subseteq \Runs$ on which the algorithm answers \textsc{No} satisfies $\Pr_{\Negative_{p,t}}[\RunsNo] > 0.8-e_I$.
By \cref{obs:positive}, all \textsc{No} answers are incorrect under $\Positive_{p,t}$, so $e_E\ge \Pr_{\Positive_{p,t}}[\RunsNo]$.
By \cref{clm:mismatch},
\begin{multline*}e_E \ge \Pr_{\Positive_{p,t}}[\RunsNo] \ge \Pr_{\Positive_{p,t}}[\RunsNo\setminus \Bad] \ge \Pr_{\Negative_{p,t}}[\RunsNo\setminus \Bad] \\ \ge \Pr_{\Negative_{p,t}}[\RunsNo]- \Pr_{\Negative_{p,t}}[\Bad] > 0.8-e_I-0.2=0.6-e_I.\end{multline*}
The error probability on $\Hybrid_{p,t}$ is thus equal to $\tfrac12(e_E+e_I) > 0.3$; a contradiction.
\end{proof}

The following lemma builds upon \cref{lem:hpt} to characterize non-adaptive algorithms that succeed under $\Hybrid_p$.
Although the statement resembles \cref{prp:lb_adaptive} (if $p=0$ and $s=m$, it quantifies the total number of characters that the algorithm reads), it crucially requires the restriction to non-adaptive algorithms.
This is because the zeroes surrounding $S_T$ may reveal its location within $T$.

\begin{lemma}\label{lem:hp}
If a deterministic non-adaptive algorithm solves \cref{pr:tester} correctly with probability at least $0.8$ under the input distribution $\Hybrid_{p}$ for some $p\in \fragmentcc{0}{m-s}$, then the total number of characters that the algorithm reads  from $P\fragmentco{p}{p+s}$ and $T\fragmentco{p}{p+s+n-m}$ is at least $\max\left(\frac{1}{4}\sqrt{\frac{s\Delta}{k}},\frac{s+\Delta}{60k}\right)$.
\end{lemma}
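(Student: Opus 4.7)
The plan is to apply \cref{lem:hpt} to many values of the offset $t$ simultaneously via Markov's inequality, and then triple-count the pairs of aligned queries. Writing $\Hybrid_p=\frac{1}{\Delta}\sum_{t} \Hybrid_{p,t}$, the hypothesis that the algorithm errs with probability at most $0.2$ under $\Hybrid_p$ gives, by Markov's inequality, a set $T^*\subseteq\fragmentco{0}{\Delta}$ with $|T^*|\ge \Delta/3$ consisting of offsets $t$ on which the (same non-adaptive) algorithm succeeds with probability at least $0.7$ on $\Hybrid_{p,t}$. For each $t\in T^*$, \cref{lem:hpt} supplies at least $s/(20k)$ positions $j\in\fragmentco{0}{s}$ with $p+j\in R_P$ and $t+p+j\in R_T$. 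Setting $a\coloneqq |R_P\cap\fragmentco{p}{p+s}|$ and $b\coloneqq |R_T\cap\fragmentco{p}{p+s+\Delta-1}|$, summing over $t\in T^*$ gives
\[
\Sigma \coloneqq \bigl|\{(j,t)\in\fragmentco{0}{s}\times T^* : p+j\in R_P,\ t+p+j\in R_T\}\bigr|\;\ge\; |T^*|\cdot\tfrac{s}{20k}\;\ge\;\tfrac{s\Delta}{60k},
\]
which will serve as the common lower bound in the remainder of the argument.

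Next, I would bound $\Sigma$ from above in three complementary ways. Writing $R_P^*\coloneqq R_P\cap\fragmentco{p}{p+s}$ and $R_T^*\coloneqq R_T\cap\fragmentco{p}{p+s+\Delta-1}$, the map $(j,t)\mapsto (p+j,\,t+p+j)$ is injective into $R_P^*\times R_T^*$, hence (i) $\Sigma\le ab$. Moreover, (ii) for each of the $a$ choices of $j$ with $p+j\in R_P^*$, the value $t$ lies in $T^*\subseteq\fragmentco{0}{\Delta}$, so $\Sigma\le a\Delta$; and (iii) for each of the $b$ choices of $i'\in R_T^*$, the pair $(j,t)$ is determined by $j\in\fragmentco{0}{s}$ via $t=i'-p-j$, so $\Sigma\le bs$. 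Combined with $\Sigma\ge s\Delta/(60k)$, these three inequalities yield $ab\ge s\Delta/(60k)$, $a\ge s/(60k)$, and $b\ge \Delta/(60k)$.

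The conclusion follows by taking two convex combinations. The linear bounds sum to $a+b\ge (s+\Delta)/(60k)$, while AM-GM applied to $ab\ge s\Delta/(60k)$ gives $a+b\ge 2\sqrt{ab}\ge 2\sqrt{s\Delta/(60k)}=\sqrt{s\Delta/(15k)}\ge\tfrac{1}{4}\sqrt{s\Delta/k}$ (since $1/15\ge 1/16$), so $a+b$ is at least the maximum of the two, as claimed. The main point where care is needed is ensuring that non-adaptivity is used essentially: because $R_P$ and $R_T$ are fixed in advance, the same counts $a$ and $b$ apply uniformly across all $t\in T^*$, which is exactly what makes the three upper bounds on $\Sigma$ valid; an adaptive algorithm could use the zero padding flanking $S_T$ in $T$ to localize the offset $t$ and steer its queries accordingly, breaking the $a\Delta$ and $bs$ estimates.
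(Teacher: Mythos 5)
Your proposal is correct and follows essentially the same route as the paper's proof: Markov's inequality to obtain at least $\Delta/3$ good offsets $t$, an application of \cref{lem:hpt} to each, and a double count of the aligned query pairs (the paper phrases it as ``every pair in $R_P^p\times R_T^p$ contributes to $R_P^p\cap(R_T^p-t)$ for at most one $t$,'' which is your injectivity of $(j,t)\mapsto(p+j,\,t+p+j)$), followed by the same AM--GM and linear bounds. The only cosmetic difference is that the paper gets $|R_P^p|\ge s/(20k)$ directly from a single good offset rather than via your $\Sigma\le a\Delta$ estimate; both yield the required $s/(60k)$.
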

\begin{proof}
On average across $t\in \fragmentco{0}{\Delta}$, the error probability on $\Hybrid_{p,t}$ does not exceed $0.2$.
Hence, by Markov's inequality, the error probability exceeds $0.3$ for at most $\frac{2\Delta}{3}$ values $t\in \fragmentco{0}{\Delta}$.
In other words, the algorithm is correct on $\Hybrid_{p,t}$ with probability at least $0.7$ for at least $\frac{\Delta}{3}$ values $t\in \fragmentco{0}{\Delta}$.

Let $R_P^p \coloneqq R_P \cap \fragmentco{p}{p+s}$ and $R_T^p \coloneqq R_T \cap \fragmentco{p}{p+s+n-m}$ denote the relevant indices of the characters that the algorithm reads.
By \cref{lem:hpt}, if the algorithm is correct on $\Hybrid_{p,t}$ with probability at least $0.7$, then $|R_P^p\cap (R_T^p-t)| \ge \frac{s}{20k}$; note that every pair in $R_P^p\times R_T^p$ contributes to $R_P^p\cap (R_T^p-t)$ for at most one value $t\in \fragmentco{0}{\Delta}$.
Hence, $|R_P^p\cap (R_T^p-t)| \ge \frac{s}{20k}$ holds for at least $\frac{\Delta}{3}$ values $t\in \fragmentco{0}{\Delta}$. We draw three conclusions from this fact:
\begin{enumerate}
    \item $|R_P^p\times R_T^p| \ge \frac{\Delta}{3}\cdot \frac{s}{20k}$, and thus $|R_P^p|+|R_T^p|\ge 2\sqrt{|R_P^p\times R_T^p|} \ge 2\sqrt{\frac{s\Delta}{60k}} > \frac{1}{4}\sqrt{\frac{s\Delta}{k}}$.
    \item $|R_P^p\times R_T^p| \ge \frac{\Delta}{3}\cdot \frac{s}{20k}$ and, due to $|R_P^p|\le s$, we must have
    $|R_T^p|\ge \frac{\Delta}{60k}$.
    \item $|R_P^p| \ge \frac{s}{20k} > \frac{s}{60k}$.
    \qedhere
\end{enumerate}
\end{proof}

We are now ready to prove that a non-adaptive algorithm needs to make many queries to achieve low error probability on $\Hybrid$.
\begin{proposition}\label{prp:lb_nonadaptive}
    Every non-adaptive deterministic algorithm solving \cref{pr:tester} correctly with probability at least $0.9$ for the input distribution $\Hybrid$ makes at least $\tfrac{1}{204}\min\left(\frac{n\sqrt{\Delta}}{k},\sqrt{\tfrac{nm}{k}}+\tfrac{n}{k}\right)$ queries to the strings $P$~and~$T$.
\end{proposition}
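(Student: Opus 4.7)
The plan is to apply Yao's principle on the mixture $\Hybrid$: fix a deterministic non-adaptive algorithm $\mathcal{A}$ with read sets $R_P\subseteq\fragmentco{0}{m}$ and $R_T\subseteq\fragmentco{0}{n}$ of total size $|R_P|+|R_T|\le q$, achieving error probability at most $0.1$ on $\Hybrid$. Since $\Hybrid=\frac{1}{m-s+1}\sum_{p}\Hybrid_{p}$, Markov's inequality implies that for at least $(m-s+1)/2$ values $p\in\fragmentcc{0}{m-s}$ (henceforth \emph{good}~$p$) the error probability on $\Hybrid_p$ is at most $0.2$, so \cref{lem:hp} yields
\[
  |R_P\cap\fragmentco{p}{p+s}|+|R_T\cap\fragmentco{p}{p+s+\Delta-1}|\;\ge\; L\coloneqq\max\!\left(\tfrac14\sqrt{\tfrac{s\Delta}{k}},\,\tfrac{s+\Delta}{60k}\right).
\]

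From this I plan to extract two complementary lower bounds on $q$. The \emph{per-window} bound $q\ge L$ comes from any single good~$p$. The \emph{summed} bound comes from adding the per-window guarantee over all good $p$'s and using that each character of $P$ lies in at most $s$ windows and each character of $T$ in at most $s+\Delta-1$ windows, giving $q\ge\frac{(m-s+1)L}{2(s+\Delta-1)}$. For finer control in Case $s=\Delta$, I will revisit the proof of \cref{lem:hp} to extract the per-$p$ guarantees $|R_P\cap\fragmentco{p}{p+s}|\ge s/(20k)$ and $|R_T\cap\fragmentco{p}{p+s+\Delta-1}|\ge\Delta/(60k)$; summing these yields separate bounds on $|R_P|$ and $|R_T|$ of order $(m-s+1)/(40k)$ and $(m-s+1)\Delta/(120k(s+\Delta-1))$, and combining with the single-$p$ bounds via $\max(x,y)\ge(x+y)/2$ produces sum-type lower bounds on $q=|R_P|+|R_T|$.

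I will then conclude by case analysis on $s\in\{m,\Delta,4k\}$, exploiting the fact that $\min(A,B+C)$ (with $A=\frac{n\sqrt{\Delta}}{k}$, $B=\sqrt{nm/k}$, $C=n/k$) is bounded by \emph{either} of its arguments, so it suffices to show that $q$ beats $A/204$ \emph{or} $(B+C)/204$ in each case. When $s=m$ (so $\Delta\ge m$ and only $p=0$ is available), I will check that $L$ alone covers both $\sqrt{nm/k}/102$ (using $m\Delta\ge nm/2$) and $n/(102k)$ (using $m+\Delta\ge n$), yielding $q\ge(B+C)/204$. When $s=\Delta$ (so $4k\le\Delta<m$ and $n<2m$), combining the per-window bound $L\ge\Delta/(4\sqrt{k})$ with the summed bound $(m-\Delta+1)L/(2(2\Delta-1))\gtrsim(m-\Delta)/(16\sqrt{k})$ via max-averaging gives $q\ge m/(32\sqrt{k})\ge\sqrt{nm/k}/102$, while max-averaging $|R_P|\ge(m-\Delta+1)/(40k)$ with $|R_P|\ge\Delta/(20k)$ and $|R_T|\ge\Delta/(60k)$ with the analogous summed bound yields $q\ge(7m+9\Delta)/(480k)\ge n/(102k)$. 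When $s=4k$ (so $\Delta<4k$ and again $n<2m$), max-averaging the per-window bound $q\ge\sqrt{\Delta}$ with the summed bound $q\ge(m-4k+1)\sqrt{\Delta}/(16k)$ produces $q\ge m\sqrt{\Delta}/(32k)\ge A/204$.

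The main obstacle will be the bookkeeping of constants: the bound $1/204$ is not slack, and it requires that each case lands at $(B+C)/204$ or $A/204$ with a safety margin below one of the derived bounds. The most delicate regime is $s=\Delta$ with $\Delta$ close to $m$, where the summed bound degenerates because only very few windows exist and we must lean on the single-window bound $L\ge\Delta/(4\sqrt{k})$; symmetrically, when $\Delta$ is small relative to $m$, the single-window bound is too weak and the summed bound must carry the $\sqrt{nm/k}$ part of the target. Max-averaging between these two bounds is the key mechanism that smooths the transition and delivers $q\ge\frac{1}{204}\min(A,B+C)$ uniformly across $\Delta$.
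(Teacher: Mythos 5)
Your proposal follows essentially the same route as the paper's proof: Markov's inequality over $p$ to get many good windows, the single-window bound from \cref{lem:hp} combined with a summed bound obtained by counting how many windows each queried character can lie in, and then a convex (max-averaging) combination resolved by the case analysis on $s\in\{m,\Delta,4k\}$. The only caveats are small constant slips (e.g., for $s=4k$ the per-window bound is $\sqrt{\Delta}/2$, not $\sqrt{\Delta}$, and the summed bound's denominator should reflect $4k+\Delta-1<8k$), but these leave enough slack to still clear the target constant $\tfrac{1}{204}$.
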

\begin{proof}
    On average across $p\in \fragmentcc{0}{m-s}$, the error probability under $\Hybrid_{p}$ does not exceed~$0.1$.
    Hence, by Markov's inequality, the error probability exceeds $0.2$ for at most $\frac{m-s+1}{2}$ values $p\in \fragmentcc{0}{m-s}$
    In other words, the algorithm must be correct on $\Hybrid_{p}$ with probability at least $0.8$ for at least $\frac{m-s+1}{2}$ values $p\in \fragmentcc{0}{m-s}$.
    There is at least one such value, and thus \cref{lem:hp} yields the following lower bound on the total number $q=|R_P|+|R_T|$ of characters read:
    \begin{equation}\label{eq:one}
         q\ge \max\left(\frac{1}{4}\sqrt{\frac{s\Delta}{k}},\frac{s+\Delta}{60k}\right).
    \end{equation}

    Each character of $P$ belongs to $P\fragmentco{p}{p+s}$ for at most $s$ values $p\in \fragmentcc{0}{m-s}$, whereas each character of $T$ belongs to $T\fragmentco{p}{p+s+n-m}$ for at most $s+\Delta$ values $p\in \fragmentcc{0}{m-s}$.
    Hence, by \cref{lem:hp}, the total number of characters read must be at least 
    \begin{equation}\label{eq:many}
    q\ge \frac{m-s+1}{2(s+\Delta)}\cdot \max \left(\frac{1}{4}\sqrt{\frac{s\Delta}{k}},\frac{s+\Delta}{60k}\right).
    \end{equation}
    A convex combination of \eqref{eq:one} and \eqref{eq:many} with weights $\frac13$ and $\frac23$, respectively, yields a lower bound 
    \begin{equation}\label{eq:combined}
    q\ge \left(\frac13 +\frac23\cdot \frac{m-s+1}{2(s+\Delta)}\right)\cdot \max \left(\frac{1}{4}\sqrt{\frac{s\Delta}{k}},\frac{s+\Delta}{60k}\right) >
    \frac{n}{3(s+\Delta)}\cdot \max \left(\frac{1}{4}\sqrt{\frac{s\Delta}{k}},\frac{s+\Delta}{60k}\right) \ge \frac{n}{180k}.
    \end{equation}

    We consider three cases depending on the value of $s$.
    \begin{enumerate}
        \item $s=\Delta$: In this case, \eqref{eq:combined} yields
        \[q\ge  \frac{n}{3(s+\Delta)}\cdot\frac{1}{4}\sqrt{\frac{s\Delta}{k}} = \frac{n}{6\Delta}\cdot \frac{\Delta}{4\sqrt{k}} = \frac{n}{24\sqrt{k}} \ge \frac{1}{24}\sqrt{\frac{nm}{k}}.\]
        A convex combination with $q\ge \frac{n}{180k}$ gives the claimed lower bound.
        \item $s=4k>\Delta$: In this case, \eqref{eq:combined} implies
        \[q\ge  \frac{n}{3(s+\Delta)}\cdot\frac{1}{4}\sqrt{\frac{s\Delta}{k}} = \frac{n}{3(4k+\Delta)}\cdot \frac{\sqrt{\Delta}}{2} \ge \frac{1}{48}\cdot \frac{n\sqrt{\Delta}}{k} > \frac{1}{204}\cdot \frac{n\sqrt{\Delta}}{k}.\]
        \item $s=m <\Delta$: In this case, $\Delta-1=n-m \ge m$, so $\Delta > \max(n-m,m)\ge \frac{n}{2}$ and  \eqref{eq:one} yields
        \[q\ge \frac{1}{4}\cdot \sqrt{\frac{s\Delta}{k}} \ge \frac{1}{4\sqrt{2}}\cdot \sqrt{\frac{mn}{k}} > \frac{1}{24}\sqrt{\frac{nm}{k}}.\]
        A convex combination with $q\ge \frac{n}{180k}$ gives the claimed lower bound.\qedhere
    \end{enumerate}
\end{proof}

\cref{thm:lb_nonadaptive} restricted to $k \le \frac{m}{4}$ follows from \cref{prp:lb_nonadaptive} by Yao's principle.
For $\frac{m}{4} < k < m$, we derive \cref{thm:lb_nonadaptive} from \cref{thm:lb_adaptive-largek}.
We note that every algorithm solving \cref{pr:tester} for any $k\in \fragmentco{1}{m}$ also solves the problem for $k=m-1$.
Thus, \cref{thm:lb_adaptive-largek} yields a lower bound of $\frac1{6}\left(\sqrt{n-m+1}+\frac{n}{m}\right)$.
For $k > \frac{m}{4}$, we have 
\[\tfrac{1}{204}\min\left(\tfrac{n\sqrt{n-m+1}}{k},\sqrt{\tfrac{nm}{k}}+\tfrac{n}{k}\right)
    < \tfrac{1}{204}\min\left(\tfrac{4n\sqrt{n-m+1}}{m},2\sqrt{n}+\tfrac{4n}{m}\right).
\]
If $n \le 2m$, then $\tfrac{4n\sqrt{n-m+1}}{m}<8\sqrt{n-m+1}$.
Otherwise, $\sqrt{n} < \sqrt{2(n-m+1)} < 8\sqrt{n-m+1}$.
In both cases, we have 
\[\tfrac{1}{204}\min\left(\tfrac{n\sqrt{n-m+1}}{k},\sqrt{\tfrac{nm}{k}}+\tfrac{n}{k}\right)
    < \tfrac{8}{204}\left(\sqrt{n-m+1}+\tfrac{n}{m}\right)
    < \tfrac{1}{6}\left(\sqrt{n-m+1}+\tfrac{n}{m}\right),
\]
and thus \cref{thm:lb_nonadaptive} follows.

\section{\boldmath Lower Bound for $k=m-1$}\label{sec:lb_large}
In this section, we prove the following theorem:
\thmlbadaptivelargek*
Note that \cref{thm:lb_adaptive-largek} is false for binary strings, in contrast to the $k\le m/4$ case in \cref{sec:lb_adaptive}. To see this, observe that when $n\ge m+1$, for any $(P,T)\in \{0,1\}^m \times \{0,1\}^n$, the \No case $\Occ_{m-1}(P,T)= \emptyset$ holds only if $(P,T)\in \{(0^m,1^n),(1^m,0^n)\}$. Denote the fraction of ones in the string $S\in \{P,T\}$ as $\delta_S=\frac{1}{|S|}\sum_{i\in [0\dd |S|)}S[i]$. Then, the \No case implies $|\delta_P-\delta_T|=1$, whereas in the \Yes case with $i\in \Occ(P,T)$, we have
\[n\cdot |\delta_P-\delta_T|\le (n-m)\delta_P + |m\delta_P-n\delta_T|   \le(n-m) + \textstyle \sum_{j\in [0\dd n)\setminus[i\dd i+m)}T[j]   \le 2(n-m),\]
which yields $|\delta_P-\delta_T| \le 2-\frac{2m}{n}\le 0.8$ provided $m\ge 0.6n$. Therefore, a tester can estimate $\delta_P,\delta_T$ up to additive error $\tfrac{1}{100}$ with high success probability in $\Oh(\log n)$ query complexity, and obtain an estimate for $|\delta_P-\delta_T|$ which is sufficient to distinguish the \Yes and \No  cases.

The proof of \cref{thm:lb_adaptive-largek} follows the same structure as in \cref{sec:lb_adaptive}, but gets much simplified in the $k=m-1$ case.

We henceforth fix integers $1 \le  m \le n$ and denote $\Delta \coloneqq n-m+1$. Let $\sigma =|\Sigma|=10n^2$. Let $\cU(\Sigma)$ denote the uniform distribution on $\Sigma$. 

We define the following distributions on $(P,T)\in \Sigma^{m}\times \Sigma^n$ so that $P[i]\sim \cU(\Sigma)$ for $i\in \fragmentco{0}{m}$ and $T[j] \sim \cU(\Sigma)$ for $j\in \fragmentco{0}{n}$:
\begin{itemize}
    \item $\Random$, where the variables $\{P[i] : i\in \fragmentco{0}{m}\}$ and $\{T[j] : j\in \fragmentco{0}{n}\}$ are all independent;
    \item $\Planted_t$ for $t\in \fragmentco{0}{\Delta}$, where the variables $\{T[j] : j\in \fragmentco{0}{n}\}$ are all independent and $P=T[t\dd t+m)$.
    \item $\Planted$, which is a mixture of distributions $\Planted_t$ for $t\in \fragmentco{0}{\Delta}$ with weights $\frac{1}{\Delta}$ each.
    \item $\Mixed$, which is a mixture of distributions $\Random$ and $\Planted$ with weights $\frac12$ each.
\end{itemize}

By construction, the only correct answer for every instance drawn according to $\Planted$ is \textsc{Yes}.

\begin{observation}\label{obs:planted-largek}
If $(P,T)\sim \Planted$, then $\Occ(P,T)\ne \emptyset$.\lipicsEnd
\end{observation}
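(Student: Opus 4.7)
The plan is to unpack the definition of $\Planted$ and observe that the claim is immediate from the construction of the conditional distributions $\Planted_t$. Specifically, I will argue that any sample from $\Planted$ is drawn from some $\Planted_t$, and by construction every such sample satisfies $P = T\fragmentco{t}{t+m}$, which directly witnesses $t \in \Occ(P,T)$.

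First, I would note that $\Planted$ is defined as the mixture of $\Planted_t$ over $t\in \fragmentco{0}{\Delta}$ with equal weights $\tfrac{1}{\Delta}$. Hence sampling $(P,T)\sim \Planted$ is equivalent to first drawing $t\in \fragmentco{0}{\Delta}$ uniformly at random, and then drawing $(P,T)\sim \Planted_t$. Second, I would recall that $\Planted_t$ is defined so that the variables $\{T[j] : j\in \fragmentco{0}{n}\}$ are independent and $P = T\fragmentco{t}{t+m}$ holds \emph{deterministically}. Therefore, with probability one under $\Planted_t$, the index $t$ belongs to $\Occ(P,T) = \{i\in \fragmentcc{0}{n-m} : P = T\fragmentco{i}{i+m}\}$, and in particular $\Occ(P,T)\ne \emptyset$. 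Marginalizing over $t$ gives the same conclusion under $\Planted$.

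There is no main obstacle here; the statement is essentially a restatement of the construction. The only thing to be careful about is to explicitly verify that $t\in \fragmentcc{0}{n-m}$, which follows from $t\in \fragmentco{0}{\Delta} = \fragmentco{0}{n-m+1} = \fragmentcc{0}{n-m}$, so $t$ is a legitimate starting position of a length-$m$ fragment of $T$. With that check, the proof concludes.
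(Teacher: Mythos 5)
Your proof is correct and matches the paper's intent exactly: the paper states this observation without proof, noting only that it holds ``by construction,'' and your unpacking of the mixture $\Planted = \frac{1}{\Delta}\sum_t \Planted_t$ together with the deterministic constraint $P = T\fragmentco{t}{t+m}$ in $\Planted_t$ is precisely that construction. The check that $t\in\fragmentco{0}{\Delta}=\fragmentcc{0}{n-m}$ is a valid starting position is a nice touch but raises no issues.
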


Moreover, the only correct answer for most instances drawn according to $\Random$ is \textsc{No}.

\begin{lemma}\label{lem:random-largek}
If $(P,T)\sim \Random$, then $\Pr[\Occ_k(P,T)=\emptyset]> 0.8$.
\end{lemma}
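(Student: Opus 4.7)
The plan is to reduce $\Pr[\Occ_{m-1}(P,T)=\emptyset]>0.8$ to the event that the concatenation $P\cdot T$ has pairwise distinct characters. Observe that $\Occ_{m-1}(P,T)=\emptyset$ is equivalent to $\HD(P,T\fragmentco{i}{i+m})=m$ for every $i\in \fragmentcc{0}{n-m}$, i.e., to $P[j]\ne T[i+j]$ for every such $i$ and every $j\in \fragmentco{0}{m}$. A sufficient (though not necessary) condition is that all $n+m$ characters of $P\cdot T$ are pairwise distinct, since this forces every character of $P$ to differ from every character of $T$.

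I will union-bound the probability that $P\cdot T$ has a collision. Under $\Random$, the $n+m$ characters of $P\cdot T$ are i.i.d.\ uniform on $\Sigma$, so for any fixed pair of positions, the collision probability is $1/\sigma = 1/(10n^2)$. Summing over $\binom{n+m}{2}$ pairs and using $n+m\le 2n$ gives a collision probability of at most $\binom{2n}{2}/(10n^2)=n(2n-1)/(10n^2)\le 0.2$. Hence with probability at least $0.8$ the string $P\cdot T$ has pairwise distinct characters, which in turn implies $\Occ_{m-1}(P,T)=\emptyset$, yielding the claimed bound.

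There is essentially no obstacle here: the argument is a one-line union bound tailored exactly to the choice $\sigma=10n^2$ made at the start of the section, mirroring the informal justification given in the technical overview ("The alphabet size is chosen so that $P\cdot T$ typically has pairwise distinct characters\dots"). The only point that requires a small check is the degenerate case $m=n$, where $\fragmentcc{0}{n-m}=\{0\}$ and the bound still goes through unchanged, and the case where $P\cdot T$ has repeated characters but no such repetition aligns with a pattern occurrence; the latter is handled automatically because we asked for the stronger all-distinct event, which strictly implies $\Occ_{m-1}(P,T)=\emptyset$.
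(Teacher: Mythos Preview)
Your proof is correct but takes a slightly different route from the paper. The paper fixes an alignment $t\in\fragmentco{0}{\Delta}$, observes that the expected number of matching positions between $P$ and $T\fragmentco{t}{t+m}$ is $m/\sigma\le 1/(10n)$, applies Markov's inequality to get $\Pr[\HD(P,T\fragmentco{t}{t+m})\le m-1]\le 1/(10n)$, and then union-bounds over the $\Delta\le n$ alignments to obtain $\Pr[\Occ_k(P,T)\ne\emptyset]\le 0.1$. Your argument instead union-bounds over all $\binom{n+m}{2}$ pairs of positions in $P\cdot T$, exactly as the overview suggests. The paper's per-alignment structure mirrors the analogous proof of \cref{lem:random} in \cref{sec:lb_adaptive}, which is presumably why it was written that way; your argument is marginally simpler for this specific lemma but gives a slightly weaker constant ($0.2$ versus $0.1$).

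One presentational nit: the lemma asserts a strict inequality $\Pr[\Occ_k(P,T)=\emptyset]>0.8$, while you conclude ``at least $0.8$''. Your bound is in fact strict, since $n(2n-1)/(10n^2)=0.2-1/(10n)<0.2$; you should state this explicitly.
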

\begin{proof}
Let us fix $t\in \fragmentco{0}{\Delta}$. 
For each $i\in \fragmentco{0}{m}$, we have 
\[\Pr[P[i]= T[i+t]] = \sigma^{-1},\]
and these events are independent across $i\in \fragmentco{0}{m}$.
Hence $\Exp[m - \HD(P, T\fragmentco{t}{t+m})]\le m\cdot \sigma^{-1} \le \tfrac{1}{10n}$ and, by the Markov's inequality, 
\[\Pr[\HD(P, T\fragmentco{t}{t+m}) \le k] = \Pr[m-\HD(P, T\fragmentco{t}{t+m}) \ge 1]  \le \tfrac{1}{10n}.\]
By the union bound, $\Pr[\Occ_k(P,T)\ne\emptyset] \le 0.1< 0.2$.
\end{proof}

A combination of \cref{obs:planted-largek} and \cref{lem:random-largek} implies that any solution to \cref{pr:tester} must distinguish input distributions $\Planted$ and $\Random$.

We henceforth fix a \emph{deterministic} algorithm $A$ that makes exactly $q$ distinct queries (we append dummy queries when necessary to make the number of queries uniform across all executions).
We define a \emph{run} as the sequence of queries the algorithm makes and the answers it receives. 
Formally, a run is a sequence of $q$ triples $(S,i,a)$ such that $S\in \{P,T\}$, $i\in \fragmentco{0}{|S|}$, and $a=S[i]$.
We denote by $\Runs$ the family of all \emph{admissible} runs, that is, runs describing the execution of $A$ on some input.
Every input distribution $D$ yields a distribution on $\Runs$ so that $\Pr_D[r]$ is the probability that the execution of $A$ is described by $r$ provided that the input is drawn according~to~$D$.

The following lemma says that, unless a mismatch between $P$ and $T\fragmentco{t}{t+m}$ is discovered, it is more likely that the input comes from $\Planted_t$ than from $\Random$. 
The intuitive interpretation of the immediate corollary is that many such mismatches are needed across all $t\in \fragmentco{0}{\Delta}$ in order to confirm the hypothesis that the input comes from $\Random$ rather than $\Planted$.  
\begin{lemma}\label{lem:mismatch-largek}
Consider a run $r\in \Runs$ and a position $t\in \fragmentco{0}{\Delta}$. 
If $r$ contains triples $(P,i,a)$ and $(T,i+t,b)$ for some $i\in \fragmentco{0}{m}$ and $a\ne b$, then $\Pr_{\Planted_t}[r]=0$.
Otherwise, $\Pr_{\Planted_t}[r]\ge \Pr_{\Random}[r]$.
\end{lemma}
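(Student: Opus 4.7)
The plan is to mimic the proof of \cref{lem:mismatch} almost verbatim, with the Bernoulli marginals replaced by uniform marginals on $\Sigma$. Since $A$ is deterministic and $r$ is admissible, $\Pr_D[r]=\Pr_D\!\left[\bigwedge_{(S,i,a)\in r}S[i]=a\right]$ for any input distribution $D$, so the whole argument reduces to comparing the joint probability of the recorded character values under $\Planted_t$ and under $\Random$.

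For the first assertion, observe that by the definition of $\Planted_t$ we have $P[i]=T[i+t]$ for every $i\in\fragmentco{0}{m}$, hence $\Pr_{\Planted_t}[P[i]=a\wedge T[i+t]=b]=0$ whenever $a\ne b$. A single such pair in $r$ therefore forces $\Pr_{\Planted_t}[r]=0$.

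For the second assertion, assume no such conflicting pair occurs in $r$. Then the only non-trivial difference between the two measures arises at indices $i\in\fragmentco{0}{m}$ where both $(P,i,a)$ and $(T,i+t,a)$ appear in $r$ for the same symbol $a$: under $\Planted_t$ these two conditions are equivalent and contribute $\Pr_{\Planted_t}[P[i]=a\wedge T[i+t]=a]=\sigma^{-1}$, while under $\Random$ they are independent and contribute $\Pr_{\Random}[P[i]=a\wedge T[i+t]=a]=\sigma^{-2}$. All remaining queried positions of $P$ and $T$ are independent uniform $\cU(\Sigma)$ in both distributions (on $\Planted_t$, the $T$-characters are independent uniforms by construction, and each $P[i]$ not paired with a queried $T[i+t]$ takes its value from a fresh uniform $T[i+t]$), so they contribute identical factors $\sigma^{-1}$ to both probabilities. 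Taking the product over positions yields
\[\frac{\Pr_{\Planted_t}[r]}{\Pr_{\Random}[r]}=\sigma^{\,|Q_T\cap(Q_P+t)|}\ge 1,\]
where $Q_P=\{i:(P,i,\cdot)\in r\}$ and $Q_T=\{j:(T,j,\cdot)\in r\}$, which is the desired inequality.

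There is no real obstacle here; the only care needed is in handling the degenerate case where the same pair of aligned positions is queried more than once (ruled out by our convention that $r$ consists of distinct queries) and in verifying that unpaired $P$-queries indeed contribute a factor of $\sigma^{-1}$ under $\Planted_t$, which is immediate because $T[i+t]$ is itself a fresh independent uniform when $i+t\notin Q_T$.
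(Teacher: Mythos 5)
Your proposal is correct and follows essentially the same route as the paper's proof: the conflicting-pair case is killed by $P[i]=T[i+t]$ under $\Planted_t$, and the inequality follows by comparing the per-position factors $\sigma^{-1}$ versus $\sigma^{-2}$ on aligned queried pairs while all other queried characters contribute identically. Your explicit computation of the ratio as $\sigma^{\,|Q_T\cap(Q_P+t)|}$ is just a slightly more quantitative version of the paper's argument.
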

\begin{proof}
Since $A$ is deterministic and $r$ is admissible, we have $\Pr_D[r] = \Pr_D[\bigwedge_{(S,i,a)\in r} S[i]=a]$.
By definition of $\Planted_t$, we have $\Pr_{\Planted_t}[P[i]=a \wedge T[i+t]=b]=0$ for every $i\in \fragmentco{0}{m}$ and $a\ne b$.
On the other hand, for every $a\in \Sigma$, \[\Pr_{\Planted_t}[P[i]=a \wedge T[i+t]=a]=\sigma^{-1} \ge \sigma^{-2} = \Pr_{\Random}[P[i]=a \wedge T[i+t]=a].\]
The remaining characters are independent and distributed identically under both $\Planted_t$ and $\Random$, and thus they do not affect the ratio between $\Pr_{\Planted_t}[r]$ and $\Pr_{\Random}[r]$.
\end{proof}
\begin{corollary}\label{cor:mismatch-largek}
Every run $r\in \Runs$ satisfies
$\tfrac{\Pr_{\Planted}[r]}{\Pr_{\Random}[r] } \ge 1-\tfrac{\min(mq, q^2)}{\Delta}$.
\end{corollary}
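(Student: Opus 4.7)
The plan is to essentially mimic the proof of Corollary~\ref{cor:mismatch} but replace the more delicate zero/one counting by a blunt bound that works uniformly for the large-alphabet setting. First I would expand
\[
\frac{\Pr_{\Planted}[r]}{\Pr_{\Random}[r]} \;=\; \frac{1}{\Delta}\sum_{t=0}^{\Delta-1}\frac{\Pr_{\Planted_t}[r]}{\Pr_{\Random}[r]},
\]
and then apply Lemma~\ref{lem:mismatch-largek} termwise: for every $t$ such that $r$ does \emph{not} contain a pair of triples $(P,i,a),(T,i+t,b)$ with $a\neq b$, the corresponding ratio is at least $1$, while for the remaining ``bad'' $t$ the ratio is merely non-negative. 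This immediately gives
\[
\frac{\Pr_{\Planted}[r]}{\Pr_{\Random}[r]} \;\ge\; 1 - \frac{|B|}{\Delta},
\qquad\text{where } B = \{t\in \fragmentco{0}{\Delta}: r \text{ is ``bad'' at shift } t\}.
\]

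Next I would bound $|B|$. Each $t\in B$ is witnessed by at least one pair of triples $(P,i,a)$, $(T,j,b)\in r$ with $a\ne b$ and $j-i=t$, so $|B|$ is at most the number of such pairs, which in turn is at most $q_P\cdot q_T$ where $q_P$ and $q_T$ denote the number of $P$- and $T$-queries in $r$. Since queries are distinct, $q_P\le m$ and $q_P+q_T\le q$, giving $q_P\cdot q_T \le \min(m\cdot q_T,\, q_P\cdot q_T) \le \min(mq,\, q^2)$. Plugging this into the previous display yields the claimed inequality.

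The main step is purely bookkeeping; the only subtle point is to notice that each bad shift $t$ is \emph{uniquely determined} by a pair (P-query index, T-query index), so that counting shifts reduces to counting such pairs, and the double constraint $q_P\le m$ together with $q_P+q_T\le q$ immediately produces the $\min(mq,q^2)$ bound. No additional ideas beyond Lemma~\ref{lem:mismatch-largek} are needed.
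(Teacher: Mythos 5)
Your proposal is correct and follows essentially the same route as the paper: decompose $\Pr_{\Planted}[r]$ as the average of $\Pr_{\Planted_t}[r]$ over $t$, apply \cref{lem:mismatch-largek} termwise, and bound the number of bad shifts by the number of mismatching pairs of queried positions, which is at most $\min(mq,q^2)$. The only cosmetic difference is that you pass through the slightly sharper intermediate bound $q_P\cdot q_T$, which the paper does not need.
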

\begin{proof}
Recall that $\Pr_{\Planted}[r] =\tfrac{1}{\Delta}\sum_{t=0}^{\Delta-1} \Pr_{\Planted_t}[r]$.
By \cref{lem:mismatch-largek}, $\Pr_{\Planted_t}[r] \ge \Pr_{\Random}[r]$ holds unless $(P,i,a),(T,i+t,b)$ belong to $r$ for some $i\in \fragmentco{0}{m}$ and $a\ne b$. The number of such indices $t\in \fragmentco{0}{\Delta}$ is upper-bounded by the number of pairs $(P,i,a),(T,j,b)\in r$ with $a\ne b$, which is at most $\min(m|r|,|r|^2) = \min(mq, q^2)$.
\end{proof}

We are now ready to prove that the algorithm $A$ needs to make many queries to achieve low error probability on $\Mixed$.

\begin{proposition}\label{prp:lb_adaptive-largek}
    Every deterministic algorithm solving \cref{pr:tester}  with $k=m-1$ correctly with probability at least $0.8$ for the input distribution $\Mixed$ makes at least $\frac{1}{6}\left(\sqrt{\Delta}+\frac{n}{m}\right)$ queries to the strings~$P$~and~$T$.
\end{proposition}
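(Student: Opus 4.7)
The plan is to follow the scheme of the proof of \cref{prp:lb_adaptive}, which becomes considerably simpler in the $k=m-1$ regime because the ``bad'' set of runs turns out to be \emph{empty} rather than merely of small probability. I would fix a deterministic algorithm $A$ making exactly $q$ distinct queries (padding with dummies to uniformize the run length) and let $e_R, e_P$ denote its error probabilities under $\Random$ and $\Planted$. The goal is to derive a contradiction from the hypothesis $q < \frac{1}{6}(\sqrt{\Delta}+\frac{n}{m})$ together with the assumption of $0.8$ success on $\Mixed$.

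By \cref{cor:mismatch-largek}, every admissible run $r \in \Runs$ satisfies $\Pr_\Planted[r] \ge (1 - \min(mq, q^2)/\Delta)\,\Pr_\Random[r]$. The heart of the proof is the purely algebraic claim that $q < \frac{1}{6}(\sqrt{\Delta}+n/m)$ forces $\min(mq, q^2) < \Delta/2$. I plan to verify this via a short two-case split. If $q \le \sqrt{\Delta}/2$, then $\min(mq, q^2) \le q^2 \le \Delta/4$. Otherwise $q > \sqrt{\Delta}/2$, and combining this with $q < \frac{1}{6}(\sqrt{\Delta}+n/m)$ yields $3\sqrt{\Delta} < \sqrt{\Delta} + n/m$, i.e., $n/m > 2\sqrt{\Delta}$. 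This has two consequences used simultaneously: first, $n > 2m$, so $\Delta = n-m+1 > n/2$ and hence $\Delta/2 > n/4$; second, $\sqrt{\Delta} < n/(2m)$ implies $q < \frac{1}{6}(\sqrt{\Delta}+n/m) < \frac{1}{6}\cdot \frac{3n}{2m} = \frac{n}{4m}$, and therefore $mq < n/4 < \Delta/2$.

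Once $\min(mq, q^2) < \Delta/2$ is established, every $r$ satisfies $\Pr_\Planted[r] > \frac{1}{2}\Pr_\Random[r]$, which is the analogue of ``$\Bad = \emptyset$'' in the notation of \cref{sec:lb_adaptive}. The remaining bookkeeping mirrors the proof of \cref{prp:lb_adaptive} exactly: letting $\RunsNo$ be the set of runs on which $A$ answers \No, \cref{lem:random-largek} gives $\Pr_\Random[\RunsNo] > 0.8 - e_R$, and \cref{obs:planted-largek} together with the ratio bound gives $e_P \ge \Pr_\Planted[\RunsNo] > \frac{1}{2}(0.8 - e_R)$. The total error on $\Mixed$ is then $\frac{1}{2}(e_R + e_P) > \frac{1}{2}(e_R + 0.4 - e_R/2) = 0.2 + e_R/4 \ge 0.2$, contradicting the success hypothesis. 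Yao's principle then lifts the deterministic bound to the randomized algorithms considered in the statement.

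I do not anticipate a real obstacle; the only step requiring computation is the two-case arithmetic. In particular, the non-trivial half hinges on observing that once $q$ exceeds $\sqrt{\Delta}/2$, the query budget itself forces $n/m$ to be so large that $m < n/2$, which in turn makes the asymmetric bound $mq < \Delta/2$ achievable from the $n/m$ term alone.
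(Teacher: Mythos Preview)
Your proposal is correct and follows essentially the same approach as the paper: both argue that the query bound forces $\min(mq,q^2)<\Delta/2$, whence \cref{cor:mismatch-largek} gives $\Pr_\Planted[r]>\tfrac12\Pr_\Random[r]$ for every run, and the $\RunsNo$ bookkeeping yields a contradiction. The only difference is cosmetic: the paper rewrites $\tfrac{n}{m}<\tfrac{\Delta}{m}+1$ and bounds $q<\tfrac12\max(\sqrt{\Delta},\tfrac{\Delta}{m},1)$ in three branches, whereas your two-case split on $q\lessgtr\sqrt{\Delta}/2$ reaches the same conclusion with slightly different arithmetic. (One trivial remark: the proposition itself concerns deterministic algorithms, so the final sentence about Yao's principle belongs to the surrounding theorem rather than to this proof.)
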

\begin{proof}
Fix a deterministic algorithm $A$ with error probabilities $e_{R}$ and $e_{P}$ under $\Random$ and $\Planted$, respectively.
For a proof by contradiction, suppose that $A$ makes $q < \frac{1}{6}\left(\sqrt{\Delta}+\frac{n}{m}\right)$ queries.
Note that
\[q < \tfrac{1}{6}\left(\sqrt{\Delta}+\tfrac{n}{m}\right) < \tfrac{1}{6}\left(\sqrt{\Delta}+\tfrac{\Delta}{m}+1\right) \le \tfrac12\max\left(\sqrt{\Delta},\tfrac{\Delta}{m},1\right).\]
In the first branch, we have $q^2 < \frac14\Delta$, in the second branch, we have $mq < \frac12\Delta$,
whereas in the third branch, we must have $q=0$.
By \cref{cor:mismatch-largek}, every $r\in \Runs$ satisfies $\tfrac{\Pr_{\Planted}[r]}{\Pr_{\Random}[r] } \ge 1-\tfrac{\min(mq,q^2)}{\Delta}$. In all three branches, we conclude $\tfrac{\Pr_{\Planted}[r]}{\Pr_{\Random}[r] } > \frac{1}{2}$.

Let $\RunsNo\subseteq \Runs$ be the set of runs on which $A$ returns \textsc{No}.
By \cref{lem:random-largek}, the answer \textsc{No} is required on $\Random$ with probability strictly larger than $0.8$,
so $\Pr_\Random[\RunsNo] > 0.8-e_R$.
By \cref{obs:planted-largek}, all \textsc{No} answers are incorrect under $\Planted$, so $e_P\ge \Pr_\Planted[\RunsNo]$.
 Then, we have
\[e_P \ge \Pr_{\Planted}[\RunsNo]\ge  \tfrac12 \Pr_\Random[\RunsNo]> 0.4-\tfrac{e_R}{2} \ge 0.4-e_R.\]
Hence, the error probability of $A$ on $\Mixed$, which equals $\tfrac12(e_P+e_R)$, is strictly larger than $0.2$.
\end{proof}

\cref{thm:lb_adaptive-largek} follows from \cref{prp:lb_adaptive-largek} by Yao's principle.

\bibliography{refs}

\end{document}